    \newcommand{\href}[2]{#2}
\newif\ifabstract
\newif\iffull
\newtoks\magicAppendix
\newtoks\magictoks
\newif\iflater
\long\def\later#1{\magictoks={#1}%
  \edef\magictodo{\noexpand\magicAppendix={\the\magicAppendix \par
    \the\magictoks%
  }}
  \magictodo}
\long\def\both#1{\magictoks={#1}%
  \edef\magictodo{\noexpand\magicAppendix={\the\magicAppendix \par
    \noexpand\setcounter{theorem-preserve}{\noexpand\arabic{theorem}}%
    \noexpand\setcounter{theorem}{\arabic{theorem}}%
    \noexpand\setcounter{section-preserve}{\noexpand\arabic{section}}%
    \noexpand\setcounter{section}{\arabic{section}}%
	\noexpand\let\noexpand\oldsection=\noexpand\thesection
	\noexpand\def\noexpand\thesection{\thesection}
	\noexpand\let\noexpand\oldlabel=\noexpand\label
	\noexpand\let\noexpand\label=\noexpand\blank
    \the\magictoks%
    \noexpand\setcounter{theorem}{\noexpand\arabic{theorem-preserve}}%
    \noexpand\setcounter{section}{\noexpand\arabic{section-preserve}}%
	\noexpand\let\noexpand\thesection=\noexpand\oldsection
	\noexpand\let\noexpand\label=\noexpand\oldlabel
  }}
  \magictodo
  \the\magictoks}
\long\def\later#1{#1}
\long\def\both#1{#1}
\long\def\magicappendix{
	\latertrue%
	\the\magicAppendix%
}
\theoremstyle{definition}
\newtheorem{theorem}{Theorem}[section]
\newtheorem{lemma}[theorem]{Lemma}
\newtheorem{definition}[theorem]{Definition}
\title{Universal Computation with Arbitrary Polyomino Tiles in Non-Cooperative Self-Assembly}
\author{
    S\'andor P. Fekete
        \thanks{Department of Computer Science, TU Braunschweig, 38106 Braunschweig, Germany.
        \protect\url{ s.fekete@tu-bs.de}.}
\and
    Jacob Hendricks
        \thanks{Department of Computer Science and Computer Engineering, University of Arkansas, Fayetteville, AR, USA.
        \protect\url{jhendric@uark.edu}.
        This author's research was supported in part by National Science Foundation Grants CCF-1117672 and CCF-1422152.}
\and
	Matthew J. Patitz
        \thanks{Department of Computer Science and Computer Engineering, University of Arkansas, Fayetteville, AR, USA.
        \protect\url{mpatitz@self-assembly.net}.
        This author's research was supported in part by National Science Foundation Grants CCF-1117672 and CCF-1422152.}
\and
    Trent A. Rogers
        \thanks{Department of Computer Science and Computer Engineering, University of Arkansas, Fayetteville, AR, USA.
        \protect\url{tar003@uark.edu}.
        This author's research was supported by the National Science Foundation Graduate Research Fellowship Program under Grant No. DGE-1450079, and National Science Foundation grants CCF-1117672 and CCF-1422152.}
\and
    Robert T. Schweller
        \thanks{University of Texas--Pan American, Edinburg, TX 78539, USA.
        \protect\url{rtschweller@utpa.edu}.
        This author's research was supported in part by National Science Foundation Grant CCF-1117672.}
}
\date{}
\begin{document}

\maketitle

\begin{abstract}
In this paper we explore the power of geometry to overcome the limitations of
non-cooperative self-assembly.  We define a generalization of the abstract
Tile Assembly Model (aTAM), such that a tile system consists of a collection of
polyomino tiles, the Polyomino Tile Assembly Model (polyTAM), and investigate the
computational powers of polyTAM systems at {\em temperature 1}, where attachment
among tiles occurs without glue cooperation (i.e., without the enforcement that
more than one tile already existing in an assembly must contribute to the
binding of a new tile).  Systems composed of the unit-square tiles of the aTAM
at temperature 1 are believed to be incapable of Turing universal computation
(while cooperative systems, with temperature $> 1$, are able).  As
our main result, we prove that for any polyomino $P$ of size 3 or greater, there
exists a temperature-1 polyTAM system containing only shape-$P$ tiles that is
computationally universal.  Our proof leverages the geometric properties of
these larger (relative to the aTAM) tiles and their abilities to effectively
utilize geometric blocking of particular growth paths of assemblies, while
allowing others to complete.  In order to prove the computational powers of
polyTAM systems, we also prove a number of geometric properties held by all
polyominoes of size $\ge 3$.

To round out our main result, we provide strong evidence that size-1 (i.e. aTAM
tiles) and size-2 polyomino systems are unlikely to be computationally
universal by showing that such systems are incapable of \emph{geometric
bit-reading}, which is a technique common to all currently known temperature-1
computationally universal systems.  We further show that larger polyominoes
with a limited number of binding positions are unlikely to be computationally
universal, as they are only as powerful as temperature-1 aTAM systems.
Finally, we connect our work with other work on domino self-assembly to show
that temperature-1 assembly with at least 2 distinct shapes, regardless of the
shapes or their sizes, allows for universal computation.  \end{abstract}

\pagebreak

\section{Introduction}

Theoretical studies of algorithmic self-assembly have produced a wide variety
of results that establish the computational power of tile-based self-assembling
systems.  From the introduction of the first and perhaps simplest model, the
abstract Tile Assembly Model (aTAM) \cite{Winf98}, it was shown that
self-assembling systems, which are based on relatively simple components
autonomously coalescing to form more complex structures, are capable of Turing-universal computation.
This computational power exists within the aTAM, and
has been harnessed to algorithmically guide extremely complex theoretical
constructions (e.g. \cite{SolWin07,RotWin00,jCCSA,jSADS,IUSA,jSADSSF}) and has
even been exploited within laboratories to build nanoscale self-assembling
systems from DNA-based tiles which self-assemble following algorithmic behavior
(e.g.
\cite{RothTriangles,SchWin07,MaoLabReiSee00,BarSchRotWin09,LaWiRe99,evans2014crystals}).

While physical implementations of these systems are constantly increasing in
scale, complexity, and robustness, they are orders of magnitude shy of
achieving results similar to those of many naturally occurring self-assembling
systems, especially those found in biology (e.g. the formation of many cellular
structures or viruses).  This disparity motivates theoretical studies that can
focus efforts on first discovering the ``tricks'' used so successfully by
nature, and then on incorporating them into our own models and systems.  One of
the fundamental properties so successfully leveraged by many natural systems,
but absent from models such as the aTAM, is geometric complexity of components.
For instance, self-assembly in biological systems relies heavily upon the
complex 3-dimensional structures of proteins, while tile-assembly systems are
typically restricted to basic square (or cubic) tiles.  In this paper, we
greatly extend previous work that has begun to incorporate geometric aspects
of self-assembling components \cite{GeoTiles,SFTSAFT,Duples,OneTile} with the
development of a model allowing for more geometrically complex tiles, called
\emph{polyominoes}, and an examination of the surprising computational powers
that systems composed of polyominoes possess.

The process of tile assembly begins from a \emph{seed} structure, typically a
single tile, and proceeds with tiles attaching one at a time to the growing
assembly.  Tiles have \emph{glues}, taken from a set of glue types, around
their perimeters which allow them to attach to each other if their glues match.
Algorithmic self-assembling systems developed by researchers, both theoretical
and experimental, tend to fundamentally employ an important aspect of tile
assembly known as \emph{cooperation}.  In theoretical models, cooperation is
available when a particular parameter, known as the \emph{temperature}, is set
to a value $> 1$ which can then enforce that the binding of a tile to a growing
assembly can only occur if that tile matches more than one glue on the
perimeter of the assembly.  Using cooperation, it is simple to show that
systems in the aTAM are capable of universal computation (by simulating
particular cellular automata \cite{Winf98}, or arbitrary Turing machines
\cite{jSADS,jCCSA,SolWin07}, for instance).  However, it has long been
conjectured that in the aTAM without cooperation, i.e. in systems where the
temperature $=1$, universal computation is impossible
\cite{jLSAT1,IUNeedsCoop,ManuchTemp1}.  Interestingly, though, a collection of
``workarounds'' have been devised in the form of new models with a variety of
properties and parameters which make computation possible at temperature 1
(e.g. \cite{SingleNegative,CookFuSch11,GeoTiles,Signals,Duples}).

In this paper, we introduce the Polyomino Tile Assembly Model (polyTAM), in which
each tile is composed of a collection of unit squares joined along their edges.
This allows for tiles with arbitrary geometric complexity and a much larger
variety of shapes than in earlier work involving systems composed of both
square and $2\times1$ rectangular tiles \cite{Duples,BreakableDuples}, or those
with tiles composed of square bodies and edges with bumps and dents
\cite{GeoTiles}.  Our results prove that geometry, in the polyTAM as in natural
self-assembling systems, affords great power.  Namely, \emph{any} polyomino
shape which is composed of only 3 or more unit squares has enough geometric
complexity to allow a polyTAM system at temperature 1, composed only of tiles of
that shape, to perform Turing universal computation.  This impressive potency
is perhaps even more surprising when it is realized that while a single unit-square
polyomino (a.k.a.\ a \emph{monomino}, or a standard aTAM tile) is
conjectured not to provide this power, the same shape expanded in scale to a
$2\times2$ square polyomino does.  The key to this power is the ability of
arbitrary polyominoes of size 3 or greater to both combine with each other to
form regular grids, as well as to combine in a variety of relative offsets that
allow some tiles to be shifted relatively to those grids and then perform geometric
blocking of the growth of specific configurations of paths of tiles, while
allowing other paths to complete their growth.  With just this seemingly simple
property, it is possible to design temperature-1 systems of polyominoes that
can simulate arbitrary Turing machines.

In addition to this main positive result about the
computational abilities of all polyominoes of size $\ge 3$, we also provide
negative results that further help to refine understanding of exactly
what geometric properties are needed for Turing universal computation in
temperature-1 self-assembly.  We prove that a fundamental gadget (which we call the \emph{bit-reading gadget})
used within all known systems that can compute at temperature 1 in any tile-assembly model,
which we call a \emph{bit-reading gadget}, is impossible to
construct with either the square tiles of the aTAM or with dominoes (a.k.a.\ duples).
This provides further evidence that systems composed solely of those shapes are
incapable of universal computation.  Furthermore, we prove that regardless of
the size and shape of a polyomino, systems composed of polyominoes
with only (1) $\le 3$ positions
on its perimeter at which to place glues, or (2) 4 positions for glues that
are restricted to binding with each other as complementary pairs of sides,
are no more powerful
than aTAM temperature-1 systems, again providing evidence that they are
incapable of performing Turing universal computation.

This paper is organized as follows.  In Section~\ref{sec:polyTAM} we define the
Polyomino Tile Assembly Model and related terminology.  In
Section~\ref{sec:comp-overview} we formally define a bit-reading gadget and
then present an overview of how they can be used in a temperature-1 tile
assembly system to simulate arbitrary Turing machines.  Then, in
Section~\ref{sec:tech-lemmas} we prove some fundamental lemmas about the
geometric properties of polyominoes and the ways in which they can combine in the
plane to form grids. Section~\ref{sec:main-result} contains the proof of
our main result, while Section~\ref{sec:limited-systems} contains our results
that hint at the computational weakness of some systems.  Finally,
Section~\ref{sec:multi-poly} describes how the positive results of this paper
along with that of \cite{Duples} proves that any polyomino system composed of
any two polyomino shapes is capable of universal computation.

\section{Polyomino Tile Assembly Model}\label{sec:polyTAM}

In this section we define the Polyomino Tile Assembly Model (polyTAM) and relevant terminology.

\paragraph{Polyomino Tiles}
A \emph{polyomino} is a plane geometric figure formed by joining one or more equal unit squares edge to edge;
it can also be considered a finite subset of the regular square tiling with a connected interior.
For convenience, we will assume that each unit square is centered
on a point in $Z^2$. We define the set of \emph{edges} of a polyomino to be
the set of faces from the constituent unit squares of the polyomino that lie on
the boundary of the polyomino shape.  A \emph{polyomino tile} is a polyomino with
a subset of its edges labeled from some {\em glue} alphabet $\Sigma$, with
each glue having an integer \emph{strength} value.  Two tiles are said to
\emph{bind} when they are placed so that they have non-overlapping interiors
and adjacent edges with matching glues; each matching glue binds with
force equal to its strength value. An \emph{assembly} is any connected set of
polyominoes whose interiors do not overlap.  Given a positive integer $\tau \in
\mathbb{N}$, an assembly is said to be \emph{$\tau$-stable} or (just
\emph{stable} if $\tau$ is clear from context), if any partition of the assembly
into two non-empty groups (without cutting individual polyominoes) must
separate bound glues whose strengths sum to $\ge \tau$.

The \emph{bounding rectangle} $B$ around a polyomino $P$ is the rectangle with minimal area (and corners lying in $\mathbb{Z}^2$) that contains $P$.  For each polyomino shape, we designate one pixel (i.e. one of the squares making up $P$) $p$ as a distinguished pixel that we use as a reference point.  More formally, a \emph{pixel} $p$ in a polyomino $P$ (or polyomino tile) is defined in the following manner.  Place $P$ in the plane so that the southwest corner of the bounding rectangle of $P$ lies at the origin.  Then a pixel $p=(p_1,p_2) \in P$ is a point in $\mathbb{Z}^2$ which is occupied by a unit square composing the polyomino $P$.  We say that a pixel $p' \in P$ lies on the perimeter of the bounding rectangle $B$ if an edge of the pixel $p'$ lies on an edge of $B$.

\paragraph{Tile System}
A \emph{tile assembly system} (TAS) is an ordered triple $\calT =
(T,\sigma,\tau)$ (where $T$ is a set of polyomino tiles, and $\sigma$ is a
$\tau$-stable assembly called the \emph{seed}) that consists of integer
translations of elements of $T$. $\tau$ is the \emph{temperature} of the
system, specifying the minimum binding strength necessary for a tile to attach
to an assembly.  Throughout this paper, the temperature of all systems is
assumed to be 1, and we therefore frequently omit the temperature from the
definition of a system (i.e. $\calT = (T,\sigma)$).

If the tiles in $T$ all have the same polyomino shape, $\calT$ is said to be a
\emph{single-shape} system; more generally $\calT$ is said to be a
$c$-shape system if there are $c$ distinct shapes in $T$.  If not stated
otherwise, systems described in this paper should by default be assumed to be
single-shape systems.  If $T$ consists of unit-square tiles, $\calT$ is said to
be a \emph{monomino} system.

\paragraph{Assembly Process}
Given a tile-assembly system $\calT = (T,\sigma,\tau)$, we now define the set
of \emph{producible} assemblies $\prodasm{T}$ that can be derived from $\calT$,
as well as the \emph{terminal} assemblies, $\termasm{T}$, which are the
producible assemblies to which no additional tiles can attach.  The assembly
process begins from $\sigma$ and proceeds by single steps in which any single
copy of some tile $t \in T$ may be attached to the current assembly $A$, provided that it
can be translated so that its placement does not overlap any previously placed
tiles and it binds with strength $\ge \tau$.  For a system $\calT$ and assembly
$A$, if such a $t\in T$ exists, we say $A \rightarrow^\calT_1 A'$ (i.e. $A$
grows to $A'$ via a single tile attachment).  We use the notation $A
\rightarrow^\calT A''$, when $A$ grows into $A''$ via 0 or more steps.
Assembly proceeds asynchronously and nondeterministically, attaching one tile
at a time, until no further tiles can attach.  An assembly sequence in a TAS
$\mathcal{T}$ is a (finite or infinite) sequence $\vec{\alpha} = (\alpha_0 =
\sigma,\alpha_1,\alpha_2,\ldots)$ of assemblies in which each $\alpha_{i+1}$ is
obtained from $\alpha_i$ by the addition of a single tile.
The set of producible assemblies $\prodasm{T}$ is defined to be the set of all
assemblies $A$ such that there exists an assembly sequence for $\calT$ ending
with $A$ (possibly in the limit).  The set of \emph{terminal} assemblies
$\termasm{T} \subseteq \prodasm{T}$ is the set of producible assemblies such
that for all $A \in \termasm{T}$ there exists no assembly $B \in \prodasm{T}$
in which $A\rightarrow^\calT_1 B$.  A system $\calT$ is said to be directed if
$|\termasm{T}|=1$, i.e., if it has exactly one terminal assembly.

Note that the aTAM is simply a specific case of the polyTAM in which all tiles are monominoes, i.e.,
single unit squares.

\section{Universal Computation with Geometric Bit-Reading}\label{sec:comp-overview}

In this section we provide an overview of how universal computation can be
performed in a temperature-1 system with appropriate use of geometric aspects
of tiles and assemblies. Refer to Figure~\ref{fig:bit-gadget-definition} for an
intuitive illustration.

\subsection{Bit-Reading Gadgets}\label{sec:bit-gadgets}

First, we discuss a primitive tile-assembly component that enables computation
by self-assembling systems.  This component is called the \emph{bit-reading
gadget}, and essentially consists of pre-existing assemblies that appropriately
encode bit values (i.e., $0$ or $1$) and paths that grow past them and are able
to ``read'' the values of the encoded bits; this results in those bits being
encoded in the tile types of the paths beyond the encoding assemblies.  In tile-assembly systems
in which the temperature is $\ge 2$, a bit-reader gadget is
trivial:  the assembly encoding the bit value can be a single tile with an
exposed glue that encodes the bit value, and the path that grows past to read
the value simply ensures that a tile must be placed cooperatively with, and
adjacent to, that encoding the bit (i.e., the path forces a tile to be placed
that can only bind if one of its glues matches that exposed by the last tile
of the path, and the other matches the glue encoding the bit value). However,
in a temperature-1 system, cooperative binding of tiles cannot be enforced, and
therefore the encoding of bits must be done using geometry.
Figure~\ref{fig:bit-gadget-definition} provides an intuitive overview of a
temperature-1 system with a bit-reading gadget.  Essentially, depending on
which bit is encoded by the assembly to be read, exactly one of two types of
paths can complete growth past it, implicitly specifying the bit that was
{\em read}.  It is important that the bit reading must be unambiguous, i.e.,
depending on the bit {\em written} by the pre-existing assembly, exactly one type
of path (i.e., the one that denotes the bit that was written) can possibly
complete growth, with all paths not representing that bit being prevented.
Furthermore, the correct type of path must always be able to grow.  Therefore,
it cannot be the case that either all paths can be blocked from growth, or that
any path not of the correct type can complete, regardless of whether a path of
the correct type also completes, and these conditions must hold for any valid
assembly sequence to guarantee correct computation.

\begin{figure}[htp]
\begin{center}
\vspace{-15pt}
\includegraphics[width=5.0in]{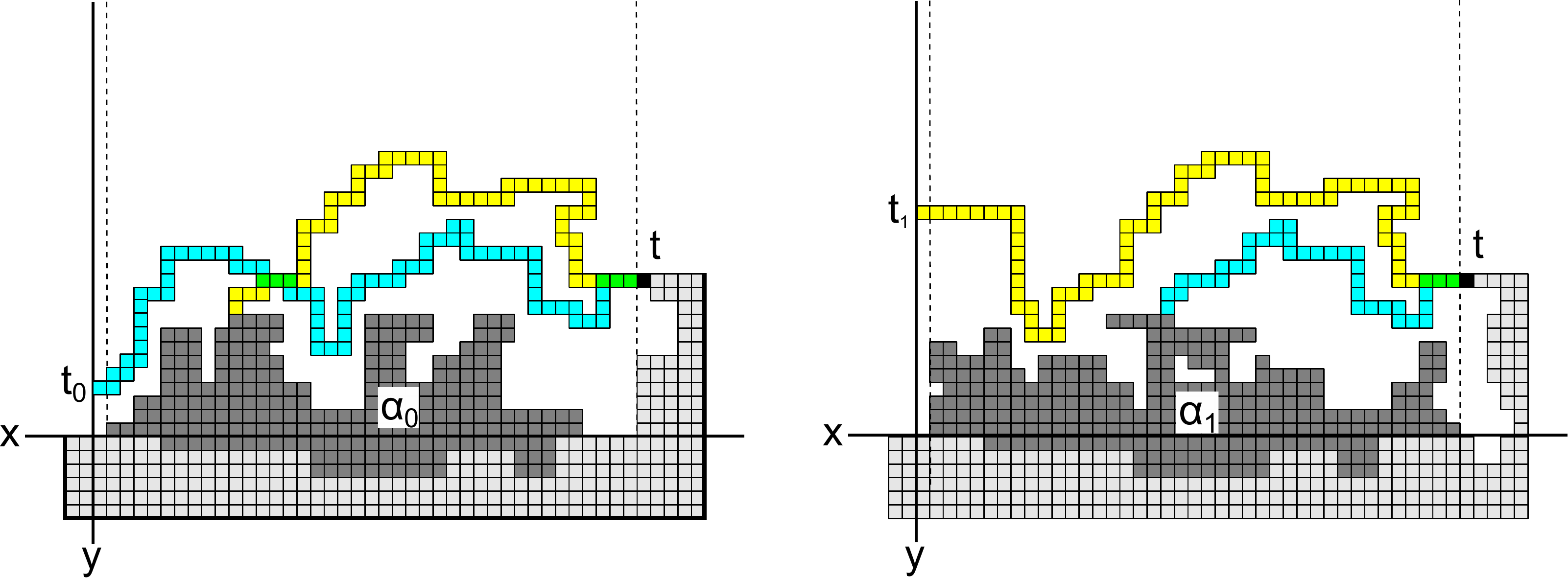}
\caption{Abstract schematic of a bit-reading gadget. (Left) The path grown from
$t$ ``reads'' the bit 0 from $\alpha_0$ (by being allowed to grow to $x=0$ and
placing a tile $t_0 \in T_0$), while the path which could potentially read a 1
bit is blocked by $\alpha_0$. (Right) The path grown from $t$ reads the bit 1
from $\alpha_1$, while the path that could potentially read a 0 is blocked by
$\alpha_1$.  Clearly, the specific geometry of the polyomino tiles used and
assemblies is important in allowing the yellow path in the left figure to be
blocked without also blocking the blue path.}
\label{fig:bit-gadget-definition}
\vspace{-10pt}
\end{center}
\end{figure}

\begin{definition}\label{def:bit-reader}
We say that a \emph{bit-reading gadget} exists for a tile assembly system
$\mathcal{T} = (T,\sigma,\tau)$, if the following hold.  Let $T_0 \subset T$ and
$T_1 \subset T$, with $T_0 \cap T_1 = \emptyset$, be subsets of tile types
which represent the bits $0$ and $1$, respectively.  For some producible
assembly $\alpha \in \prodasm{T}$, there exist two connected subassemblies,
$\alpha_0,\alpha_1 \sqsubseteq \alpha$ (with $w$ equal to the maximal width of
$\alpha_0$ and $\alpha_1$), such that if: \begin{enumerate}
    \item $\alpha$ is translated so that $\alpha_0$ has its minimal $y$-coordinate $\le 0$ and its minimal $x$-coordinate $= 1$,
    \item a tile of some type $t \in T$ is placed at $(w+n,h)$, where $n,h \ge 1$, and
    \item the tiles of $\alpha_0$ are the only tiles of $\alpha$ in the first quadrant to the left of $t$,
\end{enumerate}
then at least one path must grow from $t$ (staying strictly above the $x$-axis)
and place a tile of some type $t_0 \in T_0$ as the first tile with
$x$-coordinate $= 0$, while no such path can place a tile of type $t' \in (T
\setminus T_0)$ as the first tile to with $x$-coordinate $= 0$.  (This
constitutes the reading of a $0$ bit.)

Additionally, if $\alpha_1$ is used in place of $\alpha_0$ with the same
constraints on all tile placements, $t$ is placed in the same location as
before, and no other tiles of $\alpha$ are in the first quadrant to the left of
$t$, then at least one path must grow from $t$ and stay strictly above the
$x$-axis and strictly to the left of $t$, eventually placing a tile of some
type $t_1 \in T_1$ as the first tile with $x$-coordinate $= 0$, while no such
path can place a tile of type $t' \in (T \setminus T_1)$ as the first tile with
$x$-coordinate $= 0$.  (Thus constituting the reading of a $1$ bit.)
\end{definition}

We refer to $\alpha_0$ and $\alpha_1$ as the {\em bit writers}, and the paths
which grow from $t$ as the {\em bit readers}.  Also, note that while this
definition is specific to a bit-reader gadget in which the bit readers grow from
right to left, any rotation of a bit reader is valid by suitably rotating the
positions and directions of Definition~\ref{def:bit-reader}.  As mentioned in
Figure~\ref{fig:bit-gadget-definition}, depending on the actual geometries of
the polyominoes used and their careful placement, it may be possible to enforce
the necessary blocking of all paths of the wrong type, while still allowing at
least one path of the correct type to complete growth in any valid assembly
sequence.  The necessary requirements on these geometries and placements are
the subject of the novel results of this paper.

\subsection{Turing-Machine Simulation} \label{sec:Turing-overview}

In order to show that a polyomino shape (i.e., a system composed of tiles of
only that shape) is computationally universal at $\tau=1$, we show how it is
possible to simulate an arbitrary Turing machine using such a polyomino system.
In order to simulate an arbitrary Turing machine, we show how to self-assemble
a zig-zag Turing machine~\cite{CookFuSch11,SingleNegative}.  A zig-zag Turing
machine at $\tau=1$ works by starting with its input row as the seed
assembly, then growing rows one by one, alternating growth from left to right with
growth from right to left.  As a row grows across the top of the row immediately beneath
it, it does so by forming a path of single tile width, with tiles connected by glues,
which pass information horizontally through their glues, while the geometry of
the row below causes only one of two choices of paths to grow at regular
intervals, effectively passing information vertically via the geometry, using
bit-reading gadgets.

\ifabstract
\later{
\section{Additional content from Section~\ref{sec:comp-overview}}
}
\fi

\begin{figure}[ht!]
%\begin{wrapfigure}{r}{0.55\textwidth}
%\vspace{-25pt}
\begin{center}
\includegraphics[width=3.2in]{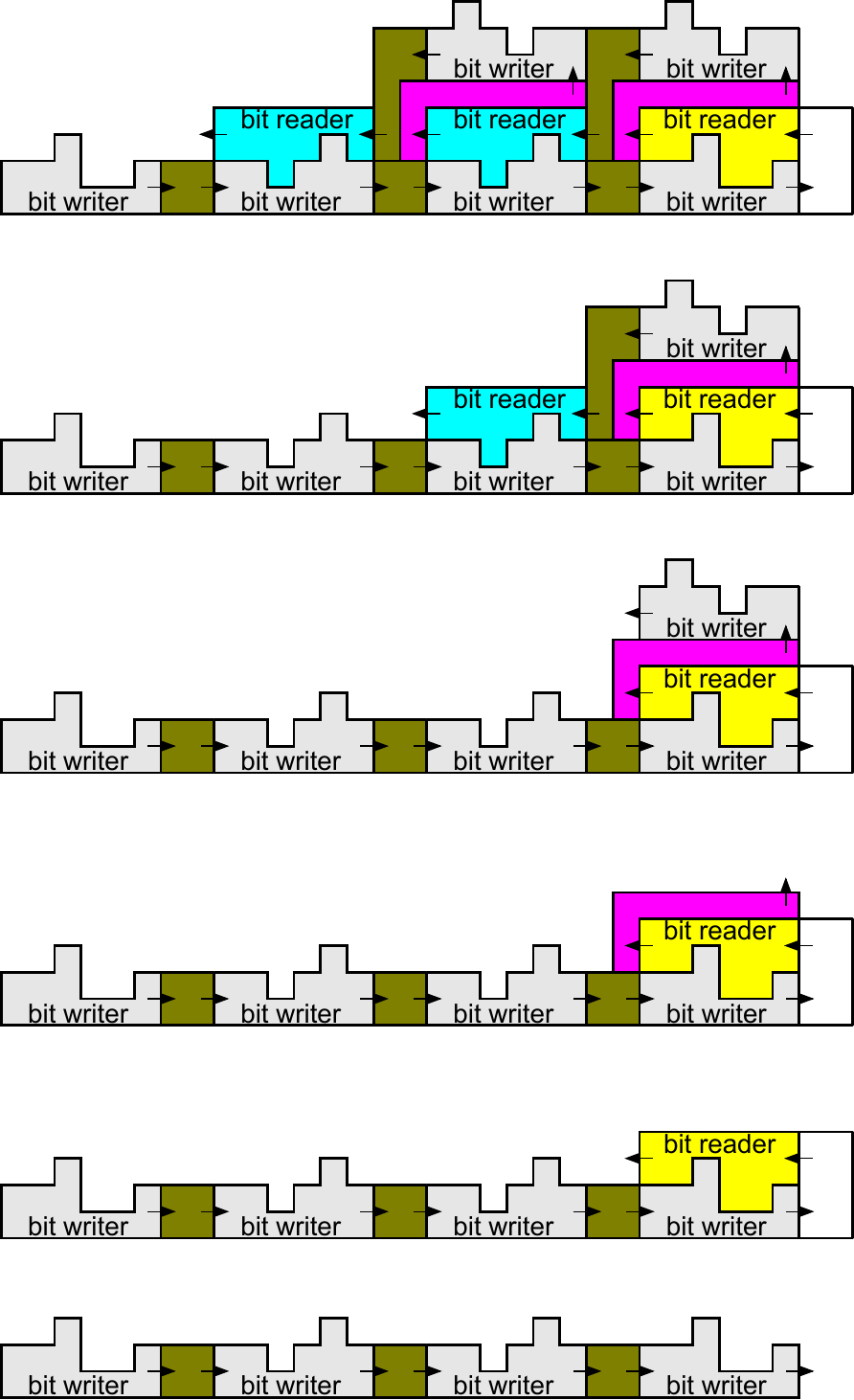}
\caption{High-level schematic view of a zig-zag Turing machine and the bit-reading/writing
gadgets that make up each row of the simulation.  The bottom
shows the seed row, consisting of bit-writer gadgets separated by spacers.
Then, depicted as consecutive upward figures, the second row begins its growth.
Yellow/blue portions depict locations of bit-reader gadgets (for $0$ and $1$,
respectively), which grow pink paths upward after completing in order to grow
bit writer gadgets (grey), and then gold spacers back down to the point where
the next bit reader can grow.} \label{fig:bit-reading-writing-scheme1}
\end{center}
\end{figure}

Each cell of the Turing machine's tape is encoded by a series of bit-reader
gadgets that encode in binary the symbol in that cell and, if the read/write
head is located there, what state the machine is in.  Additionally, as each
cell is read by the row above, the necessary information must be geometrically
written above it so that the next row can read it.  See
Figure~\ref{fig:bit-reading-writing-scheme1} for an example depicting a
high-level schematic without showing details of the individual polyominoes.
Figure~\ref{fig:bit-reading-writing-scheme2} shows the same system after two
rows have completed growth.

\begin{figure}[htp]
\begin{center}
\vspace{-15pt}
\includegraphics[width=3.0in]{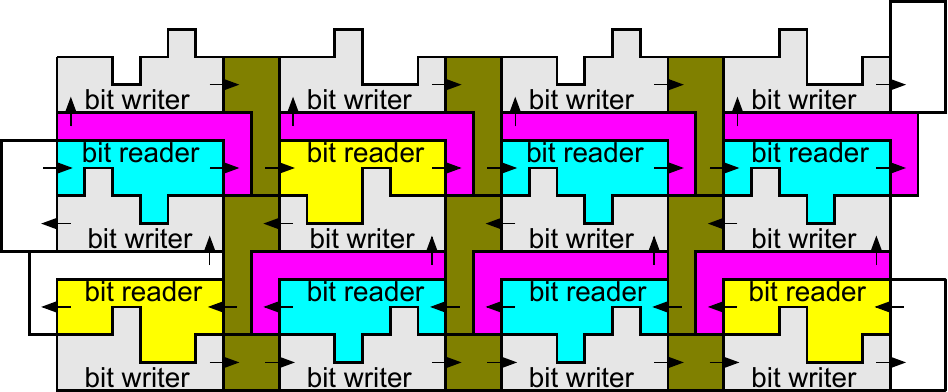}
\caption{High-level schematic view of a zig-zag Turing machine and the bit-reading/writing gadgets that make up the first two rows of simulation.}
\label{fig:bit-reading-writing-scheme2}
\vspace{-10pt}
\end{center}
\end{figure}

For a more specific example that shows the placement of individual, actual
polyomino tiles as well as the order of their growth, see
Figure~\ref{fig:plus-sign-gadgets}. Note that the simulation of a zig-zag
Turing machine can be performed by horizontal or vertical growth, and in any
orientation.

\begin{figure}[htp]
\begin{center}
\includegraphics[width=5.5in]{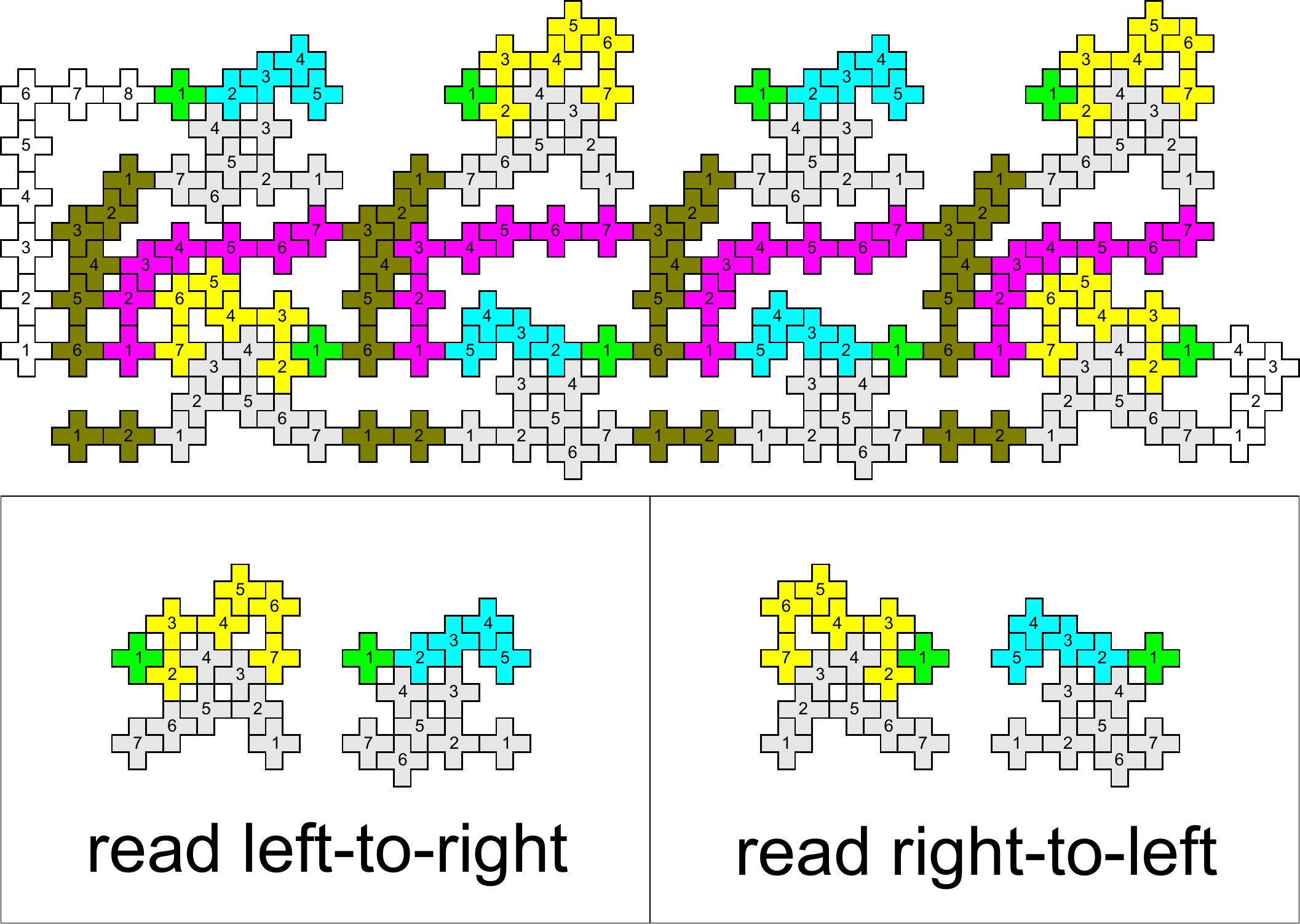}
\caption{The system of Figure~\ref{fig:bit-reading-writing-scheme1} after two
rows of the zig-zag simulation have been completed (omitting the output bit writer
gadgets of the second row), implemented with ``plus-sign'' polyominoes.  The
bottom left shows $0$ and $1$ bit writer and reader combinations, with the
writer having grown from right to left and the reader from left to right.
The bottom right shows the same, but with growth directions reversed.  Grey
tiles represent bit-writer gadgets.  Green tiles represent the beginning of bit-reader
gadgets that are common to either bit; yellow represents the path that
can grow to signify a $0$ bit being read, and blue a $1$ bit.  Other colors
correspond to those for the gadgets used in
Figure~\ref{fig:bit-reading-writing-scheme1}, with numbers corresponding to the
growth order of the tiles in each gadget.} \label{fig:plus-sign-gadgets}
\end{center}
\end{figure}

\section{Technical Lemmas}\label{sec:tech-lemmas}

\ifabstract
\later{
\section{Omitted Proofs From Section~\ref{sec:tech-lemmas}}
}
\fi

\subsection{Grids of Polyominoes}

As mentioned above, in order to show that all polyominoes of size greater than 2 are universal, we show that a bit-reading gadget can be constructed with these polyominoes. In this section we show two lemmas about single-shaped polyTAM systems that will aid in the construction of bit-writers used to show that any polyomino of size greater than $2$ can be used to define a single-shape polyTAM system capable of universal computation. Throughout this section, any mention of a polyTAM system refers to a single-shape system.

The following lemma says that if a polyomino $P$ can be translated by a vector $\vec{v}$ so that no pixel positions of the translated polyomino overlap the pixel positions of $P$, then for any integer $c \neq 0$, no pixel positions of $P$ translated by $c\cdot \vec{v}$ overlap the pixel positions of $P$. The proof of Lemma~\ref{lem:chain} can be found in~\cite{OneTile}; the statement of the lemma has been included for the sake of completeness.

\begin{lemma}
\label{lem:chain}
Consider a two-dimensional, bounded, connected, regular closed set $S$, i.e., $S$ is equal to the topological closure of its interior points.
Suppose $S$ is translated by a vector $v$ to obtain shape $S_{v}$, such that $S$ and $S_{v}$ have disjoint interiors. Then the shape $S_{c*v}$ obtained by translating $S$ by $c*v$ for any integer $c \neq 0$ and $S$ have disjoint interiors.
\end{lemma}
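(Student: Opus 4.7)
The plan is to argue by contradiction, using a topological path argument that reduces the general claim to the given single-step hypothesis. First I would note that the disjointness relation is symmetric under $v \mapsto -v$ (simply translate the pair $(S, S_v)$ by $-v$), so it suffices to treat positive integers $c \geq 2$. A quick projection argument handles an easy subcase: letting $\pi$ denote orthogonal projection onto the line spanned by $v$, the set $\pi(S)$ is a compact interval of some length $\ell$ (since $S$ is compact and connected), and because $\pi$ is an open map on open sets, $\pi(\text{int}(S))$ is contained in the relative interior of $\pi(S)$. If $\ell < |v|$, then the translates $\pi(\text{int}(S)) + i|v|$ are pairwise disjoint for distinct $i$, so no two distinct $S_{iv}$ can share an interior point and the conclusion follows immediately.

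The nontrivial case is $\ell \geq |v|$. Here, suppose toward contradiction that there is $p \in \text{int}(S) \cap \text{int}(S_{cv})$, so that both $p$ and $p - cv$ lie in $\text{int}(S)$. Because $S$ is connected and regular closed, $\text{int}(S)$ is a path-connected open subset of the plane, and I would fix a continuous path $\gamma : [0,1] \to \text{int}(S)$ with $\gamma(0) = p$ and $\gamma(1) = p - cv$. Next I define the continuous map $F : [0,1]^2 \to \mathbb{R}^2$ by $F(s,t) = \gamma(s) - \gamma(t)$. If I can establish that $v \in F([0,1]^2)$, say $F(s^\ast, t^\ast) = v$, then setting $q := \gamma(t^\ast)$ gives that both $q$ and $q + v = \gamma(s^\ast)$ lie in $\text{int}(S)$, which places $q$ in $\text{int}(S) \cap \text{int}(S_{-v})$; this contradicts the hypothesis applied to $-v$, completing the proof.

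The main obstacle, and the technical heart of the argument, is therefore to show that $v$ does lie in $F([0,1]^2)$. The map $F$ vanishes along the diagonal, while $F(0,1) = cv$ and $F(1,0) = -cv$, so the boundary image $F(\partial [0,1]^2)$ is a closed curve in $\mathbb{R}^2$ passing through $0$, $cv$, and $-cv$. Since $|c| \geq 2$, the point $v$ sits strictly interior to the segment from $0$ to $cv$, and my plan is to use a degree-theoretic winding-number argument to show that the boundary image winds nontrivially around $v$, forcing $v$ into the image. To make this rigorous I may first perturb $\gamma$ slightly within $\text{int}(S)$ to put $F$ into transverse general position, which is legitimate because $\text{int}(S)$ is open. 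As a more hands-on alternative, I could apply the intermediate value theorem to the continuous function $s \mapsto \pi(\gamma(s))$: since this decreases by $c|v|$ overall, for each $j \in \{1, \dots, c-1\}$ there is an $s_j$ with $\pi(\gamma(s_j)) = \pi(p) - j|v|$, and a careful geometric analysis of where $\gamma(s_j)$ and $\gamma(s_j) + v$ lie in the stacked translates should produce the required overlap between two consecutive $S_{iv}$.
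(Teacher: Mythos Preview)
First, note that the paper does not actually supply a proof of this lemma: the text says ``The proof of Lemma~\ref{lem:chain} can be found in~\cite{OneTile}; the statement of the lemma has been included for the sake of completeness.'' So there is no in-paper argument to compare your approach against directly.

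Turning to correctness, your proposal contains a genuine gap. You assert that because $S$ is connected and regular closed, $\operatorname{int}(S)$ is path-connected. This is false in general: take $S$ to be the union of two closed disks tangent at a single point. Such an $S$ is bounded, connected, and regular closed (the closure of the two disjoint open disks recovers $S$), yet $\operatorname{int}(S)$ is the disjoint union of those two open disks and hence disconnected. Your entire contradiction argument rests on a path $\gamma \subset \operatorname{int}(S)$ joining $p$ to $p - cv$; if those two points sit in different components of the interior, no such $\gamma$ exists and the argument collapses at the outset. (For edge-connected polyominoes the interior \emph{is} connected, so the strategy could survive in the paper's intended application, but the lemma as stated is strictly more general.)

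Even granting the path $\gamma$, the heart of your argument---that $v$ must lie in the image of $F(s,t) = \gamma(s) - \gamma(t)$---is left as a plan rather than a proof, and the obstacles are not merely cosmetic. The winding-number sketch is not routine: the boundary loop $F|_{\partial[0,1]^2}$ passes through $0$ at both diagonal corners and through $\pm cv$ at the off-diagonal corners, and for the straight-line choice of $\gamma$ it degenerates entirely onto the line $\mathbb{R}v$, so one cannot simply homotope to that case and read off a nonzero degree around $v$. The antisymmetry $F(s,t) = -F(t,s)$ gives only $W(F|_{\partial},v) = -W(F|_{\partial},-v)$, which does not by itself force a nonzero value. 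Your IVT alternative fares no better: it pins down only the component of $\gamma(s_j)$ in the $v$-direction and says nothing about the transverse coordinate, so the ``careful geometric analysis'' you defer to is doing all the real work and is left entirely unspecified. Neither sketch, as written, establishes that $v \in F([0,1]^2)$.
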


Informally, the following lemma says that any polyomino gives rise to a polyTAM system that can produce an infinite ``grid'' of polyominoes.

\begin{lemma} \label{lem:grids}
Given a polyomino $P$. There exists a directed, singly seeded, single-shape tile system $\mathcal{T} = (T, \sigma)$ (where the seed is placed so that pixel $p \in P$ is at location $(0,0)$ and the shape of tiles in $T$ is $P$) and vectors $\vec{v}, \vec{w} \in \mathbb{Z}^2$, such that $\mathcal{T}$ produces the terminal assembly $\alpha$, which we refer to as a \emph{grid}, with the following properties. (1) Every position in $\alpha$ of the form $c_1\vec{v} + c_2\vec{w}$, where $c_1, c_2 \in \mathbb{Z}$, is occupied by the pixel $p$, and (2) for every $c_1, c_2\in \mathbb{Z}$, the position in $\Z^2$ of the form $c_1\vec{v} + c_2\vec{w}$ is occupied by the pixel $p$ for some tile in $\alpha$.
\end{lemma}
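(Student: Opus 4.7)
The plan is to exhibit, for any polyomino $P$, linearly independent lattice vectors $\vec{v},\vec{w}\in\mathbb{Z}^2$ such that the translates $\{P+c_1\vec{v}+c_2\vec{w}\}$ pack the plane without interior overlap and with adjacent lattice copies sharing at least one unit edge, and then to define a single-shape tile system that deterministically assembles this packing.

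I would first take $\vec{v}=(k,0)$, where $k$ is the smallest positive integer for which $P$ and $P+(k,0)$ have disjoint interiors (such a $k$ exists since $k=W$, the bounding-box width, always works). A minimality argument then yields a shared edge between $P$ and $P+\vec{v}$: if $P$ and $P+(k-1,0)$ overlap at $p=(x,y)$ via $q=(x-(k-1),y)\in P$, then $(x+1,y)\notin P$, for otherwise $P$ and $P+(k,0)$ would overlap at $(x+1,y)=q+(k,0)$; but $(x+1,y)=q+\vec{v}\in P+\vec{v}$, so $p$'s east edge is shared with a pixel of $P+\vec{v}$. Lemma~\ref{lem:chain} then extends this to interior-disjointness of the entire horizontal chain $\{P+c\vec{v}:c\in\mathbb{Z}\}$. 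For $\vec{w}$, I would take $\vec{w}=(a,H)$, where $H$ is the bounding-box height of $P$ and $a=t-b$ for some $x$-coordinates $t$ in the top row of $P$ and $b$ in its bottom row (both rows being nonempty for any polyomino). Then $(t,H-1)\in P$ shares an edge with $(t,H)\in P+\vec{w}$ because $(t-a,0)=(b,0)\in P$, while the vertical component $H$ of $\vec{w}$ forces any shifted copy of the horizontal chain into a disjoint horizontal $y$-strip; combined with the one-dimensional chain disjointness from Lemma~\ref{lem:chain}, this yields pairwise interior-disjointness of the full two-dimensional lattice.

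With $\vec{v},\vec{w}$ fixed, I would define the tile system using tiles of shape $P$ decorated with matching strength-$1$ glue pairs (one pair for each of the four lattice directions $\pm\vec{v},\pm\vec{w}$) placed on the specific exposed edges witnessing the shared-edge adjacencies to the four lattice neighbors. Glue matching geometrically forces any attaching tile's reference pixel to be displaced from an existing tile's reference pixel by exactly $\pm\vec{v}$ or $\pm\vec{w}$, so attachments occur only at lattice positions, and by interior-disjointness of the lattice no attempted attachment is ever blocked by overlap. Starting from a single seed tile with the distinguished pixel $p$ at the origin, the assembly process produces the infinite lattice as its unique terminal assembly, establishing conditions (1) and (2) along with directedness. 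The step I expect to require the most care is the construction of $\vec{w}$: Lemma~\ref{lem:chain} alone only supplies one-dimensional chain disjointness, and the naive choice $\vec{w}=(0,H)$ can fail to share any edge between $P$ and $P+\vec{w}$ for polyominoes whose top and bottom rows are horizontally misaligned (as with the S-tetromino). The horizontal offset $a=t-b$ resolves this, while the vertical component $H$ simultaneously preserves two-dimensional lattice disjointness, so the interaction between these two properties is what forces the specific $(a,H)$ form of $\vec{w}$.
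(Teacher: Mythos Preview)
Your argument is correct, but your choice of lattice vectors differs from the paper's. The paper takes $\vec{v}=(D_x+1,\,q_2-p_2)$, where $D_x$ is the width of the bounding box minus one and $(p_1,p_2),(q_1,q_2)$ are a leftmost and rightmost pixel of $P$; this guarantees that $P$ and $P+\vec{v}$ occupy disjoint $x$-ranges and that the rightmost pixel of $P$ abuts the leftmost pixel of $P+\vec{v}$. For the second vector the paper takes $\vec{w}=(0,\,d_y+1)$, where $d_y$ is the largest \emph{in-column} vertical extent of $P$; this is purely vertical, and disjointness of vertical translates follows because two pixels in the same column of $P$ are at most $d_y$ apart.

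So the two constructions are in a sense dual. You make $\vec{v}$ purely horizontal and obtain edge-adjacency from minimality of $k$, then compensate in $\vec{w}=(a,H)$ with a horizontal offset $a$ to align the top and bottom rows. The paper instead makes $\vec{w}$ purely vertical, handling your S-tetromino concern not by a horizontal offset but by replacing the bounding-box height $H$ with the column-local quantity $d_y+1$, and then lets $\vec{v}$ carry a vertical component. Both approaches sidestep the same obstruction; the paper's is slightly more direct to compute from $P$ (no minimality search), while yours makes the adjacency argument for $\vec{v}$ cleaner and relies on Lemma~\ref{lem:chain} only for the one-dimensional horizontal chain rather than implicitly. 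Either set of vectors supports the same single-tile, four-glue system you describe for directed assembly of the grid.
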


\ifabstract
Please see Section~\ref{sec:lem-grids-proof} for the proof of Lemma~\ref{lem:grids}.
\later{
\subsection{Proof of Lemma~\ref{lem:grids}}\label{sec:lem-grids-proof}
}
\fi

\begin{proof}

Let $P$ be a polyomino, consisting of $n$ pixels. Let $D_x$ be the biggest difference between two $x$-values
of pixels in $P$, and let $(p_1,p_2)$ and $(q_1,q_2)$ be two pixels in $P$ that attain this difference such that $p_1\leq q_1$.
Then define $\vec{v}=(q_1-p_1 + 1, q_2-p_2)=(D_x+1, q_2 - p_2 )$.

Furthermore, let $d_y$ be the biggest difference between two $y$-values of pixels in $P$ that are in the same column. In other words,
$d_y = \max\{|y_1 - y_0| \text{ such that } (x, y_0) \text{ and } (x,y_1) \text{ are pixels in } P\}$. Then let $\vec{w}$ be the vector $(0,d_y + 1)$.
For a depiction of the vectors $\vec{v}$ and $\vec{w}$ as well as the distances $D_x$ and $d_y$, see Figure~\ref{fig:polyomino-dimensions}.

\begin{figure}[htp]
\begin{center}
\includegraphics[width=3.0in]{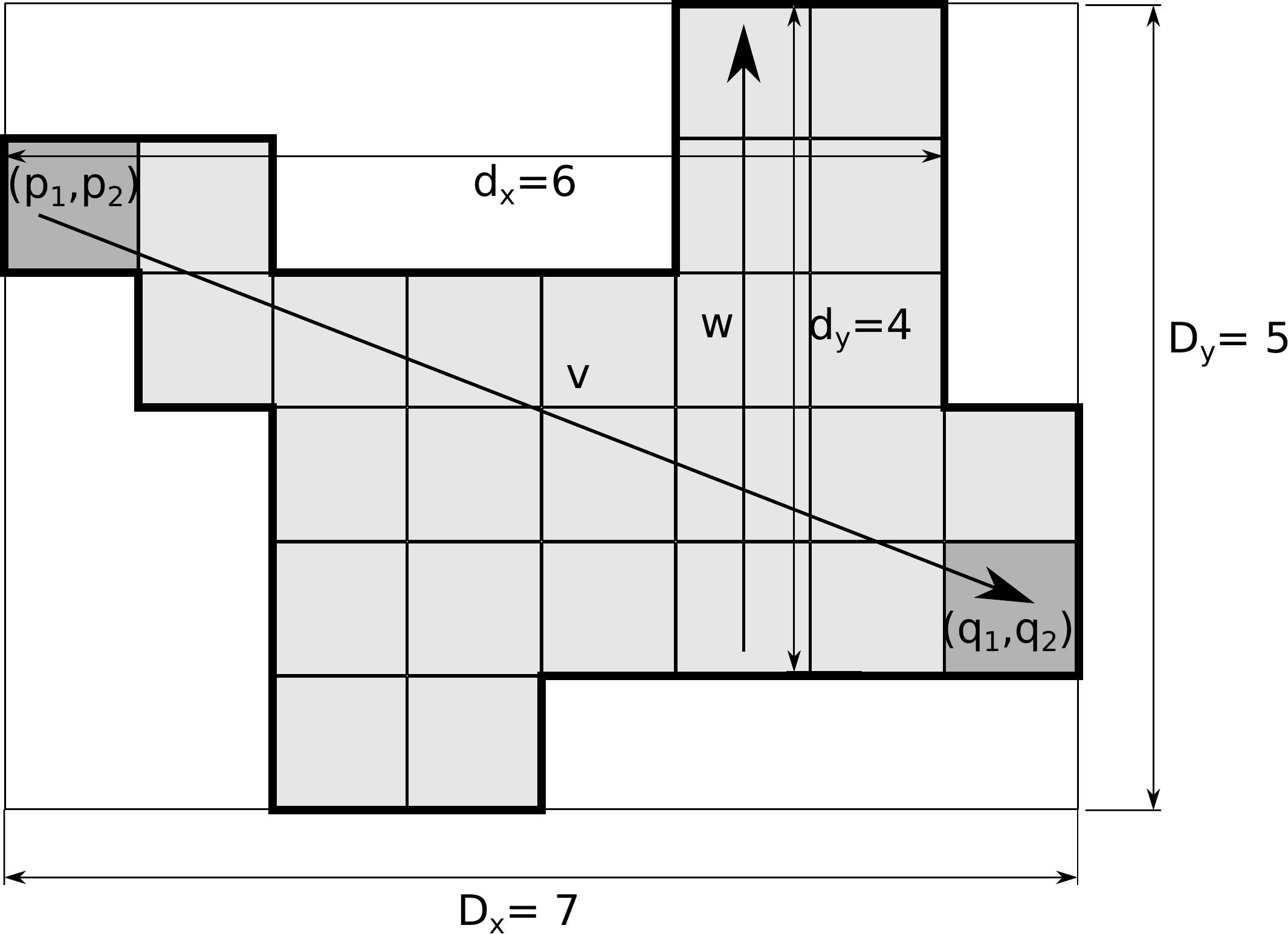}
\caption{Dimensions of an example polyomino}
\label{fig:polyomino-dimensions}
\end{center}
\end{figure}

Now define $P[i,j]=p+i\cdot \vec{v}+j\cdot \vec{w}$ for $i, j \in \Z^2$. Here, $p$ acts as a distinguished pixel that we use as a reference point as discussed above.  Then, for two polyominoes $P[i,j]$ and $P[k,l]$, we say that these polyominoes are \emph{neighboring} if $i=k$ and $|j-l|=1$, or $j=l$ and $|i-k|=1$.

{\it Claim:} $P[i,j]$ for all $i,j\in \Z^2$ defines a grid of non-overlapping polyominoes such that any two neighboring polyominoes $P[i,j]$ and $P[k,l]$ contain pixels with a shared edge. See Figure~\ref{fig:polyomino-grid} for an example of such a grid of polyominoes.

\begin{figure}[htp]
\begin{center}
\includegraphics[width=3.0in]{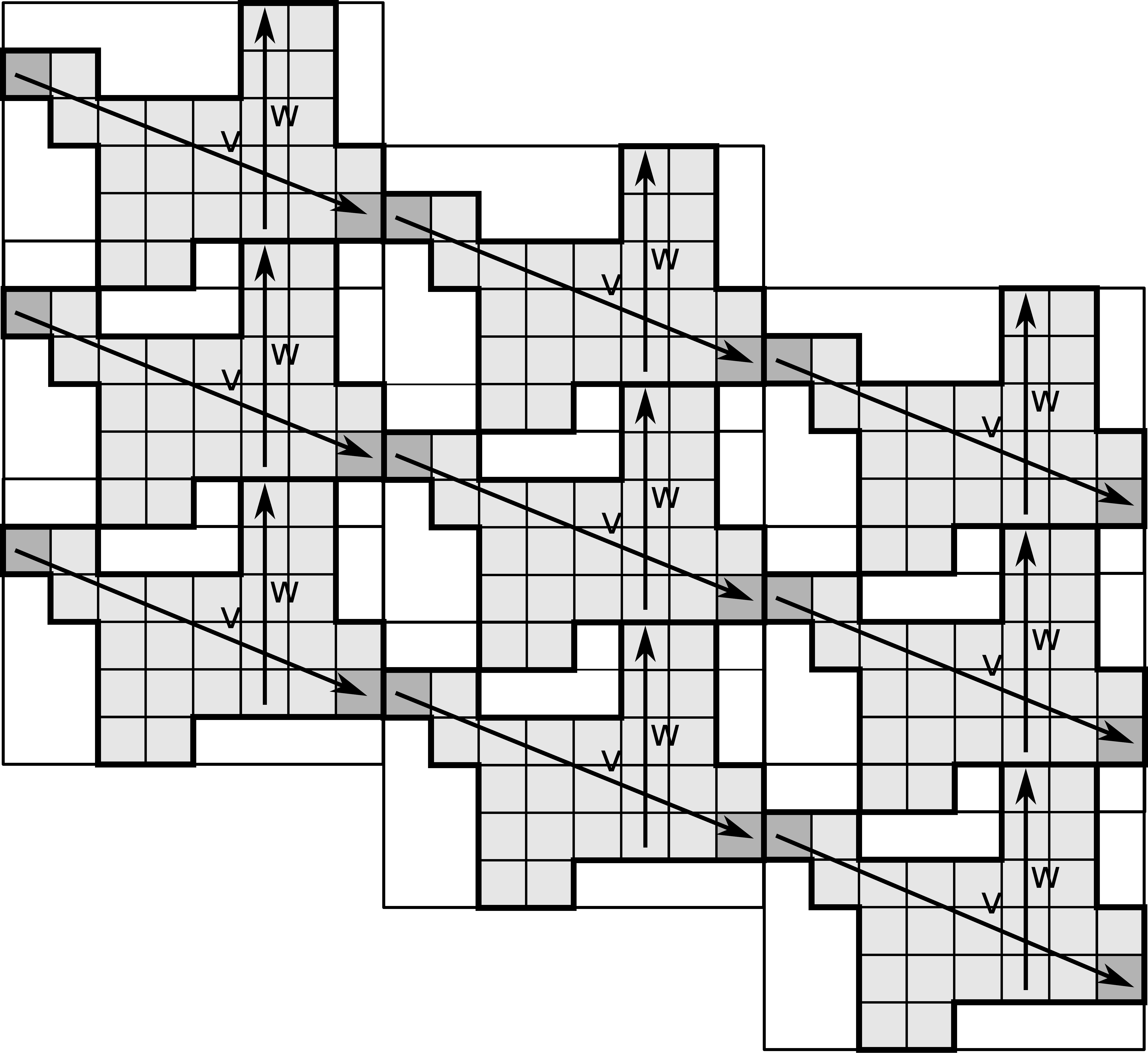}
\caption{A lattice that can be formed from the polyomino in Figure~\ref{fig:polyomino-dimensions} by translating using the vectors $\vec{v}$ and $\vec{w}$.}
\label{fig:polyomino-grid}
\end{center}
\end{figure}

Informally, this grid is formed by finding extremal horizontal points of $P$, a furthest left and furthest right, and using those as the connection points between horizontal neighbors in the grid (which may cause some vertical shifting as well).  Then, the vertical adjacencies are formed by simply sliding one copy of $P$ down toward another copy, with both in the same horizontal alignment, until they meet.  The point of first contact determines their neighboring edges.  In this way, an infinite grid can be formed.

To prove this claim, first note that $P[i,j]$ and $P[k,l]$ are disjoint if and only if $i\neq k$ or $j\neq l$.  In particular, if $i\neq k$, then the pixels $p$ of $P[i,j]$ and $P[k,l]$ are separated by at least a horizontal distance of $D_x+1$, and hence $P[i,j]$ and $P[k,l]$ cannot overlap; and if $i=k$ and  $j\neq l$, then pixels with the same $x$-coordinate are separated by at least $d_y$, and so they cannot overlap since $d_y$ was chosen to be the maximum distance of $y$-values of pixels in $P$ that are in the same column.

Finally, suppose that $P[i,j]$ and $P[k,l]$ are neighbors in the grid. First, suppose that $j=l$ and $|i-k|=1$. We will consider the case where $j=l$ and $i=k + 1$; the case where $i=k - 1$ is similar. Then, $P[i,j]$ has a leftmost pixel at $(p_1 + i\cdot (q_1-p_1 + 1), p_2 + i\cdot (q_2- p_2) + j\cdot (d_y+ 1))$, which equals $(p_1 + (k+ 1)\cdot (q_1-p_1 + 1), p_2 + (k+1)\cdot (q_2- p_2) + l\cdot (d_y+ 1)) = (q_1 + 1 + k\cdot (q_1-p_1 + 1), q_2 + k\cdot (q_2- p_2) + l\cdot (d_y+ 1))$. Therefore, the rightmost pixel of $P[k,l]$ located at $(q_1 + k\cdot (q_1-p_1 + 1), q_2 + k\cdot (q_2- p_2) + l\cdot (d_y+ 1))$ is $1$ unit to the left of the pixel of $P[i,j]$ located at $(p_1 + i\cdot (q_1-p_1 + 1), p_2 + i\cdot (q_2- p_2) + j\cdot (d_y+ 1))$. Hence, $P[i,j]$ and $P[k,l]$ contain pixels with a common edge.

Now suppose that $i=k$ and $|j-l|=1$. We will consider the case where $i=k$ and $j=l + 1$; the case where $j=l - 1$ is similar. In this case, let $(x, y)$ and $(x,y + d_y)$ be pixels of $P$. Then, $P[i,j]$ contains a pixel, $p$ say, at $(x + i\cdot (q_1-p_1 + 1), y + i\cdot (q_2- p_2) + j\cdot (d_y+ 1))$, which equals $(x + k\cdot (q_1-p_1 + 1), y + k\cdot (q_2- p_2) + (l + 1)\cdot (d_y+ 1)) = (x + k\cdot (q_1-p_1 + 1), (y + d_y) + 1 + k\cdot (q_2- p_2) + l\cdot (d_y+ 1))$. Therefore, $p$ is $1$ unit above the pixel of $P[k,l]$ located at $(x + k\cdot (q_1-p_1 + 1), y + d_y + k\cdot (q_2- p_2) + l\cdot (d_y+ 1))$.

To finish the proof, note that for any polyomino $P$ we can give a directed, singly seeded tile system $\mathcal{T} = (T, \sigma)$ with single shape $P$ by letting $T$ consist of a single tile and assigning a single glue to the appropriate edges of a polyomino tile such that the binding of two polyomino tiles enforces a shift by $\vec{v}$ or $\vec{w}$.

\end{proof}

Note that using $D_x$ and $D_y$ or using $d_x$ and $d_y$ (instead of $D_x$ and $d_y$) does not work;
in both cases it is easy to come up with examples for which there is an overlap.

Let $p, p' \in P$ be distinct pixels in the polyomino $P$ at positions $(x,y)$ and $(x',y')$ respectively, let $\vec{r}=(x-x',y-y')$, and let $\vec{v}, \vec{w}$ be as defined in Lemma~\ref{lem:grids}. Then, if there exists $c_1,c_2 \in \mathbb{Z}$ such that $(x,y)+c_1\vec{v} + c_2\vec{w} = (x',y')$, we say that the polyomino which occupies $(x',y')$ is $\vec{r}$-shifted with respect to (or relative to) the polyomino at $(x,y)$.  If a polyomino at position $(x',y')$ is $0$-shifted with respect to a polyomino at $(x,y)$, we say that the polyomino at position $(x',y')$ is \emph{on grid} with the polyomino at $(x,y)$.  If a polyomino is not on grid with  a polyomino at $(x,y)$, we say that the polyomino is \emph{off grid}. Henceforth, if we do not mention the tile which another tile is shifted in respect to, assume that the tile is shifted with respect to the seed.

For the remainder of this section, for a polyomino $P$, we let $V \subset \mathbb{Z}^2$ denote the set of vectors such that $\vec{r} \in V$ provided that there exists some directed, singly seeded system, single shape $\mathcal{T}' = (T',\sigma')$ with shape given by $P$ whose terminal assembly $\alpha'$ contains an $\vec{r}$-shifted polyomino tile, and we let $B$ denote the subset of vectors $B \subset V$ such that $\vec{b} \in B$ provided that there exists a directed, singly seeded system, single shape $\mathcal{T} = (T,\sigma)$ with shape given by $P$ such that the terminal assembly $\alpha$ of $\mathcal{T}$ consists of exactly two tiles: the seed tile $\sigma$ and a $\vec{b}$-shifted tile. $B$ can be thought of as the set of vectors such that the polyomino $P$ and a copy of $P$ shifted by a vector in $B$ are non-overlapping and contain pixels that share a common edge. We can think of $B$ as a set of basis vectors for $V$ in the following sense. If $\vec{r}\in V$, then $\vec{r}$ can be written as a linear combination of shifts in $B$. The following lemma is a more formal statement of this fact.

\begin{lemma}\label{lem:vect-v}
For any vector $\vec{r}\in V$, $\vec{r}=\Sigma c_i\vec{b_i}$ for some $c_i \in \mathbb{Z}$ and $\vec{b_i} \in B$.
\end{lemma}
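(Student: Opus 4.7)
The plan is a straightforward induction on the tile-assembly process used to witness $\vec{r} \in V$. Given any $\vec{r} \in V$, by definition there exists a directed, singly seeded, single-shape system $\mathcal{T}' = (T',\sigma')$ of shape $P$ whose terminal assembly $\alpha'$ contains a polyomino tile $t_{\vec{r}}$ whose reference pixel sits at shift $\vec{r}$ from the seed. Since $\alpha'$ is producible, I can fix an assembly sequence $(\alpha_0 = \sigma', \alpha_1, \alpha_2, \ldots)$ that yields $\alpha'$, and let $j$ be the first index at which $t_{\vec{r}}$ appears in $\alpha_j$. I will then induct on $j$.

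For the base case $j = 0$, the $\vec{r}$-shifted tile is the seed itself, so $\vec{r} = \vec{0}$ and we take the empty $\mathbb{Z}$-linear combination. For the inductive step $j \geq 1$, observe that because $\tau = 1$, the tile $t_{\vec{r}}$ attaches at step $j$ via at least one matching edge glue to some tile $t'$ already present in $\alpha_{j-1}$; let $\vec{r}'$ denote the shift of $t'$ relative to the seed. Then $\vec{r}' \in V$, witnessed by the same system $\mathcal{T}'$, and $t'$ is placed by a strictly shorter assembly subsequence, so the inductive hypothesis gives $\vec{r}' = \sum c_i \vec{b}_i$ with each $\vec{b}_i \in B$ and $c_i \in \mathbb{Z}$. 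Setting $\vec{b} := \vec{r} - \vec{r}'$, I then need to show $\vec{b} \in B$, after which $\vec{r} = \vec{r}' + \vec{b}$ completes the step.

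The remaining content is therefore to verify that the relative shift $\vec{b}$ between any two edge-adjacent, non-overlapping copies of $P$ that appear in some producible assembly is itself an element of $B$. Since $t'$ and $t_{\vec{r}}$ have shape $P$, share at least one unit edge in $\alpha'$, and have disjoint interiors, I exhibit a directed, singly seeded, single-shape system containing exactly two tile types (both of shape $P$): one designated as the seed and the other with a single fresh glue placed on precisely the edge of $P$ that is shared in $\alpha'$, with its matching copy placed on the corresponding edge of the seed. All other edges of both tile types carry pairwise distinct glues matching nothing else in $T$. The only attachment this system can admit is the $\vec{b}$-shifted second tile, so the terminal assembly is uniquely $\{\sigma, \sigma'\}$ with $\sigma'$ at shift $\vec{b}$, giving $\vec{b} \in B$.

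The main (mild) obstacle I anticipate is the last bookkeeping step: making sure the two-tile witnessing system really is directed and that no incidental glue match creates an extra terminal assembly. This is handled by the uniqueness of the chosen glue and by insisting that all other edge labels are distinct and unmatched, so the system admits exactly one assembly sequence up to the trivial choice of seed placement. Modulo this construction, the induction closes cleanly and yields the stated $\mathbb{Z}$-linear expression $\vec{r} = \sum c_i \vec{b}_i$.
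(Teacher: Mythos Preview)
Your proposal is correct and takes essentially the same approach as the paper: trace back from the $\vec{r}$-shifted tile to the seed through adjacent tiles, observing that each step contributes a vector in $B$, so $\vec{r}$ is their sum. The paper phrases this as ``there must be a path of neighboring polyomino tiles from the seed tile $S$ to $A$'' and takes for granted that adjacency of two non-overlapping copies of $P$ yields a vector in $B$, whereas you frame it as an induction on the assembly sequence and explicitly build the two-tile witnessing system---but the argument is the same.
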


\begin{proof}
This follows from the fact that if $\vec{r}\in V$, then there exists some directed, singly seeded system $\mathcal{T} = (T,\sigma)$ which contains an $\vec{r}$-shifted polyomino tile $A$. Then, there must be a path of neighboring polyomino tiles from the seed tile $S$ to $A$. Starting from $S$, each consecutive tile along this path to $A$ must be a $\vec{b}$-shifted tile for some $\vec{b}$ in $B$, and the sum of these vectors is $\vec{r}$.
\end{proof}

For a rectangle $R$ and a tile $T$, we say that $T$ \emph{lies in the southeast \emph{(respectively northwest)} corner} of $R$ iff the south and east edges of the bounding rectangle of the tile $T$ lie on the south and east edges of $R$.
Let $V$ contain every linear combination $\Sigma c_i\vec{b_i}$ for $c_i \in \mathbb{Z}$ and $\vec{b_i} \in B$. The next two lemmas formalize this notion. For $c\in \Z$ and $\vec{b}\in B$, the following lemma shows how if $\vec{r} = c\cdot \vec{b}$, we can give a system $\mathcal{T}_{\vec{r}}$ that contains an $\vec{r}$-shifted tile. Furthermore, the properties given in the lemma statement ensure that if we have two such system, $\mathcal{T}_{\vec{r}_1}$ and $\mathcal{T}_{\vec{r}_2}$, corresponding to two shift vectors $\vec{r}_1$ and $\vec{r}_2$, $\mathcal{T}_{\vec{r}_1}$ and $\mathcal{T}_{\vec{r}_2}$ can be ``concatenated'' to give a system that contains an $(\vec{r}_1 + \vec{r}_2)$-shifted polyomino tile. See Figure~\ref{fig:simpleShiftCases} for schematic depictions of the properties given in the following lemma.
\vspace{-14pt}
\begin{figure}[htp]
\centering
    \subfloat[\label{subfig:simpleShiftCase1}]{%
      \raisebox{0.3\height}{\includegraphics[width=.5in]{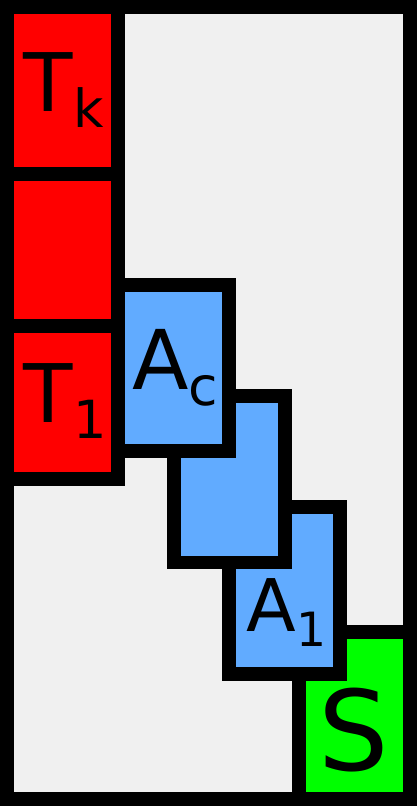}}
    }
    \quad\quad\quad
    \subfloat[\label{subfig:simpleShiftCase2}]{%
       \raisebox{0.15\height}{\includegraphics[width=.5in]{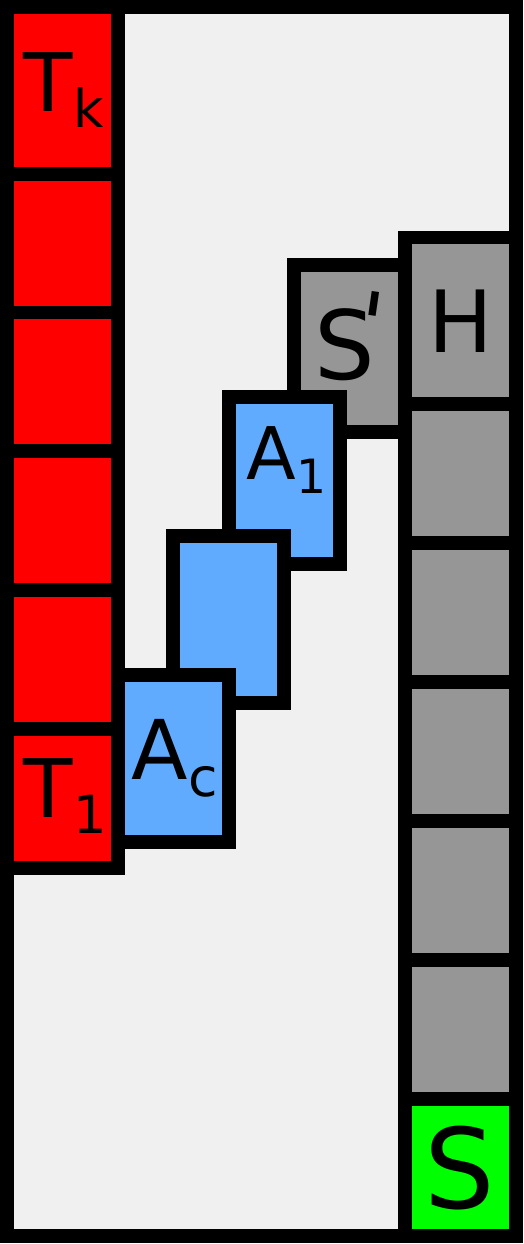}}
    }
    \quad\quad\quad
    \subfloat[\label{subfig:simpleShiftCase3}]{%
      \includegraphics[width=.5in]{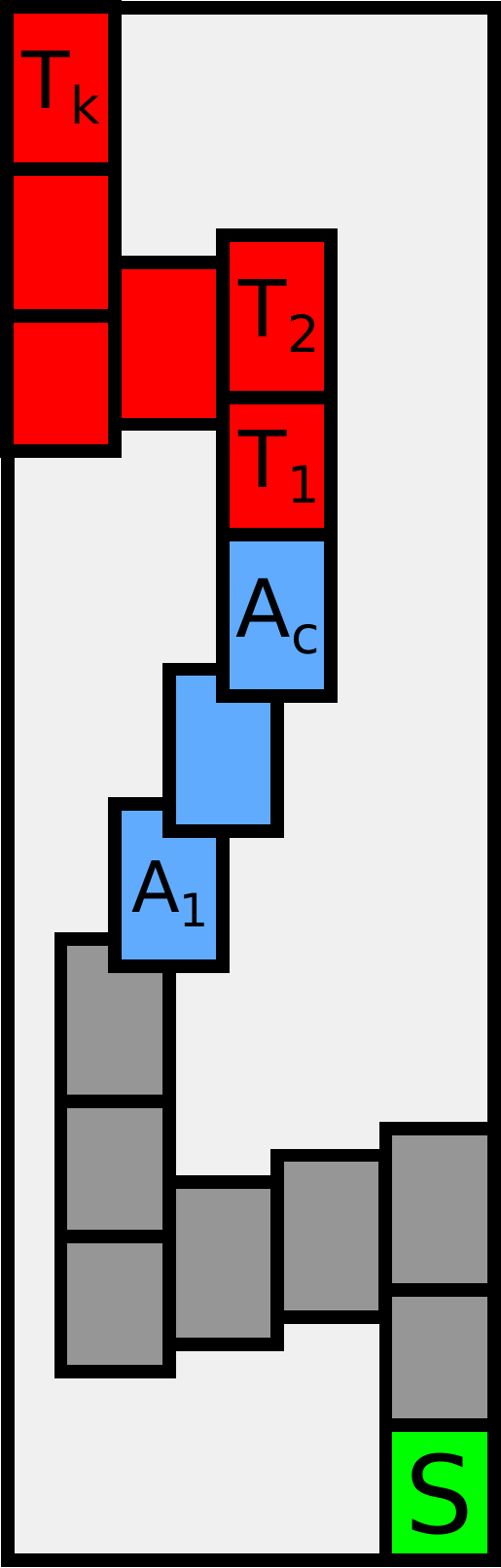}
    }
    \quad\quad\quad
    \subfloat[\label{subfig:simpleShiftCase4}]{%
      \raisebox{0.25\height}{\includegraphics[width=.55in]{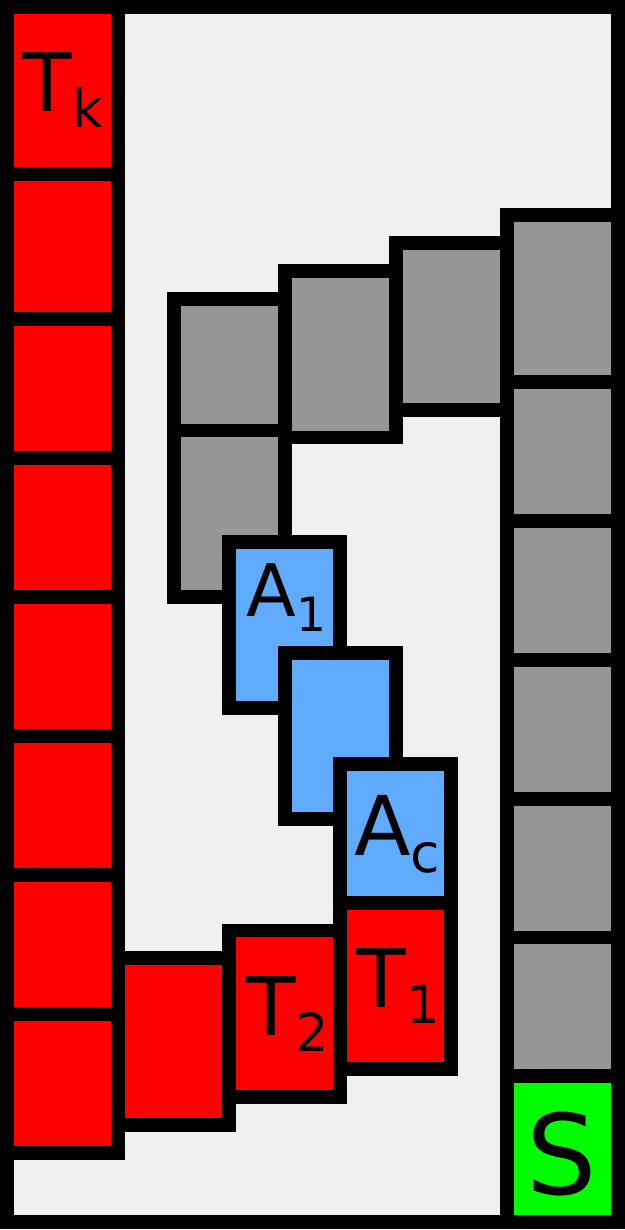}}
    }
    \caption{A depiction of the properties given in Lemma~\ref{lem:simpleShiftCases} for four different cases. Each of the small rectangles (red, green, blue or gray in color) serves as the bounding rectangle of some polyomino. Hence, $\alpha$ is contained in the union of all of the regions bounded by these rectangles. The seed tile $S$ of $\alpha$ is contained in the green rectangle. As assembly proceeds from the seed the first $r$-shifted polyomino tile is $A_c$. From this point on in the assembly process, the rectangles are on grid with $A_c$.}
    \label{fig:simpleShiftCases}
  \end{figure}

\begin{lemma}\label{lem:simpleShiftCases}
Let $\vec{r} = c\cdot \vec{b}$ for any $c \in \mathbb{Z}$ and $\vec{b} \in B$. Then there exists some directed, singly seeded system $\mathcal{T} = (T, \sigma)$ with all tiles shaped $P$ such that the terminal assembly $\alpha$ of $\mathcal{T}$ contains an $\vec{r}$-shifted tile. Moreover, the system $\mathcal{T}$ and assembly $\alpha$ have the following properties.
\begin{enumerate}
\item There is a single assembly sequence that yields $\alpha$,
\item For some $m,n\in \N^2$, $\alpha$ is contained in an $m\times n$ rectangle $R$ and the seed tile $S$ lies in the southeast corner of $R$, and
\item the last tile, $A$, to attach to $\alpha$ lies in the northwest corner of $R$.
\end{enumerate}
\end{lemma}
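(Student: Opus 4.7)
The plan is to construct $\mathcal{T}$ so that its terminal assembly $\alpha$ is a single chain of $|c|+1$ polyomino tiles, each obtained from the previous by a translation of $\vec{b}$ (after possibly replacing the pair $(c,\vec{b})$ by $(-c,-\vec{b})$, which leaves $\vec{r}$ unchanged). The seed $\sigma$ sits at one end of the chain, and the last tile to attach sits at the other end, at position $\vec{r} = c\vec{b}$ relative to $\sigma$.

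First I would use the hypothesis $\vec{b}\in B$: by definition of $B$, there is a directed singly-seeded two-tile system in which a copy of $P$ binds to the seed with a $\vec{b}$-shift, sharing a common edge between adjacent pixels along which we can place a pair of complementary glues. To realize the chain of length $|c|+1$, I would introduce $|c|+1$ distinct tile types $t_0,t_1,\dots,t_{|c|}$, all of shape $P$, with $t_0=\sigma$. On each $t_i$ with $i<|c|$ I place a unique glue $g_i$ on the pixel-edge that will abut $t_{i+1}$, and on $t_{i+1}$ I place the matching complementary glue on the pixel-edge that will abut $t_i$; no other glues exist on any tile. By Lemma~\ref{lem:chain}, the shape $P$ translated by $k\vec{b}$ for $1\le k\le |c|$ does not overlap $P$, and in fact the positions $0,\vec{b},2\vec{b},\dots,c\vec{b}$ are pairwise non-overlapping (applying Lemma~\ref{lem:chain} to every pair of indices). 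Since each consecutive pair shares the same edge relationship as in the basic two-tile system, each $t_{i+1}$ can attach to $t_i$ at the unique matching pair of glues. Because the only exposed glue at each step is $g_i$, the assembly sequence is forced to be $t_0,t_1,\dots,t_{|c|}$, which gives property~(1) and makes $t_{|c|}$ an $\vec{r}$-shifted copy of $\sigma$.

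Next I would verify properties~(2) and~(3) by identifying $R$ with the bounding rectangle of $\alpha$ and locating $\sigma$ and $A=t_{|c|}$ at opposite corners. After the normalization $\vec{b}\mapsto -\vec{b}$, $c\mapsto -c$ above, I may assume $\vec{b}=(b_x,b_y)$ with $b_x\le 0$ and $b_y\ge 0$ (the case $\vec{b}=(0,0)$ is vacuous since $\vec{b}\in B$ implies $\vec{b}\neq 0$; the four sign combinations in Figure~\ref{fig:simpleShiftCases} correspond to the four quadrant cases and are handled identically up to swapping the labels ``SE''/``NW'' with the appropriate opposite corner pair). Under this normalization, each successive shift by $\vec{b}$ moves the bounding rectangle of the tile weakly left and weakly up, so the east edge of every $t_i$ ($i\ge 1$) lies weakly west of that of $\sigma$, and the south edge of every $t_i$ lies weakly north of that of $\sigma$. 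Hence the east and south edges of $R$ coincide with those of the bounding rectangle of $\sigma$, placing $\sigma$ in the SE corner of $R$. The symmetric argument applied to $A=t_{|c|}$ — noting that each $t_i$ for $i<|c|$ is obtained from $A$ by iterated translation by $-\vec{b}$, which moves bounding rectangles weakly right and weakly down — shows that the north and west edges of $R$ coincide with those of $A$, placing $A$ in the NW corner of $R$.

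The main obstacle is the bookkeeping needed to guarantee that the bounding rectangle edges of $R$ are attained by the correct tiles; a generic polyomino $P$ need not have its reference pixel $p$ at a corner of its own bounding rectangle, so one cannot argue pointwise from the positions of the $p$'s alone. The resolution is the monotonicity observation in the previous paragraph: once the chain direction has been normalized so that every step is weakly NW, each tile's bounding rectangle is contained in the closed NW quadrant emanating from $\sigma$'s bounding rectangle and contains the closed SE quadrant emanating from $A$'s bounding rectangle. This gives exactly the corner placements required by (2) and~(3), completing the proof.
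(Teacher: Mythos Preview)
Your chain construction is clean and correctly handles the case where $\vec b$ points weakly northwest (i.e., $b_x\le 0$, $b_y\ge 0$): there the bounding rectangles of the tiles march monotonically NW and the corner placements fall out as you describe. The gap is in your normalization step. Replacing $(c,\vec b)$ by $(-c,-\vec b)$ flips \emph{both} coordinates of $\vec b$ at once, so it can move $\vec b$ from the SE quadrant to the NW quadrant, but it cannot move a NE or SW vector into the NW quadrant. If, for instance, $b_x>0$ and $b_y>0$, your chain places the seed at the \emph{southwest} corner of its bounding rectangle and the last tile at the \emph{northeast} corner, and no sign flip fixes this. Your parenthetical remark that the remaining cases are handled ``up to swapping the labels SE/NW with the appropriate opposite corner pair'' is precisely where the argument breaks: the lemma demands the specific pair SE/NW, and since the polyomino $P$ is fixed there is no reflection symmetry available to exchange corner roles.

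This is exactly why the paper's proof is more elaborate. For the three ``bad'' quadrants (cases~(b),~(c),~(d) of Figure~\ref{fig:simpleShiftCases}) the paper pads the chain on one or both ends with on-grid tiles, using the grid vectors $\vec v$ and $\vec w$ from Lemma~\ref{lem:grids}: it first grows from the seed $S$ northwest along the grid far enough to a tile $S'$, then performs the $c$-fold $\vec b$-shift from $S'$ to $A_c$, and finally grows northwest along the grid again from $A_c$ to a tile $A$ in the NW corner. Because the grid moves can reach arbitrarily far north and west without changing the shift class, this padding forces the seed to be the southeasternmost tile and the last tile to be northwesternmost regardless of the direction of $\vec b$. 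Your argument supplies the middle segment of this construction but is missing the on-grid padding needed for three of the four quadrant cases.
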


\ifabstract
Please see Section~\ref{sec:lem-simpleShiftCases-proof} for the proof of Lemma~\ref{lem:simpleShiftCases}.
\later{
\subsection{Proof of Lemma~\ref{lem:simpleShiftCases}}\label{sec:lem-simpleShiftCases-proof}
}
\fi

\begin{proof}
By possibly negating $\vec{b}$, we can assume that $c\geq0$ without loss of generality. Let $\vec{b} = (b_1, b_2)$, let $\vec{v}$ and $\vec{w}$ be vectors given by Lemma~\ref{lem:grids}, and let the dimensions of the bounding rectangle of $P$ be $p\times q$ so that the height is $p$ and the width is $q$. Then consider the following four cases: (a) $b_1 < 0$ and $b_2 \geq 0$ (shown in Subfigure~\ref{subfig:simpleShiftCase1}), (b) $b_1 < 0$ and $b_2 < 0$ (shown in Subfigure~\ref{subfig:simpleShiftCase2}), (c) $b_1 \geq 0$ and $b_2 \geq 0$ (shown in Subfigure~\ref{subfig:simpleShiftCase3}), and (d) $b_1 \geq 0$ and $b_2 < 0$ (shown in Subfigure~\ref{subfig:simpleShiftCase4}). In each of these cases, we will define tiles that give rise to the system $\mathcal{T}$ with unique terminal assembly $\alpha$ such that Properties 1-3 hold.

In Case (a), starting from the seed tile $S$ we can assign appropriate glues to $S$ and a tile $A_1$ such that $A_1$ and no other tile can attach to $S$, and such that $A_1$ is $\vec{b}$-shifted. Similarly, we can assign appropriate glues to $A_2$ so that $A_2$ and no other tile can attach to $A_1$, and moreover, we can ensure that $A_2$ is $\vec{b}$-shifted relative to $A_1$.  This must be possible by the existence of $\vec{b}$ and Lemma~\ref{lem:chain}. Then, $A_2$ will be $2\cdot \vec{b}$-shifted relative to $S$. After attaching $c$ total $\vec{b}$-shifted (relative to the previously attached tile), the last $\vec{b}$-shifted tile $A_c$ is attached. This assembly $\alpha_c$ of $\mathcal{T}$ is such that (1) $\alpha_c$ has a single assembly sequence, and (2) $\alpha_c$ contains a tile $A_c$ that is $c\cdot \vec{b}$-shifted relative to $S$. Now, using shifts by $\vec{v}$ and $\vec{w}$ of the grid, we can define tiles so that a path of tiles from $A_c$ can assemble that begins growth by attaching a tile $T_1$ to the left of $A_c$ (according to the vector $\vec{v}$), and then grows a vertical path of tiles $T_2, \ldots, T_k$ with each successive tile placed at locations that are translations of the previous tile by $\vec{w}$. In other words, using $\vec{w}$, we define tiles that form a vertical path of tiles above $T_1$ such that each tile of this path is on grid with $T_1$, and therefore, $c\cdot \vec{b}$-shifted relative to $S$. By defining sufficiently many such tiles, we can ensure that this vertical path places a final tile $T_k$ ($=A$ in the lemma statement) such that Property 3 holds. Note that Properties 1 and 2 hold by virtue of our tile definitions. The tiles denoted in Case (a) are depicted in Subfigure~\ref{subfig:simpleShiftCase1}.

In Case (b), we define tiles that allow for a vertical path of tiles to grow above $S$ with each successive tile placed at locations that are translations by $\vec{w}$ of the previous tile's locations. The top tile, denoted by $H$ say, of this vertical path can be defined so that it allows for the attachment of a tile $S'$ to its left such that $S'$ is on grid with $S$. To obtain the tile location of $S'$, we can translate $H$'s tile locations by $-\vec{v}$. Now, we can choose this vertical path so that the vertical distance from the south edge of the bounding rectangle of $S'$ to the south edge of $R$ is greater than $c\cdot p + p$. These tiles are schematically depicted as the gray rectangles in Subfigure~\ref{subfig:simpleShiftCase2}. The $c\cdot p$ summand ensures that we can now define tiles that form a path starting from $S'$ consisting of $c$ tiles such that each successive tile in the path is $\vec{b}$-shifted relative to the previous tile in the path, and that no portion of any tile belonging to this path lies below the south edge of $R$. These tiles are schematically depicted as the blue rectangles in Subfigure~\ref{subfig:simpleShiftCase2}. Let $A_c$ denote the final tile of this path. Then, the $p$ summand ensures that we can define a tile $T_1$ that can attach to the left of $A_c$ that is on grid with $A_c$. Finally, we can now define tiles that allow for the formation of a vertical path of tiles $T_2, \ldots, T_k$ starting from $T_1$ and ending at a tile $T_k$ ($=A$ in the lemma statement).
By defining sufficiently many such tiles, we can ensure that this vertical path places $A$ such that Property 3 holds.
These tiles are schematically depicted as the red rectangles in Subfigure~\ref{subfig:simpleShiftCase2}.
Once again, we've defined our tiles such that Properties 1 and 2 hold.

Cases (c) and (d) are similar to Case (b). The key is to first define tiles for $\mathcal{T}$ that form a path of tiles ending with a tile $S'$ such that
(1) no portion of any tile of this path lies to the east (respectively south) of the line defined by the east (respectively south) edge of the bounding rectangle for $S$, (2) we can define tiles that form a path $\{A_i\}_{i=1}^c$ such that $A_1$ attaches to $S'$ and each successive tile in the path is $\vec{b}$-shifted relative to the previous tile in the path, and (3) a third and final path of tiles $\{T_i\}_{i=1}^k$ can be defined such that $T_1$ attaches to $A_c$, each $T_i$ is on grid with $A_c$ (and therefore $c\vec{b}$-shifted relative to $S$), and $T_k$ is the northernmost and westernmost tile of the assembly consisting of all of the tiles three aforementioned paths.

\end{proof}

As previously mentioned, the properties given in Lemma~\ref{lem:simpleShiftCases} ensure that if we have two such systems, $\mathcal{T}_{\vec{r}_1}$ and $\mathcal{T}_{\vec{r}_2}$ corresponding to two shift vectors $\vec{r}_1$ and $\vec{r}_2$, $\mathcal{T}_{\vec{r}_1}$ and $\mathcal{T}_{\vec{r}_2}$ can be ``concatenated'' to give a system that contains an $(\vec{r}_1 + \vec{r}_2)$-shifted polyomino tile. The following lemma formalizes this notion of ``concatenation''.

\begin{lemma}\label{lem:genBitWrite}
Let $\vec{r} = \Sigma_{i=0}^n c_i\vec{b_i}$ for any $c_i \in \mathbb{Z}$ and $\vec{b_i} \in B$. Then there exists some directed, singly seeded system $\mathcal{T} = (T,\sigma)$ with all tiles shaped $P$ such that the terminal assembly $\alpha$ of $\mathcal{T}$ contains an $\vec{r}$-shifted polyomino. Moreover, the system $\mathcal{T}$ and assembly $\alpha$ have the following properties.
\begin{enumerate}
\item There is a single assembly sequence that yields $\alpha$,
\item For some $m,n\in \N^2$, $\alpha$ is contained in an $m\times n$ rectangle $R$ and the seed tile $S$ lies in the southeast corner of $R$, and
\item the last polyomino tile, $A$, to attach in the system lies in the northwest corner of $R$.
\end{enumerate}
\end{lemma}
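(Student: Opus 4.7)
The plan is to proceed by induction on $n$, the number of summands in the decomposition $\vec{r} = \sum_{i=0}^{n} c_i \vec{b_i}$, using Lemma~\ref{lem:simpleShiftCases} as the base case ($n=0$, a single term $c_0 \vec{b_0}$) and applying the ``concatenation'' idea described informally just above the statement in each inductive step. The single-term case already gives all three properties directly from Lemma~\ref{lem:simpleShiftCases}.

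For the inductive step, write $\vec{r} = \vec{r}' + c_n \vec{b_n}$ where $\vec{r}' = \sum_{i=0}^{n-1} c_i \vec{b_i}$. By the inductive hypothesis there is a directed system $\mathcal{T}_1 = (T_1, \sigma)$ with a unique assembly sequence, terminal assembly $\alpha_1$ contained in a rectangle $R_1$, seed $\sigma$ in the SE corner of $R_1$, and last-attaching tile $A_1$ in the NW corner of $R_1$, where $A_1$ is $\vec{r}'$-shifted relative to $\sigma$. By Lemma~\ref{lem:simpleShiftCases}, there is a directed system $\mathcal{T}_2 = (T_2, \sigma_2)$ with a unique assembly sequence, terminal assembly $\alpha_2$ contained in a rectangle $R_2$, seed $\sigma_2$ in the SE corner of $R_2$, and last-attaching tile $A_2$ in the NW corner of $R_2$, where $A_2$ is $c_n \vec{b_n}$-shifted relative to $\sigma_2$. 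I would rigidly translate $\mathcal{T}_2$ so that $\sigma_2$ occupies exactly the pixel positions of $A_1$ (legal since both are copies of $P$) and then define the combined system $\mathcal{T} = (T,\sigma)$ by taking $T_1 \cup T_2$ after (i) relabeling every glue of $T_2$ with a fresh label disjoint from the glue alphabet of $T_1$, and (ii) merging the tile type playing $A_1$ with the tile type playing $\sigma_2$ so that the glues needed to drive $\mathcal{T}_2$'s assembly sequence become exposed only after $A_1$ attaches in $\mathcal{T}$. The unique assembly sequence of $\mathcal{T}_1$ then runs to completion ending with $A_1$, after which the unique assembly sequence of $\mathcal{T}_2$ runs to completion from $A_1 = \sigma_2$, placing $A_2$ at the pixel corresponding to $\sigma + \vec{r}' + c_n \vec{b_n} = \sigma + \vec{r}$.

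The main obstacle is the geometric verification that the two sub-assemblies are compatible and that Properties 2 and 3 survive the concatenation. Take $R$ to be the smallest axis-aligned rectangle containing $R_1 \cup R_2$. Property 2 of Lemma~\ref{lem:simpleShiftCases} places the bounding box of $\sigma_2$ in the SE corner of $R_2$, so after the identification $\sigma_2 = A_1$ the rectangle $R_2$ extends only in the N and W directions from the bounding box of $A_1$, while $R_1$ extends only in the S and E directions from that same bounding box; consequently $\sigma$ inherits the SE corner of $R$ and $A_2$ inherits the NW corner of $R$, giving Properties 2 and 3. For non-overlap of $\alpha_1 \setminus \{A_1\}$ with $\alpha_2 \setminus \{\sigma_2\}$, the same observation shows that the pixels of these two sets lie on opposite sides of the bounding box of $A_1 = \sigma_2$, and if $P$ has a non-rectangular shape whose interior gaps create the possibility of ``spill-over'' into this shared bounding box, the constructions of Lemmas~\ref{lem:simpleShiftCases} and~\ref{lem:grids} can be padded with additional grid translates $\vec{v}$ and $\vec{w}$ so that neither subsystem contributes tiles other than $A_1 = \sigma_2$ inside that box. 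Property 1 follows because the fresh glue labeling prevents any $T_2$-tile from attaching before $A_1$ is placed and prevents any $T_1$-tile other than $A_1$ from participating in the growth of $\alpha_2$, so the combined assembly sequence is the unique concatenation of the two inherited unique sequences.
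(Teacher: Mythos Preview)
Your proposal is correct and follows essentially the same approach as the paper: the paper's proof is the single sentence ``This follows by applying Lemma~\ref{lem:simpleShiftCases} to each of the summands of $\vec{r} = \Sigma c_i\vec{b_i}$,'' and your induction on the number of summands with the explicit concatenation of the two subsystems at $A_1=\sigma_2$ is precisely the fleshed-out version of that sentence. Your additional care about fresh glue labels, the geometry of $R_1\cap R_2$, and the padding argument to keep the shared bounding box clean are details the paper leaves implicit.
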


\begin{proof}
This follows by applying Lemma~\ref{lem:simpleShiftCases} to each of the summands of $\vec{r} = \Sigma c_i\vec{b_i}$.
\end{proof}

In Lemma~\ref{lem:genBitWrite}, we start with a seed tile in the southeast corner of a rectangular region $R$ and proceed to place an $r$-shifted tile in the northwest corner of the rectangle. Note that by using the techniques used to prove Lemma~\ref{lem:simpleShiftCases} and Lemma~\ref{lem:genBitWrite}, we can show analogous lemmas where the seed tile lies in any corner of $R$ and an $r$-shifted tile lies in the opposite corner.

Now we prove the main lemma for this section.  This lemma will allow us to construct bit-writing gadgets used in the construction given in Section~\ref{sec:main-result}. Intuitively, the lemma states that given a polyomino $P$ and vector $\vec{r}\in V$, we can define a polyTAM system that starts growth from a seed tile, $S$, in the southeast corner of a rectangle $R$, and without growing outside of $R$, places $\vec{r}$-shifted tiles on the west edge of $R$ in such a way that it is possible to then continue growth to the west of $R$. The possibility of continuing growth to the west of $R$ is formally stated as Property 4 in Lemma~\ref{lem:r-shifts}. This lemma will allow us to assemble a series of bit-writers while also resting assured that once these bit-writers have assembled, bit-reader assemblies can continue growth. It is helpful to see Figure~\ref{fig:r-shifts} for an overview of the properties given in the following lemma.

\begin{figure}[htp]
\begin{center}
%\vspace{-15pt}
\includegraphics[width=3.0in]{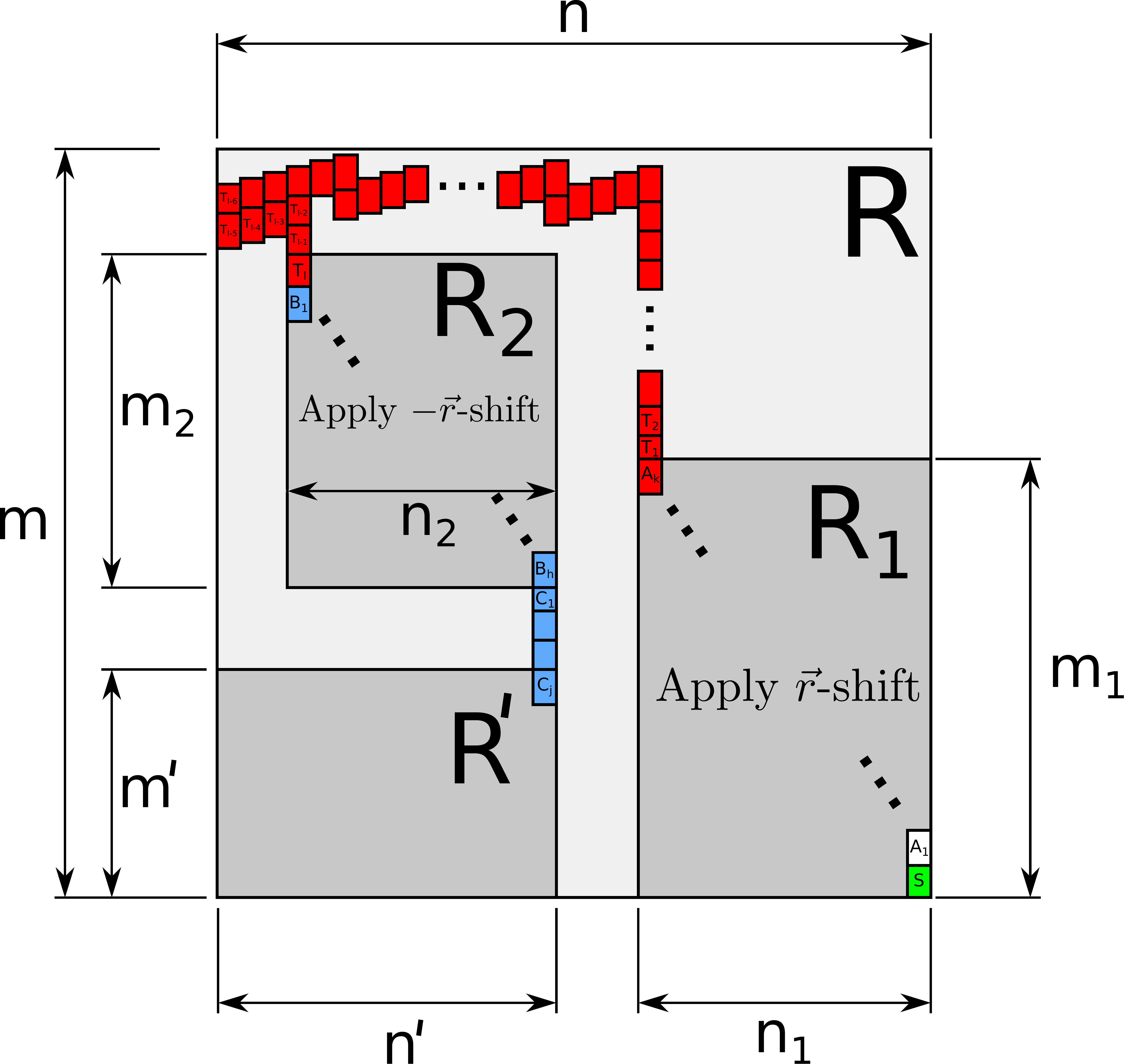}
\caption{The rectangular region $R_1$ contains tiles of $\alpha$ defined using Lemma~\ref{lem:genBitWrite} with vector $\vec{r}$. The red rectangular regions each serve as the bounding rectangle of an $\vec{r}$-shifted tile of $\alpha$. This path of tiles, $\{T_i\}_{i=1}^l$, places a final tile in the northwest corner of the region $R_2$. $R_2$ contains tiles of $\alpha$ defined using Lemma~\ref{lem:genBitWrite} with vector $-\vec{r}$-shifted. The blue rectangular regions each contain a single tile that is on grid with $S$. This path of tiles places a final tile in the northeast corner of the region $R'$. Note that by modifying the path of tiles $\{T_i\}_{i=1}^l$, we can make the dimension $m'$ and $n'$ of $R'$ arbitrarily large.}
\label{fig:r-shifts}
%\vspace{-30pt}
\end{center}
\end{figure}

\begin{lemma} \label{lem:r-shifts}
Let $P$ be a polyomino.  Let $\vec{r}$ be a vector in $V$.  Then there exists a directed, singly seeded system $\mathcal{T}=(T, \sigma)$ with all tiles of shape $P$ which produces $\alpha$ such that $\mathcal{T}$ and $\alpha$ have the following properties.
\begin{enumerate}
\item There is a single assembly sequence that yields $\alpha$,
\item for some $m,n\in \N^2$, $\alpha$ is contained in an $m\times n$ rectangle $R$ and the seed tile $S$ lies in the southeast corner of $R$,

\item for any tile $A$ of $\alpha$ such that the west edge of the bounding rectangle of $A$ lies on the west edges of $R$, $A$ is $\vec{r}$-shifted, and

\item for any $m'', n''\in \N^2$, we can choose $\mathcal{T}$ such that the last tile $L$ to attach to $\alpha$ lies on gird in the northeast corner of a rectangle $R'$ with dimensions $m'\times n'$ where $m' > m''$ and $n' > n''$.  Moreover, the south and west edges of $R'$ lie on the south and west edges of $R$, and no portion of any tile of $\alpha$ lies inside of $R'$ and outside of the bounding rectangle of $L$.
\end{enumerate}
\end{lemma}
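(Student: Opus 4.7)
The plan is to realize the construction schematically depicted in Figure~\ref{fig:r-shifts} by composing three sub-assemblies (each built from a corner-rotated variant of Lemma~\ref{lem:genBitWrite}) connected by a vertical chain of $\vec{r}$-shifted tiles and a final inverted-L path of on-grid tiles. Since $\vec{r}\in V$, Lemma~\ref{lem:vect-v} gives $\vec{r}=\Sigma c_i\vec{b_i}$ with $\vec{b_i}\in B$, and negating all $c_i$ shows $-\vec{r}\in V$ as well, so Lemma~\ref{lem:genBitWrite} is applicable to both $\vec{r}$ and $-\vec{r}$. The overall bounding rectangle $R$ is partitioned into four regions: $R_1$ in its SE corner, a tall strip along its west edge, $R_2$ in its NW corner, and the remaining SE region that will contain $R'$ in its SW portion.

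The construction proceeds as follows. First, apply Lemma~\ref{lem:genBitWrite} with vector $\vec{r}$ inside $R_1$, placing the seed $S$ in the SE corner of $R_1$ and producing an $\vec{r}$-shifted tile $A_1$ in the NW corner of $R_1$. From $A_1$, grow a vertical chain of tiles along the west edge of $R$ by successive $\vec{w}$-translations (where $\vec{w}$ is from Lemma~\ref{lem:grids}); each such tile is on grid with $A_1$ and hence $\vec{r}$-shifted relative to $S$. The final tile $T_l$ of this chain sits in the NW corner of a rectangle $R_2$, to which we apply the appropriate corner-rotated variant of Lemma~\ref{lem:genBitWrite} with vector $-\vec{r}$; this produces an on-grid tile $A_2$ (since $\vec{r}+(-\vec{r})=\vec{0}$) in the SE corner of $R_2$. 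Finally, from $A_2$, grow an inverted-L path of on-grid tiles that proceeds first eastward and then southward, tracing the north and east edges of $R'$, and terminating in a tile $L$ at the NE corner of $R'$.

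Verification of the four properties is then straightforward. Single-use glues assigned stage-by-stage enforce a unique assembly sequence (Property~1); by design the union of all sub-assemblies sits in $R$ with $S$ in its SE corner (Property~2); and the only tiles with bounding-rectangle west edges on the west edge of $R$ are those of the Stage-2 vertical chain, every one of which is $\vec{r}$-shifted (Property~3). For Property~4, given any $m'',n''$, we lengthen the Stage-2 chain so that $R_2$ sits sufficiently high (controlling the height $n'$ of $R'$) and we lengthen the eastward segment of the final inverted-L path so that the turn to the south occurs sufficiently far east (controlling the width $m'$ of $R'$). The interior of $R'$ is empty by construction: the Stage-2 chain lies on the west boundary, the inverted-L path bounds $R'$ from the north and east, and no growth enters the SW quadrant in between.

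The main technical obstacle is the same one present in Lemmas~\ref{lem:simpleShiftCases} and \ref{lem:genBitWrite}: assigning glues so that the five pieces cooperate into a single deterministic assembly sequence without spurious attachments, and showing that the sub-rectangles $R_1, R_2, R'$ can be sized consistently so they do not collide. The grid generated by $\vec{v}$ and $\vec{w}$ (Lemma~\ref{lem:grids}) covers the plane, so the terminal tile $L$ of the inverted-L path is automatically on grid with $S$ and can be positioned at any prescribed on-grid lattice point in the NE corner of $R'$; consequently the size requirements of Property~4 can always be satisfied. Uniqueness of the assembly sequence and absence of spurious attachments are then handled by the same stage-by-stage assignment of distinct glues already used in the proofs of the preceding lemmas.
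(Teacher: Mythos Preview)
Your overall strategy is the paper's: apply Lemma~\ref{lem:genBitWrite} for $\vec{r}$ inside a region $R_1$, grow a path of $\vec{r}$-shifted tiles out to the west edge of $R$, apply a rotated Lemma~\ref{lem:genBitWrite} for $-\vec{r}$ inside $R_2$ to return on grid, and finish with an on-grid path to $L$.

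The specific layout you give, however, cannot satisfy Property~4. Your Stage-2 chain uses only $\vec{w}$-translations, and $\vec{w}=(0,d_y+1)$ is purely vertical, so every chain tile sits directly above $A_1$ at the $x$-coordinate of the west edge of $R_1$. Nothing else in your construction lies further west, so the west edge of $R$ coincides with the west edge of $R_1$; that is, $R_1$ spans the full width of $R$. But $R'$ must have its south and west edges on those of $R$, so $R'$ necessarily overlaps $R_1$, and the emptiness requirement in Property~4 fails. For the same reason your proposed control of $m'$ via the eastward leg of the final path cannot work: $m'$ is bounded by the width of $R$, which in your layout is the fixed width of $R_1$.

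The paper fixes exactly this by letting the intermediate path $\{T_i\}$ use \emph{both} $\vec{v}$ and $\vec{w}$: it first travels arbitrarily far west of $R_1$, touches the west edge of $R$ there (at exactly two tiles), and then returns east to $T_l$, from which $R_2$ is seeded strictly in the interior of $R$. This westward excursion is what creates the empty southwest region into which $R'$ fits, and the freedom in routing $\{T_i\}$ is what makes both $m'$ and $n'$ arbitrarily large. The final stage is then simply a vertical on-grid column $\{C_i\}$ dropping from the southeast corner of $R_2$ down to $L$, rather than an inverted~L. (Your inverted-L description is also internally inconsistent: a path tracing the north and then the east edge of $R'$ would terminate at the southeast corner of $R'$, not the northeast.)
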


\ifabstract
Please see Section~\ref{sec:lem-r-shifts-proof} for the proof of Lemma~\ref{lem:r-shifts}.
\later{
\subsection{Proof of Lemma~\ref{lem:r-shifts}}\label{sec:lem-r-shifts-proof}
}
\fi

\begin{proof}
We will define the tiles of $\mathcal{T}$ so that each property holds. First, let $\vec{v}$ and $\vec{w}$ be vectors given by Lemma~\ref{lem:grids}. It is with respect to these vectors that we can say whether or not a tile is ``on grid''. Now let $\mathcal{T}_1$ be the system given by Lemma~\ref{lem:genBitWrite} for the vector $\vec{r}$, and let the rectangular $m_1\times n_1$ region given by Lemma~\ref{lem:genBitWrite} be denoted by $R_1$. This region is depicted in Figure~\ref{fig:r-shifts}, and is where the $\vec{r}$-shift is applied. By defining tiles of $\mathcal{T}$ to be the same as tiles of $\mathcal{T}_1$ we obtain a seed $S$ and tiles $\{A_i\}_{i=1}^k$ such that $A_k$ is $r$-shifted relative to $S$, and lies in the northwest corner of $R_1$.

Now, we can use the vectors $\vec{v}$ and $\vec{w}$ to define a path of tiles $\{T_i\}_{i=1}^l$ such that $T_1$ attaches to $A_k$ (after adding an appropriate glue to the definition of $A_k$), each $T_i$ attaches to $T_{i-1}$ for $i>1$ and is on grid with $A_k$ (and hence remain $r$-shifted relative to $S$), and the last tile, $T_l$, of this path lies in the northwest corner of an $m_2\times n_2$ rectangular region $R_2$.
In Figure~\ref{fig:r-shifts}, each $T_i$ lies in a red rectangular regions; $R_2$ is also depicted in Figure~\ref{fig:r-shifts}.
Note that we can choose the tiles $\{T_i\}_{i=1}^l$ to be such that the west edges of exactly two tiles in $\{T_i\}_{i=1}^l$ lie on the west edge of $R$.
Moreover, we can choose this path of tiles $\{T_i\}_{i=1}^l$ such that if $\alpha_l$ is the subassembly of $\alpha$ consisting of all of the tiles $S$, $\{A_i\}_{i=1}^k$, and $\{T_i\}_{i=1}^l$, then no portion of any tile of $\alpha_l$ is contained inside of $R_2$ and outside of the bounding rectangle of $T_l$.

Notice that $m_2$ and $n_2$ can be made arbitrarily large by extending the path consisting of the tiles $\{T_i\}_{i=1}^l$. Then, for $m_2$ and $n_2$ sufficiently large, we can apply Lemma~\ref{lem:genBitWrite} to obtain a directed, singly seeded system $\mathcal{T}_2$ with a seed tile $S_2$ placed in the northwest corner of $R_2$ such that the terminal assembly of $\mathcal{T}_2$ contains a $-\vec{r}$-shifted tile in the southeast corner. Using $\mathcal{T}_2$, we can define tiles, $B_1$ through $B_h$, for $\mathcal{T}$ such that $B_1$ can attach to $T_l$, and $B_h$ is $-\vec{r}$-shifted relative to $T_l$, and lies in the southeast corner of $R_2$. Therefore, $B_h$ is on grid with $S$. That is, the locations of $B_h$ are the locations of $S$, only shifted by $c_1\vec{v} + c_2\vec{w}$ for some $c_1, c_2\in \Z$. Finally, note that we can choose the path consisting of tiles $\{T_i\}_{i=1}^l$ so that $|c_1|$ and $|c_2|$ are arbitrarily large. Then, using vertical shifts by $\vec{w}$, we can define tiles for $\mathcal{T}$ that assemble a vertical path of tiles $\{C_i\}_{i=1}^j$ such that $C_1$ can attach directly below and vertically aligned to $B_h$. Furthermore, $C_j$ ($=L$ in the statement of the lemma) is the last tile to attach in $\mathcal{T}$, and $C_j$ lies in the northeast corner of a rectangle $R'$ with dimensions $m'\times n'$ such that the south and west edges of $R'$ lie on the south and west edges of $R$. Again, we can choose the path consisting of tiles $\{T_i\}_{i=1}^l$ such that $m'$ and $n'$ are arbitrarily large, and such that no portion of any other tile of $\alpha$ lies inside $R'$ and outside the bounding rectangle of $C_j$. Hence for an appropriate choice of tiles $\{T_i\}_{i=1}^l$, Property 4 holds.

When defining tiles for $\mathcal{T}$ that assemble each of the aforementioned paths of tiles, we can ensure that Property 1 holds by giving unique glues that allow one and only one tile to attach at any given assembly step. Properties 2, 3, and 4 can be ensured by our choice of tiles $\{T_i\}_{i=1}^l$.

\end{proof}

As in Lemma~\ref{lem:genBitWrite}, in Lemma~\ref{lem:r-shifts}, we start with a seed tile in the southeast corner of a rectangular region $R$ and proceed to place $r$-shifted tiles on the west edge of the rectangle. Note that we can show analogous lemmas where the seed tile starts in any corner of a rectangle $R$, and $r$-shifted tiles are placed on a chosen opposite edge.

\section{All Polyominoes of Size at Least 3 Can Perform \\ Universal Computation at $\tau=1$}\label{sec:main-result}

We can now proceed to state our main result:
any polyomino $P$ of size at least three can be used for polyomino tile-assembly systems that are
computationally universal at temperature $1$.  Formally stated:

\begin{theorem}\label{thm:comp-univ-poly}
Let $P$ be a polyomino such that $|P| \geq 3$.  Then for every standard Turing
Machine $M$ and input $w$, there exists a TAS with $\tau=1$ consisting only of
tiles of shape $P$ that simulates $M$ on $w$.  \end{theorem}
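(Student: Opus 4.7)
The plan is to reduce Theorem~\ref{thm:comp-univ-poly} to the construction of a bit-reading gadget (in the sense of Definition~\ref{def:bit-reader}) for an arbitrary polyomino $P$ with $|P|\ge 3$. Once such a gadget exists, the overview of Section~\ref{sec:Turing-overview} combines it with bit-writer gadgets into a zig-zag simulation of an arbitrary Turing machine $M$ on input $w$: the seed row encodes $w$ together with the initial state as a sequence of bit-writers separated by on-grid spacers, and each subsequent row reads the cells below, propagates the current state/symbol pair horizontally through glues (using a distinct glue per transition case), and emits bit-writers for the next row. All horizontal signal passing is glue-based and is standard from prior temperature-$1$ zig-zag constructions; thus the whole computational content lies in producing geometry-based bit-reader and bit-writer gadgets from~$P$.

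First I would fix the ambient ``on-grid'' lattice produced by Lemma~\ref{lem:grids}, obtaining translation vectors $\vec v$ and $\vec w$ along which arbitrarily large directed grids of copies of $P$ assemble. This lattice is the reference against which ``off-grid'' (i.e., $\vec r$-shifted) placements are defined, and it will form the backbone of both the spacers and the bit-writer gadgets. Then I would identify a nontrivial shift $\vec r\in V\setminus\{\vec 0\}$ whose geometric consequences permit asymmetric path blocking (discussed below). Given such an $\vec r$, Lemma~\ref{lem:r-shifts} directly supplies the bit-writer template: a directed, rectangular subassembly whose exposed west edge carries either on-grid tiles (encoding, say, $0$) or $\vec r$-shifted tiles (encoding $1$), with Property~$4$ yielding an arbitrarily large on-grid anchor from which the next gadget in the row can continue to grow. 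Chaining two such writers with spacers realizes an arbitrary bit-string encoding along the seed row.

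The main obstacle, and where most of the work lies, is the bit-reader itself. From a single initiator tile $t$ placed above the writer I must exhibit two candidate reader paths $\pi_0$ and $\pi_1$, with terminal tile types drawn from disjoint sets $T_0$ and $T_1$, such that: (i) when the writer exposes on-grid tiles, every extension of $\pi_1$ collides with the writer while $\pi_0$ completes in every valid assembly sequence; and (ii) symmetrically for $\pi_1$ when the exposed tiles are $\vec r$-shifted. Verifying this for \emph{every} polyomino of size at least three is the heart of the argument: for each $P$ one must produce a specific $\vec r$ together with a specific pair of reader paths whose pixel-level collision patterns against $P$ and against an $\vec r$-shifted copy of $P$ differ in at least one cell. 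This is precisely the geometric content that monominoes and, as the paper's later sections show, dominoes lack. For $|P|\ge 3$ one exploits the fact that any such polyomino admits multiple essentially distinct adjacencies to another copy of itself, producing a vector in $V\setminus\{\vec 0\}$ that is not a grid vector; a short case analysis on the relative position of a non-collinear triple of pixels of $P$ then yields a reader-path design. Uniqueness of the winning path---i.e., no ``wrap-around'' growth reaching the target column from the far side---follows from Lemma~\ref{lem:chain}, since repeated translations by a shift vector preserve disjointness of polyomino copies.

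Finally, the full system for $(M,w)$ is assembled by composition. The seed row consists of bit-writers for the binary encoding of $w$ and the initial state, separated by on-grid spacers whose width exceeds the bounding rectangles of all gadgets, so that each bit-reader in the next row starts strictly above its intended writer and has room to complete. A finite family of zig-zag row gadgets, one per transition of $M$, is then built exactly as in prior temperature-$1$ universal constructions such as those of~\cite{CookFuSch11,SingleNegative}, with our bit-readers playing the role previously played by negative glues, duples, or edge bumps/dents. Since the horizontal propagation is entirely glue-based and the vertical propagation is realized through the bit-reader/bit-writer pair, the resulting $\tau=1$ polyomino tile system faithfully simulates $M$ on $w$, completing the proof.
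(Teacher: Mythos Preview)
Your high-level plan matches the paper's: reduce to constructing a bit-reading gadget for $P$, use Lemma~\ref{lem:grids} for the ambient lattice, and invoke Lemma~\ref{lem:r-shifts} to build bit-writers that deposit suitably shifted blocker tiles. That part is fine.

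The gap is in the bit-reader construction, which you compress into ``a short case analysis on the relative position of a non-collinear triple of pixels of $P$.'' This fails immediately for $m\times 1$ polyominoes, which have no non-collinear triple at all; the paper treats these as a separate case with its own gadget (Case~(1)). More broadly, the paper's proof is not a single uniform construction but a four-way case split---$m\times 1$ lines; polyominoes with $d_x=d_y=2$ (themselves split into squares and two flavours of zig-zag); \emph{basic} polyominoes with larger thickness; and \emph{non-basic} polyominoes---and each case uses a genuinely different blocker/reader geometry. In particular, the non-basic case cannot be handled by ``pick any $\vec r\in V\setminus\{\vec 0\}$'': by definition not every shift is achievable, and the paper needs the additional Lemmas~\ref{lem:spanningVectors} and~\ref{lem:evenParityShifts} (a parity argument showing that all even-sum shifts lie in $V$) to position the $0$- and $1$-blockers so that each collides with exactly one reader path while provably missing the other. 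Your proposal neither introduces the basic/non-basic dichotomy nor supplies any mechanism guaranteeing that the blocker for one bit does \emph{not} also obstruct the opposite reader path, which is the delicate point in every case.

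Finally, your appeal to Lemma~\ref{lem:chain} for ``uniqueness of the winning path'' is misplaced: that lemma only says that integer multiples of a non-overlapping translation remain non-overlapping, which does not by itself rule out an alternate reader path wrapping around the blocker. The paper handles this instead by explicit design of the reader paths (the yellow path is $(-1,-1)$-shifted, the aqua path on-grid, with the second tile of each path placed so that its bounding box clears the blocker), verified case by case.
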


It follows from the procedure outlined in Section~\ref{sec:Turing-overview} and the Lemmas of Section~\ref{sec:tech-lemmas} that in order to simulate an arbitrary Turing Machine by a
TAS consisting only of tiles of some polyomino shape $P$, it is sufficient to construct a
system consisting only of tiles of shape $P$ for which there exists a
bit-reading gadget, because the additional paths required for a zig-zag Turing
machine simulation are guaranteed to be producible by the lemmas of
Section~\ref{sec:tech-lemmas}.

To simplify our proof, we consider different categories of shapes of $P$ as separate cases, which first requires an additional definition.

\begin{definition}[Basic polyomino]
A polyomino $P$ is said to be a \emph{basic} polyomino if and only if for
every vector $\vec{x}$ modulo the polyomino grid for $P$, there exists a system
$\mathcal{T}$ containing only tiles with shape $P$ such that $\mathcal{T}$
produces $\alpha$ and $\alpha$ contains a $\vec{x}$-shifted polyomino.
Otherwise we call $P$ {\em non-basic}.  \end{definition}

Essentially, basic polyominoes are those which have the potential to grow paths that place tiles at any and all shift vectors relative to the grid.

Our proof consists of showing how to build bit-reader gadgets for each of the following cases based on the shape $P$:

\begin{enumerate}
    \item[{\bf (1)}]  $P$ has thickness 1 in one direction, i.e., it is an $m\times 1$ polyomino.
    \item[{\bf (2)}]  $P$ has thickness 2 in two directions, i.e., $d_x=d_y=2$.
    \item[{\bf (3)}]  $P$ is basic and has thickness at least 3 in one and at least 2 in the other direction.
    \item[{\bf (4)}]  $P$ is non-basic.
\end{enumerate}

The Lemmas of Section~\ref{sec:tech-lemmas} provide us with the basic facilities to build paths of tiles which occupy particular points while avoiding others.  By carefully designing the grids and offsets for the tiles of each polyomino $P$, we are able to construct the constituent paths of the bit-reading gadgets.  

Let $P$ be an arbitrary polyomino, with $|P|\geq 3$. Without loss of generality (as the following arguments all hold up to rotation), let the bounding
box of $P$ be of dimensions $m\times n$, with $m\geq n$, and $d_y$ be the
largest distance between two pixels of $P$ in the same column, and $d_x$
be the largest distance between two pixels of $P$ in the same row.
For ease of notation, we refer to the southernmost of all westernmost pixels of $P$ as $p_0=(0,0)$, and to all other pixels by their integer coordinates.

For any polyomino $P$, we know that tiles of
shape $P$ can produce a grid by Lemma~\ref{lem:grids}; throughout this
section, we simply refer to this as the grid (for $P$).  We also note that the
grid for a given $P$ may be slanted as in Figure~\ref{fig:polyomino-grid}, and
that the construction of the zig-zag Turing machine is simply slanted
accordingly.  If we say that a tile is $\vec{v}$-shifted for some vector
$\vec{v}$, we mean that it is off grid by the vector $\vec{v}$.

For all figures in this section, we use the same color conventions as in
Figure~\ref{fig:bit-gadget-definition}.  Thus, the green tiles in this
section represent the $t$ tile; as discussed in the caption of the figure, the
yellow and aqua tiles represent the two potential paths grown from $t$, while the
dark grey tiles represent tiles that prevent the growth of paths from $t$.  We
refer to these grey tiles as \emph{blockers}.  We use the convention that if a
path of yellow polyominoes grows, then a 0 is being read.  Similarly, if a path
of aqua polyominoes grows, then a 1 is being read.  Consequently, we call
polyominoes that prevent the growth of the aqua path {\em $1$-blockers} and
polyominoes that prevent the growth of the yellow path {\em $0$-blockers}.  In
addition, tiles of the same color are numbered in order to indicate the order
of their placement where the higher numbered tiles are placed later in the
assembly sequence.

Our proof consists of showing how to build bit-reader gadgets for each of the following cases based on the shape $P$:

\begin{enumerate}
    \item[{\bf (1)}]  $P$ has thickness 1 in one direction, i.e., it is an $m\times 1$ polyomino.
    \item[{\bf (2)}]  $P$ has thickness 2 in two directions, i.e., $d_x=d_y=2$.
    \item[{\bf (3)}]  $P$ is basic and has thickness at least 3 in one and at least 2 in the other direction.
    \item[{\bf (4)}]  $P$ is non-basic.
\end{enumerate}

\subsection{Case (1):  $P$ Is an $m\times 1$ Polyomino}
If $P$ is a straight line, and therefore $n=1$, we can simply use a scheme as illustrated in Figure~\ref{fig:comp_1by_case}.
In Figure~\ref{fig:comp_1by_case0}, a 0 bit is read as indicated by the
placement of the yellow polyomino.  Notice that the tile labeled 3 in
Figure~\ref{fig:comp_1by_case1} prevents the attachment of the aqua-colored
tile.  After the yellow tile attaches, a fuchsia tile attaches as shown in the
figure which allows for growth to continue.  Figure~\ref{fig:comp_1by_case1}
shows a similar scenario in the case that a 1 is read.  The key property of this
bit reader is that the yellow and aqua tiles have different offsets relative to the green tile, which is always possible if $P$ is a line of length $\ge 3$.
Note that the bit reader shown in Figure~\ref{fig:comp_1by_case} is a left-to-right bit-reading
gadget.  A right-to-left bit-reading gadget can be constructed in a similar
fashion.

\begin{figure}[htp]
\centering
  \subfloat[][A left-to-right bit-reading gadget reading a 0 bit.]{%
        \label{fig:comp_1by_case0}%
        \includegraphics[width=2.3in]{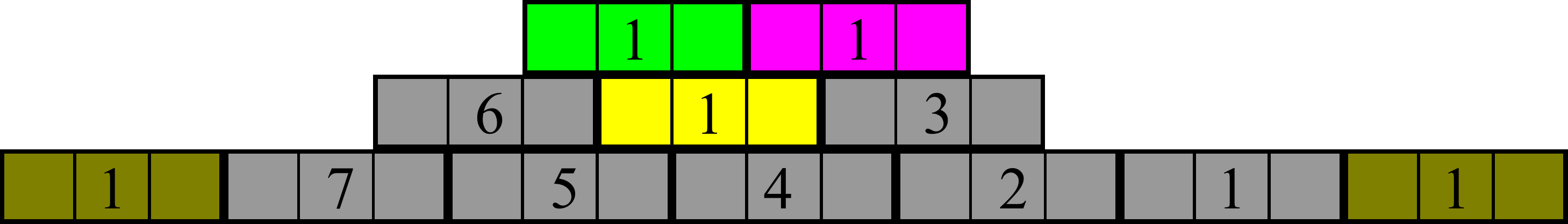}
        }%
        \quad
  \subfloat[][A left-to-right bit-reading gadget reading a 1 bit.]{%
        \label{fig:comp_1by_case1}%
        \includegraphics[width=2.3in]{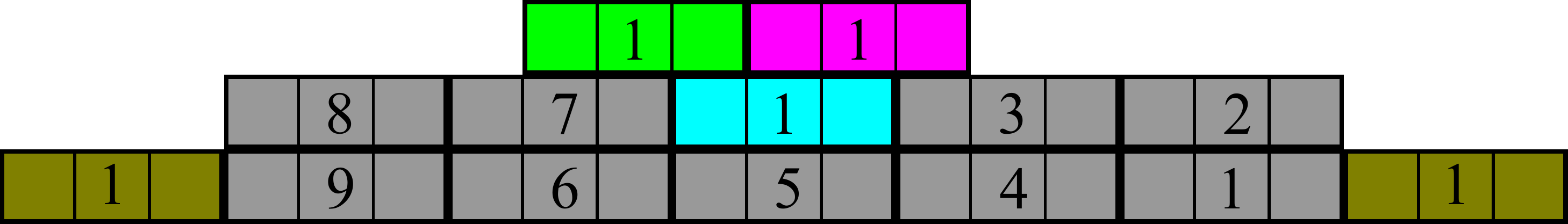}
        }%
        \quad
  \caption{The two different bit-reading schemes for an $m\times 1$ polyomino. Note that the bit reader in this figure proceeds from left to right.}
  \label{fig:comp_1by_case}
\end{figure}

The case for any $P$ which is an $n \times 1$ straight line can be handled in the same way.  Thus, we now only need to consider the cases $m\geq n\geq 2$. Because $P$ is connected,
this implies both $d_x\geq 2$ and $d_y\geq 2$.
Furthermore, we assume the that the grid constructed from
Lemma~\ref{lem:grids} using $P$ is created by attaching the southernmost pixel on the
eastern edge of $P$ to the northernmost pixel on the western side of the
$P$, as suggested by Figure~\ref{fig:polyomino-grid}.

\subsection{Case (2):  $P$ Is Such That $d_x=d_y=2$}
Before describing bit-reading constructions, we analyze the possible cases
for the shape of $P$; refer to Figure~\ref{tbl:SpecCompCs}.  Also, we note that $d_x=d_y=2$ implies that $P$ is basic.

First consider the situation in which $|P|$ is even.
If both $(1,0)$ and $(0,1)$ belong to $P$, the assumption $d_x=d_y=2$ implies
that $(1,1)$ must also belong to $P$, but no further pixels. Thus, $P$
is a $2\times 2$-square, which will be treated as {\bf Case (2a)}.
Now, without loss of generality consider the case that $(0,1)$ belongs to $P$, but $(1,0)$
does not. It follows from $d_x=d_y=2$ that $(1,1)$ belongs to $P$, as well as $(1,2)$.
This conclusion can be repeated until all pixels of $P$ are allocated. It follows
that $P$ is an even zig-zagging shape. This is shown as {\bf Case (2b)} in Figure~\ref{tbl:SpecCompCs}.

Now consider the case in which $|P|$ is odd.
If both $(1,0)$ and $(0,1)$ belong to $P$, $(1,1)$ cannot be part of $P$, and $P$ is an $L$-shape
consisting of three pixels. If without loss of generality $(0,1)$ belongs to $P$, but $(1,0)$
does not, we can conclude analogous to (2a) that
$P$ is an odd zig-zagging shape, shown as {\bf Case (2c)} in Figure~\ref{tbl:SpecCompCs};
this also comprises the case of an $L$-shape with three pixels.

\newcolumntype{M}{>{\centering\arraybackslash}m{\dimexpr.25\linewidth-2\tabcolsep}}
\begin{figure}[htp]
\centering
\begin{tabular}{| M | M | M|}
	\hline
	\includegraphics[scale=.13]{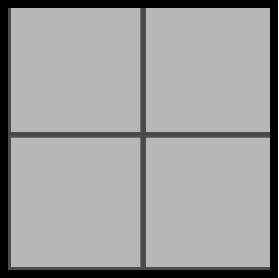}
     & \includegraphics[scale=.13]{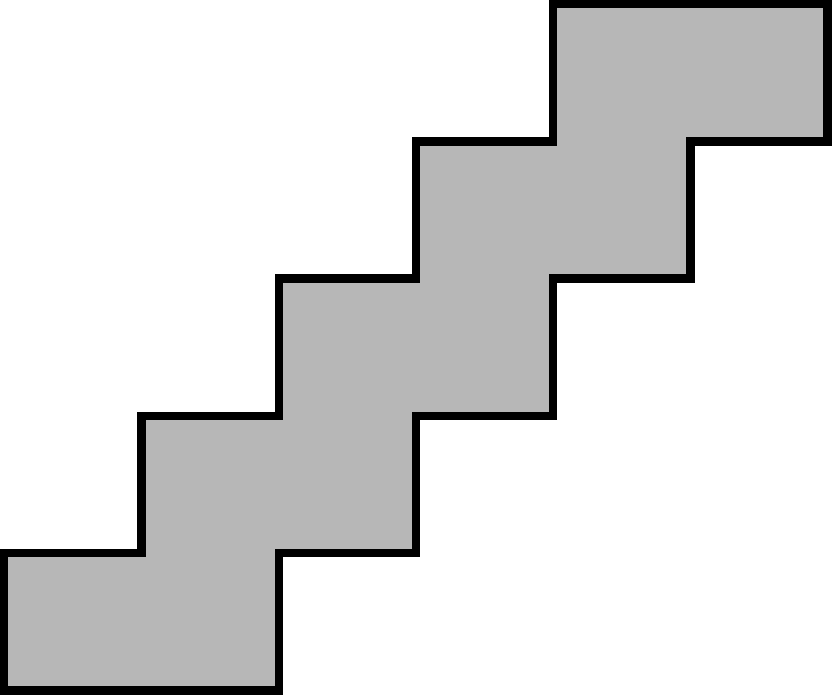}
     & \includegraphics[scale=.13]{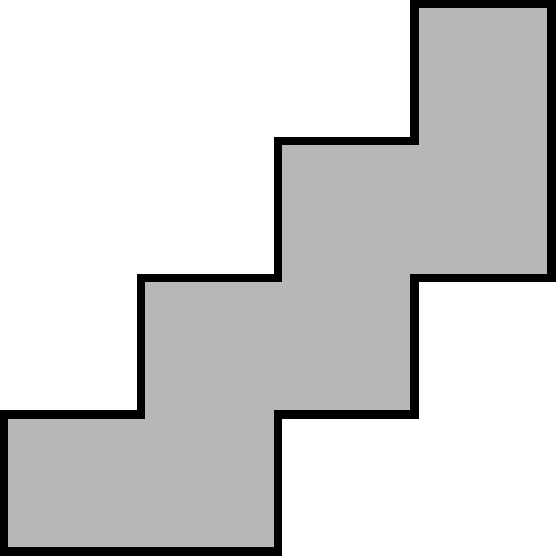}\\
	(2a) & (2b) & (2c)  \\\hline
\end{tabular}
\caption{The possible shapes in Case (A), when $d_x=d_y=2$.}
\label{tbl:SpecCompCs}
\end{figure}

Now we sketch the bit-reading schemes. As these cases are relatively straightforward, we simply refer
to the corresponding figures. Note that the logic of the arrangement is color coded:
the first polyomino we add to our tile set is the green polyomino along with the aqua and yellow polyominoes
that allow for an aqua tile to attach to the east of it in an on grid position
and a yellow tile to attach to the east of it shifted $(-1,-1)$ relative to the
polyomino grid.

\subsubsection{Case (2a)}
If $P$ is a $2 \times 2$ square we use the scheme shown in Figure~\ref{fig:SpecCompSQ}.

\begin{figure}[htp]
\centering
  \subfloat[][A left-to-right bit reader reading a ``0'' bit.]{%
        \label{fig:SpecCompSQ0}%
        \includegraphics[width=2.3in]{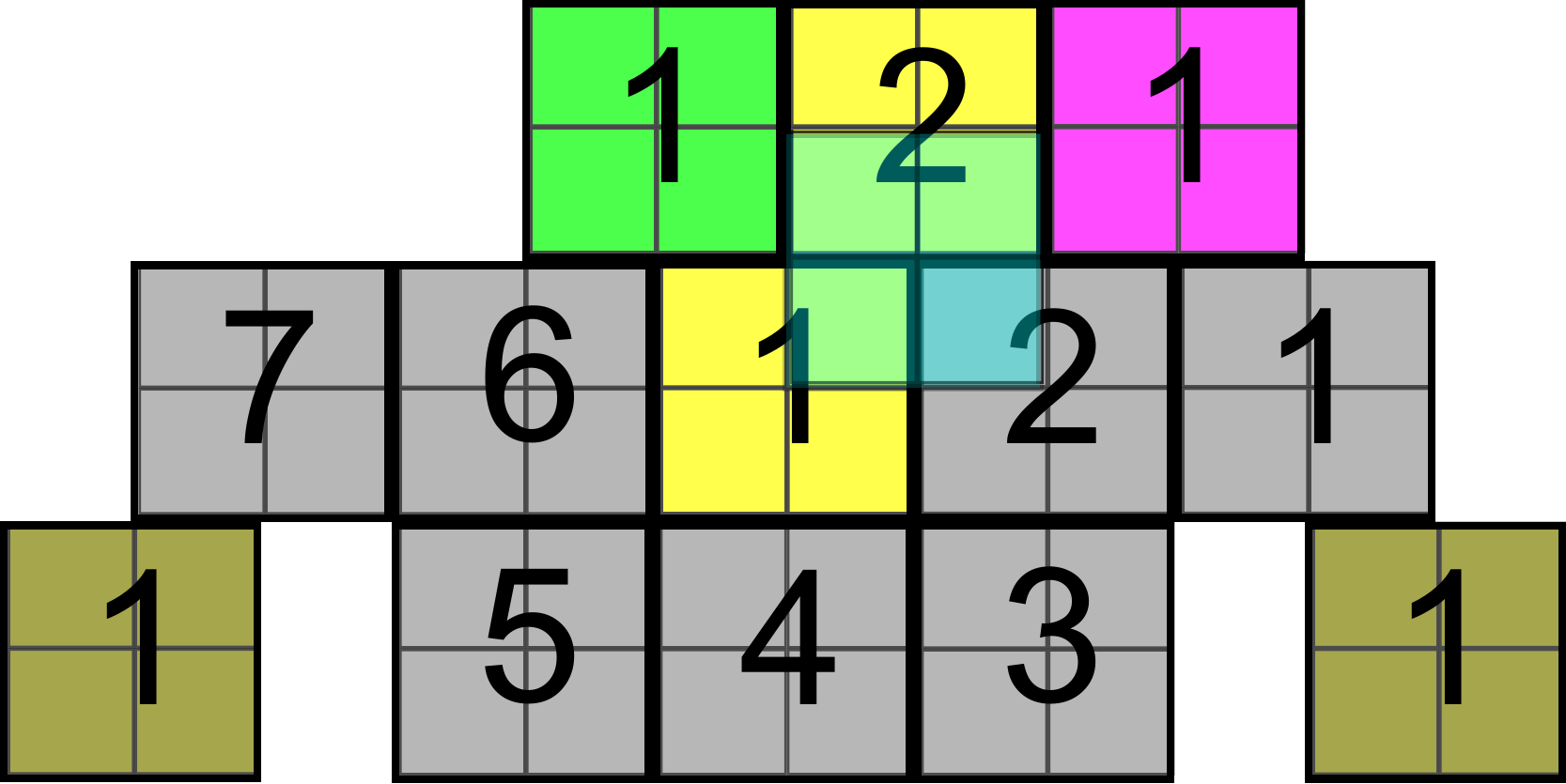}
        }%
        \quad
  \subfloat[][A left-to-right bit reader reading a ``1'' bit.]{%
        \label{fig:SpecCompSQ1}%
        \includegraphics[width=2.3in]{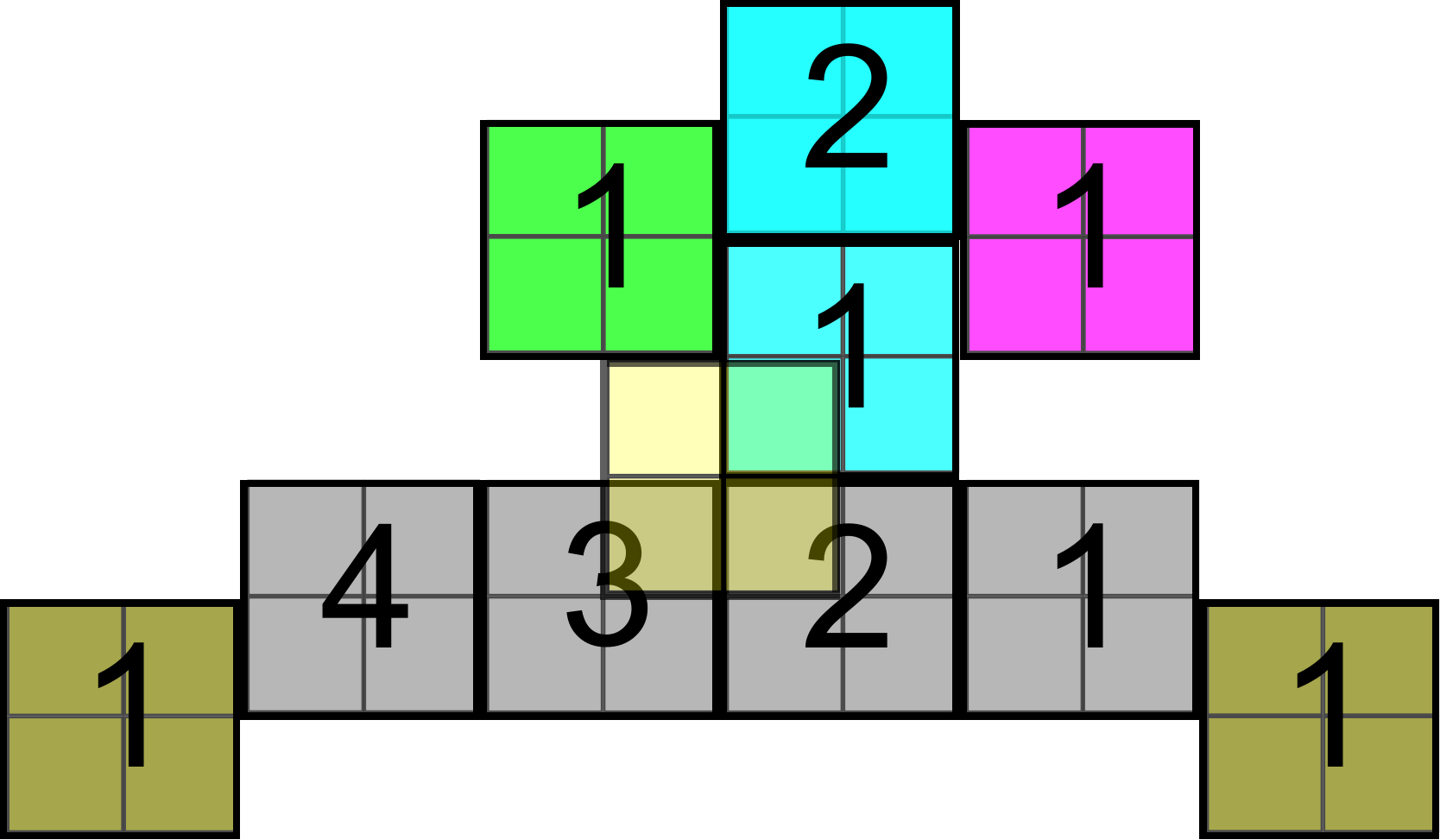}
        }%
        \quad
  \caption{A general bit-reading scheme for a left-to-right bit reader in case (A1), in which $P$ is a $2\times2$ square.}
  \label{fig:SpecCompSQ}
\end{figure}

\subsubsection{Case (2b)}
If $P$ is an even zig-zagging polyomino, as shown in part (A2) of Figure~\ref{tbl:SpecCompCs}, we use the bit-reading schemes shown in Figure~\ref{tbl:SpecCompS}.

\begin{figure}[htp]
\centering
\begin{tabular}{| M | M |}
	\hline
	\includegraphics[scale=.11]{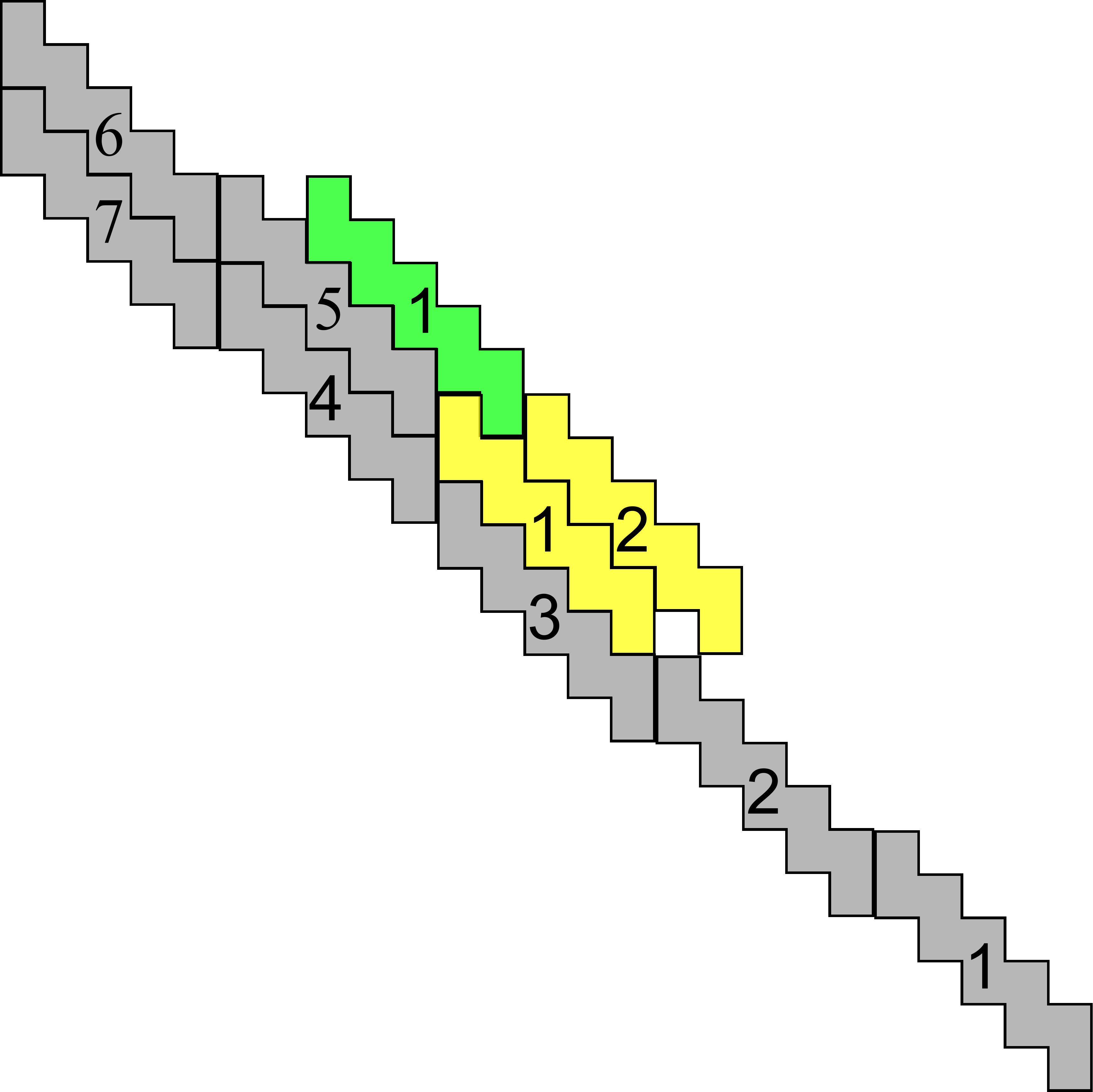}  & \includegraphics[scale=.11]{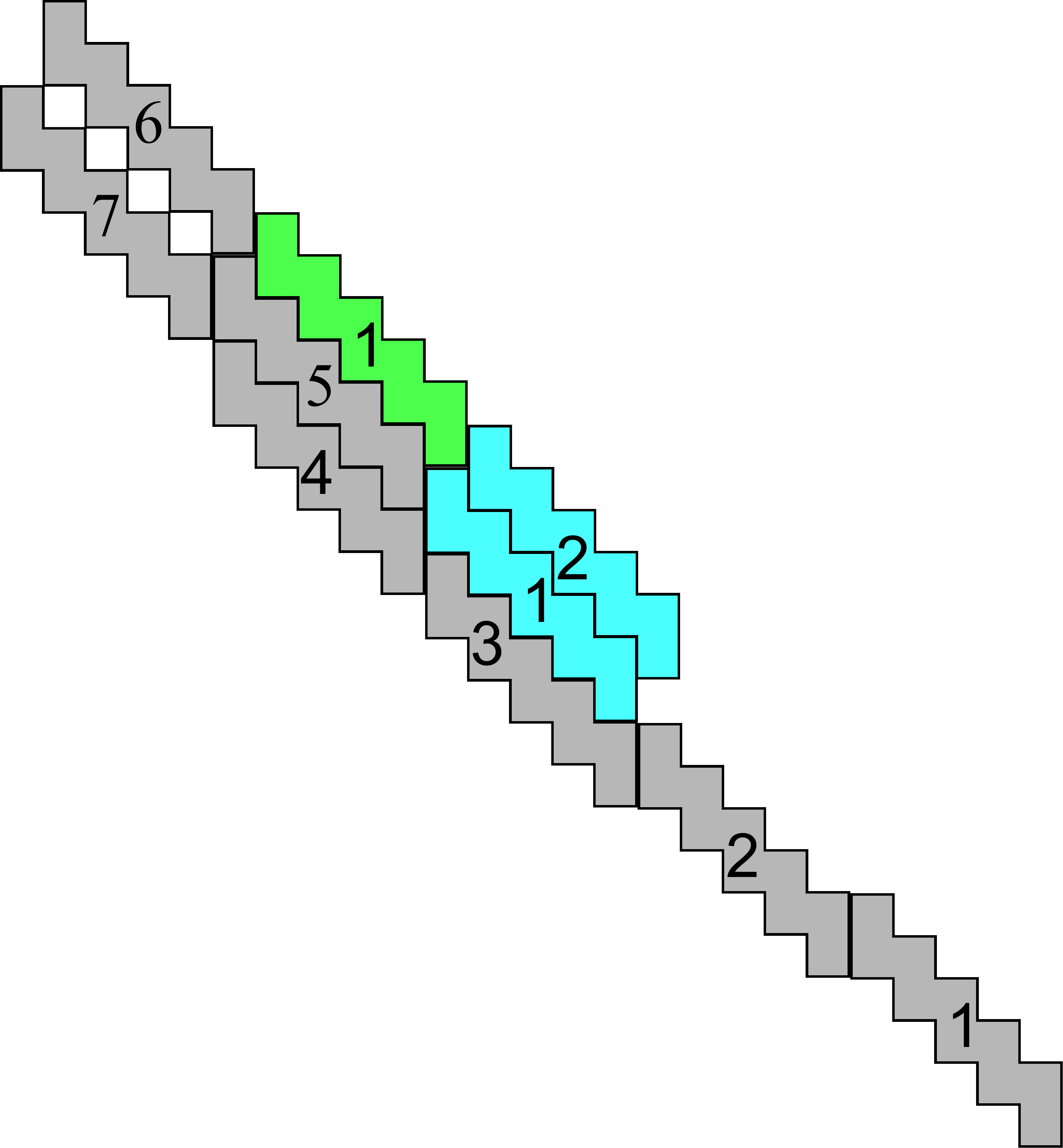} \\
	(a) & (b)  \\\hline
	\includegraphics[scale=.11]{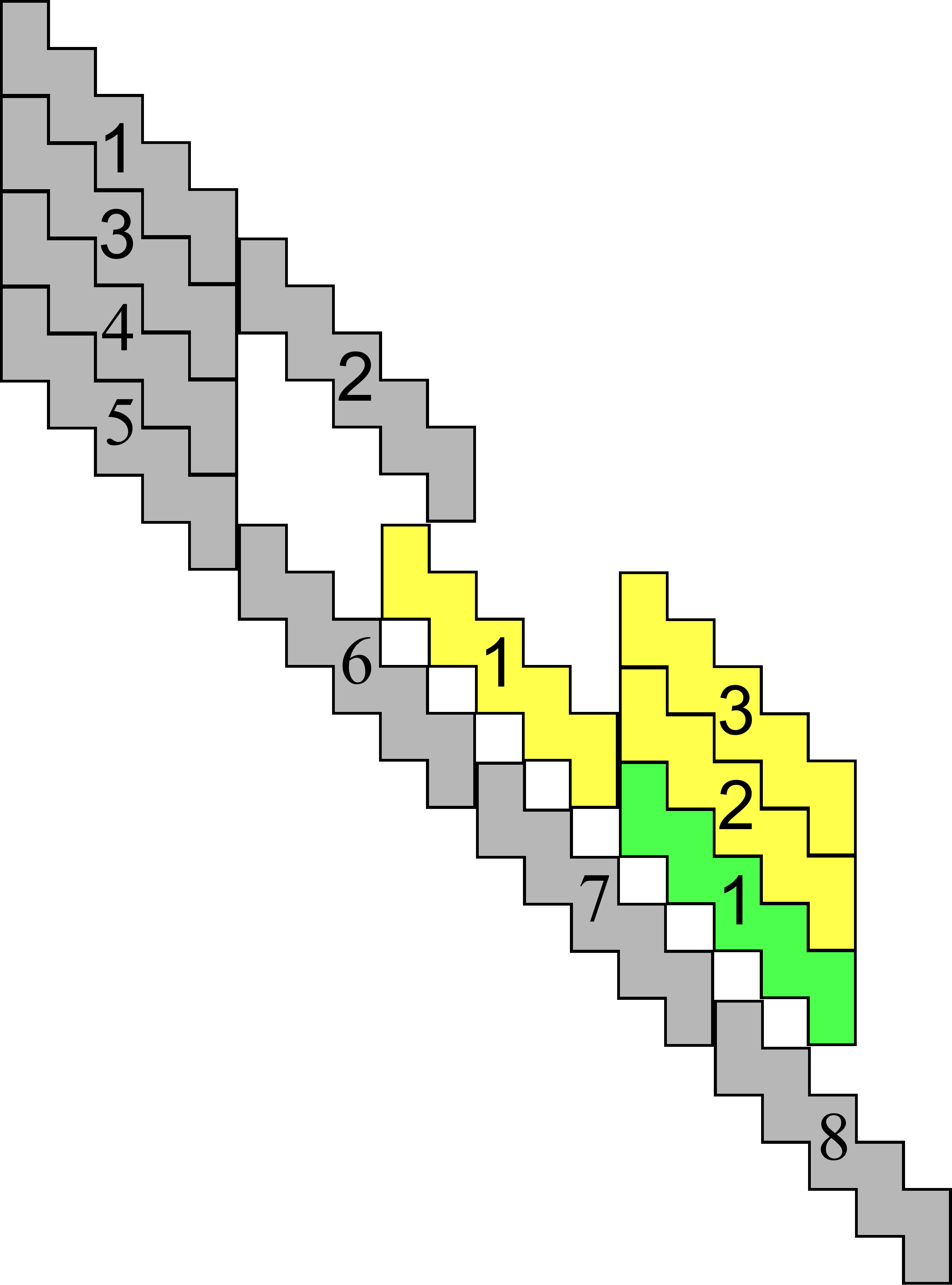}  & \includegraphics[scale=.11]{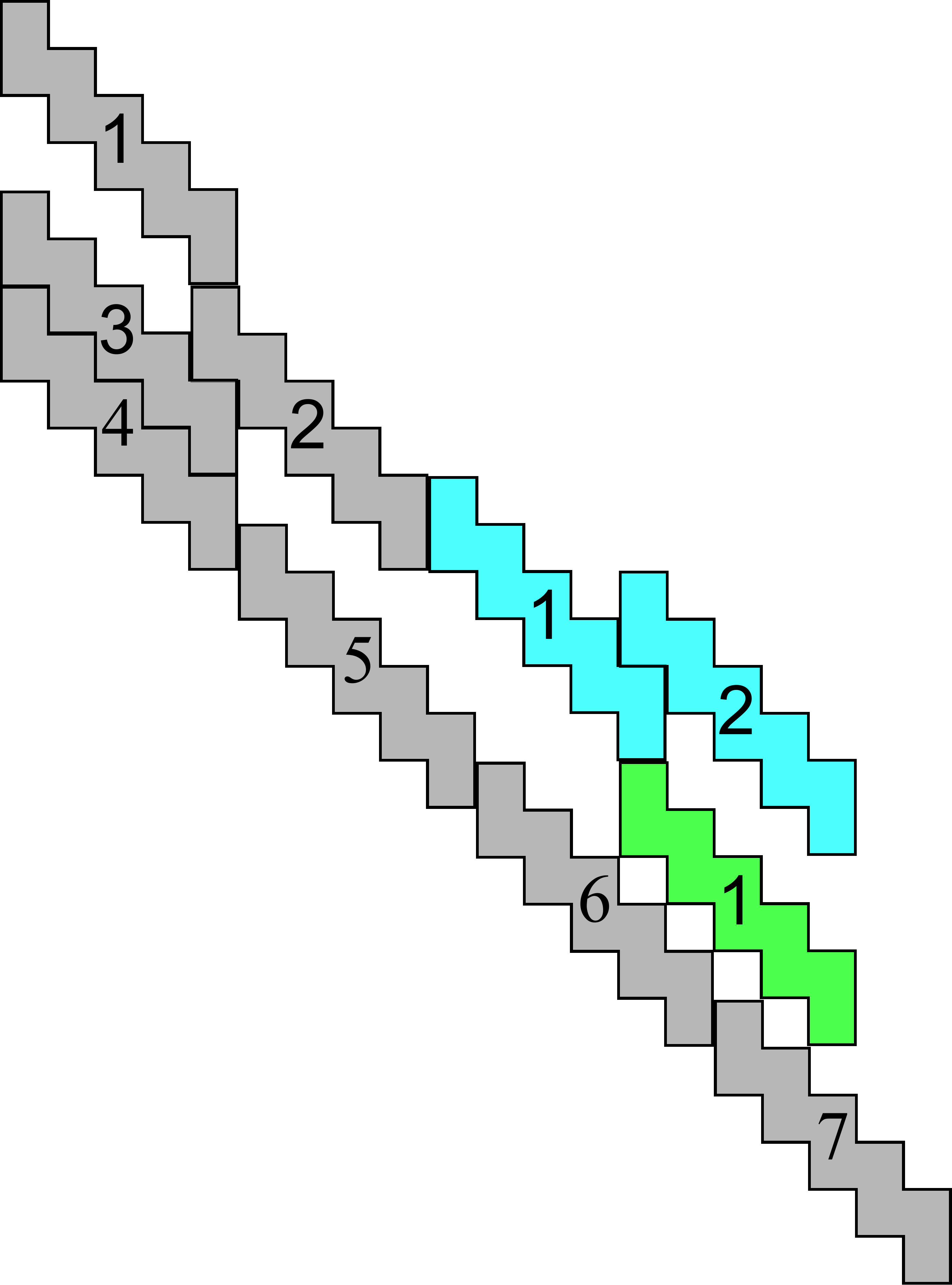} \\
	(c) & (d) \\\hline
\end{tabular}
\caption{The bit-reading schemes for Case (2b) of Figure~\ref{tbl:SpecCompCs}.}
\label{tbl:SpecCompS}
\end{figure}

\subsubsection{Case (2c)}
If $P$ is an odd zig-zagging polyomino, as shown in part (A3) of Figure~\ref{tbl:SpecCompCs}, we use the bit-reading schemes shown in Figure~\ref{tbl:SpecCompD}.

This concludes Case (2).

\begin{figure}[htp]
\centering
\begin{tabular}{| M | M |}
	\hline
	\includegraphics[scale=.13]{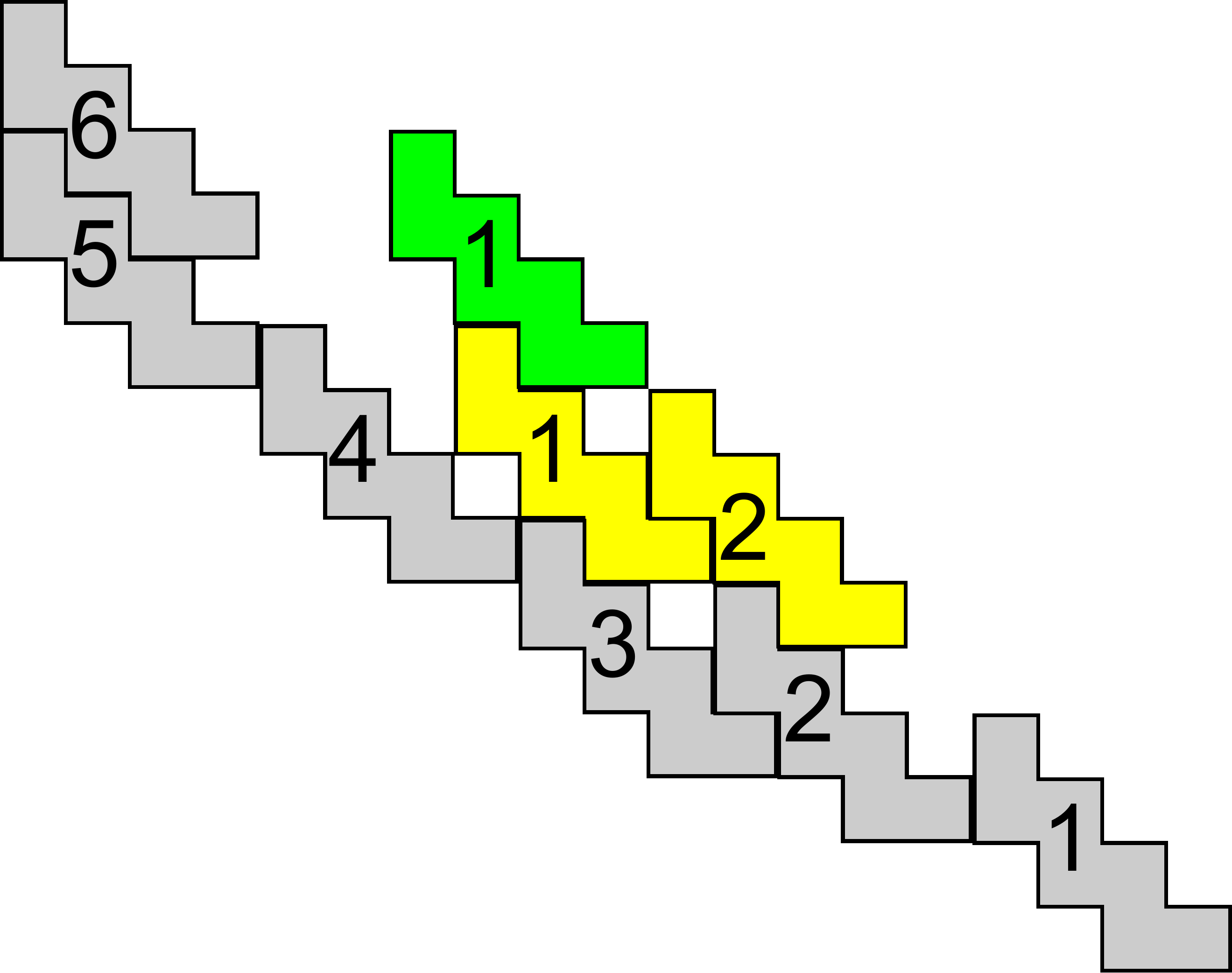}  & \includegraphics[scale=.13]{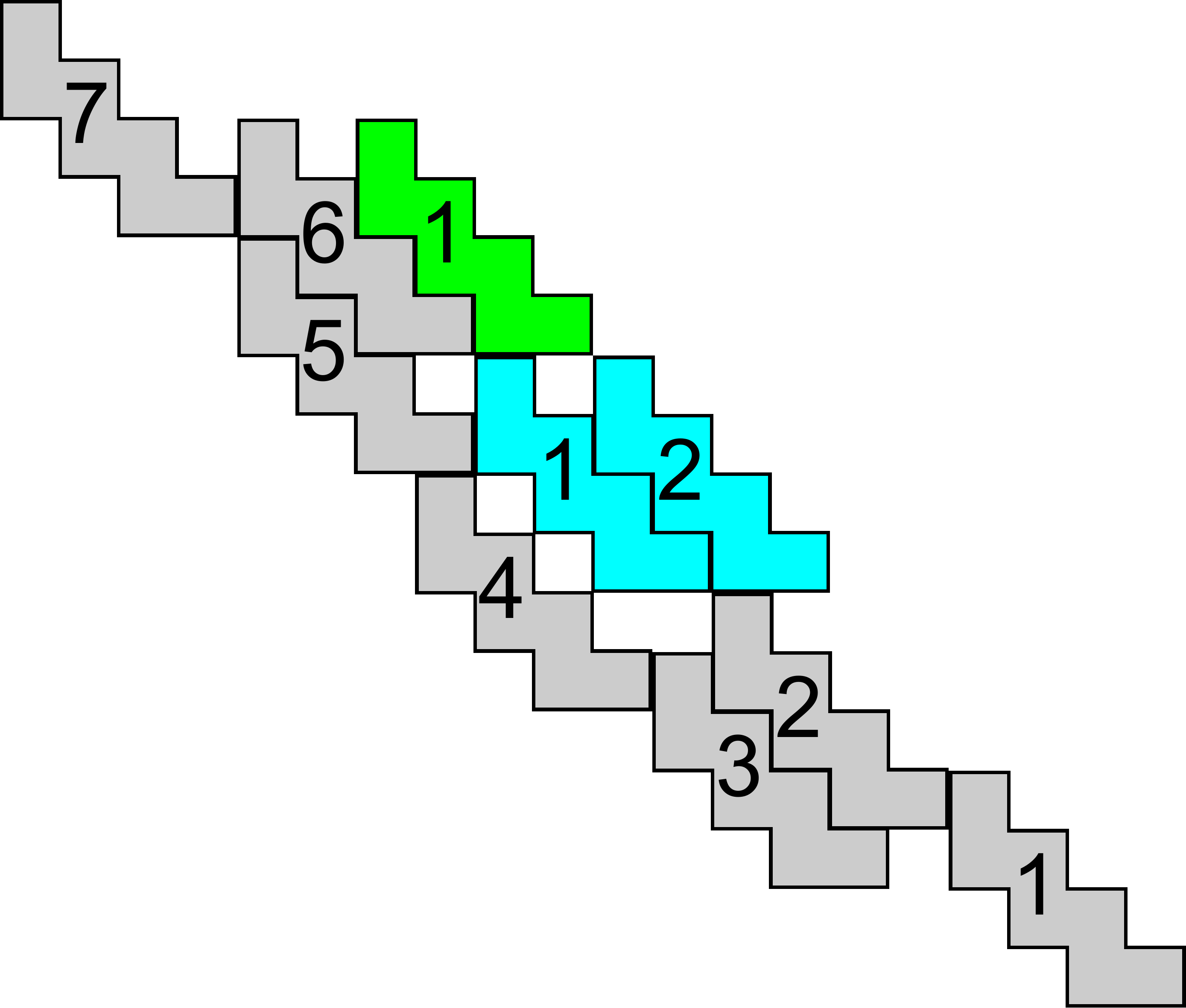} \\
	(a) & (b)  \\\hline
	\includegraphics[scale=.13]{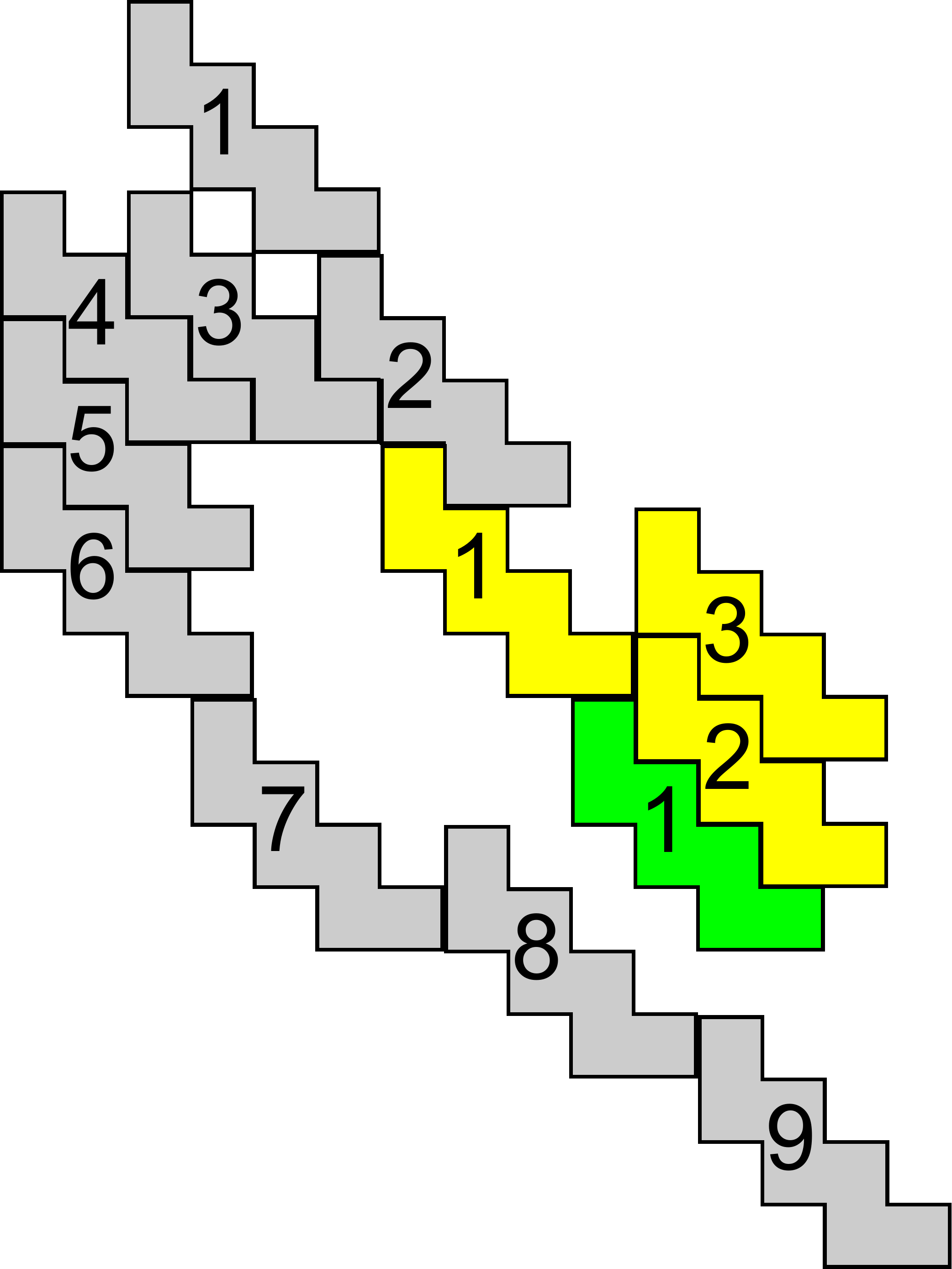}  & \includegraphics[scale=.13]{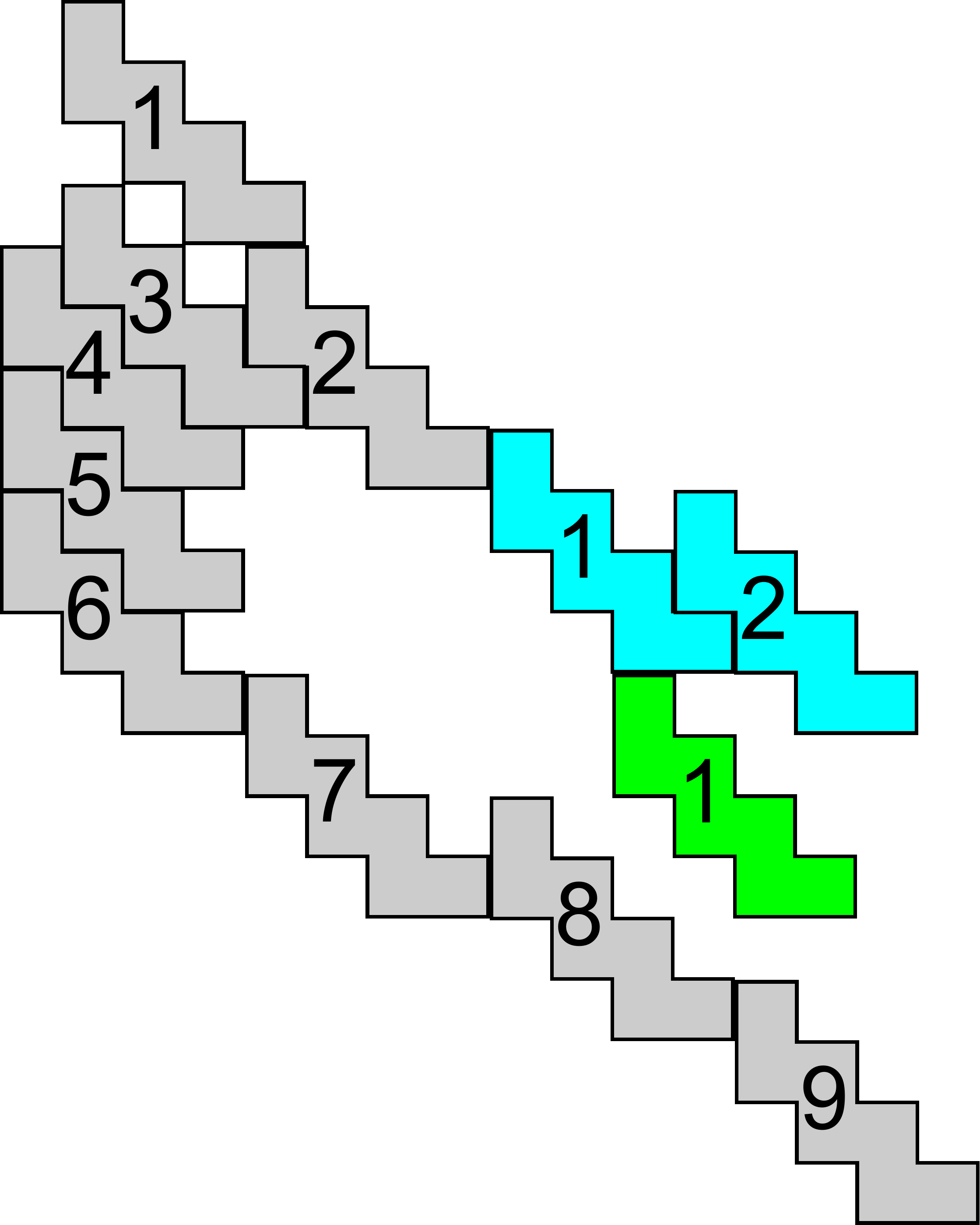} \\
	(c) & (d) \\\hline
\end{tabular}
\caption{The bit-reading schemes for Case (2c) of Figure~\ref{tbl:SpecCompCs}.}
\label{tbl:SpecCompD}
\end{figure}

\bigskip
\subsection{Case (3): $P$ Is Basic And Not In Case (1) or (2)}
We now describe how to construct a system that contains a bit-reading gadget in the case that $\max\{d_x,d_y\} \geq 3$ and $P$ is a basic polyomino.  This means that it is possible to construct a path using tiles of shape $P$ which place a tile at any possible offset in relation to the grid.  We will use this ability to place blockers and bit-reader paths exactly where we need them, with those locations specified throughout the description of this case.  Without loss of generality, assume that $\max\{d_x,d_y\}=d_y$.
\subsubsection*{Case (3) Overview} ~\label{sec:3over}
A schematic diagram showing the growth of the bit-reading gadget system we construct is shown in Figure~\ref{fig:GenCompSchematic}.  Note that the figure depicts what the bit-reader would look like if the grid formed by $P$ was a square grid.  In cases of a slanted grid (such as that shown in Figure~\ref{fig:polyomino-grid}), the bit-gadgets would be correspondingly slanted. Growth of the system begins with the seed as shown in Figure~\ref{fig:GenCompSchematic}.  From the seed, the system grows a path of tiles west (shown as a light grey path in the figure) to which one of the two bit writers attach (shown as dark grey in the figure).  Once one of the bit writers assembles, growth proceeds as shown in the schematic view until the other bit writers assemble.  Then growth continues upward to the next level (i.e. the seed row can be considered a ``zig'' row and the next row a ``zag'' row of the zig-zag Turing machine simulation) as shown in the figure until a green tile is placed. Depending on the bit writer gadget to the east of the green tile either a yellow path of tiles grows, indicating that a 0 has been read (as shown in the schematic view with the westernmost bit writer), or an aqua path of tiles grows, indicating that a 1 has been read (as shown in the schematic view with the easternmost bit writer).  Henceforth, we refer to the system described by the schematic view in Figure~\ref{fig:GenCompSchematic} as the bit-reading gadget.

\begin{figure}[htp]
\begin{center}
\includegraphics[width=4.0in]{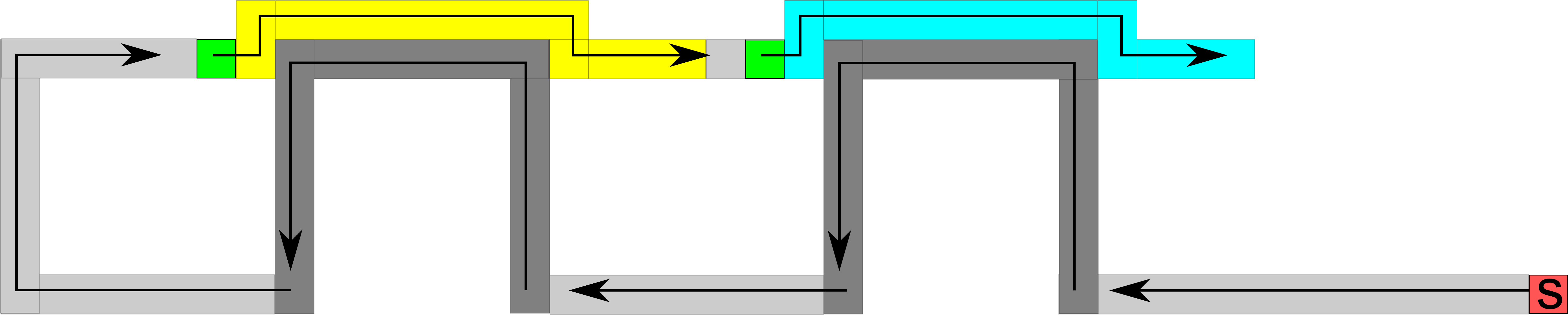}
\caption{A schematic diagram showing the growth of the bit-reading gadget system for Case (3).  Note that in the actual construction, a larger gap would exist between the two bit-readers to allow the path between them to first extend upward and create the necessary bit-writer, then come back down and continue growth of the next bit-reader.}
\label{fig:GenCompSchematic}
\end{center}
\end{figure}

The light grey tiles that compose the bit-reading gadget are easily constructed by placing glues on the polyomino $P$ so that they grow the paths shown in Figure~\ref{fig:GenCompSchematic} (again, modulo the slant of the particular grid formed by $P$), which are on grid with the seed, where the grid is formed following the technique used in the proof of Lemma~\ref{lem:grids}.  The construction of the other tiles is now described.  The green tile is constructed by placing a glue on its western side so that it attaches to the grey tiles on grid as shown in the schematic view.  Furthermore, glues are placed on the green tile and the first aqua tile so that the aqua tile attaches to the green tile in an on-grid manner. Glues are placed on the green tile and the first yellow tile so that the southern edge of the southernmost pixel on the east perimeter of the green tile attaches to the northern edge of the northernmost pixel on the western perimeter of the yellow tile (thus putting the yellow tile off grid).

\subsubsection*{Case (3) Bit-Writer Construction} \label{sec:3BW}

\begin{figure}[ht!]
\centering
\begin{tabular}{| M | M |}
	\hline
	\includegraphics[scale=.07]{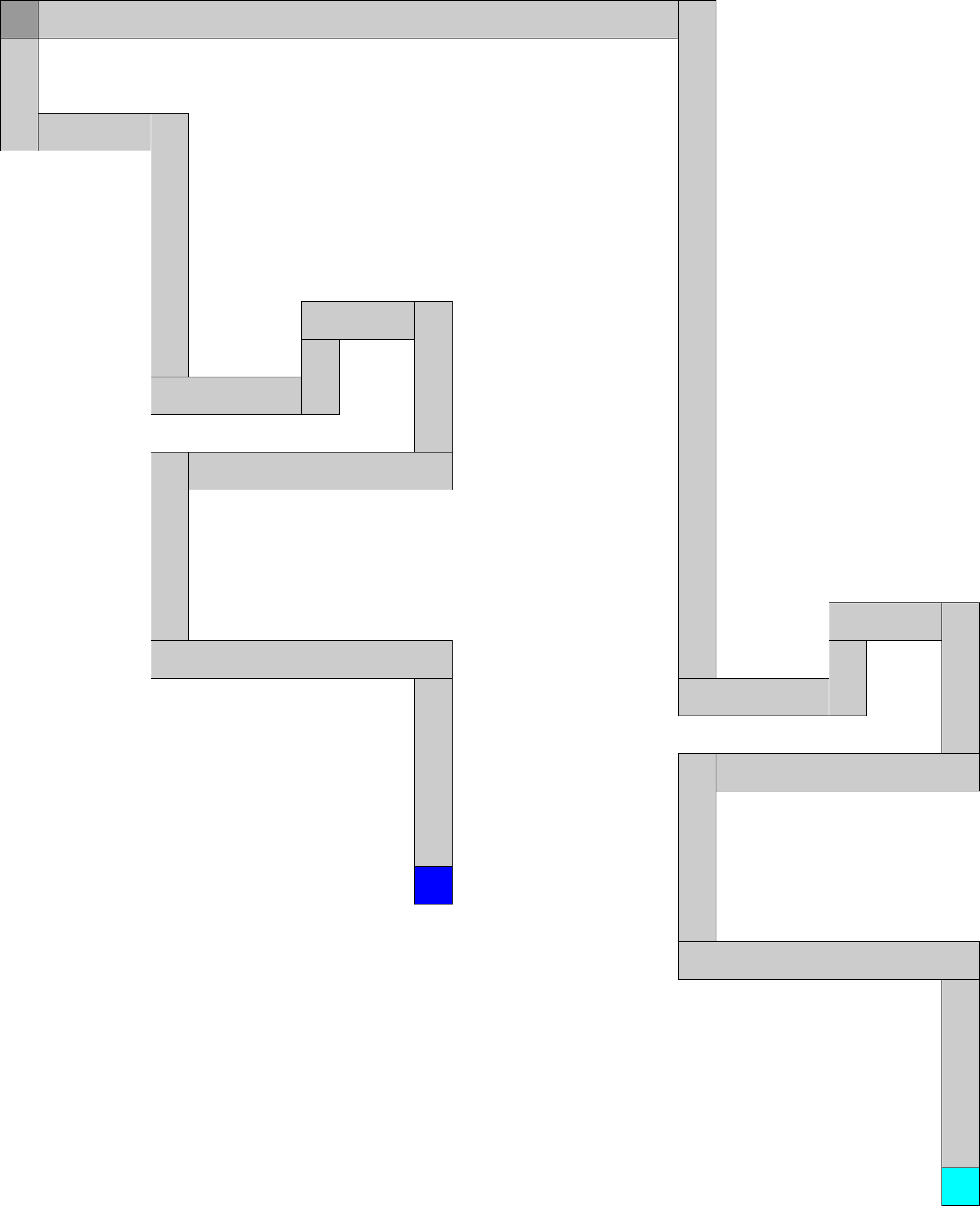}  & \includegraphics[scale=.07]{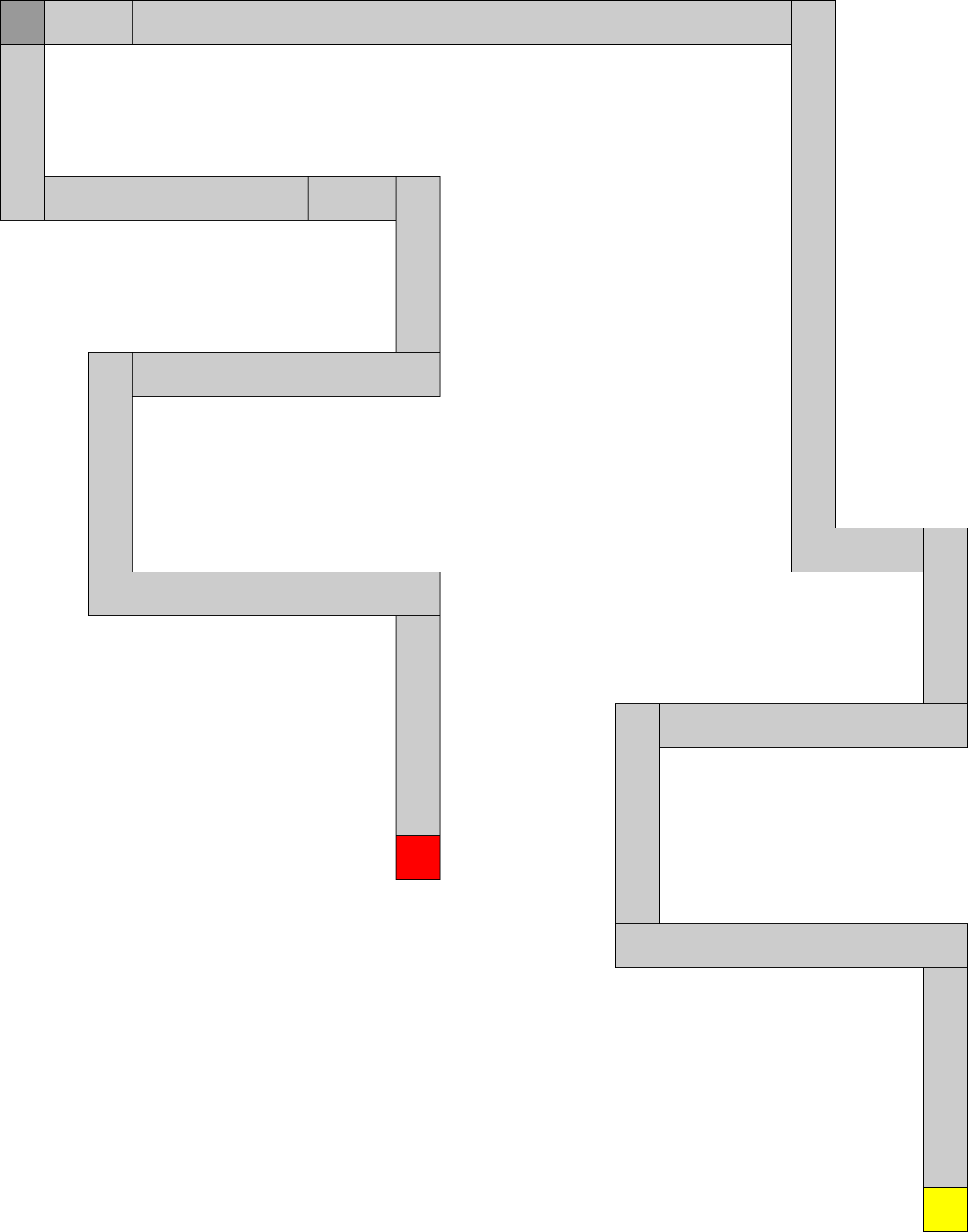} \\
	(a) & (b)  \\\hline
	\includegraphics[scale=.07]{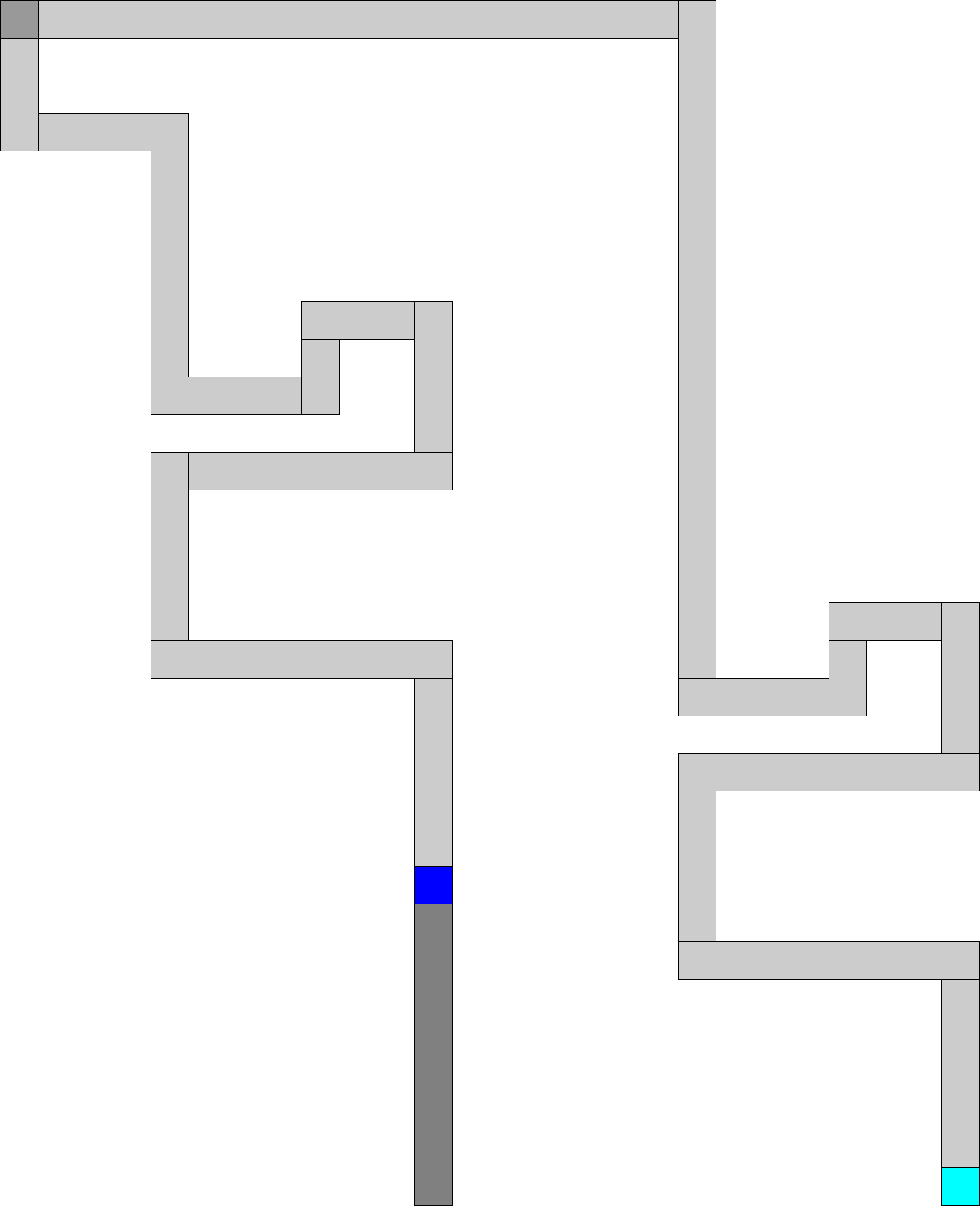}  & \includegraphics[scale=.07]{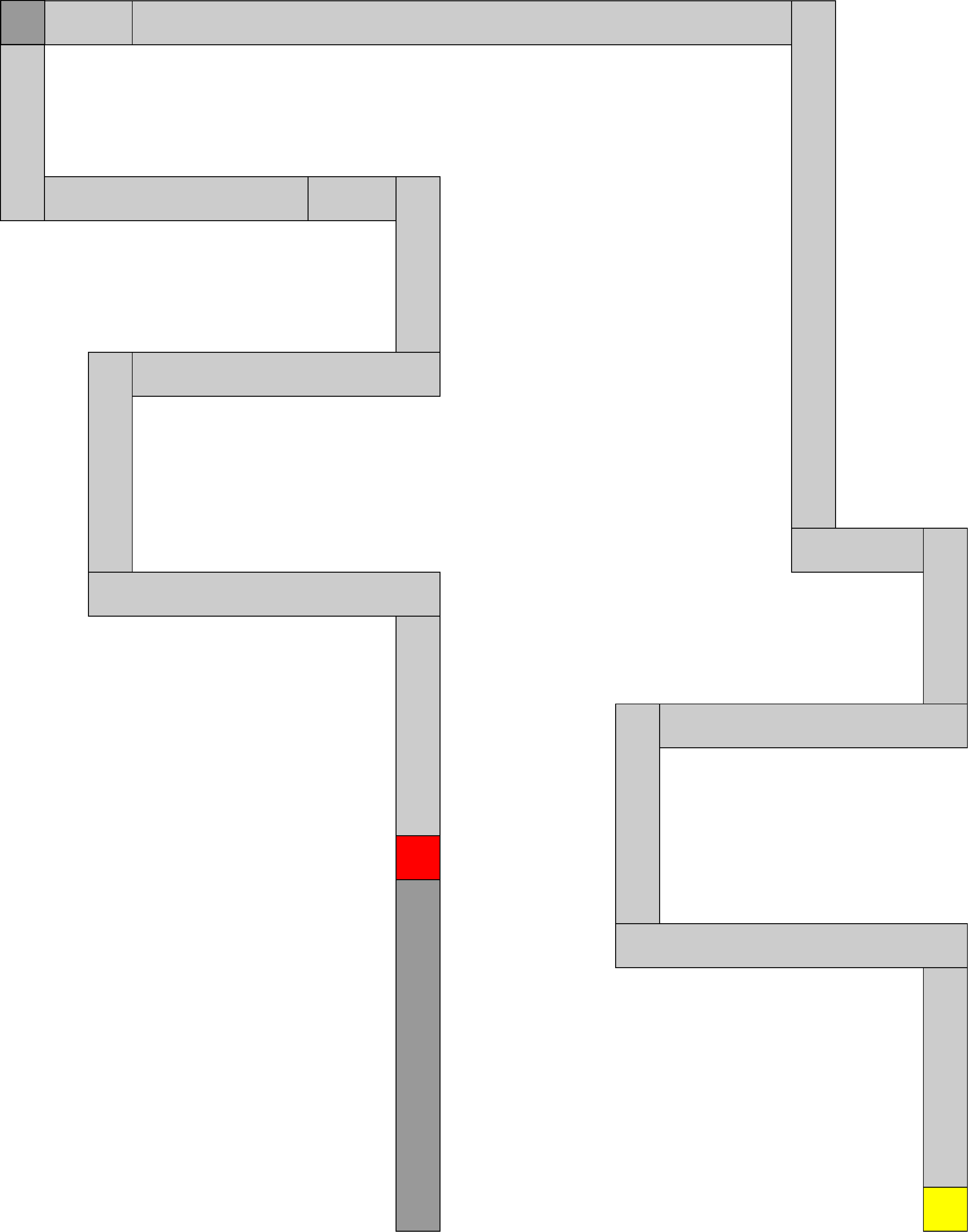} \\
	(c) & (d) \\\hline
    \includegraphics[scale=.07]{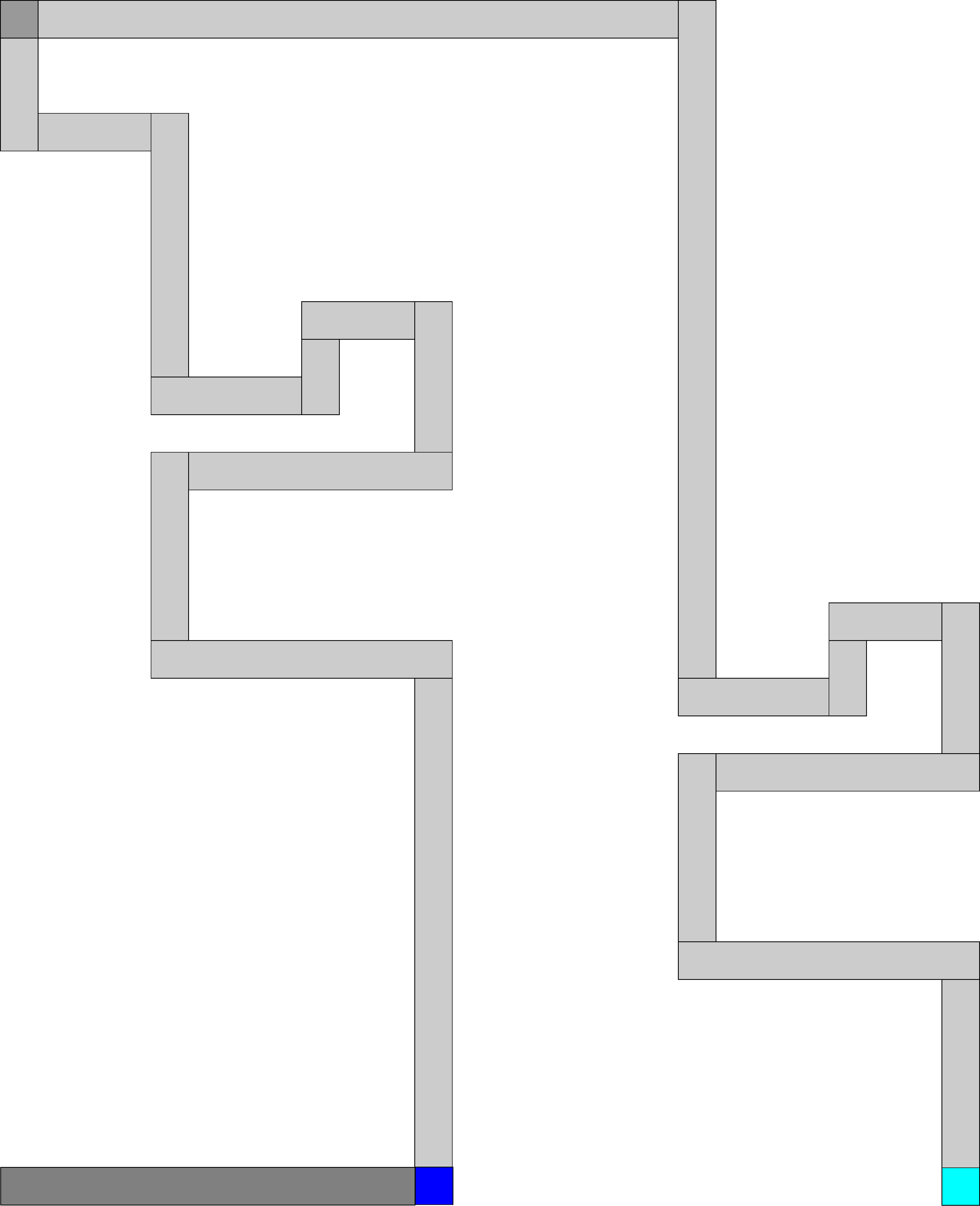}  & \includegraphics[scale=.07]{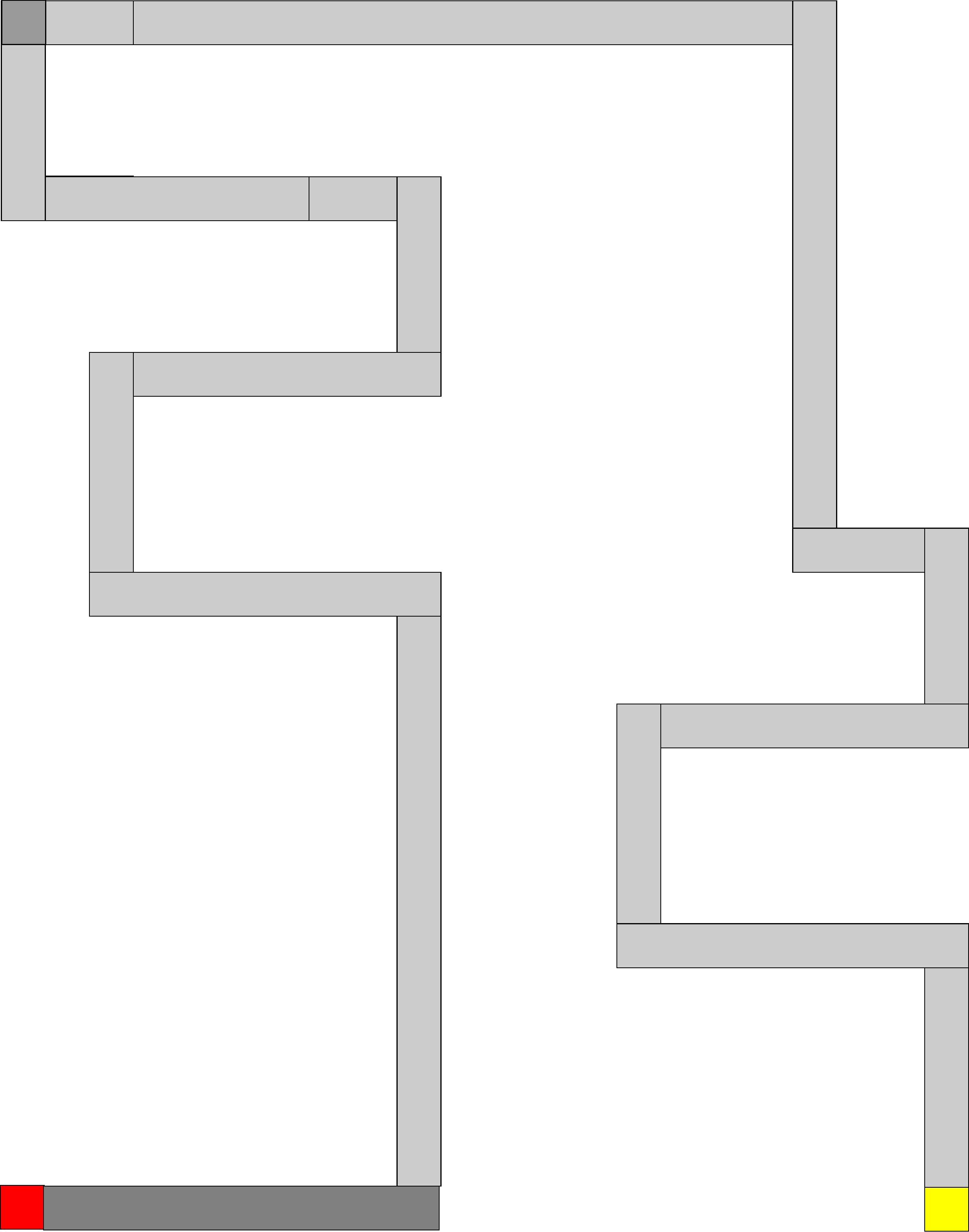} \\
	(e) & (f) \\\hline
    \includegraphics[scale=.07]{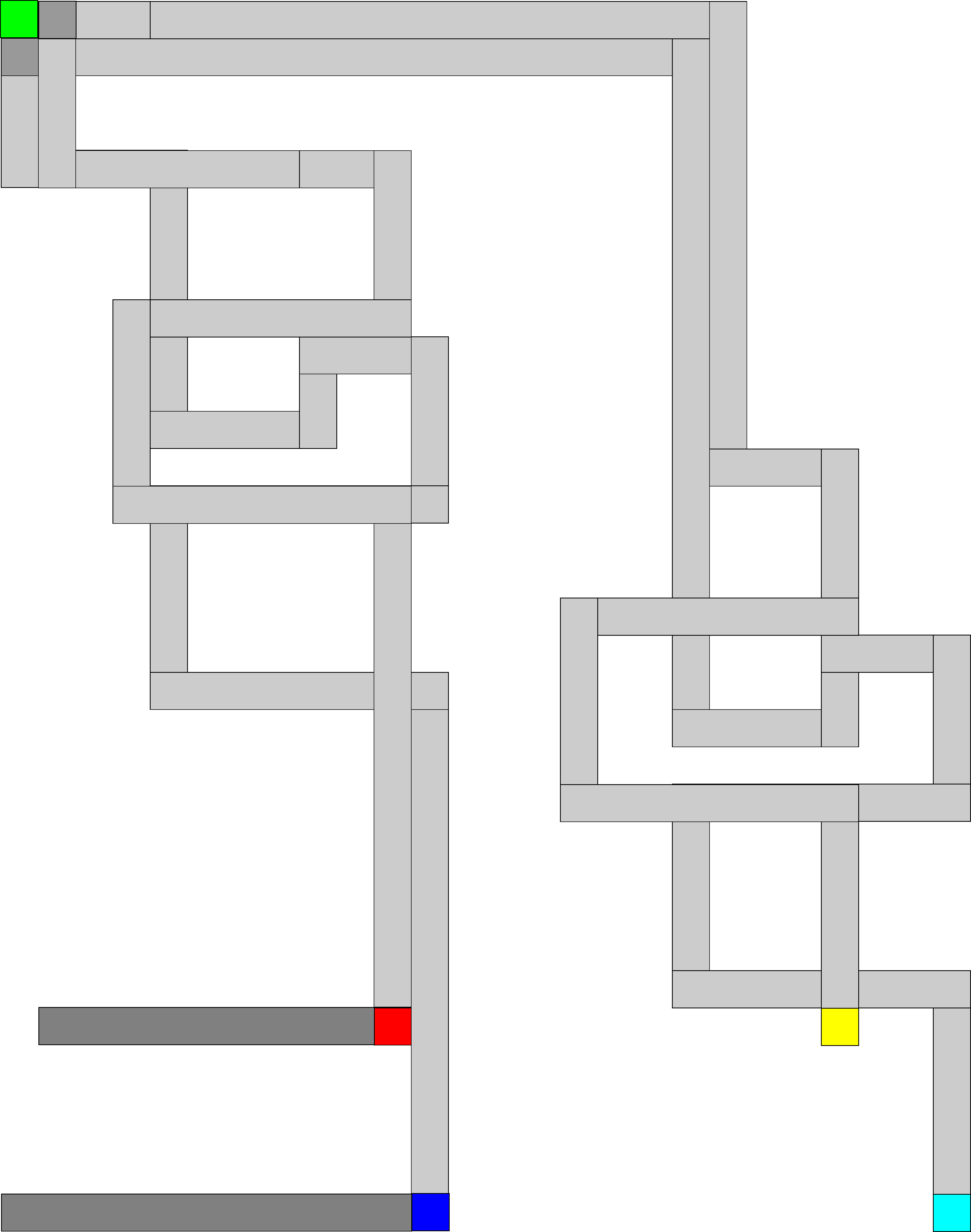}  & \includegraphics[scale=.07]{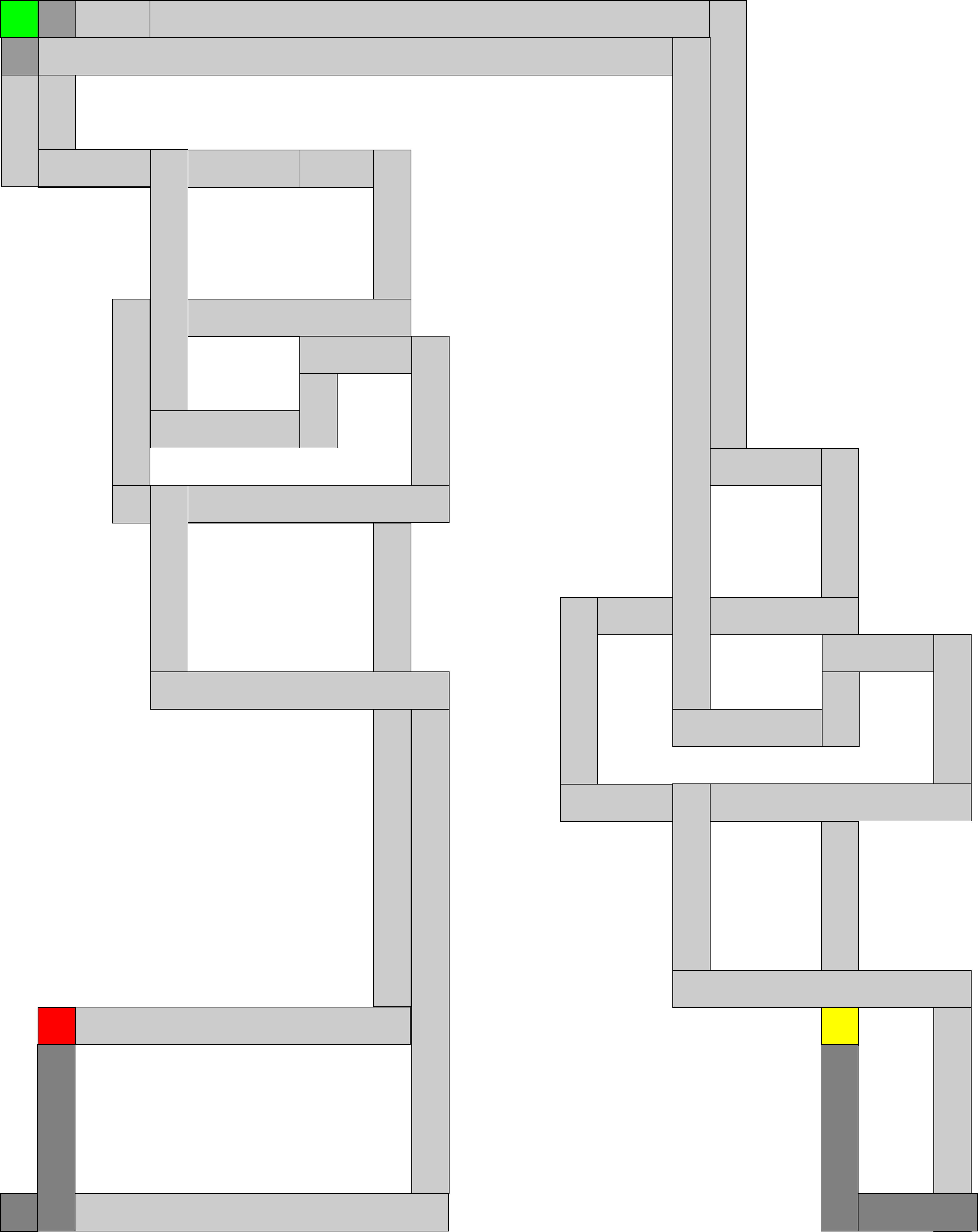} \\
	(g) & (h) \\\hline
    \includegraphics[scale=.07]{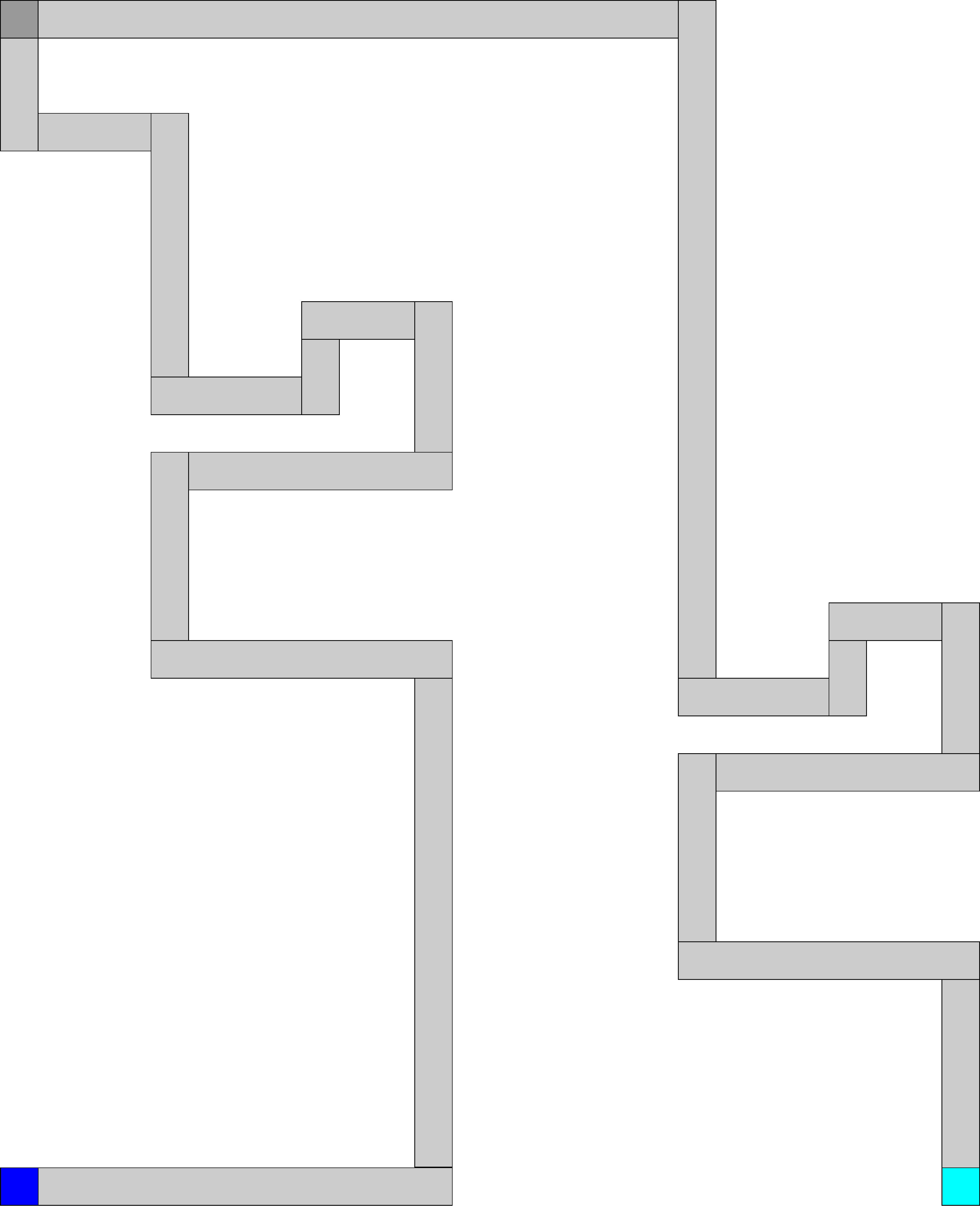}  & \includegraphics[scale=.07]{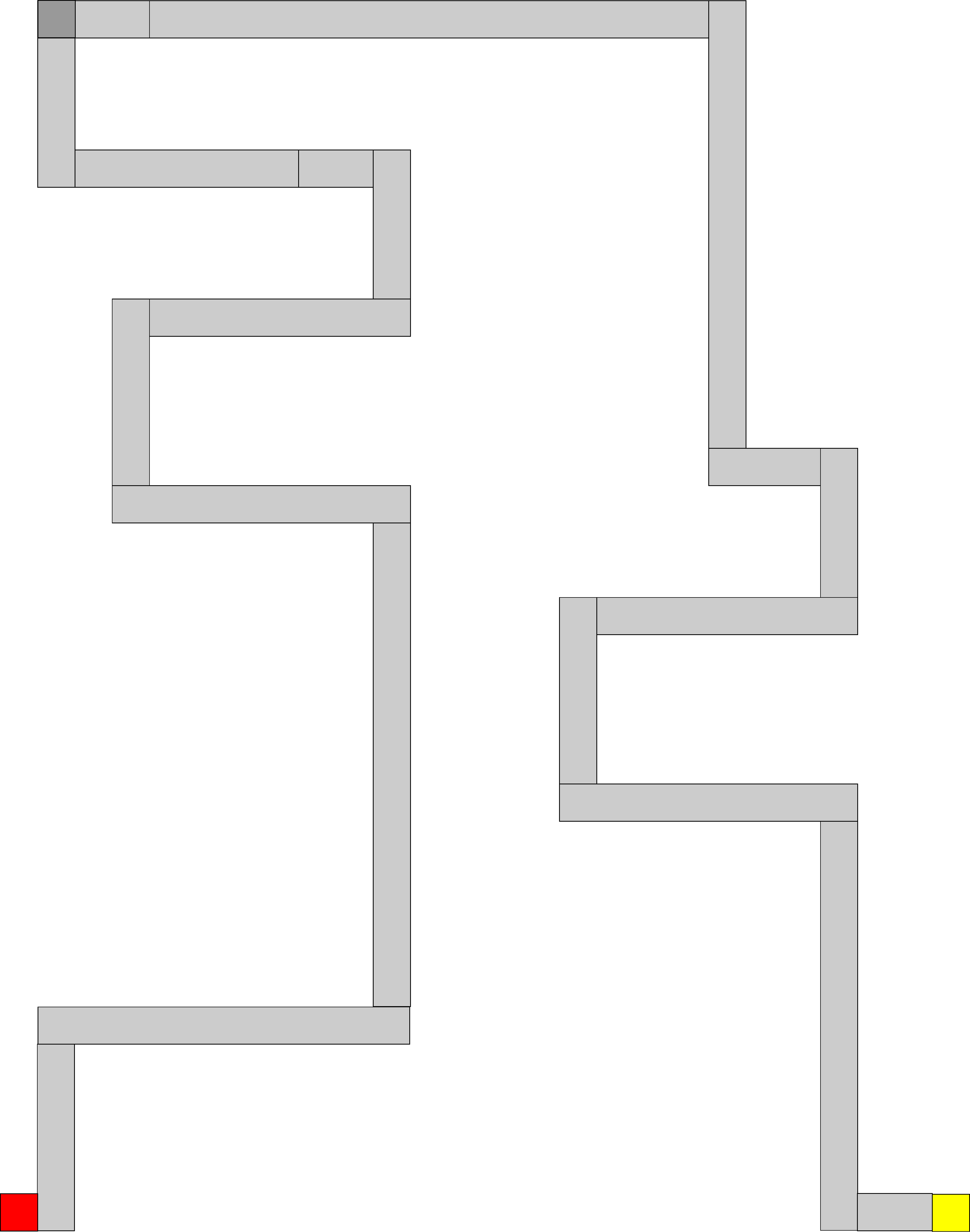} \\
	(i) & (j) \\\hline
\end{tabular}
\caption{A schematic representation of the construction of the bit writers.  The blockers are represented by dark grey squares in the northwest of the assemblies. The assembly which places the $0$-blocker begins growth from the aqua tile and the assembly which places the $1$-blocker begins growth from the yellow tile.  The last tile placed in the assembly that contains the $0$-blocker is shown in blue and the last tile placed by the assembly containing the $1$-blocker is shown in red.  Paths of new tiles that are added at each step are dark grey.}
\label{tbl:3write}
\end{figure}

First we describe the construction of the bit-writer subassemblies of the bit-reading gadget by describing the placement of the blockers in relation to the position of the green tile.  We will discuss how to create tile sets which can create the necessary sets of paths for the gadgets, and then the final tile set will simply consist of a union of those tile sets.  Suppose that the $0$-blocker is a $\vec{x_1}$-shifted polyomino and the $1$-blocker is a $\vec{x_2}$-shifted polyomino. (Recall that $P$ is a basic polyomino, and thus it is possible to build a path such that a blocker can be at any shift relative to the grid.)  We construct two separate systems, say $\mathcal{T}_{0B}$ and $\mathcal{T}_{1B}$ as described in Lemma~\ref{lem:r-shifts} so that the $0$-blocker and $1$-blocker, respectively, are the northernmost tiles on the western edges of the assemblies (shown in part (a) and (b) of Figure~\ref{tbl:3write}). We denote the assemblies produced by these systems as $\alpha_0$ and $\alpha_1$, respectively.  Next, extend the tile sets of the systems if needed so that the last tiles placed lie on grid in the same grid row as the seed as shown in part (c) and (d) of Figure~\ref{tbl:3write}.  In addition, extend the tile sets of the two systems (if needed) so that the last tile placed has pixels that lie in the same column as the westernmost pixel in the blocker or to the west of that column.  This is shown schematically in part (e) and (f) of the figure.  Now, place the green polyomino so that its bounding rectangle's southwest corner lies at the origin, and place the assemblies $\alpha_0$ and $\alpha_1$ constructed above so that the $1$-blocker and $0$-blocker lie relative to the green tile as described above (shown in part (g) of Figure~\ref{tbl:3write}).  Without loss of generality suppose that the seed of $\alpha_1$ lies to the southeast of the seed of $\alpha_0$ (as is the case in the figure).  Then we can extend the tile set of $\mathcal{T}_{0B}$ so that whenever $\alpha_1$ is placed as described above, the seeds of $\alpha_0$ and $\alpha_1$ lie at the same position, since both paths are on grid in those locations.  In addition, without loss of generality suppose that the tile placed last in $\alpha_1$ is further west than the last tile placed in $\alpha_0$.  Then we extend the tile set of $\mathcal{T}_{0B}$ so that the last tile placed in $\alpha_0$ is at the same position as the last tile placed in $\alpha_1$.  These two steps are shown in part (h) of the figure.  The construction of the bit writer gadgets is now complete and the schematic diagram of the completed bit writers is shown in parts (i) and (j) of Figure~\ref{tbl:3write}.

\subsubsection*{Case (3) Bit-Reader Construction}\label{sec:3BR}

\begin{figure}[htp]
\centering
  \subfloat[][The green tile is on grid and the yellow tile is $(-1,-1)$-shifted.]{%
        \label{fig:GenCompB01}%
        \includegraphics[width=2.3in]{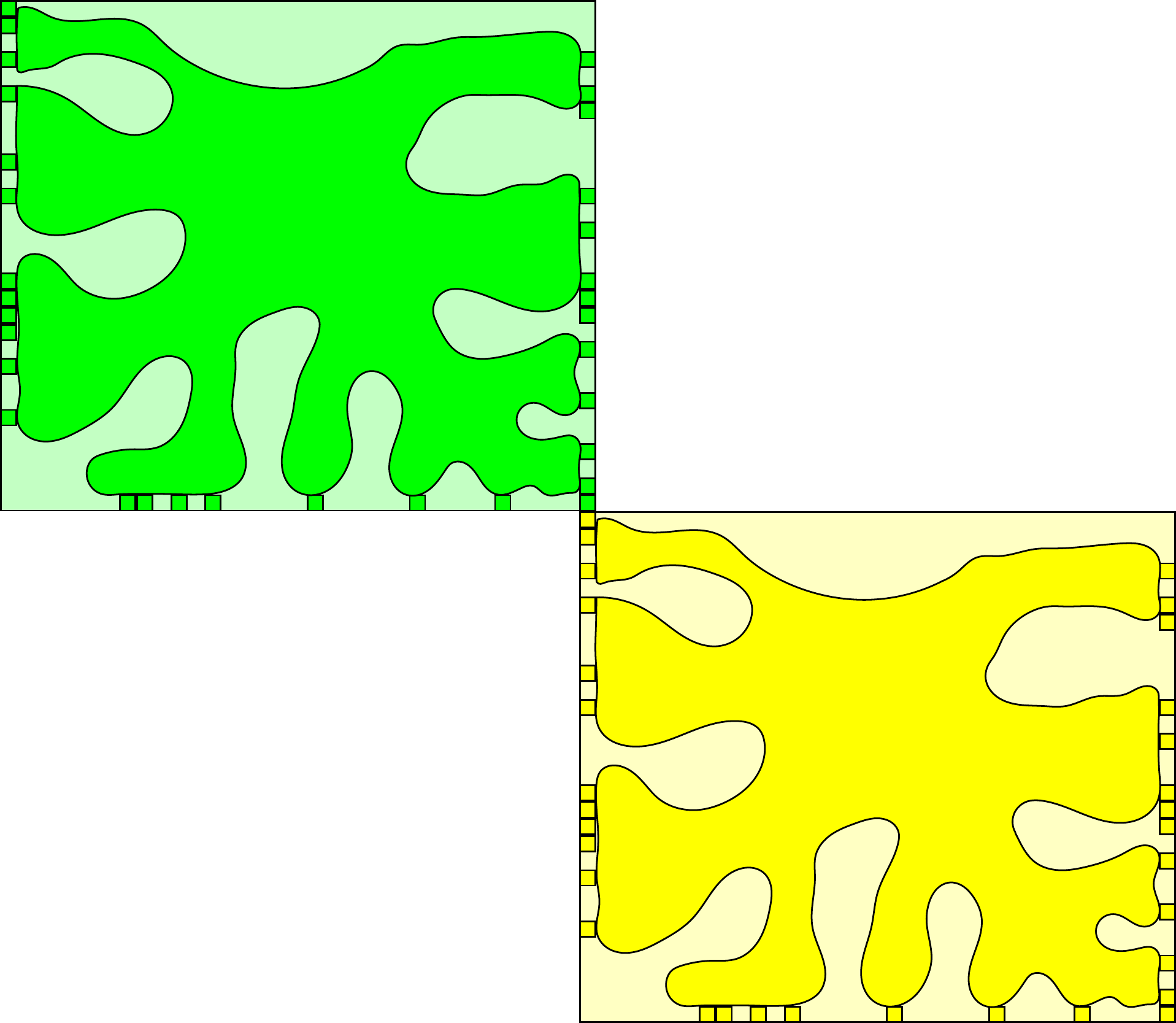}
        }%
        \quad
  \subfloat[][The $0$-blocker is placed so that the easternmost pixel on the north perimeter of the $0$-blocker overlaps the westernmost pixel on the south perimeter of the yellow tile.]{%
        \label{fig:GenCompB02}%
        \includegraphics[width=2.3in]{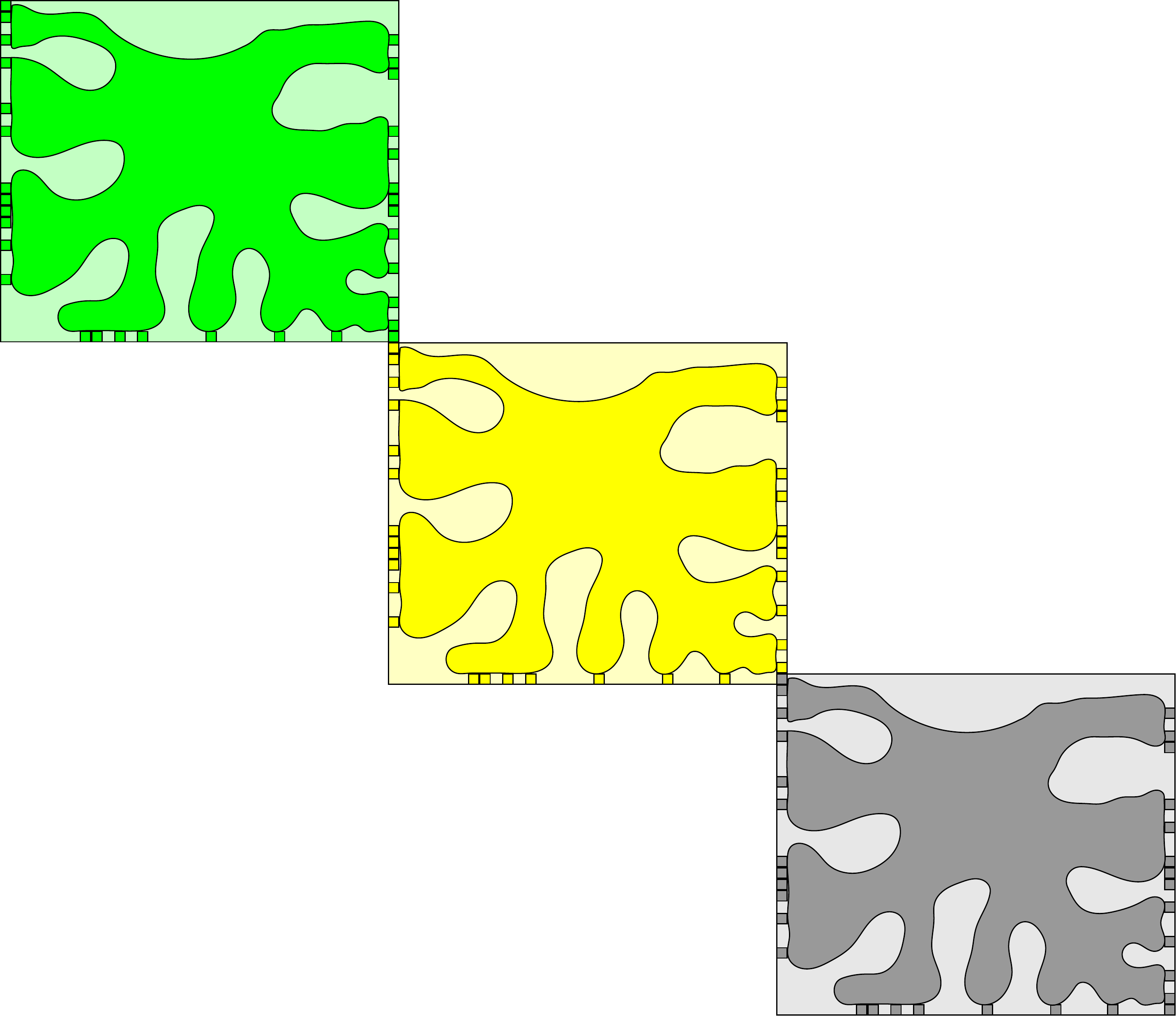}
        }%
        \quad
  \caption{Placement of the $0$-blocker, which blocks the yellow (i.e. $0$-reader) path.}
  \label{fig:GenCompB0}
\end{figure}

\begin{figure}[htp]
\centering
  \subfloat[][The green tile and aqua tile are both on grid.]{%
        \label{fig:GenCompB03}%
        \includegraphics[width=2.3in]{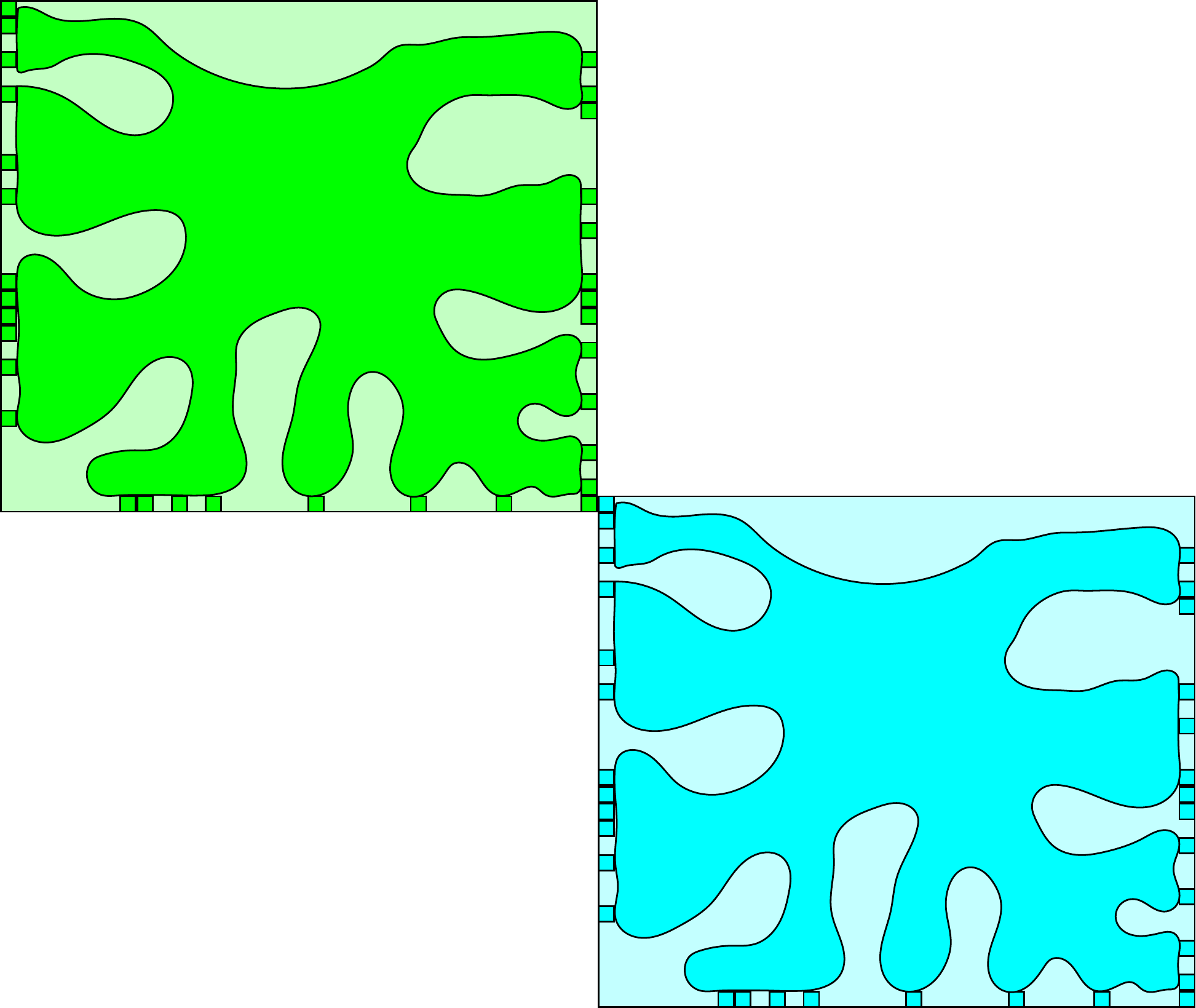}
        }%
        \quad
  \subfloat[][The $1$-blocker is placed so that the northernmost pixel on the western perimeter of the $1$-blocker overlaps the southernmost pixel on the east perimeter of the aqua tile.]{%
        \label{fig:GenCompB04}%
        \includegraphics[width=2.3in]{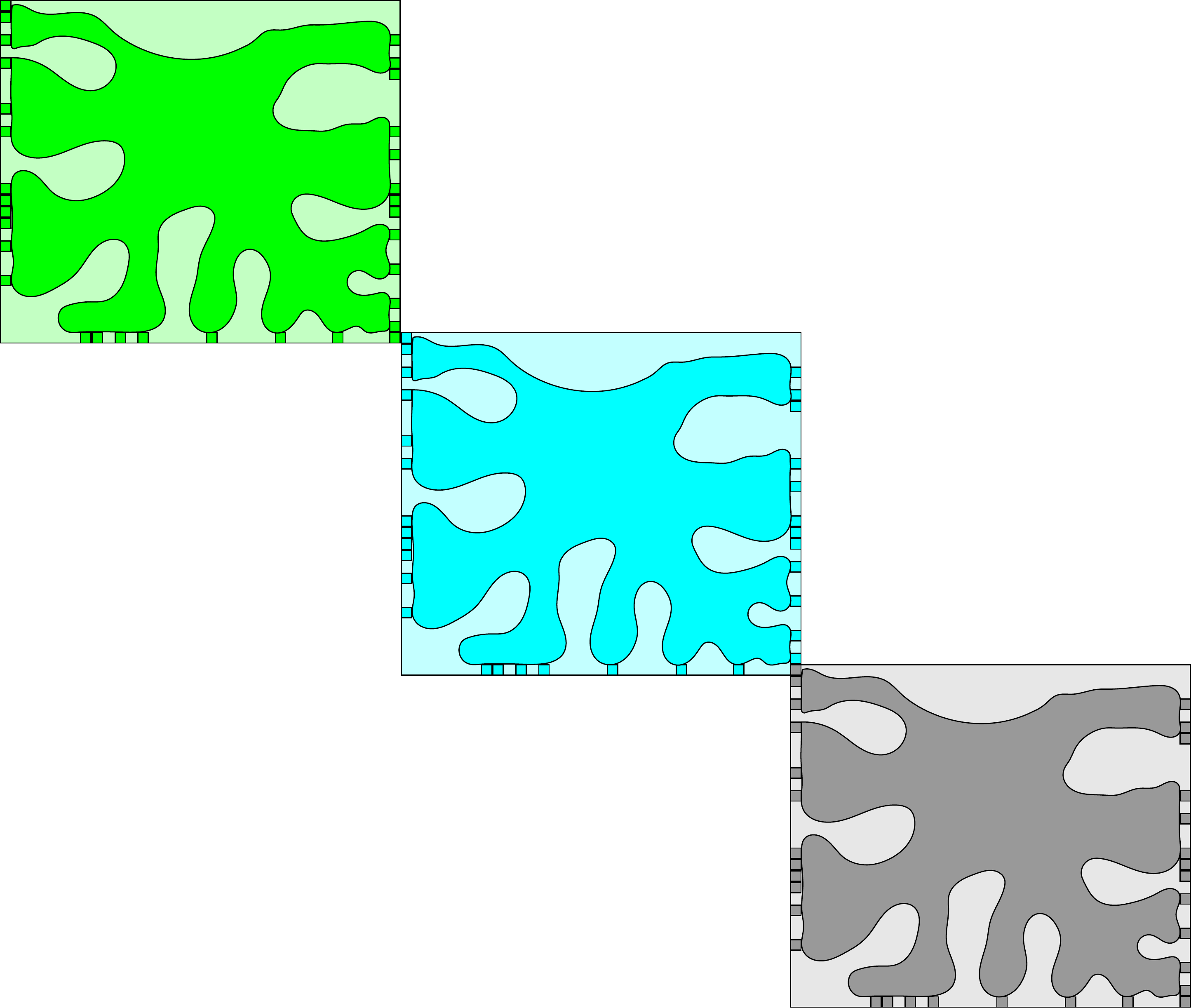}
        }%
        \quad
  \caption{Placement of the $1$-blocker, which blocks the aqua (i.e. $1$-reader) path.}
  \label{fig:GenCompB1}
\end{figure}

Figures~\ref{fig:GenCompB0} and \ref{fig:GenCompB1} show the placement of the $0$-blocker and $1$-blocker, respectively.
Figure~\ref{tbl:BitReadGenCase} shows how the glues are placed on the first and
second tiles in the yellow path (in the figure the second yellow tile is shown as an orange
tile for clarity) so that the second yellow tile binds to the first yellow tile
in the system.  In part (a) of Figure~\ref{tbl:BitReadGenCase}, an orange tile (representing the second tile to attach in the yellow path) is placed so that it now lays directly on top of the yellow tile.  The $d_y$ pixels which lie in the column with
the most pixels are shown as a red column in part (b) of the figure.  Notice that when
the orange tile is translated by the vector $(1,0)$ the $m$ red pixels on the
yellow tile now lay adjacent to the $m$ red pixels on the orange tile (see part
(c)).  Now, we shift the orange tile by the vector $(0,2)$ and make two
observations: (1) the bounding rectangle of the orange tile now no longer
overlaps the bounding rectangle of the grey tile, (2) the orange tile
has a pixel which lies adjacent to a pixel in the yellow tile and/or a
pixel which overlaps a pixel in the yellow tile as shown in part (d) of
the figure.  In the case that the orange tile contains pixels which overlap
pixels in the yellow tile, we translate the orange tile to the north until
no pixels overlap, but pixels lie adjacent to each in the two tile
(shown in part (e) of the figure).

We now have a configuration as shown in part (f) of
Figure~\ref{tbl:BitReadGenCase} in which there are not any overlapping pixels and
the yellow and orange tiles have pixels which lie adjacent to each other.
We can now place glues on the green, yellow and orange tiles so that they
assemble as shown with the yellow tile attaching to the green tile
and the orange tile attaching to the yellow tile.

\newcolumntype{N}{>{\centering\arraybackslash}m{\dimexpr.33\linewidth-2\tabcolsep}}
\begin{figure}[htp]
\centering
\begin{tabular}{| N | N | N |}
	\hline
	\includegraphics[scale=.175]{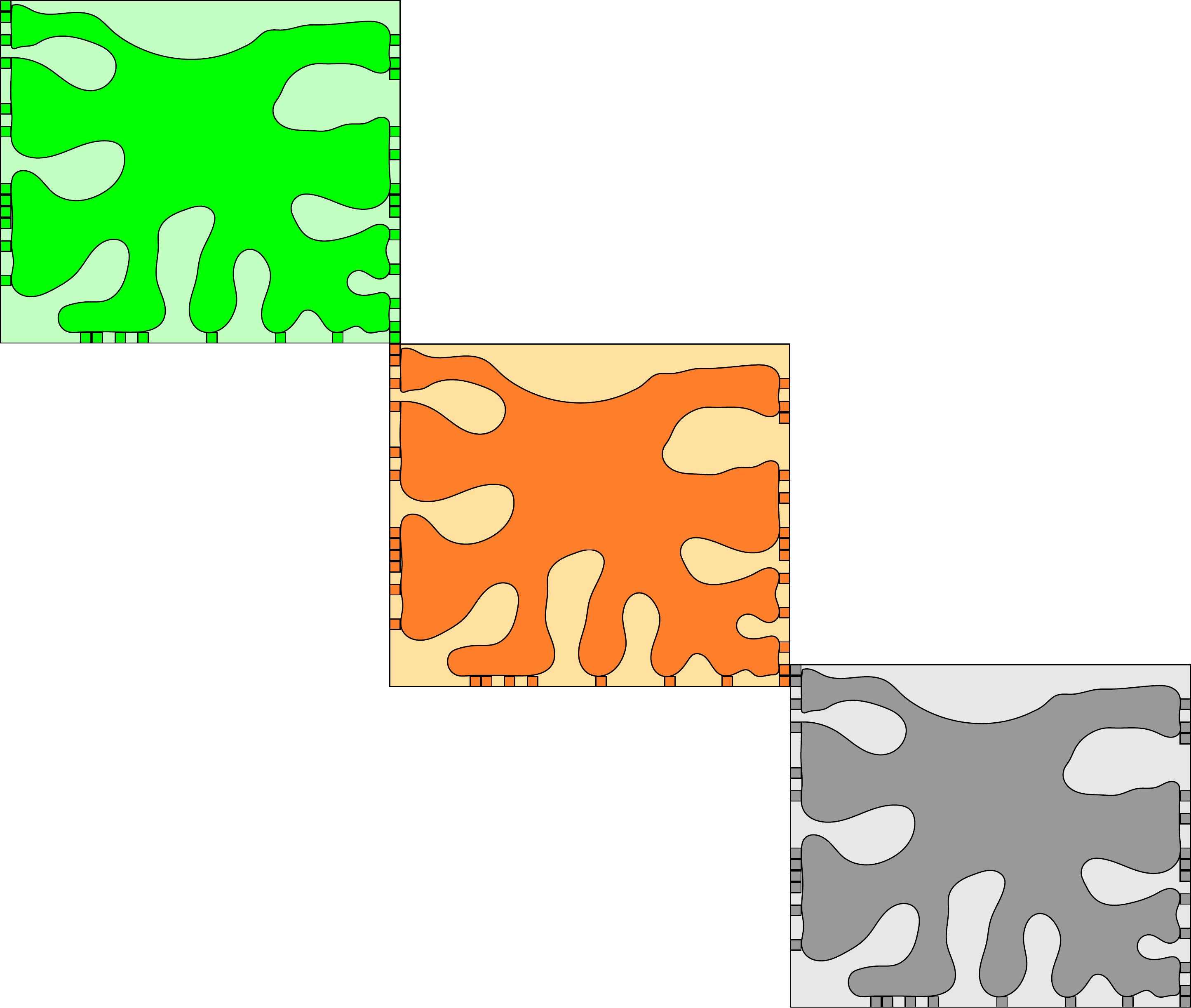}  & \includegraphics[scale=.175]{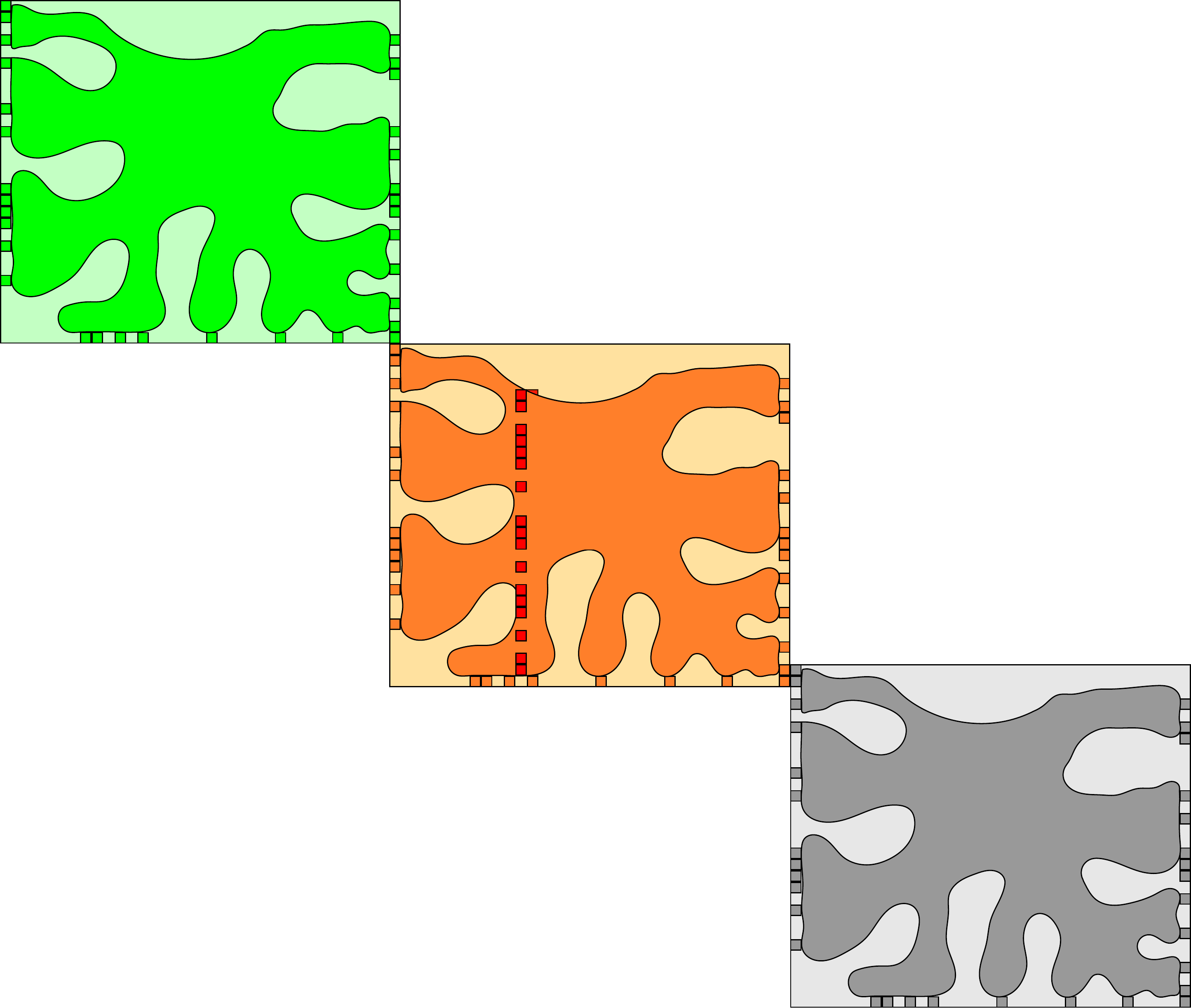} &
	\includegraphics[scale=.175]{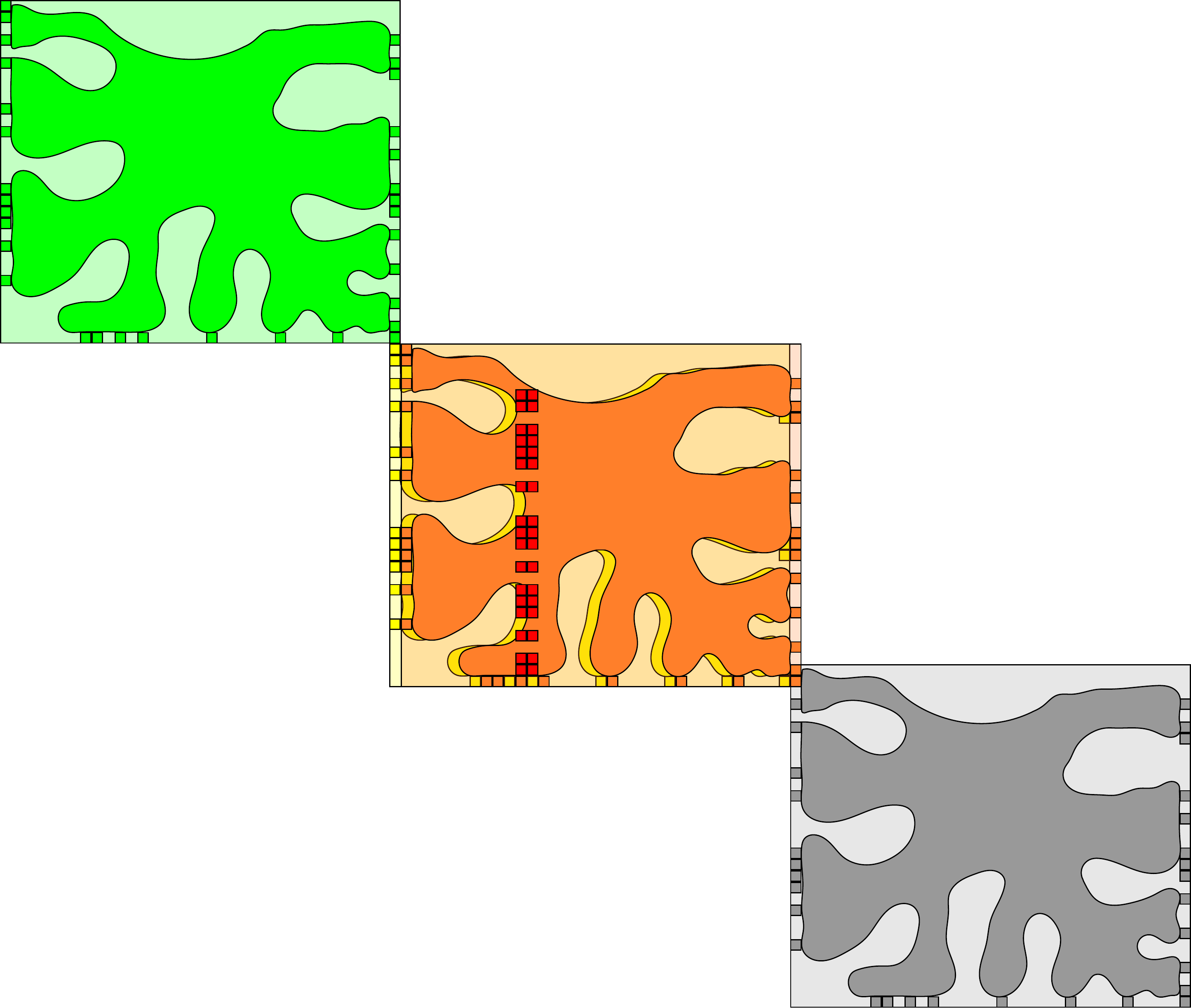}  \\
	(a) & (b) & (c) \\\hline
	\includegraphics[scale=.175]{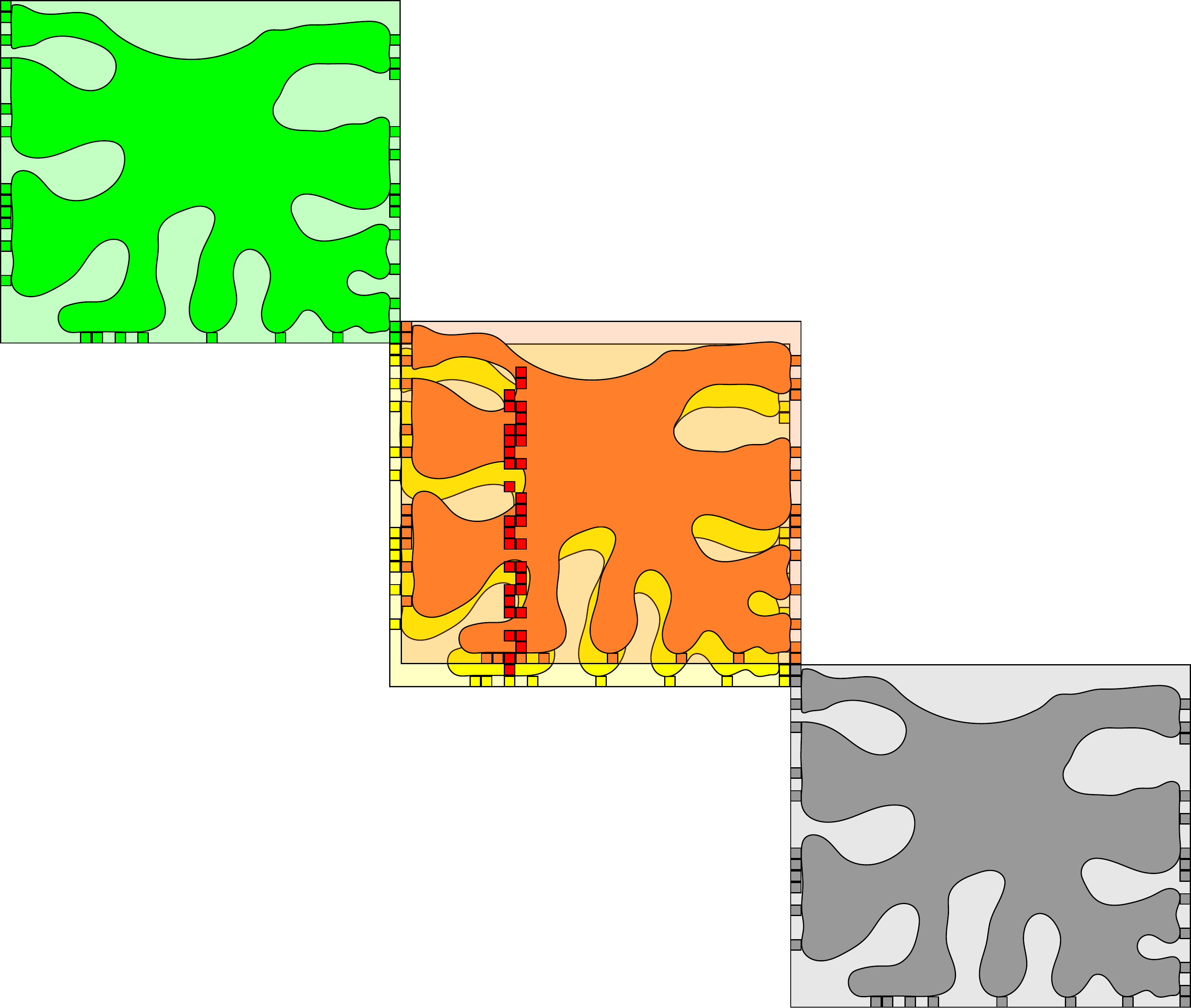}  & \includegraphics[scale=.175]{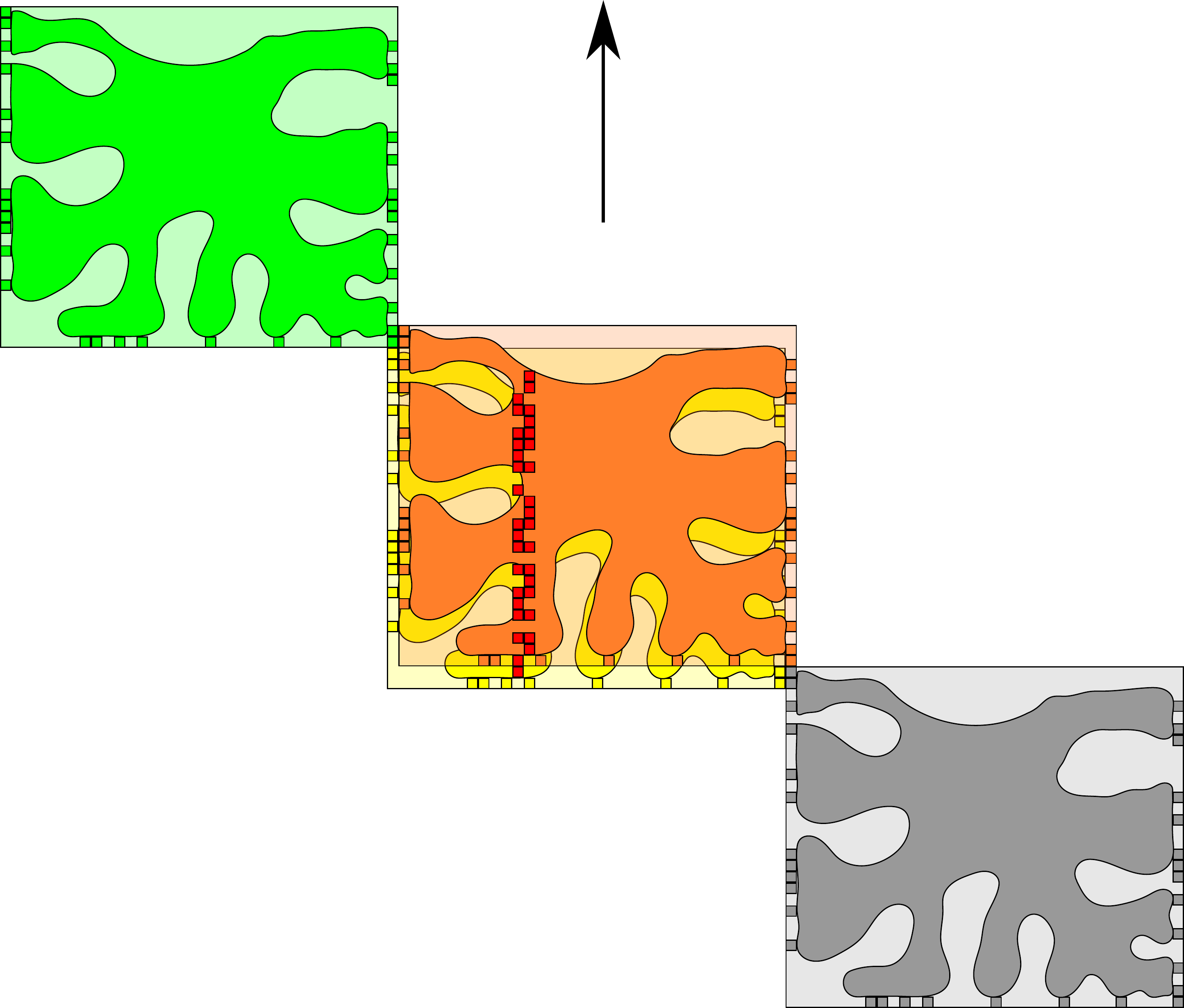} &
	\includegraphics[scale=.175]{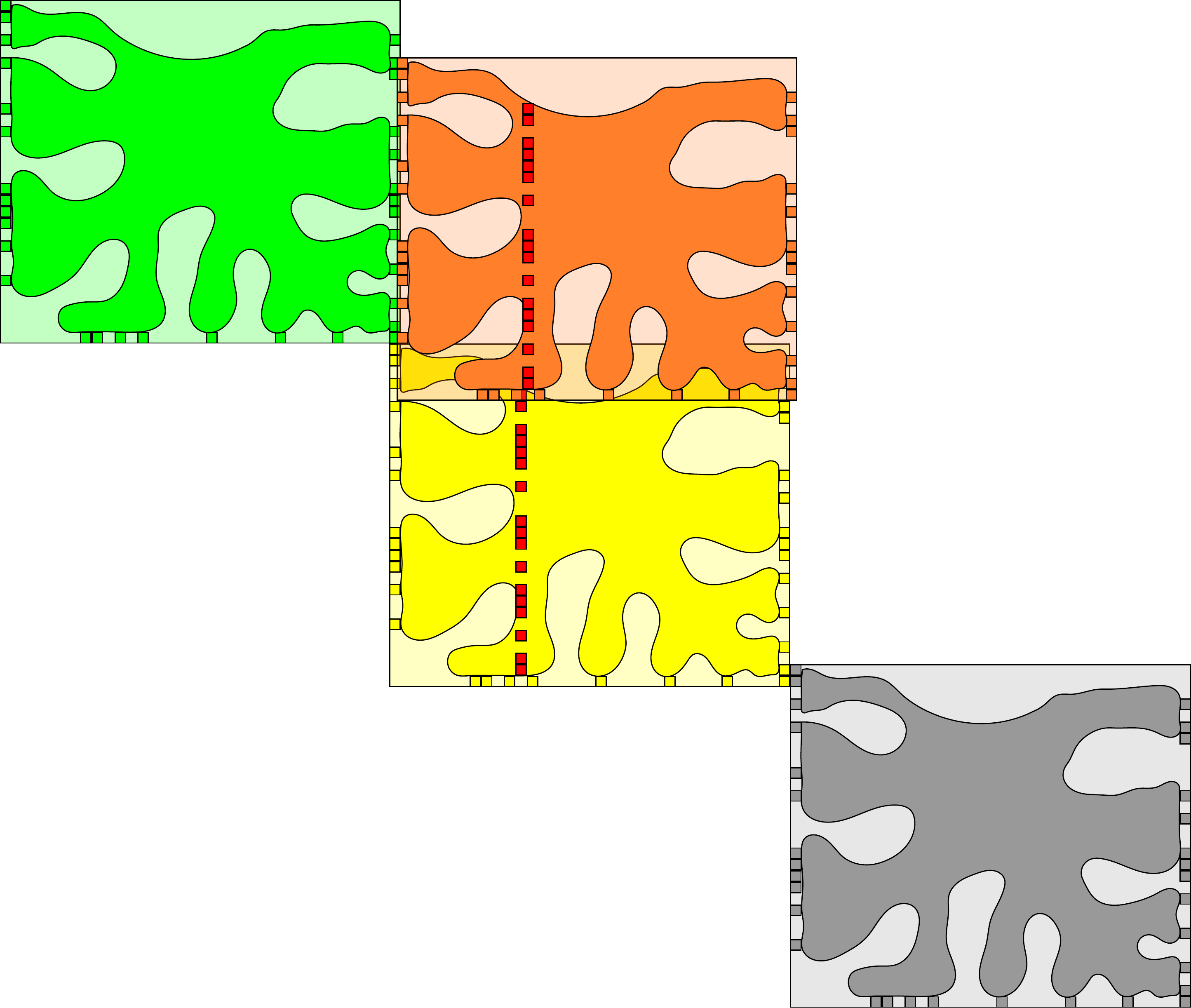}  \\
	(d) & (e) & (f) \\\hline
\end{tabular}
\caption{The steps involved in placing the grey blocker and yellow tiles so that an aqua tile is prevented from binding to the green tile, but allows for a yellow tile to bind to the green and then continue growth of the yellow path on the grid.  The orange tile represents the second tile of the yellow path (to make it easier to distinguish), and figures (a) through (e) show how it can be initially placed immediately on top of the first yellow tile, and then moved into a position which allows for correct binding.}
\label{tbl:BitReadGenCase}
\end{figure}

We now describe how glues are placed on the first and second tiles to assemble in the path of aqua colored tiles.  Figure~\ref{fig:BitReadGenCase11} shows how the $0$-blocker lies in relation to the aqua and green tiles.  Notice that a tile can attach to the north of the aqua tile without overlapping any pixels on other tiles.  Thus, the second tile to attach in the aqua path is placed to the north of the first aqua tile in the path such that it is on the grid.  This is shown in Figure~\ref{fig:BitReadGenCase12} where we use a purple tile to represent the second tile in the aqua path for clarity.  Consequently, we place glues on the first and second tiles to attach in the aqua path in a manner such that the second tile in the path binds on grid with respect to the first tile.

\begin{figure}[htp]
\centering
  \subfloat[][The green tile is on grid and the yellow tile is $(-1,-1)$-shifted.]{%
        \label{fig:BitReadGenCase11}%
        \includegraphics[width=2.3in]{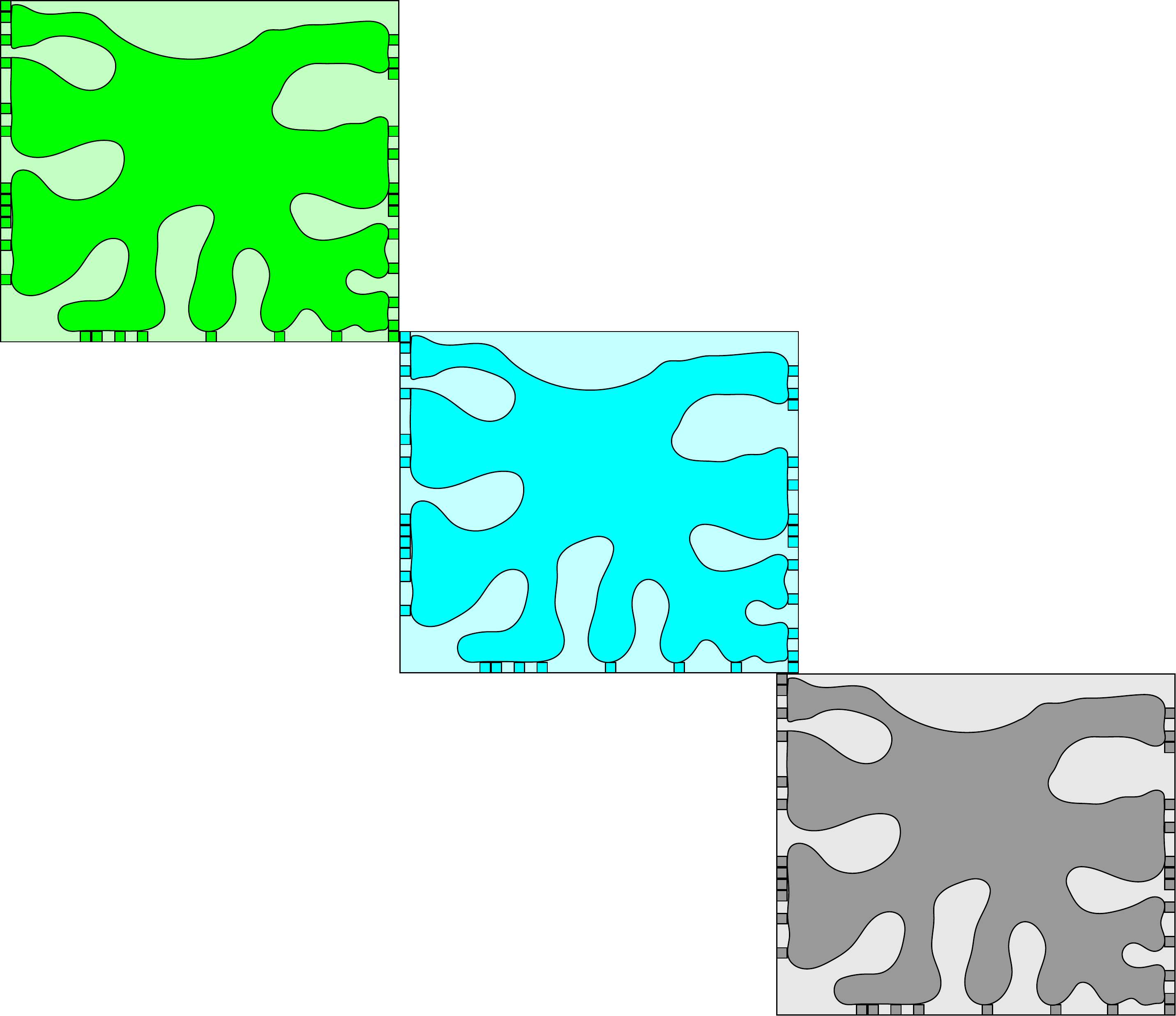}
        }%
        \quad
  \subfloat[][The $0$-blocker is placed so that the westernmost pixel on the north perimeter of the $0$-blocker overlaps the easternmost pixel on the south perimeter of the yellow tile.]{%
        \label{fig:BitReadGenCase12}%
        \includegraphics[width=2.3in]{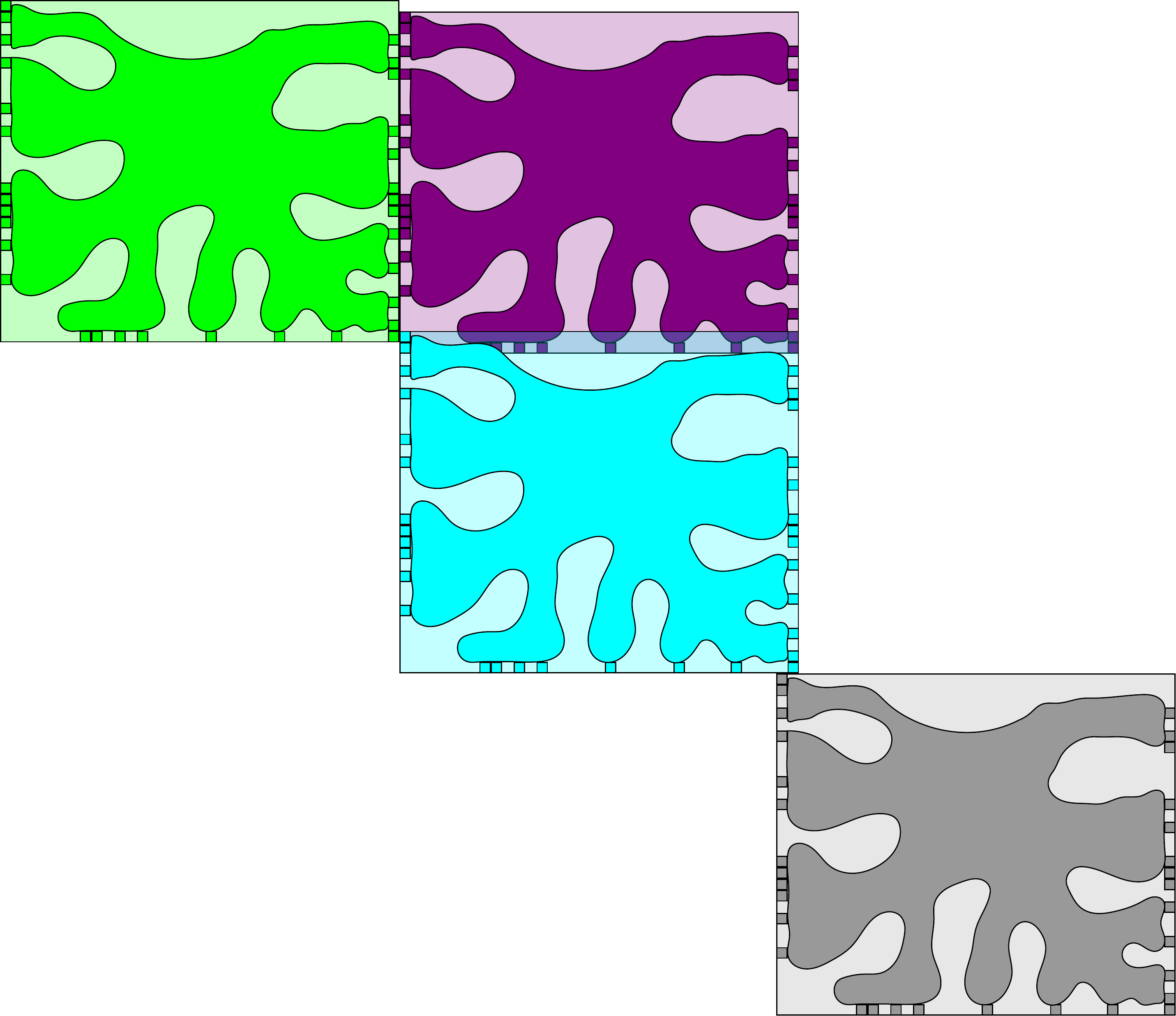}
        }%
        \quad
  \caption{Placement of the second tile in the aqua path (shown as purple to make it easier to distinguish).}
  \label{tbl:BitReadGenCase1}
\end{figure}

\subsubsection*{Case (3) Right-to-Left Bit-Reading Gadget Construction}
As the above sections describe how to build the left-to-right bit-reading gadget, we now construct the right-to-left bit-reading gadget by using mirrored versions of the arguments given above with a few small changes.  In the left-to-right bit-reading gadget we can always place the yellow tile so that it is a $(-1,-1)$-shifted polyomino.
Notice that this is not the case when the bit reader is growing to the west.  Thus we make the following changes to the argument above when constructing the right-to-left bit-reading gadget.  For convenience, we call the first tile to attach in the aqua path $t_a$ and the first tile to attach in the yellow path $t_y$.  To begin, we attach $t_a$ to the green tile so that the northernmost pixel on the east perimeter of $t_a$ attaches to the southernmost pixel on the western perimeter of the green tile via their east/west glues.  Say that this places the aqua tile so that it is an $(x_1,x_2)$-shifted polyomino.  Note that this means $t_a$ is not necessarily on grid since as noted above the grid we are using is formed by attaching the southernmost pixel on the east perimeter of $P$ to the northernmost pixel on the western perimeter of $P$.  Now, observe that this implies that we can also attach a $(x_1+1,x_2-1)$-shifted tile to the green tile (by the points that we used for their attachment at $(x_1,x_2)$).  We thus construct glues so that $t_y$ attaches to the green tile such that it is a $(x_1+1,x_2-1)$-shifted polyomino.  Now, we can construct the bit-writers as in Section~\ref{sec:3BW} with the blockers shifted in the following ways: (1) the $1$-blocker is shifted so that when it is placed its northernmost pixel on the east perimeter overlaps the southernmost pixel on the western perimeter of $t_a$, and (2) the $0$-blocker is placed so that its easternmost pixel on its north perimeter overlaps the westernmost pixel on the south perimeter of $t_y$.  We can then use the mirrored version of the construction in section~\ref{sec:3BR} to grow the rest of the path of tiles composing the yellow and aqua paths.

\subsubsection*{Case (3) Correctness of the Bit-Reading Gadget} ~\label{sec:3PC}

Let us now examine what our constructed system will assemble.  Growth will
start with the seed and then grow two bit-writer subassemblies consecutively.  For concreteness, suppose that $\alpha_1$ is grown first and then $\alpha_0$.  After $\alpha_0$ is assembled, a path of tiles will grow upward and over to place a green tile such that the green tile will be placed with its position relative to the grey tile as shown in part (a) of Figure~\ref{tbl:BitReadGenCase}.  It
then follows by the way we placed the green and yellow tiles and the location of $\alpha_0$ that the
yellow path will be able to assemble.  This will eventually lead to the placement of the second green tile, which is placed to the west of the second bit writer.  The relative placements of that green tile and the blocker of $\alpha_1$ ensure that an aqua path, and only an aqua path, will assemble.  This concludes the necessary demonstration of the correct growth of a bit-reader gadget.  (The full Turing machine simulation also includes bit-writers, designed as previously described, to output between bit-readers and the necessary zig-zag paths.)
\qed

\subsubsection*{Case (3) Example}
Figure~\ref{tbl:BitReadGenCaseE0} shows a concrete example of the steps outlined in Figure~\ref{tbl:BitReadGenCase}, and Figure~\ref{tbl:BitReadGenCaseE1} shows, using the same example, the steps outlined in Figure~\ref{tbl:BitReadGenCase1}.  In Figure~\ref{fig:GenComp0EX} and Figure~\ref{fig:GenComp1EX} we see how the placement of the polyominoes that prevent paths from growing can be combined with Lemma~\ref{lem:r-shifts} to create a bit-reading gadget.

\begin{figure}[htp]
\centering
\begin{tabular}{| M | M | M |}
	\hline
	\includegraphics[scale=.13]{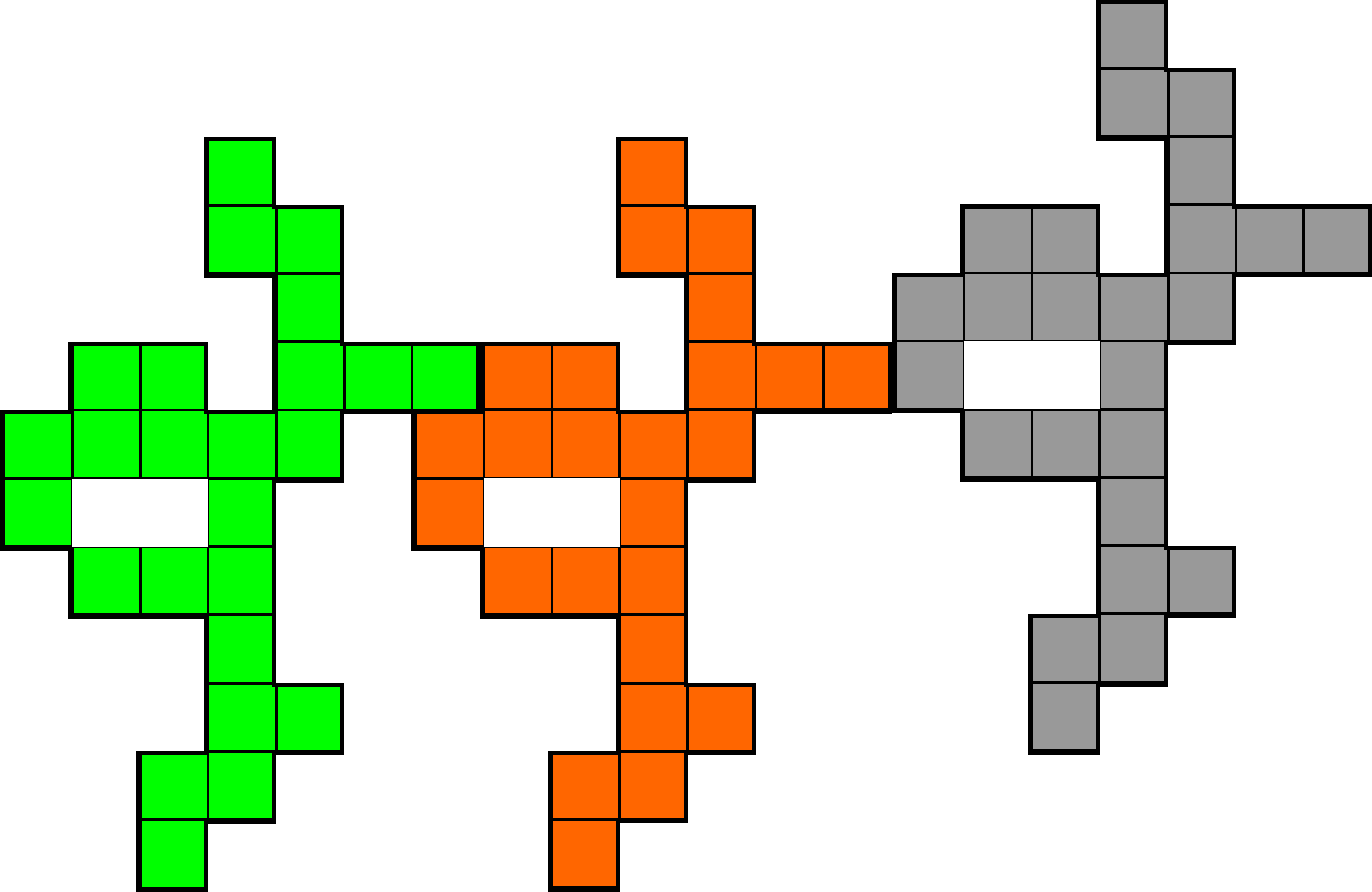}  & \includegraphics[scale=.13]{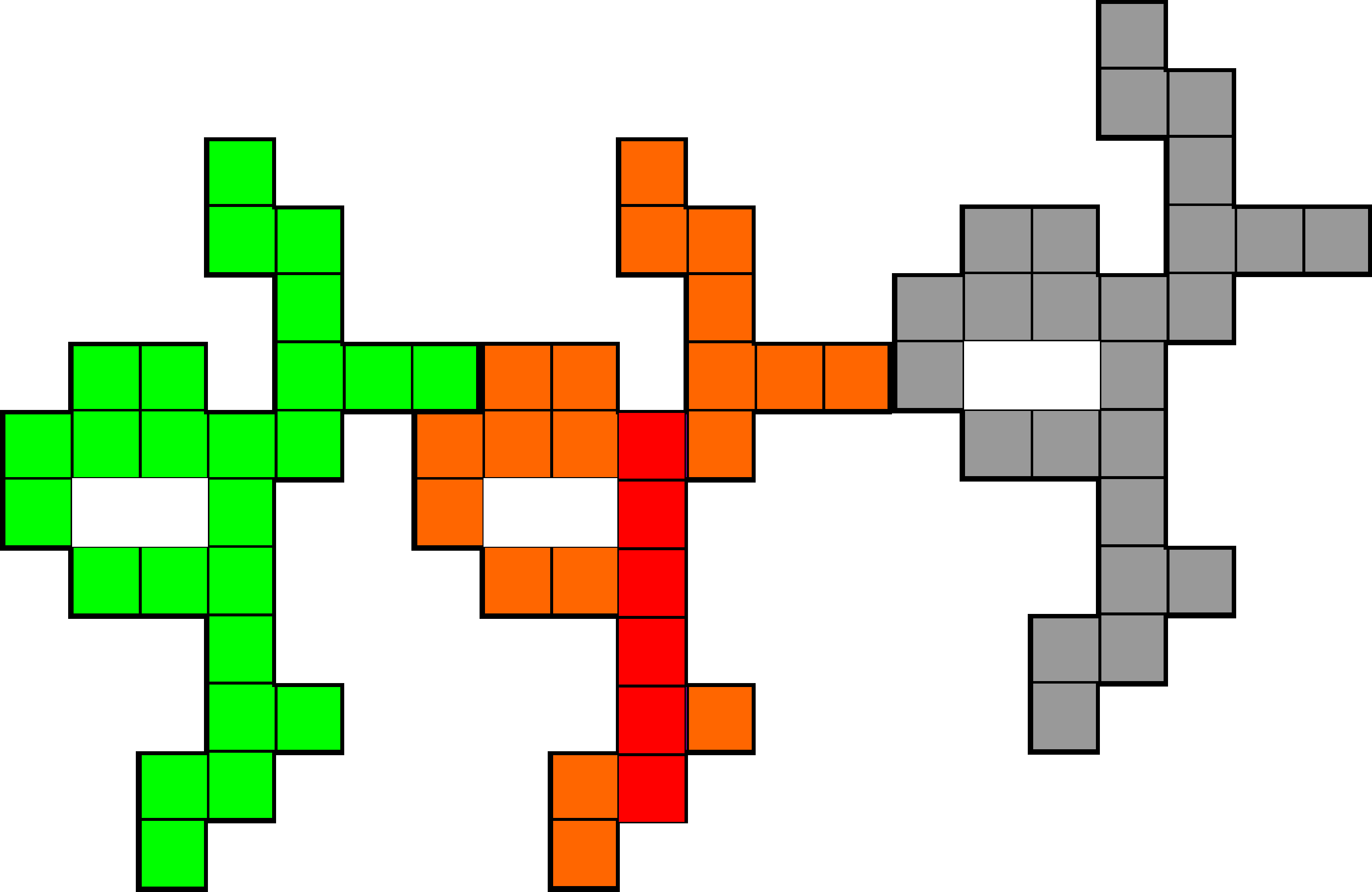} &
	\includegraphics[scale=.13]{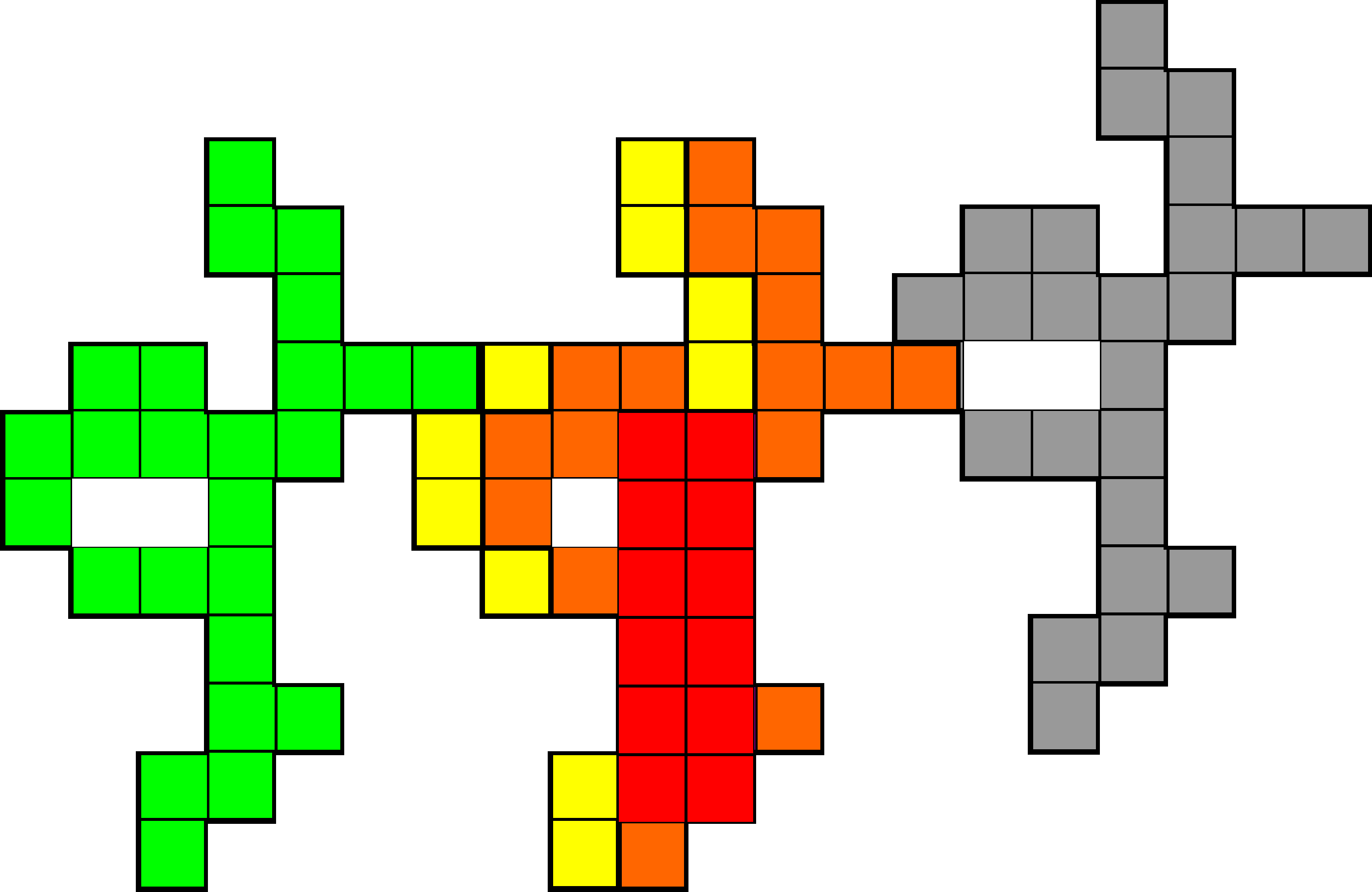}  \\
	(a) & (b) & (c) \\\hline
    \includegraphics[scale=.13]{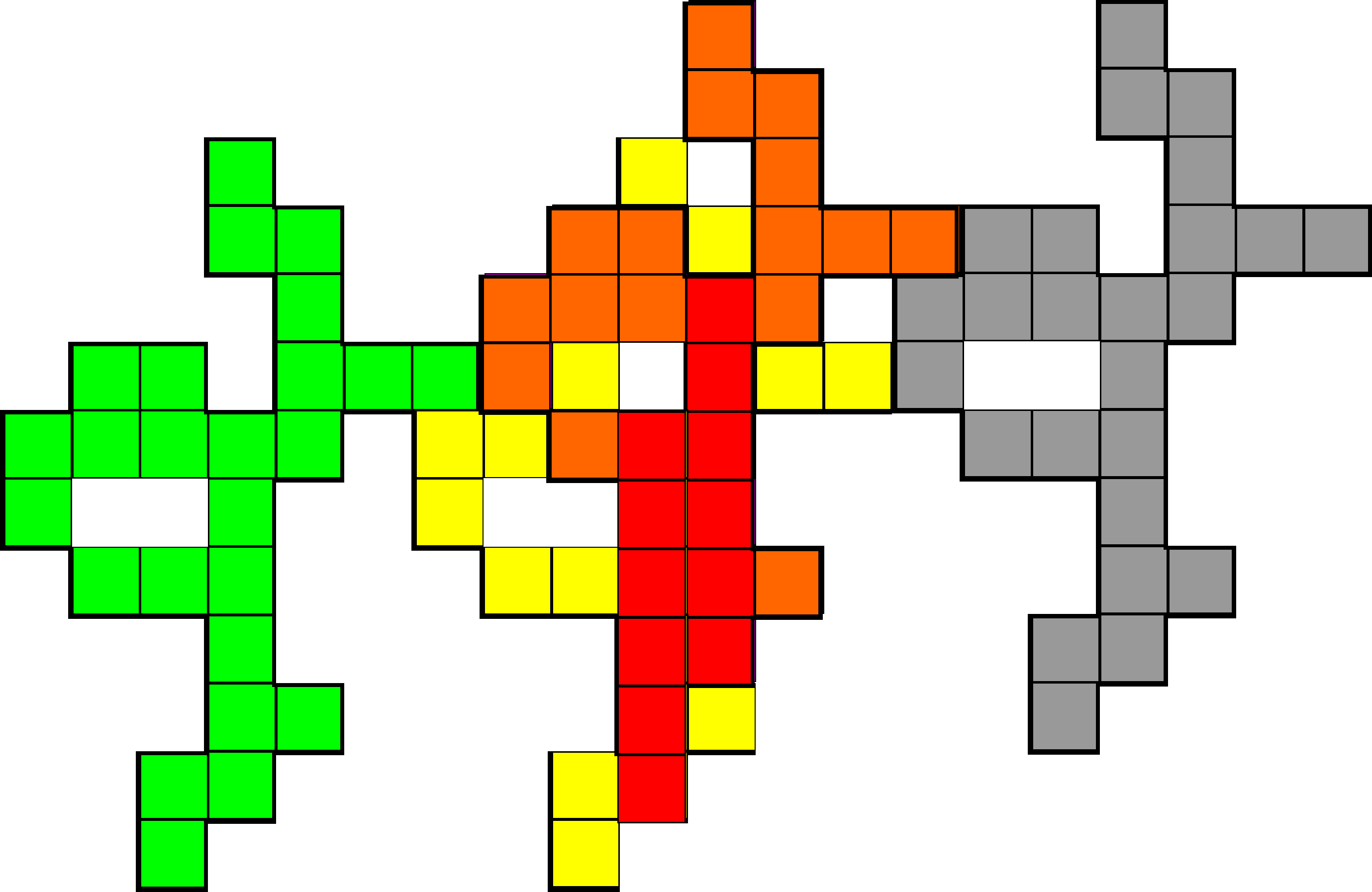}  & \includegraphics[scale=.13]{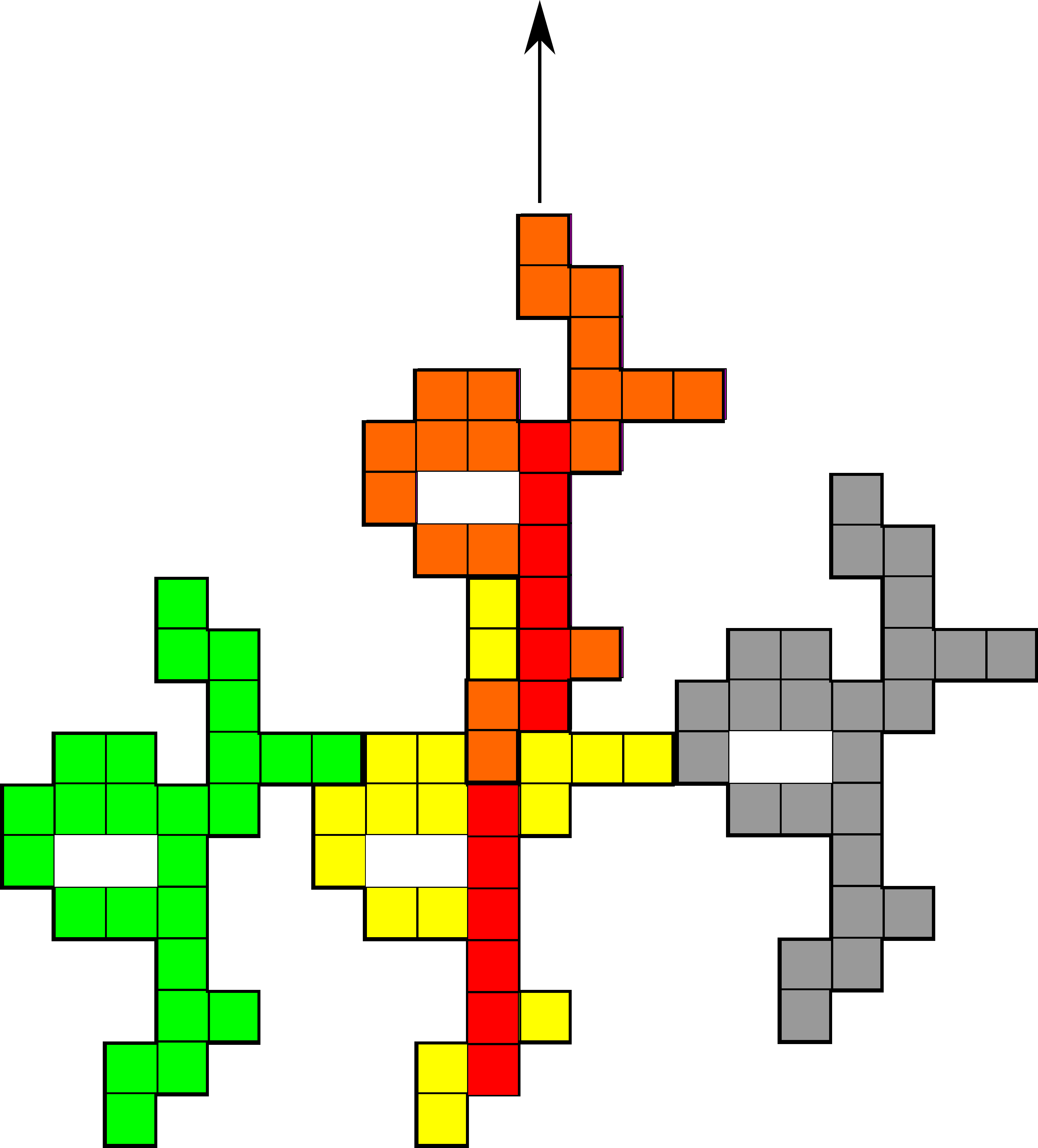} &
	\includegraphics[scale=.13]{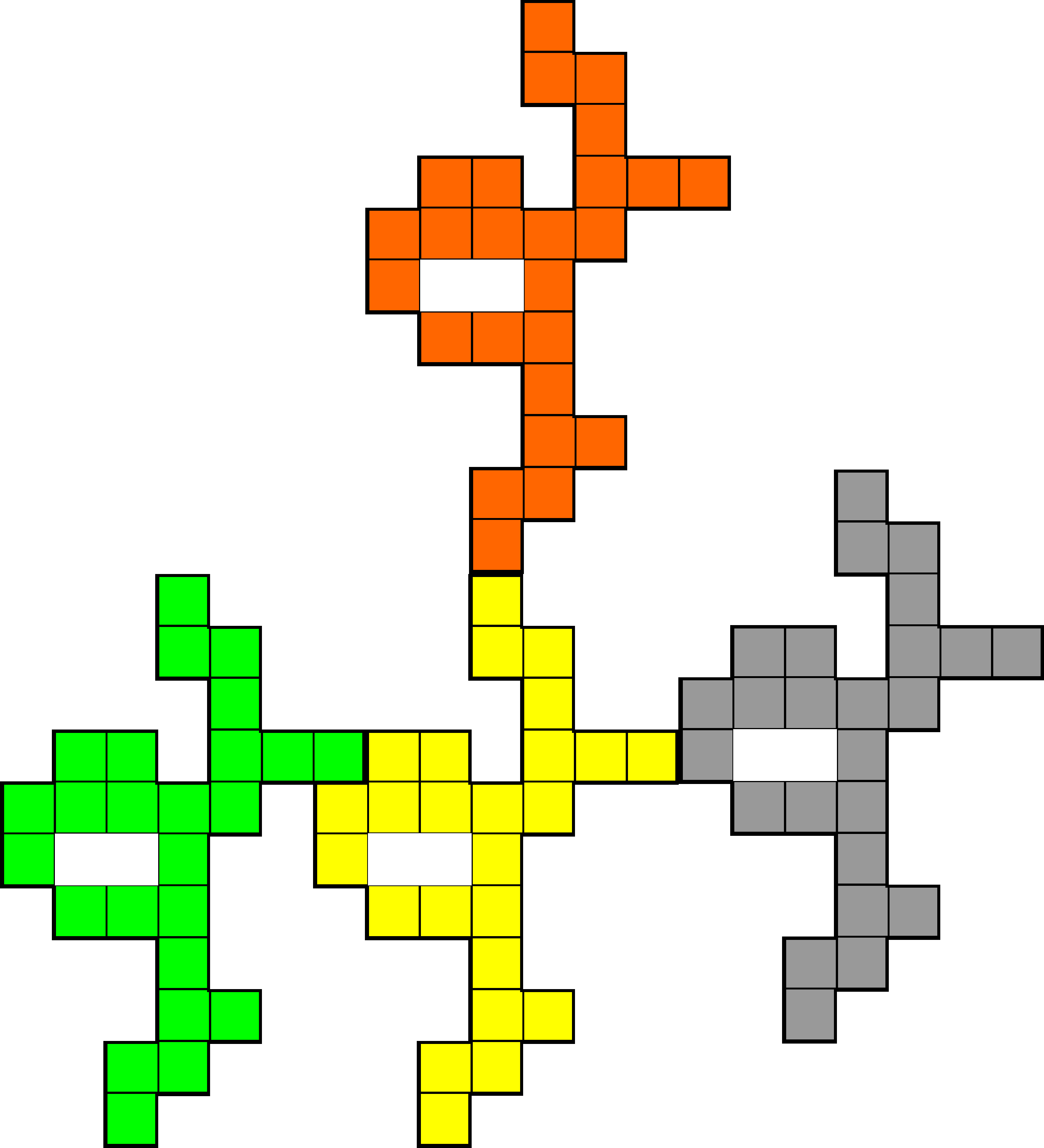}  \\
	(d) & (e) & (f) \\\hline
\end{tabular}
\caption{An example of the steps involved in placing the blocker, which prevents the aqua tile from binding to the green tile, but allows for the yellow tile to bind to the green tile and continue growth on the grid.}
\label{tbl:BitReadGenCaseE0}
\end{figure}

\begin{figure}[htp]
\centering
\begin{tabular}{| M | M |}
	\hline
	\includegraphics[scale=.13]{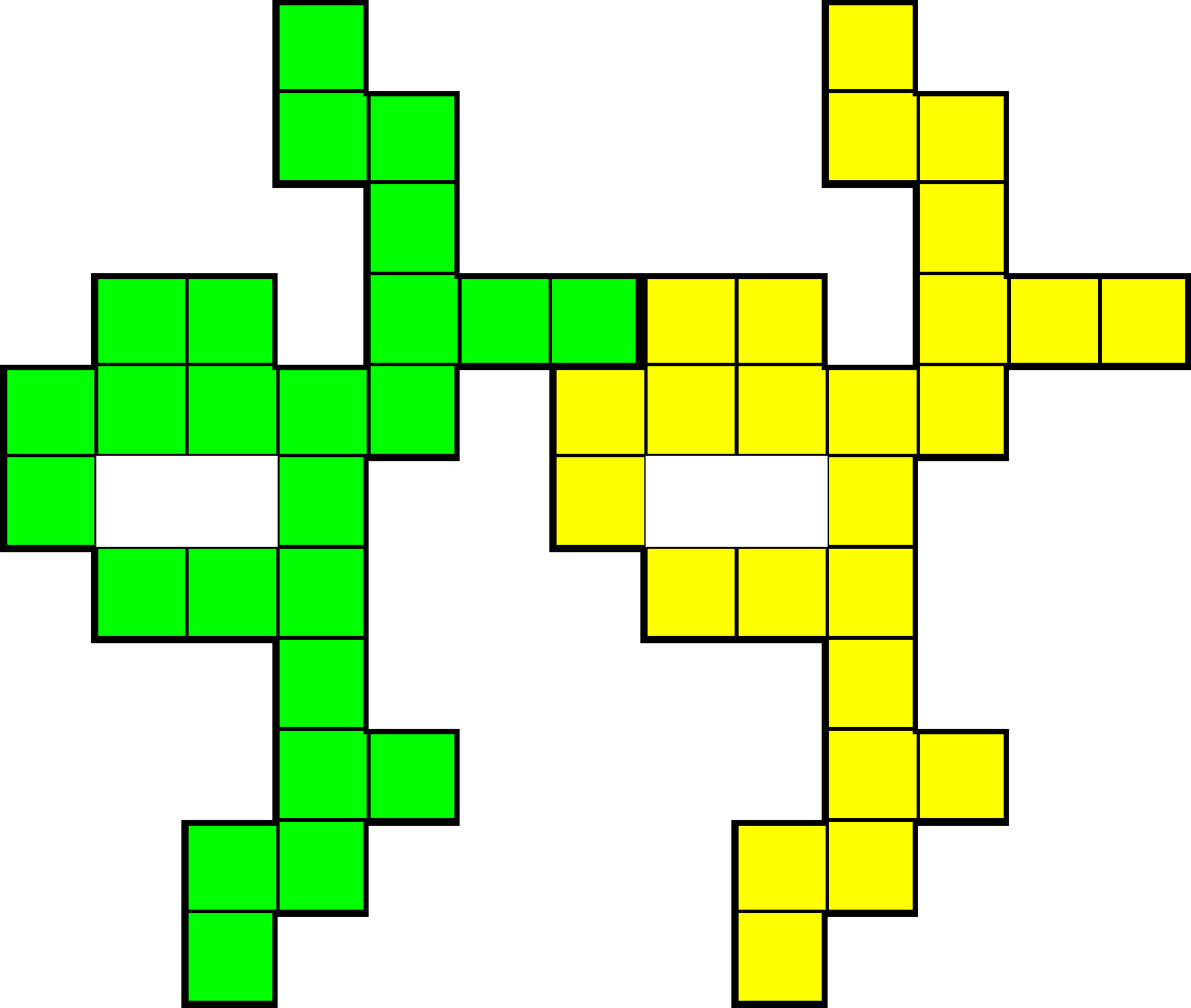}  & \includegraphics[scale=.13]{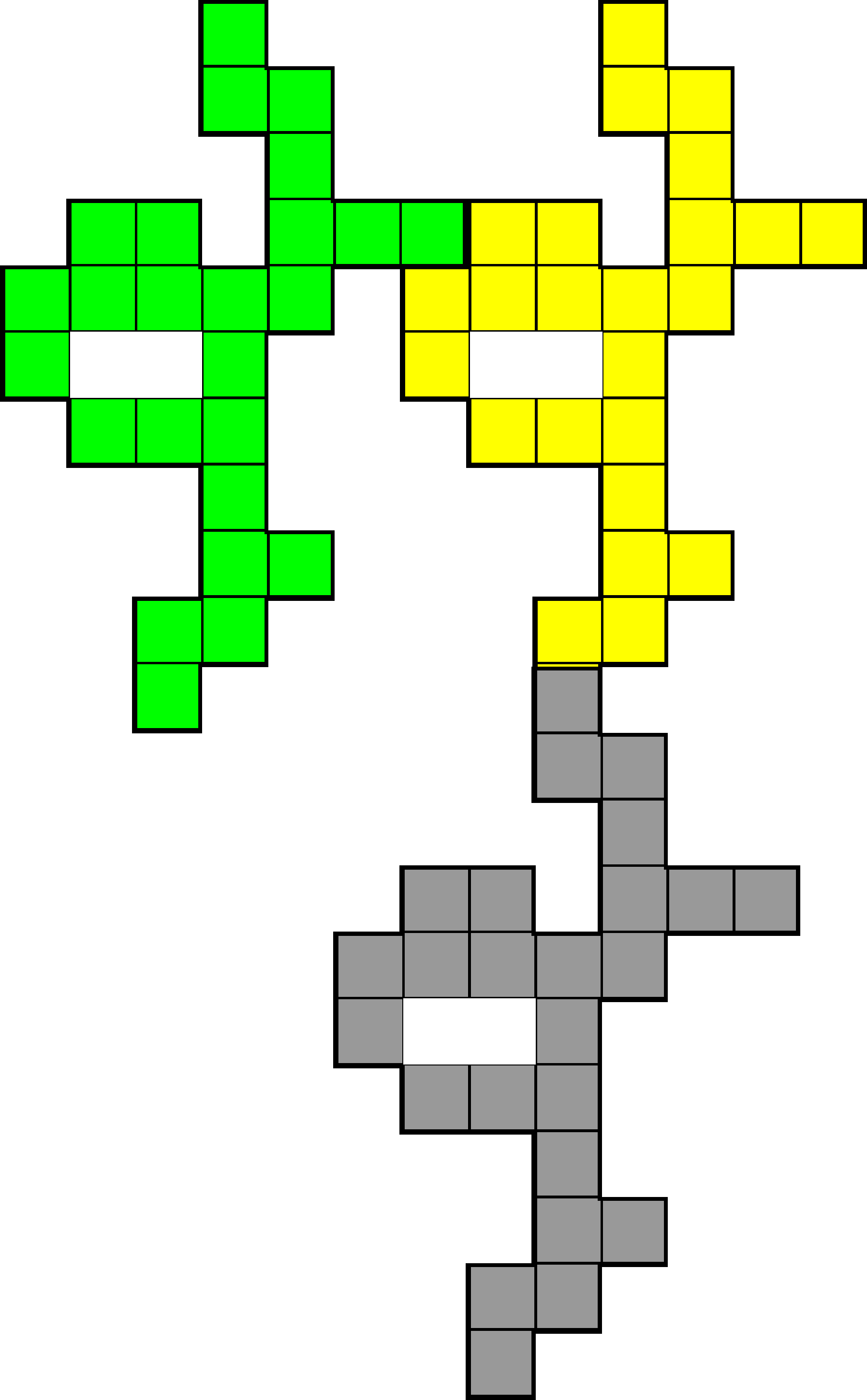} \\
	(a) & (b)  \\\hline
	\includegraphics[scale=.13]{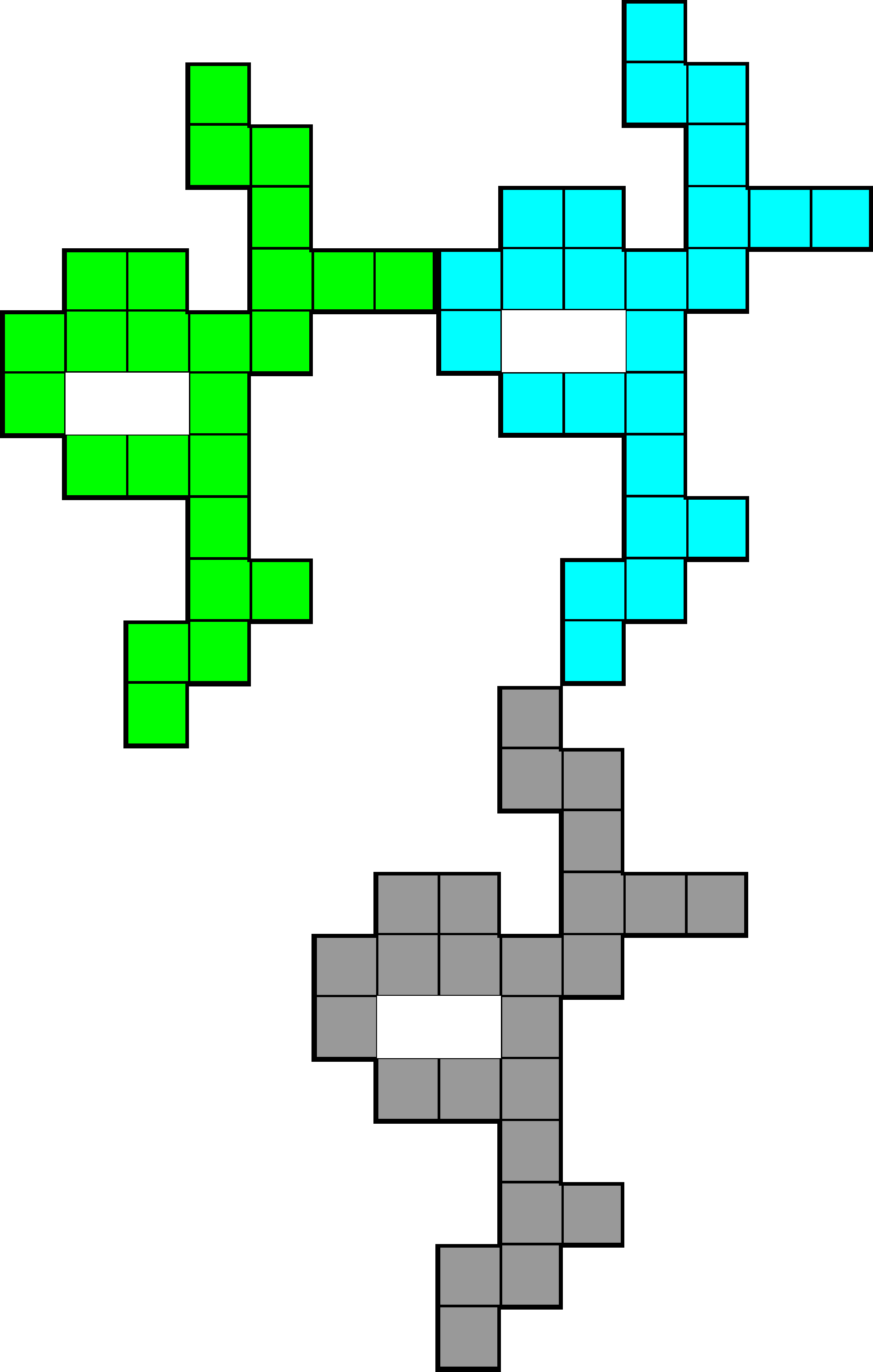}  & \includegraphics[scale=.13]{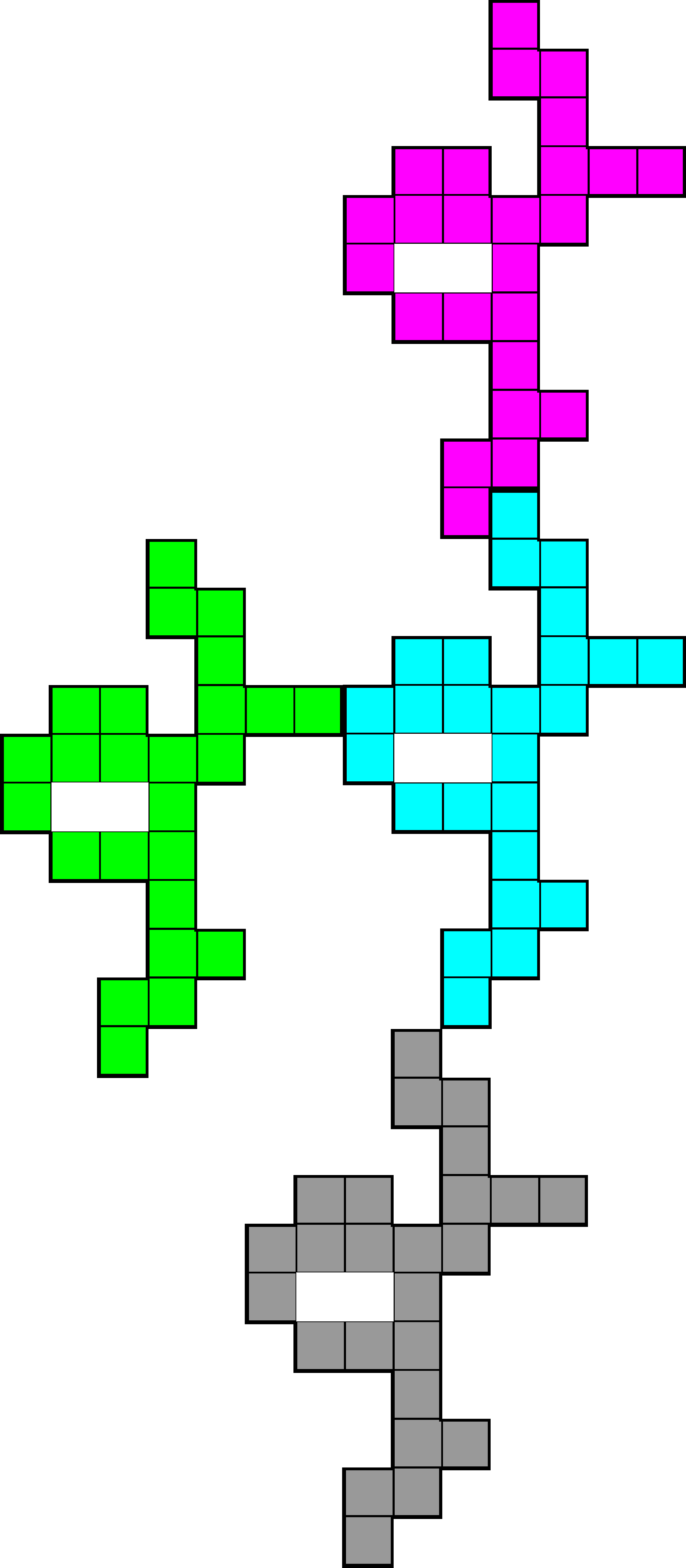} \\
	(c) & (d) \\\hline
\end{tabular}
\caption{The steps involved in placing the blocker, which prevents the yellow tile from binding to the green tile, but allows for the aqua tile to bind to the green tile and continue growth on the grid.}
\label{tbl:BitReadGenCaseE1}
\end{figure}

\begin{figure}[htp]
\begin{center}
\includegraphics[width=4.0in]{images/GenComp0EX}
\caption{An example showing how a ``0'' bit is read.}
\label{fig:GenComp0EX}
\end{center}
\end{figure}

\begin{figure}[htp]
\begin{center}
\includegraphics[width=4.0in]{images/GenComp1EX}
\caption{An example showing how a ``1'' bit is read.}
\label{fig:GenComp1EX}
\end{center}
\end{figure}

\subsection{Case (4): $P$ is non-basic}
For this case, suppose that $P$ is a non-basic polyomino.  (Note that we are
not making the claim that non-basic polyominoes exist; in fact we conjecture
that they do not. However, a proof seems at least as complicated as handling
this case with an explicit construction.)

In Lemma~\ref{lem:spanningVectors}, we let $P$ be a polyomino and let $V \subset \mathbb{Z}^2$ denote the set of vectors such that $\vec{r} \in V$ provided that there exists some directed, singly seeded, single shape system $\mathcal{T} = (T,\sigma)$ with shape given by $P$ whose terminal assembly $\alpha'$ contains an $\vec{r}$-shifted polyomino tile. Informally, $V$ is the set of valid off grid shifts that can occur in a singly shaped system with shape given by $P$. Now, notice that by choosing the vector $\vec{v}$ from Lemma~\ref{lem:grids} we can assume that on grid tile attachment occurs between two polyomino tiles $P_1$ and $P_2$ when the southernmost pixel of the easternmost row of pixels of $P_1$ is horizontally adjacent to the left of the northernmost pixel of the westernmost row of pixels of $P_2$. Henceforth, we make this assumption, and observe that with this choice of grid, (i.e. this choice of $\vec{v}$) $(-1,-1)$ is contained in $V$ since, with glues placed in the necessary positions, if $P_2$ was placed on grid it could then potentially be translated one pixel left and one pixel down, causing the same two pixels of $P_1$ and $P_2$ to bind but now along north-south edges ($P_2$ being below).

\begin{lemma}\label{lem:spanningVectors}
For any polyomino $P$, if $V$ contains one of the four vectors $(\pm 1,0)$ or $(0,\pm 1)$, then $P$ is a basic polyomino.
\end{lemma}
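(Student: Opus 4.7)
The plan is to show that $V$ is a subgroup of $\mathbb{Z}^2$, and then combine the hypothesis with the observation (noted just before the lemma) that $(-1,-1)\in V$ to deduce $V=\mathbb{Z}^2$; since the grid lattice $\langle\vec{v},\vec{w}\rangle$ has only finitely many cosets in $\mathbb{Z}^2$ and $V=\mathbb{Z}^2$ hits all of them, $P$ is basic by definition.

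First I would establish that $V$ is closed under addition and negation. Closure under addition is essentially Lemma~\ref{lem:genBitWrite}: by Lemma~\ref{lem:vect-v} any vector in $V$ is a sum of basic shifts $\vec{b}\in B$, so if $\vec{r}_1,\vec{r}_2\in V$ each decomposes as $\sum c_i\vec{b_i}$ and $\sum c'_j\vec{b_j}$, then their concatenation gives a system realizing $\vec{r}_1+\vec{r}_2$. Closure under negation is even easier: if $\vec{b}\in B$ is realized by attaching a second copy of $P$ to a seed at relative shift $\vec{b}$, then swapping which tile is designated as the seed realizes $-\vec{b}$, so $B=-B$, and hence $V=-V$.

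Next I would do the case analysis. We always have $(-1,-1)\in V$. If $(1,0)\in V$ then $(1,0)+(-1,-1)=(0,-1)\in V$, and by negation also $(-1,0),(0,1)\in V$. If $(-1,0)\in V$ then $(-1,0)-(-1,-1)=(0,1)\in V$. If $(0,1)\in V$ then $(0,1)+(-1,-1)=(-1,0)\in V$. If $(0,-1)\in V$ then $(0,-1)-(-1,-1)=(1,0)\in V$. In every case $\{(1,0),(0,1)\}\subseteq V$, and since $V$ is a subgroup this gives $V=\mathbb{Z}^2$. Therefore every representative of every coset of $\langle\vec{v},\vec{w}\rangle$ in $\mathbb{Z}^2$ belongs to $V$, so $P$ is basic.

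The only real obstacle is the closure argument, and specifically making sure that the ``concatenation'' of two shift-realizing systems does not cause overlap or block growth. However, this is exactly what the geometric control in Lemma~\ref{lem:simpleShiftCases} and Lemma~\ref{lem:genBitWrite} was built for: each realizing subsystem can be placed inside its own axis-aligned bounding rectangle with prescribed seed and terminal corner, so two such subsystems can be stacked corner-to-corner without interfering, yielding a new directed singly-seeded single-shape system whose terminal assembly contains a tile shifted by $\vec{r}_1+\vec{r}_2$. Once this is in hand, the remainder is elementary group theory.
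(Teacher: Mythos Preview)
Your proposal is correct and follows essentially the same approach as the paper. Both arguments observe that $(-1,-1)\in V$ by the choice of grid, use closure of $V$ under integer linear combinations (via Lemma~\ref{lem:vect-v} and Lemma~\ref{lem:genBitWrite}) to derive $(1,0),(0,1)\in V$ from the hypothesis, and conclude $V=\mathbb{Z}^2$; the paper simply states this more tersely and only writes out one of the four cases, while you make the subgroup structure and the negation closure $B=-B$ explicit.
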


\begin{proof}
As previously mentioned, we have chosen the grid based on $P$ such that $(-1,-1)$ is contained in $V$. Then, if $V$ contains one of the four vectors $(\pm 1,0)$ or $(0,\pm 1)$, we note $V$ contains both of the vectors $(1, 0)$ and $(0, 1)$. For example, suppose that $V$ contains the vector $(1,0)$. In this case, by Lemma~\ref{lem:genBitWrite}, $V$ also contains the vector $(0,1) = -(-1,-1) - (1,0)$. Finally, by Lemma~\ref{lem:genBitWrite}, if $V$ contains $(1,0)$ and $(0,1)$, then $V$ contains every vector in $\Z^2$ and is therefore basic.
\end{proof}

We also have the following lemma.

\begin{lemma}\label{lem:evenParityShifts}
For any polyomino $P$, let $\vec{r}=(r_1, r_2)$ where $r_1, r_2 \in Z$ be a vector. If $r_1 + r_2$ is even, then $\vec{r}\in V$.
\end{lemma}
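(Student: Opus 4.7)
The plan is to combine the established fact $(-1,-1)\in V$ with the additive subgroup structure of $V$ that follows from Lemma~\ref{lem:genBitWrite}, and then to show that only one additional linearly-independent even-parity shift suffices to generate the even-parity sublattice $E=\{(r_1,r_2)\in\mathbb{Z}^2:r_1+r_2\text{ even}\}$.

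My first step would be a checkerboard-parity principle: if two copies of $P$ bind along an edge shared between pixels at intrinsic positions $(x_i,y_i)$ and $(x_j,y_j)$, then the displacement vector has coordinate sum congruent to $(x_i+y_i)+(x_j+y_j)+1\pmod 2$. Consequently, opposite-color pixel pairs contribute vectors with even coordinate sum to $V$, while same-color pairs contribute vectors with odd coordinate sum. If any overlap-free same-color binding exists for $P$, then $V$ contains an odd-parity vector, and together with $(-1,-1)$ this forces $V=\mathbb{Z}^2\supseteq E$, completing that case. Otherwise every single-edge binding of $P$ contributes a vector of even coordinate sum to $V$, i.e.\ $V\subseteq E$, and we must show $V=E$.

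In this remaining case I would exhibit a second even-parity generator $\vec{r}^\ast\in V$, linearly independent from $(-1,-1)$. Since $P$ is connected with $|P|\ge 2$, it contains a pair of adjacent pixels $p_i,p_j=p_i+(1,0)$ of opposite color. Placing one copy of $P$ at the origin and a second copy so that $p_j$ of the second copy lies directly above $p_i$ of the first (sharing their horizontal edge) puts the second copy's distinguished pixel at $(-1,1)$, yielding a shift $(-1,1)\in V$. The overlap-free condition for this placement is that $P$ contains no pair of pixels related by the translation $(-1,1)$; when this fails, one falls back on a vertically-adjacent pair or on the mirror placement ($p_j$ directly below $p_i$), and a short case analysis on the local geometry around an adjacent pair guarantees that at least one such overlap-free alternative always exists. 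Whenever $\vec{r}^\ast\notin\mathbb{Z}(-1,-1)$, the determinant of the pair $((-1,-1),\vec{r}^\ast)$ is $\pm 2$, so the lattice $\mathbb{Z}(-1,-1)+\mathbb{Z}\vec{r}^\ast$ has index $2$ in $\mathbb{Z}^2$, sits inside $E$, and hence equals $E$; combined with closure under integer sums this gives $E\subseteq V$.

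The main obstacle is precisely the overlap analysis for the auxiliary binding. A naive choice of adjacent pair and direction can fail (e.g.\ the $2\times 2$ square, which forces overlap in every such direct variant but instead falls into the first case through a valid same-color binding). The heart of the proof is the enumeration of the short list of local configurations where no immediate alternative binding is available; in those edge cases one invokes Lemma~\ref{lem:chain} to chain small valid shifts into longer ones, collectively supplying the needed second generator of $E$.
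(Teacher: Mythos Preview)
Your first case contains an unjustified leap. Having a single odd-parity vector $(a,b)$ in $V$ together with $(-1,-1)$ generates a sublattice of index $|a-b|$ in $\mathbb{Z}^2$; since $a+b$ is odd, $|a-b|$ is odd but can exceed $1$, so in general this sublattice does not contain all of $E$ (for instance $(3,0)$ and $(-1,-1)$ generate $\{(x,y):3\mid x-y\}$, which misses $(1,-1)\in E$). You would need to argue either that the specific odd-parity binding vectors arising from $P$ always satisfy $|a-b|=1$, or that enough of them occur together to generate $\mathbb{Z}^2$; you do neither. Your fallback for the $2\times2$ square---sending it to this first case---therefore does not close the argument as written.

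In your second case you correctly identify that a second even-parity generator such as $(-1,1)$ is what is needed and that the overlap check is the crux, but you defer this to an unspecified ``short case analysis'' and an invocation of Lemma~\ref{lem:chain} that does not obviously apply (that lemma handles iterated translates along one vector, not the interaction of two distinct placements). The paper resolves the overlap difficulty in one stroke by splitting on \emph{basic versus non-basic} and using Lemma~\ref{lem:spanningVectors} contrapositively. In the non-basic case it takes the on-grid horizontal neighbor $P_{\vec v}$ (whose columns lie strictly to the right of $P$'s, so there is never overlap), lifts it by $(0,1)$ to get $P'$, and observes that any horizontal adjacency between $P$ and $P'$ would place $(0,1)\in V$, contradicting non-basicness. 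Hence no such adjacency exists, so shifting $P'$ one unit left to $P''$ creates no overlap with $P$; meanwhile the original grid adjacency between $P$ and $P_{\vec v}$ becomes a vertical adjacency between $P$ and $P''$, directly witnessing $(-1,1)\in V$. This single use of the non-basic hypothesis is exactly what replaces the open-ended overlap case analysis you anticipated.
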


\begin{proof}
Now, in the case where $P$ is basic, there is nothing to show since $V$ contains every integer vector.
Therefore, we assume that $P$ is non-basic.
Let $P$ be a fixed polyomino, and let $V \subset \mathbb{Z}^2$ denote the set
of vectors such that $\vec{r} \in V$ provided that there exists some directed,
singly seeded system $\mathcal{T}' = (T',\sigma')$ with shape given by $P$
whose terminal assembly $\alpha'$ contains an $\vec{r}$-shifted polyomino tile.
Suppose that $\vec{r} = (r_1, r_2)$ is of the form $r_1+r_2$ is even. Note that
$r_1+r_2$ is even if and only if $r_1$ and $r_2$ are both even or both odd.
Therefore, it suffices to show that $V$ contains every vector of the form
$\vec{r} = (r_1, r_2)$ where $r_1$ and $r_2$ are both even or both odd.

As we have previously mentioned, we may assume that $V$ contains the vector $(-1,-1)$. Now, let $\vec{v}$ and $\vec{w}$ be the two vectors
obtained from Lemma~\ref{lem:grids}, and let $P_{\vec{v}}$ be a polyomino that is the translation of $P$ by $\vec{v}$. Then, let $P'$ be the polyomino obtained by translating $P_{\vec{v}}$ by the vector $(0,1)$. Then, if $P$ has a pixel at location $(x, y)$ and $P'$ has a pixel at location $(x+1, y)$ (in other words, $P$ and $P'$ have pixels that share a common edge), then note that we can define tiles with shape $P$ and matching glues at these pixel locations to obtain a
directed, singly-seeded, single-shape system with shape given by $P$ whose terminal assembly consists of two tiles: a seed tile and a $(0,1)$-shifted tile (there is no chance of overlap since $P$ and $P_{\vec{v}}$ have no horizontal overlap of their bounding boxes and $P'$ is simply an upward shift of $P_{\vec{v}}$). Then, by Lemma~\ref{lem:spanningVectors} says that $P$ must be basic, which contradicts the assumption that $P$ is non-basic. Therefore, it must be the case that for any pixel $p$ in $P$ and pixel $p'$ in $P'$, $p$ and $p'$ do not share a horizontally adjacent edge. Hence, we can translate $P'$ by the vector $(-1,0)$ to obtain the polyomino $P''$, and moreoever, the set of locations of $P$ and the set of locations of $P''$ are disjoint. Note that $P''$ is $(-1,1)$-shifted. Now, notice that there is a pixel in $P$ at location $(x_1,y_1)$ and a pixel in $P_{\vec{v}}$  at location $(x_2,y_2)$ such that these pixels share a common edge. Therefore, the pixel in $P$ at location $(x_1,y_1)$ and the pixel in $P''$ at location $(x_1 - 1,y_1 + 1)$ must also share a common edge. Hence, we can see that $V$ contains the vector $(-1,1)$.

Then, by Lemma~\ref{lem:genBitWrite},
we know that $V$ contains the vectors $(2, 0) = (1,-1) - (-1,-1)$ and $(0,2)
= -(-1,-1) - (1,-1)$. Therefore, by using Lemma~\ref{lem:genBitWrite} again, we
can see that for any $n,m\in \Z$, $V$ contains vectors of the form $(2n, 2m) =
n\cdot (2,0) + m\cdot (0,2)$ and $(2n-1, 2m-1) = (2n, 2m) + (-1,-1)$.
\end{proof}

The schematic diagram for the system we construct in this section which contains the bit-reading gadget will be the same as that shown in Section~\ref{sec:3over}.

Let $t_a$ and $t_y$ be the first tiles to be placed that are a part of the aqua path of tiles and yellow path of tiles respectively.  We construct the light grey tiles, green tile, $t_a$ tile, and $t_y$ tile in the same manner as they are constructed in Section~\ref{sec:3over}.

\subsubsection*{Case (4) Bit-Writer Construction} \label{sec:4B0}
We now describe the construction of the bit-writer subassemblies of the bit-reading gadget by giving a description of the placement of the blocker in relation to the position of the green tile.  Let $x_w$ represent the x-coordinate of the westernmost pixel on the south perimeter of the polyomino $P$ and let $x_e$ represent the x-coordinate of the easternmost pixel on the north perimeter of $P$.

\paragraph{$0$-blocker placement.}
We first place the green tile with the $t_y$ tile attached (recall the $t_y$ tile is $(-1,-1)$-shifted) in the plane, and place a $0$-blocker tile in an on grid position directly below the $t_y$ tile so that the bounding rectangles of the $0$-blocker and the $t_y$ tile overlap (see Figure~\ref{fig:NonCB01}).    We now consider two cases for the placement of the $0$-blocker: 1) $x_w-x_e$ is even and 2) $x_w-x_e$ is odd.

In the case that $x_w-x_e$ is even we can simply translate the $0$-blocker by the vector $(x_w-x_e,0)$ so that the easternmost pixel on the north perimeter of the $0$-blocker overlaps the westernmost pixel on the south perimeter of $t_y$.  It follows from Lemma~\ref{lem:evenParityShifts} that the blocker can be shifted by such an amount.  This case is shown in Figure~\ref{fig:NonCB02}.

In the case that $x_w-x_e$ is odd we translate the $0$-blocker by $(x_w-x_e,1)$ (shown in Figure~\ref{fig:NonCB13}).  Once again, Lemma~\ref{lem:evenParityShifts} allows us to know that such a shift is valid because $(x_w-x_e)+1$ is even (since $x_w-x_e$ is odd).  We also know that the $0$-blocker and $t_y$ tile must have pixels which overlap.  Indeed, for the sake of contradiction, suppose that the pixel to the south of the easternmost pixel on the northern perimeter of the $0$-blocker did not overlap the westernmost pixel on the souther perimeter of $t_y$.  Then that means that the easternmost pixel on the northern perimeter of the $1$-blocker is attached either via its east or west (in order for $P$ to be connected) which implies it has a neighbor to its east or west.  But, this means that we can find a system with a $(1,0)$-shift, which by Lemma~\ref{lem:spanningVectors} contradicts the assumption that $P$ is basic.  In addition, we make the observation that the $t_a$ tile can still be placed since the easternmost pixel on the northern perimeter of the $0$-blocker will lie to the west of the westernmost pixel on the south perimeter of $t_a$.  Consequently, in the presence of the $0$-blocker, $t_a$ can be placed.
\begin{figure}[htp]
\centering
  \subfloat[][The green tile is on grid and the yellow tile is $(-1,-1)$-shifted.]{%
        \label{fig:NonCB01}%
        \includegraphics[width=2.0in]{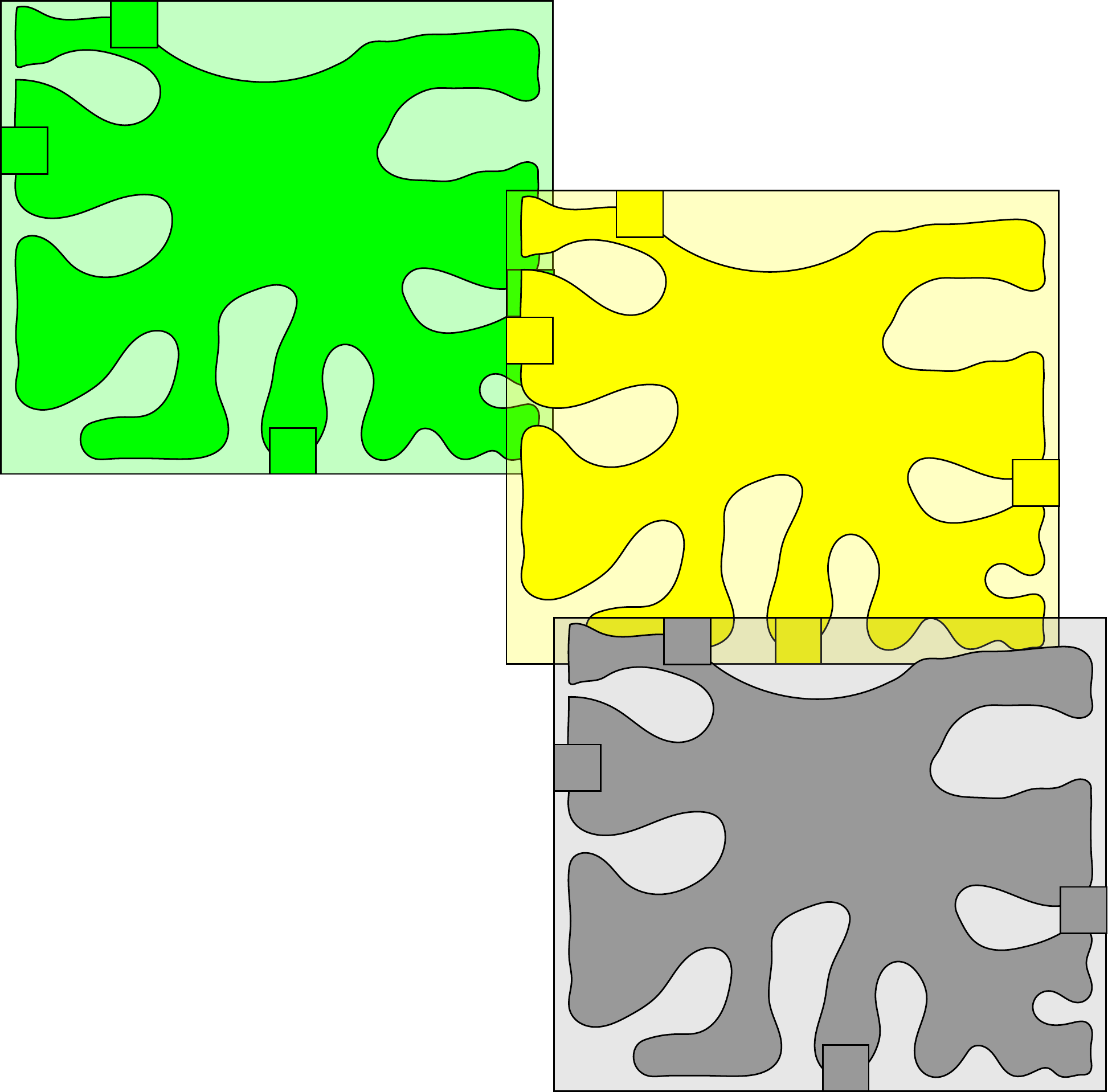}
        }%
        \quad
  \subfloat[][The $0$-blocker tile is placed in this configuration in the event that $x_w-x_e$ is even.]{%
        \label{fig:NonCB02}%
        \includegraphics[width=2.0in]{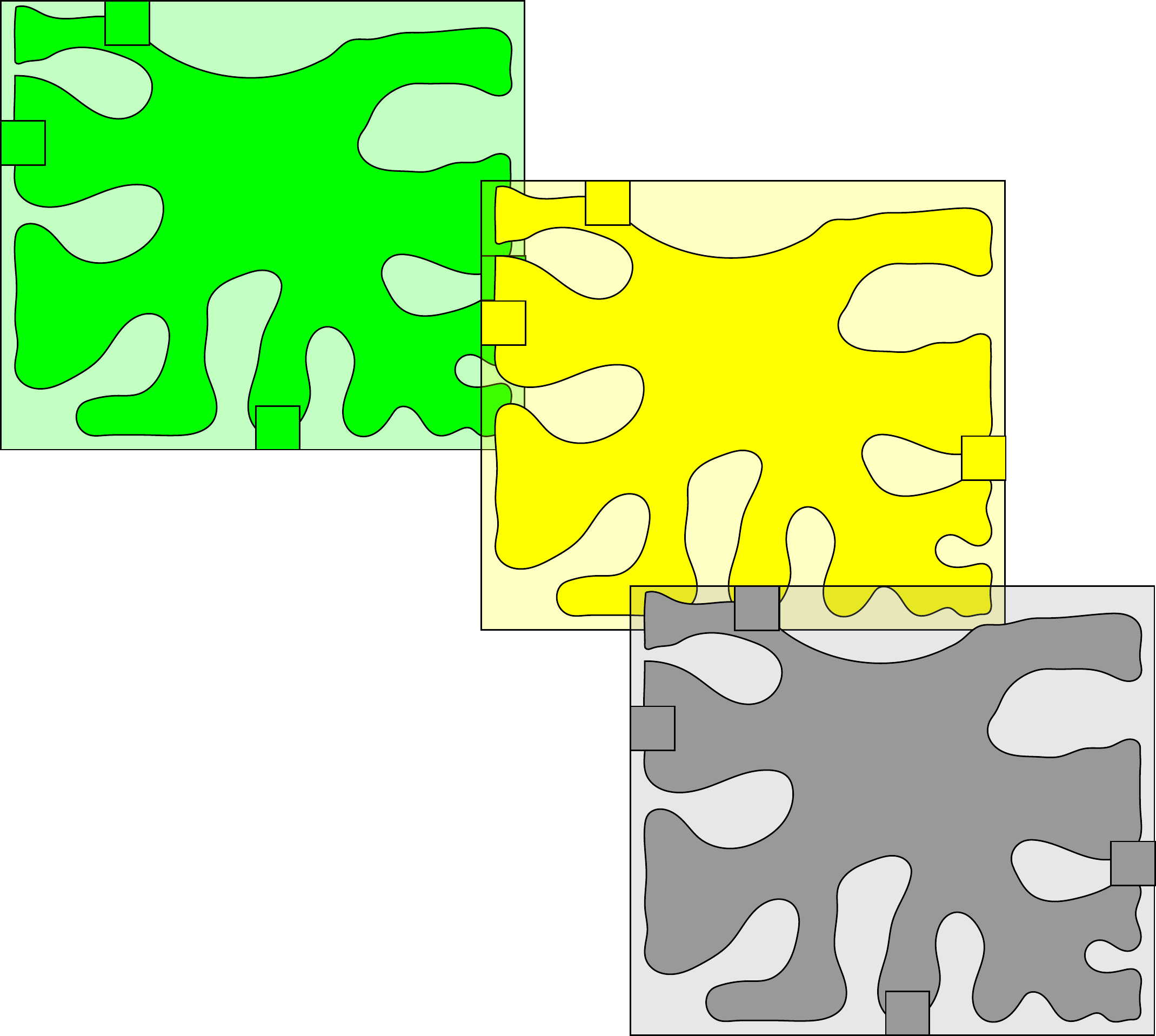}
        }%
        \quad

  \subfloat[][The $0$-blocker tile is placed in this position in the event that $x_w-x_e$ is odd.]{%
  \label{fig:NonCB13}%
  \includegraphics[width=2.0in]{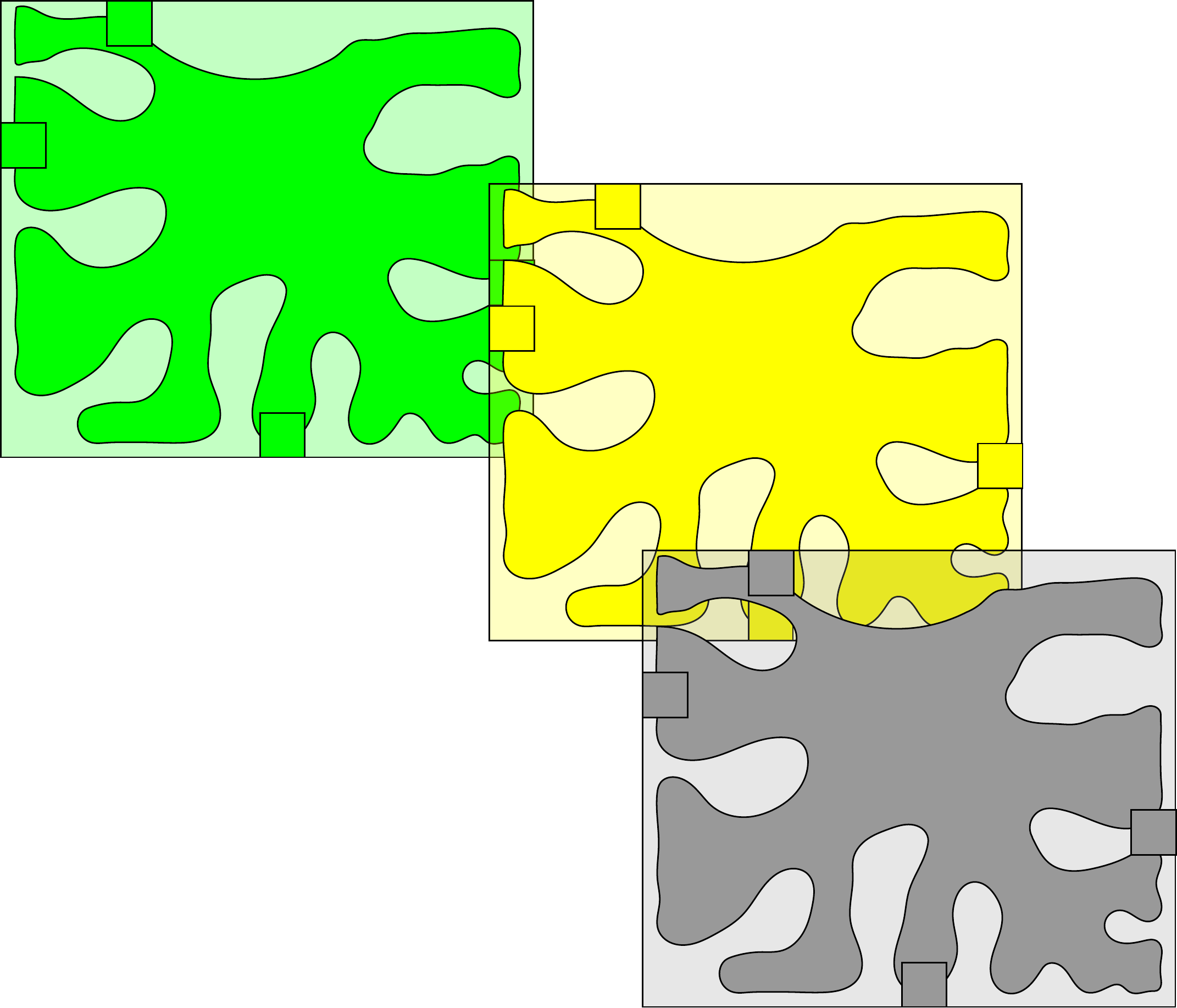}
  }%
  \quad
  \caption{Placement of the $0$-blocker, which blocks the yellow (i.e. $0$-reader) path.}
  \label{fig:NonCB0}
\end{figure}
\paragraph{$1$-blocker placement.}
The $1$-blocker is positioned by first laying down the green tile, the $t_a$ tile and the $1$-blocker in on grid positions (shown in Figure~\ref{fig:NonCB11}) and then shifting the $1$-blocker by the vector $(-2,0)$.  The translation of the $1$-blocker yields a configuration as shown in Figure~\ref{fig:NonCB12}.  Once again, because of connectivity and the assumption that $P$ is non-basic, it must be the case that the tile $t_a$ and the $1$-blocker have overlapping pixels as indicated in the figure by the red square which represents overlapping pixels.
\begin{figure}[htp]
\centering
  \subfloat[][The green tile and aqua tile are both on grid.]{%
        \label{fig:NonCB11}%
        \includegraphics[width=2.3in]{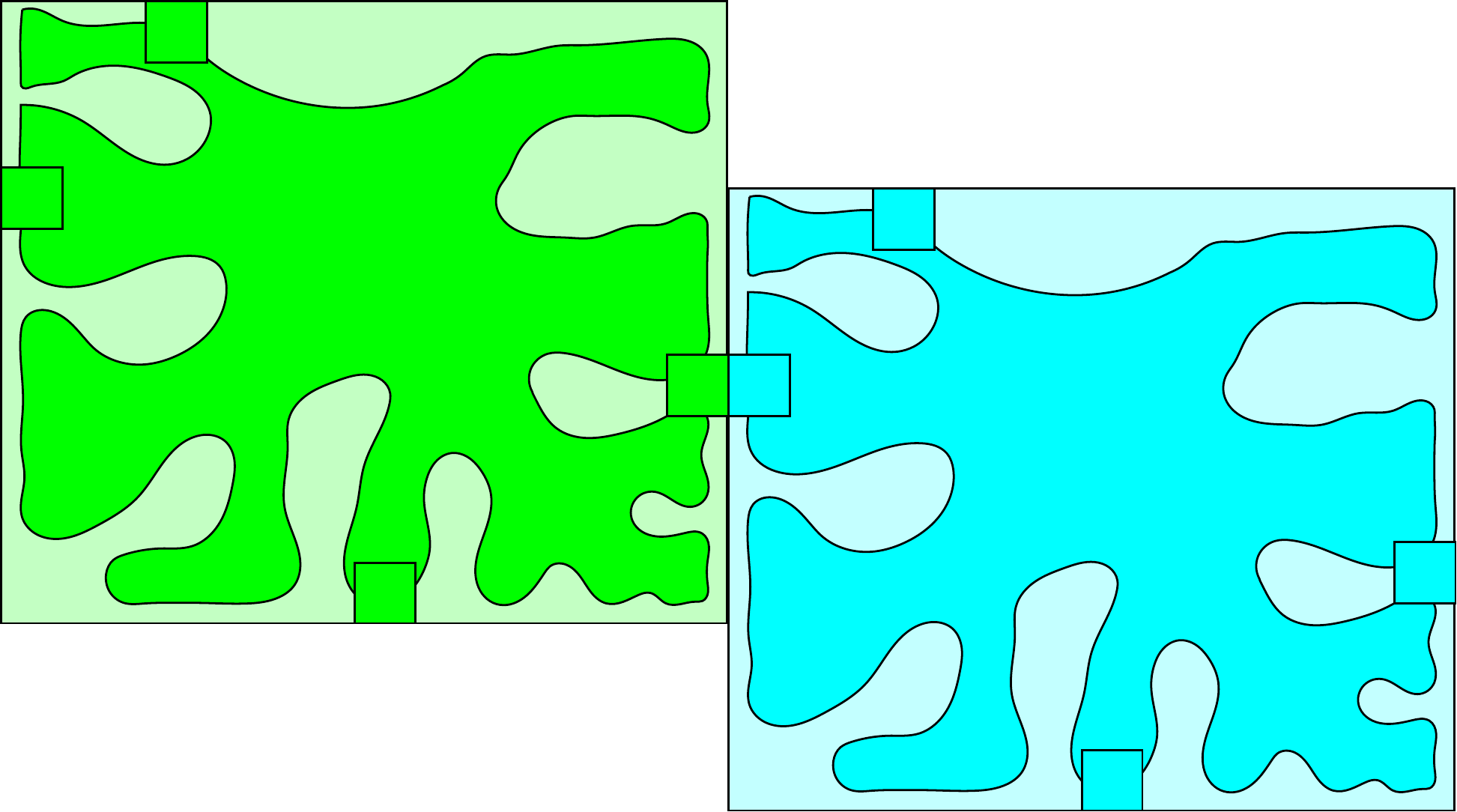}
        }%
        \quad
  \subfloat[][The $1$-blocker is translated by the vector $(-2,0)$ so that it overlaps a pixel in $t_a$ in the position indicated by the red square.]{%
        \label{fig:NonCB12}%
        \includegraphics[width=2.3in]{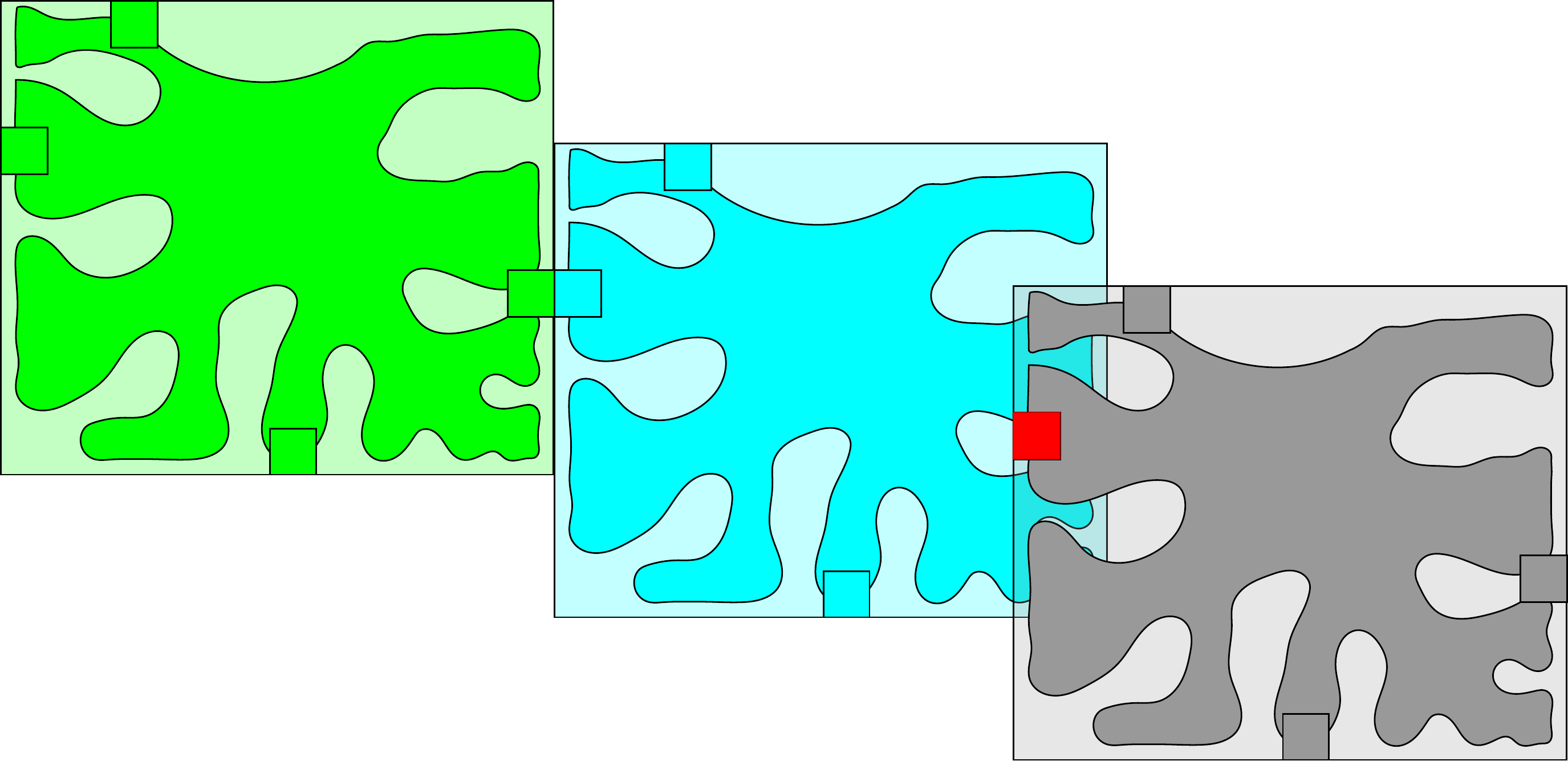}
        }%
        \quad
  \caption{Placement of the $1$-blocker, which blocks the aqua (i.e. $1$-reader) path.}
  \label{fig:NonCB1}
\end{figure}

We can now construct the bit writer gadgets in the same way as we did in Section~\ref{sec:3BW} (see Figure~\ref{tbl:3write}).

\subsubsection*{Case (4) Bit-Reader Construction}

Figure~\ref{fig:NonCY} shows how the glues are placed on tiles in the yellow path so that the yellow tiles assemble in the presence of a $1$-blocker.  Figure~\ref{fig:NonCY1} shows how the $t_y$ tile lies relative to the $1$-blocker.  We place a glue on the second tile in the path of yellow tiles so that it attaches to the north of $t_y$ and lies on grid with respect to $t_y$.

We now claim that in the current configuration there are not any pixels that overlap.  Notice that none of the yellow tiles can have a pixel which overlaps a pixel on the green tile.  Indeed, for the sake of contradiction, suppose that one of the yellow tiles did overlap with the green tile.  Recall that both of the yellow tiles are $(-1,-1)$-shifted polyominoes.  It must be the case that the yellow tile is overlapping a perimeter pixel of the green tile as shown in Figure~\ref{fig:NonGY1}.  Then it follows that we can then shift the yellow tile by the vector $(1,0)$ as shown in Figure~\ref{fig:NonGY2}.  But this means, that the shifted yellow tile can be attached to the green tile to form a $(0,-1)$-shifted polyomino which by Lemma~\ref{lem:spanningVectors} contradicts our assumption that $P$ is non-basic.

To see that neither of the yellow tiles have a pixel which overlaps a pixel on the $1$-blocker tile, once again, suppose for the sake of contradiction that a pixel in one of the yellow tiles did overlap a pixel in the $1$-blocker tile.  Recall that the $1$-blocker tile is a $(-2,0)$-shifted polyomino, and observe that the pixel which overlaps must lie on the west perimeter of the $1$-blocker as shown in Figure~\ref{fig:NonGG1}.  Then it follows that we can shift the $1$-blocker by the vector $(1,0)$ as shown in Figure~\ref{fig:NonGG2}.  Again, this means that the shifted $1$-blocker tile can attached to the yellow tile to form a $(-1,0)$-shifted polyomino which by Lemma~\ref{lem:spanningVectors} contradicts our assumption that $P$ is non-basic.

Similarly, we place glues so that the second tile in the aqua path attaches to the north of tile $t_a$ such that it lies in an on grid position. The argument that there is not any overlap is similar to that argued above.
\begin{figure}[htp]
\centering
  \subfloat[][The green tile is on grid and the yellow tile is $(-1,-1)$-shifted, grey tile shifted by $(-2,0)$.]{%
        \label{fig:NonCY1}%
        \includegraphics[width=2.0in]{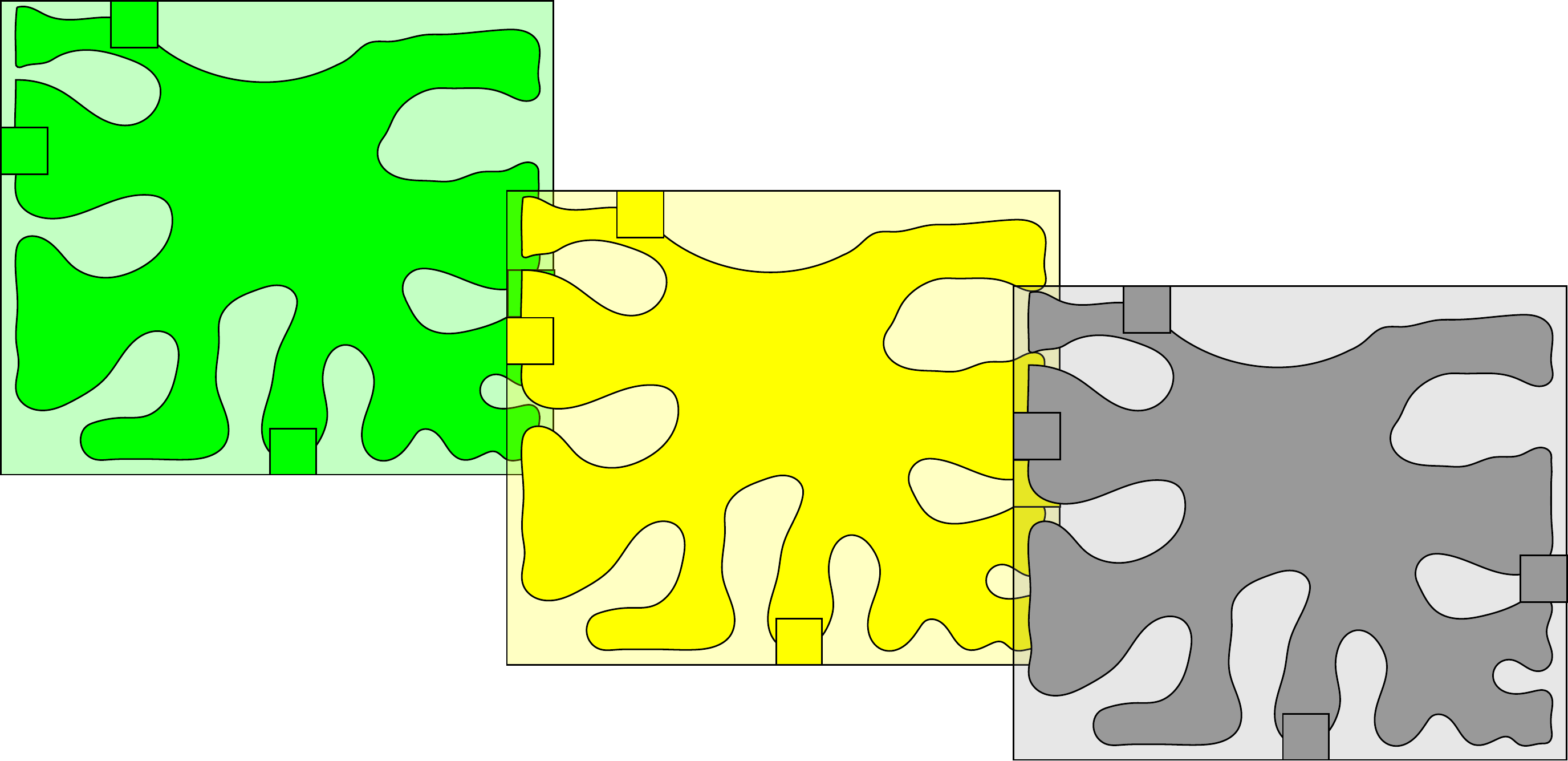}
        }%
        \quad
  \subfloat[][The second tile in the yellow path attaches to the first by an on grid north attachment.]{%
        \label{fig:NonCY2}%
        \includegraphics[width=2.0in]{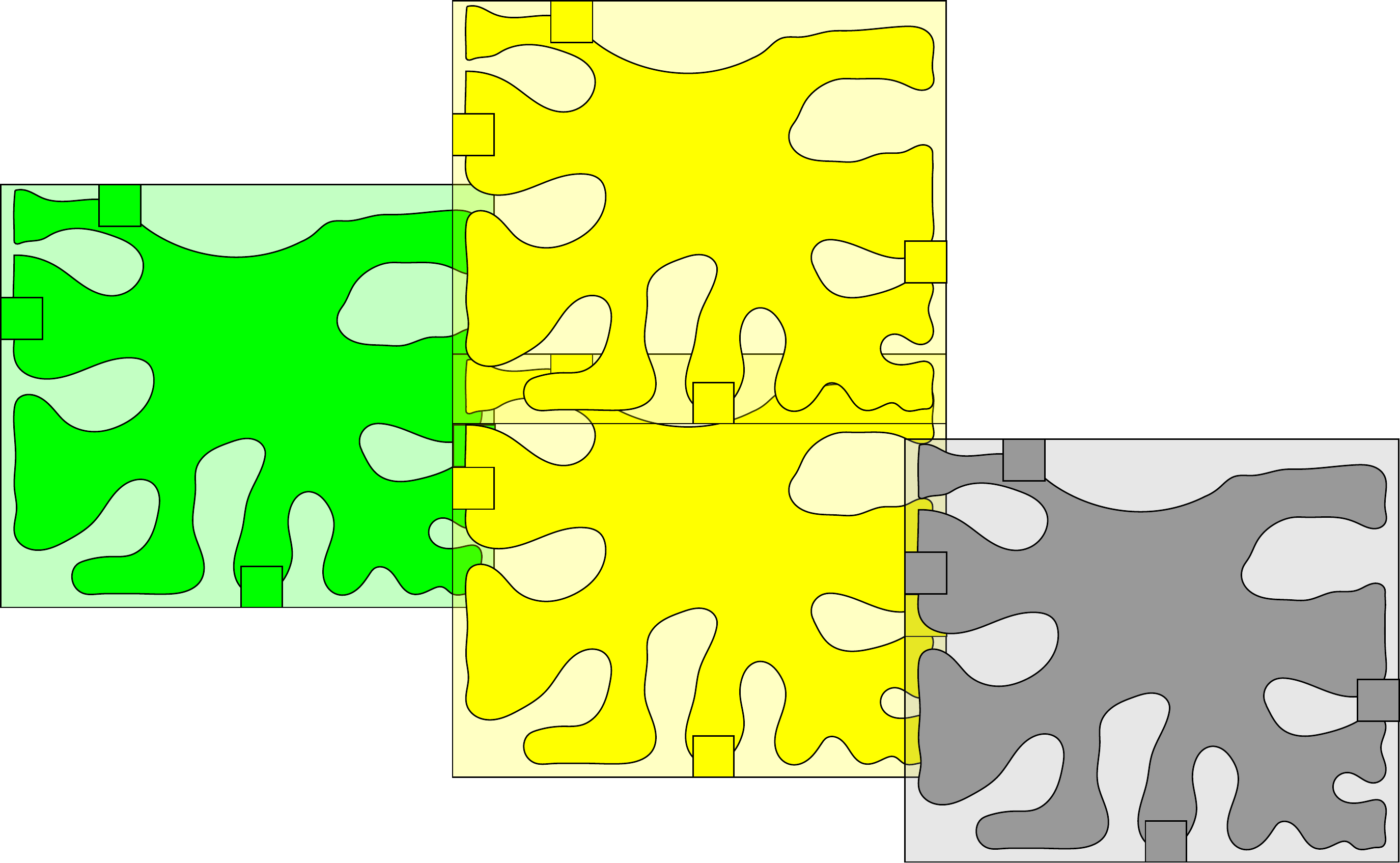}
        }%
        \quad

  \subfloat[][Shift the yellow path of tiles back onto the grid by placing a $(1,1)$-shifted polyomino.]{%
  \label{fig:NonCY3}%
  \includegraphics[width=2.0in]{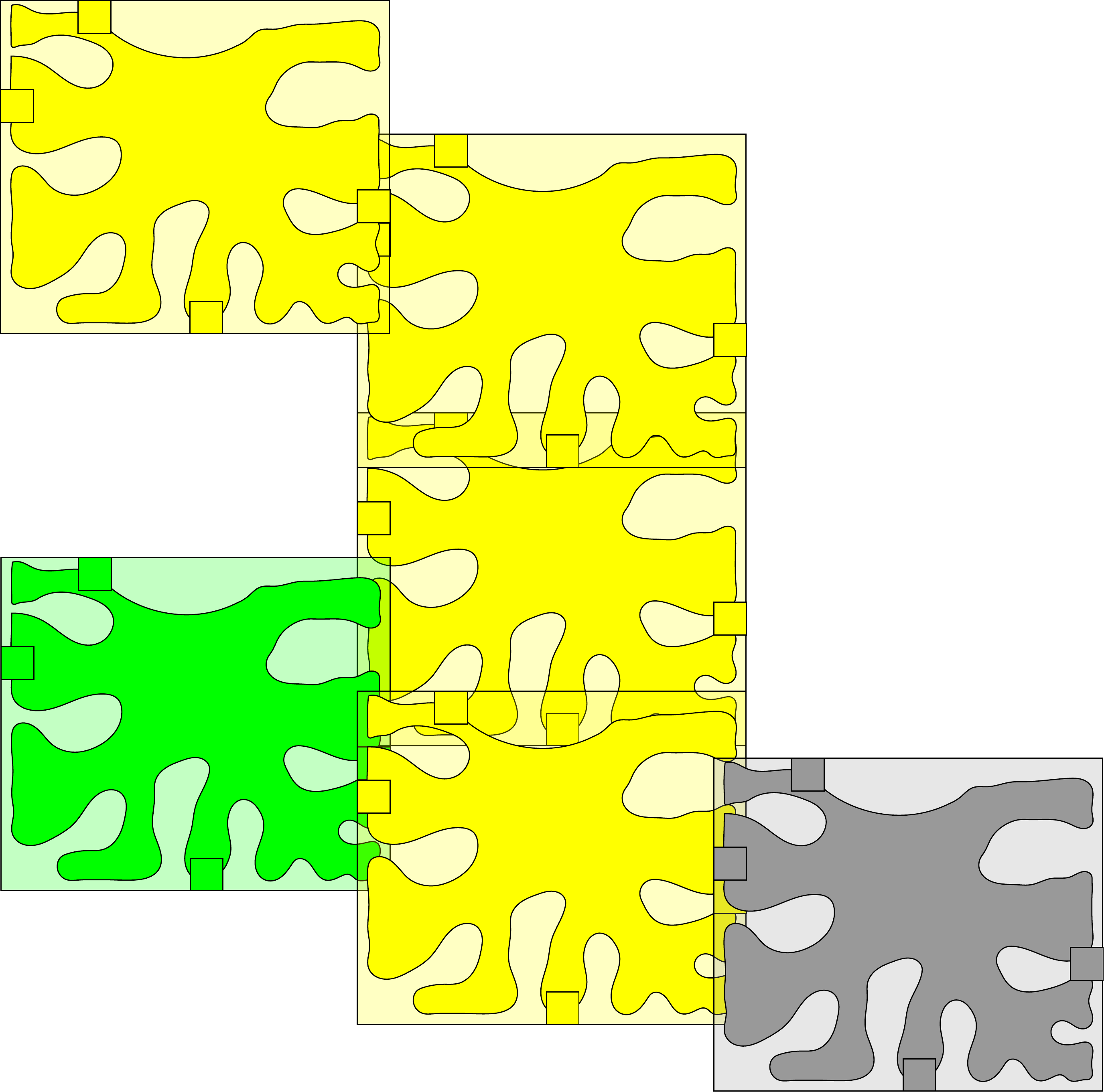}
  }%
  \quad
  \caption{Growth of the yellow path in the presence of a $1$-blocker.}
  \label{fig:NonCY}
\end{figure}

\begin{figure}[htp]
\centering
  \subfloat[][Suppose that the green tile and the yellow tile overlap pixels at the pixel shown in red.]{%
        \label{fig:NonGY1}%
        \includegraphics[width=2.3in]{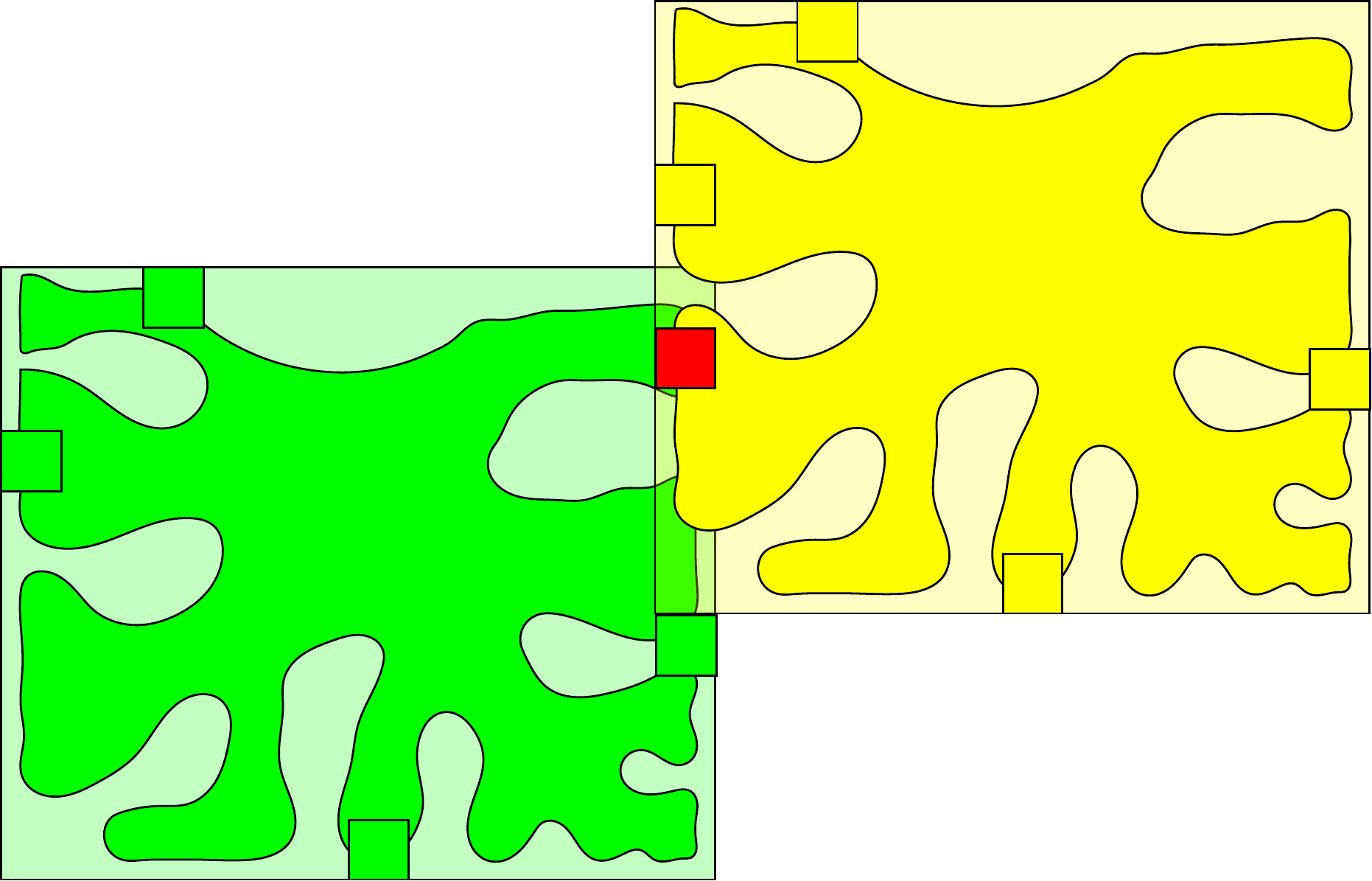}
        }%
        \quad
  \subfloat[][Then since the yellow tile is shifted by $(-1,-1)$, we can attach the yellow tile to the green tile as shown forming a $(0,-1)$-shifted polyomino, a contradiction.]{%
        \label{fig:NonGY2}%
        \includegraphics[width=2.3in]{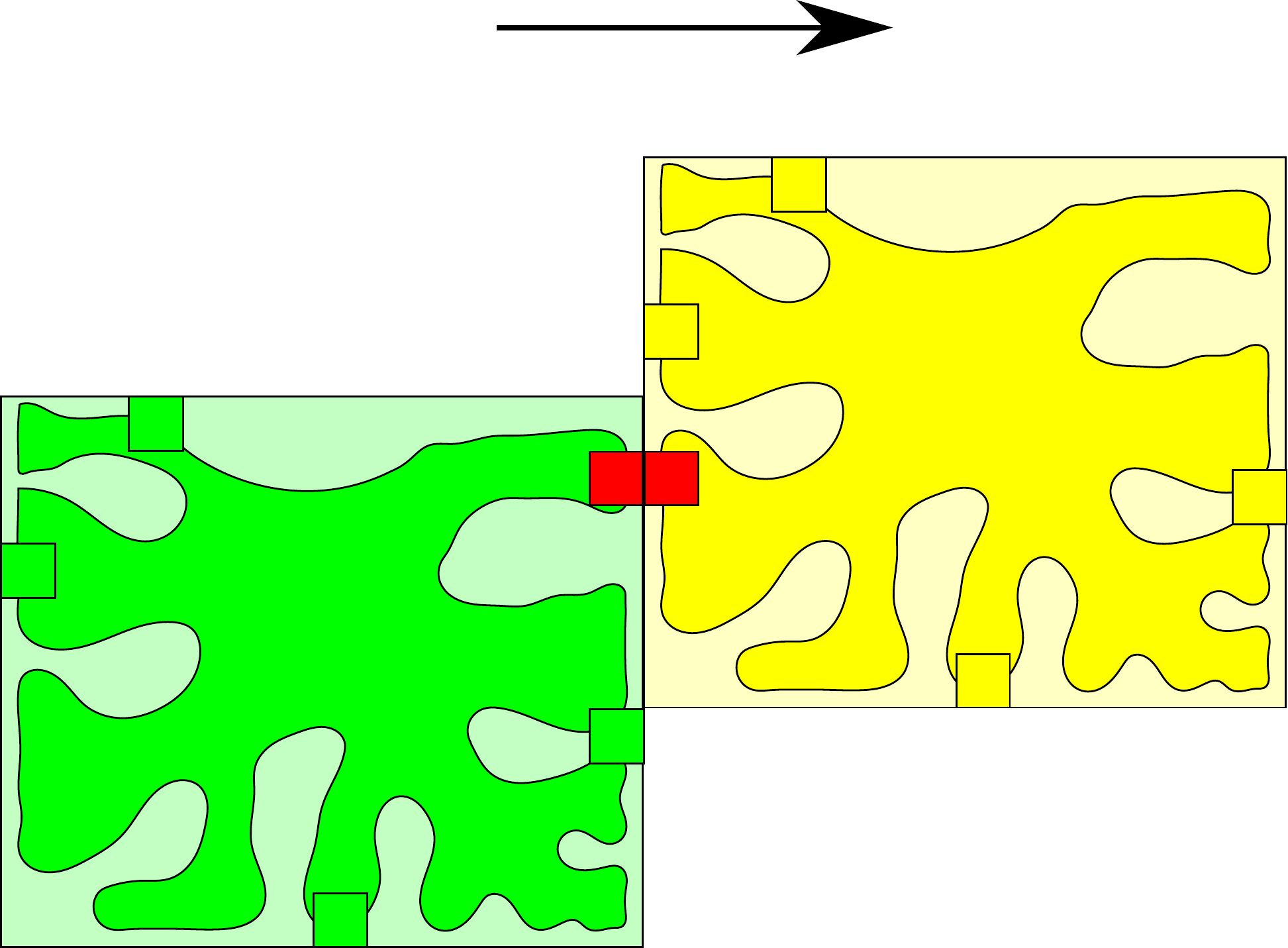}
        }%
        \quad
  \caption{The yellow tiles and the green tile cannot overlap.}
  \label{fig:NonGY}
\end{figure}

\begin{figure}[htp]
\centering
  \subfloat[][Suppose that the yellow tile and the grey tile overlap pixels at the pixel shown in red.]{%
        \label{fig:NonGG1}%
        \includegraphics[width=2.3in]{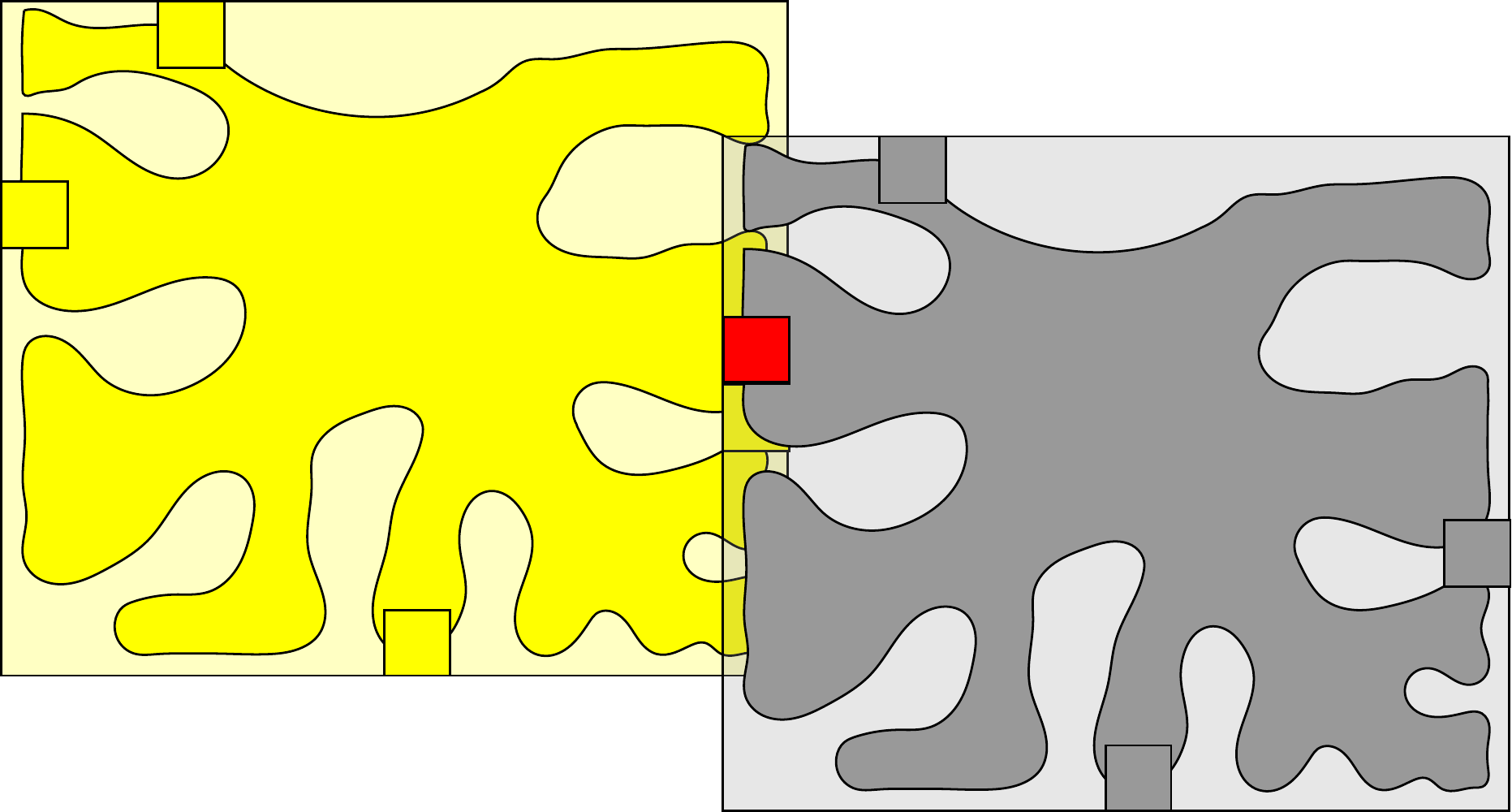}
        }%
        \quad
  \subfloat[][Then since the yellow tile was shifted by $(-2,0)$, we can find a $(-1,0)$-shifted polyomino by attaching the grey tile as shown, a contradiction.]{%
        \label{fig:NonGG2}%
        \includegraphics[width=2.3in]{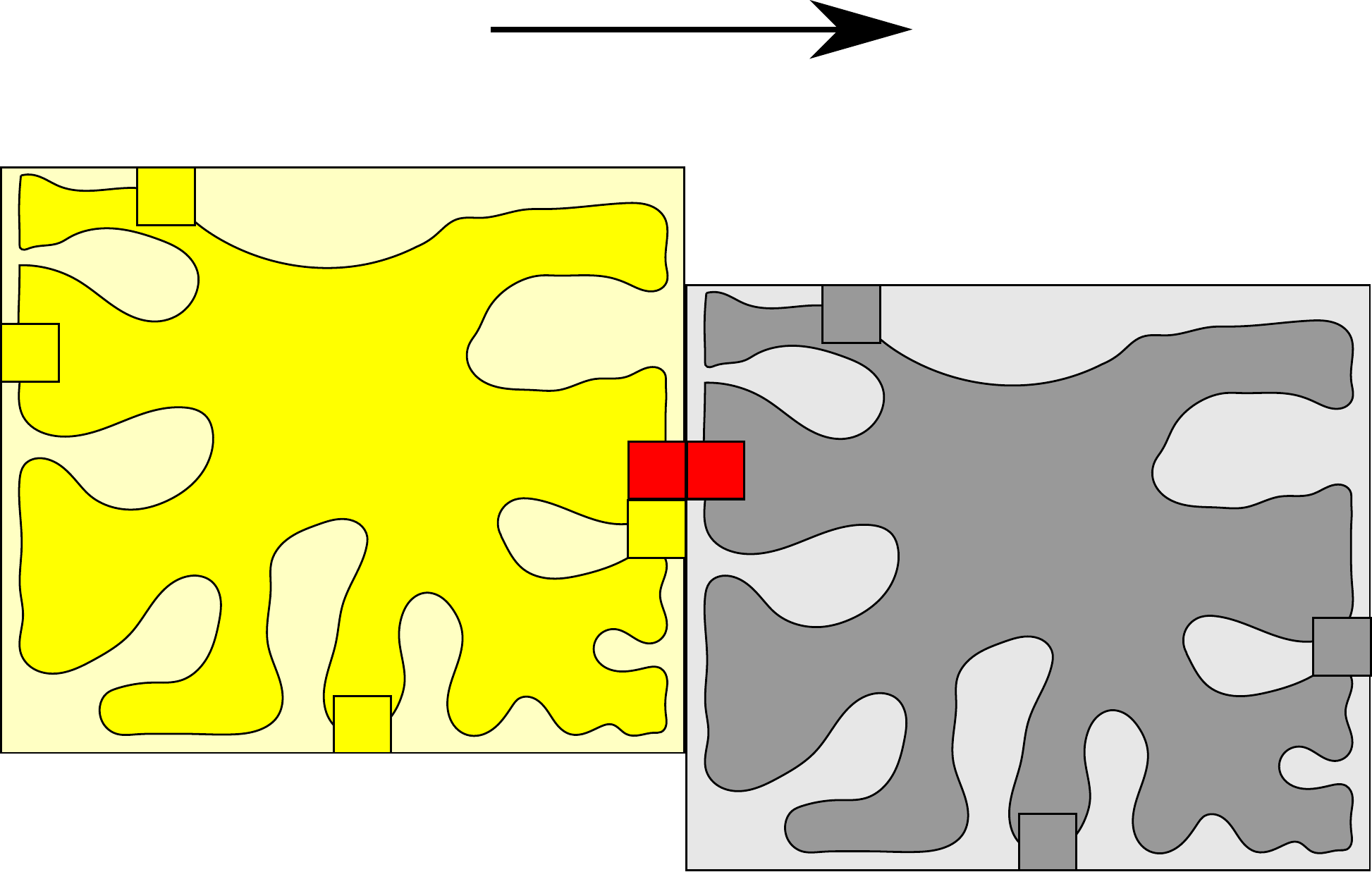}
        }%
        \quad
  \caption{The yellow tiles and the grey tile cannot overlap.}
  \label{fig:NonGG}
\end{figure}

\subsubsection*{Case (4) Right-to-Left Bit Reading Gadget Construction}
Since we can always construct a system with a non-basic polyomino that has an assembly that contains $(-1,-1)$-shifted and $(1,-1)$-shifted polyominoes, we can replay the mirrored version of the arguments above to construct a system which contains a right-to-left bit-reading gadget construction.  That is, in the right-to-left bit reading gadget, the yellow tile attaches to the green tile so that it is a $(1,-1)$-shifted polyomino.  Also, instead of placing the $1$-blocker as above (where it was placed as a $(-2,0)$ polyomino), we place the $1$-blocker so that it is $(2,0)$-shifted.  Likewise, we place the $0$-blocker as in Section~\ref{sec:4B0} with the words easternmost and westernmost swapped.

\subsubsection*{Case (4) Correctness of the Bit-Reading Gadget}
The correctness of the bit-reading gadget for this proof follows the same argument as presented in Section~\ref{sec:3PC}.

\section{Computationally Limited Systems}\label{sec:limited-systems}

In this section, we provide a set of results which suggest that some systems of polyominoes are incapable of universal computation by showing that they are either unable to utilize bit-reading gadgets (which are fundamental features of all known computational tile assembly systems), or that they can be simulated by standard aTAM temperature 1 systems (which are conjectured to be incapable of universal computation), and are thus no more powerful than them.

\ifabstract
\later{
\section{Omitted Proofs from Section~\ref{sec:limited-systems}}
}
\fi

\subsection{Monomino and Domino Systems Cannot Read Bits}

\begin{theorem}\label{thm:no-atam-bit-readers}
There exists no temperature 1 monomino system (a.k.a. aTAM temperature-$1$ system) $\mathcal{T}$ such that a bit-reading gadget exists for $\mathcal{T}$.
\end{theorem}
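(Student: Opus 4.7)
The plan is to argue by contradiction: suppose that some monomino system $\mathcal{T}$ at $\tau=1$ admits a bit-reading gadget, with associated $\alpha_0,\alpha_1,T_0,T_1$ and reader tile $t$ as promised by Definition~\ref{def:bit-reader}, and derive a contradiction. Fix an assembly sequence from $\alpha_0\cup\{t\}$ whose first tile placed at $x$-coordinate $0$ is some $t_0\in T_0$ at position $(0,y_0^*)$.

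First I would pin down a local structural fact: $t_0$ must attach to a (necessarily unit-square) neighbor $s_0$ at $(1,y_0^*)$ whose west glue equals $t_0$'s east glue. The grown path from $t$ begins at $(w+n,h)$ with $w+n>0$ and stays strictly above the $x$-axis, and $t_0$ is the first tile placed at $x=0$, so every earlier tile in the grown path has $x\ge 1$. Consequently, the west neighbor $(-1,y_0^*)$ of $t_0$ (reachable only by an earlier crossing of $x=0$) and the north and south neighbors of $t_0$ (both on the forbidden line $x=0$) are all excluded as attachment points, leaving only $(1,y_0^*)$.

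Next I would use the extreme flexibility of monomino paths at $\tau=1$ to replay this attachment in the $\alpha_1$ scenario. Each attachment in the chain of grown tiles from $t$ to $s_0$ is a single-glue binding between two unit squares, so, since $\alpha_1$ is a finite bounded obstruction, the chain can be detoured above $\alpha_1$ by stacking appropriately many vertical glue matches onto the same horizontal glue matches used originally, landing a translated copy of $s_0$ at some $(1,y')$ with $y'$ strictly above every tile of $\alpha_1$. The west glue of this relocated $s_0$ still equals $t_0$'s east glue, so $t_0\in T_0$ may be attached at $(0,y')$. By scheduling the $\alpha_1$-scenario assembly so that it performs only the detoured chain followed by this single attachment before any other tile is placed at $x=0$, $t_0$ becomes the first tile placed at $x=0$; since $T_0\cap T_1=\emptyset$ this contradicts the bit-reading property, which requires that in the $\alpha_1$ scenario no tile of $T\setminus T_1$ may ever appear as the first tile at $x=0$.

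The main obstacle is the subcase in which the chain in the $\alpha_0$ scenario relies essentially on an attachment to a tile of $\alpha_0$ itself (for instance, $s_0$ may belong to $\alpha_0$, or some intermediate grown tile may receive its unique matching glue from $\alpha_0$). The plan to handle this is to observe that $\alpha_0$ is itself producible by $\mathcal{T}$, so every glue that $\alpha_0$ exposes is the glue of some tile type already in $T$; that tile type can be produced and placed in the $\alpha_1$ scenario's detoured chain at a free position above $\alpha_1$, exposing the same glue and supporting the same downstream attachment. Because $\tau=1$ imposes no cooperative constraint, nothing prevents this tile-for-$\alpha_0$-tile substitution, so the placement of $t_0$ carries over to the $\alpha_1$ scenario and the contradiction is obtained.
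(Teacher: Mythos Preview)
Your detour step is a genuine gap. You assert that ``the chain can be detoured above $\alpha_1$ by stacking appropriately many vertical glue matches onto the same horizontal glue matches used originally,'' but this is not justified and is in general false. At $\tau=1$ each attachment in the grown chain uses one specific glue on one specific side of one specific tile type in $T$; to insert a vertical spacer at some column you would need tiles in $T$ whose north/south glues match so as to allow arbitrarily tall stacks there, and whose east/west glues agree with the horizontal glues the original chain used at that column. Nothing in the hypotheses provides such tiles. The tile set $T$ is the \emph{arbitrary} system assumed to have a bit-reader; it was not designed with rerouting in mind, and a temperature-$1$ monomino system has no generic ability to reroute a path around an obstacle. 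Indeed, that inability is essentially what the theorem asserts. The same objection applies to your ``main obstacle'' paragraph: knowing that some tile type in $T$ exposes a given glue does not let you grow a path from $t$ to an arbitrary free position above $\alpha_1$ in order to place it there.

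The paper's proof avoids any rerouting. It considers all $0$-reading and $1$-reading paths simultaneously and, at every branch point, follows the clockwise-most option, obtaining an outermost path $p'$; say $p'$ reads a $0$. In the $\alpha_1$ scenario $p'$ must then be blocked somewhere by $\alpha_1$, and because tiles are unit squares the blocked prefix of $p'$ is edge-adjacent to $\alpha_1$. But $p'$ was chosen to be outermost, so every $1$-reading path lies in the region bounded by the grown prefix of $p'$, $\alpha_1$, and the lines $y=0$ and $x=t_x$; hence no $1$-reading path can reach $x=0$, contradicting the definition. This is a topological enclosure argument rather than a path-transport argument, and it is precisely what your approach is missing.
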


\ifabstract
Due to space constraints the proof of Theorem~\ref{thm:no-atam-bit-readers} is located in Section~\ref{sec:no-atam-bit-readers-proof}.
\later{
\subsection{Proof of Theorem~\ref{thm:no-atam-bit-readers}} \label{sec:no-atam-bit-readers-proof}
}
\fi

\begin{proof}

We prove Theorem~\ref{thm:no-atam-bit-readers} by contradiction.  Therefore, assume that there exists an aTAM system $\mathcal{T} = (T,\sigma,1)$ such that $\mathcal{T}$ has a bit-reading gadget.  (Without loss of generality, assume that the bit-reading gadget reads from right to left and has the same orientation as in Definition~\ref{def:bit-reader}.)  Let $(t_x,t_y)$ be the coordinate of the tile $t$ from which the bit-reading paths originate (recall that it is the same coordinate regardless of whether or not a $0$ or a $1$ is to be read from $\alpha_0$ or $\alpha_1$, respectively).  By Definition~\ref{def:bit-reader}, it must be the case that if $\alpha_0$ is the only portion of $\alpha$ in the first quadrant to the left of $t$, then at least one path can grow from $t$ to eventually place a tile from $T_0$ at $x=0$ (without placing a tile below $y=0$ or to the right of $(t_x-1)$.  We will define the set $P_0$ as the set of all such paths which can possibly grow.  Analogously, we will define the set of paths, $P_1$, as those which can grow in the presence of $\alpha_1$ and place a tile of a type in $T_1$ at $x=0$.  Note that by Definition~\ref{def:bit-reader}, neither $P_0$ nor $P_1$ can be empty.

Since all paths in $P_0$ and $P_1$ begin growth from $t$ at $(t_x,t_y)$ and must always be to the left of $t$, at least the first tile of each must be placed in location $(t_x-1,y)$.  We now consider a system where $t$ is placed at $(t_x,t_y)$ and is the only tile in the plane (i.e. neither $\alpha_0$ nor $\alpha_1$ exist to potentially block paths), and will inspect all paths in $P_0$ and $P_1$ in parallel.  If all paths follow exactly the same sequence of locations (i.e. they overlap completely) all the way to the first location where they place a tile at $x=0$, we will select one that places a tile from $T_0$ as its first at $x=0$ and call this path $p_0$, and one which places a tile from $T_1$ as its first at $x=0$ and call it $p_1$.  This situation will then be handled in Case (1) below.  In the case where all paths do not occupy the exact same locations, then there must be one or more locations where paths branch.  Since all paths begin from the same location, we move along them from $t$ in parallel, one tile at a time, until the first location where some path, or subset of paths, diverge.  At this point, we continue following only the path(s) which take the clockwise-most branch.  We continue in this manner, taking only clockwise-most branches and discarding other paths, until reaching the location of the first tile at $x = 0$.  (Figure~\ref{fig:bit-reader-right-turns} shows an example of this process.)  We now check to see which type(s) of tiles can be placed there, based on the path(s) which we are still following.  We again note that by Definition~\ref{def:bit-reader}, some path must make it this far, and must place a tile of a type either in $T_0$ or $T_1$ there.  If there is more than one path remaining, since they have all followed exactly the same sequence of locations, we randomly select one and call it $p'$.  If there is only one, call it $p'$.  Without loss of generality, assume that $p'$ can place a tile from $T_0$ at that location.  This puts us in Case (2) below.

\begin{figure}[ht]
\centering
	\begin{minipage}[t]{0.45\linewidth}
	\centering
    \includegraphics[width=3.1in]{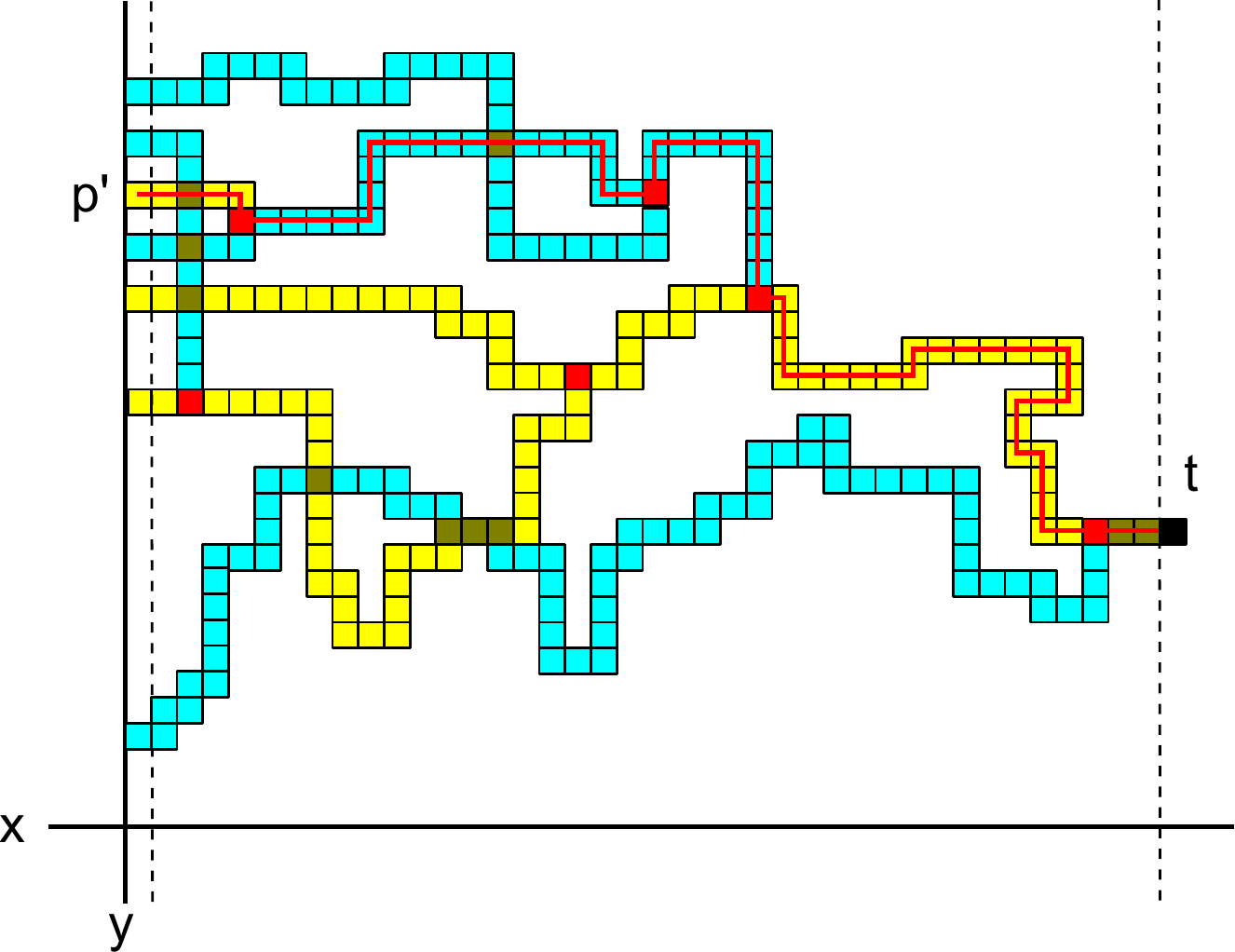}
    \caption{Example sets $P_0$ and $P_1$, with $p'$ traced with a red line.  Red squares represent branching points of paths, gold squares represent overlapping points of different branches.}
    \label{fig:bit-reader-right-turns}
	\end{minipage}
\quad\quad\quad\quad
	\begin{minipage}[t]{0.45\linewidth}
	\centering
    \includegraphics[width=3.1in]{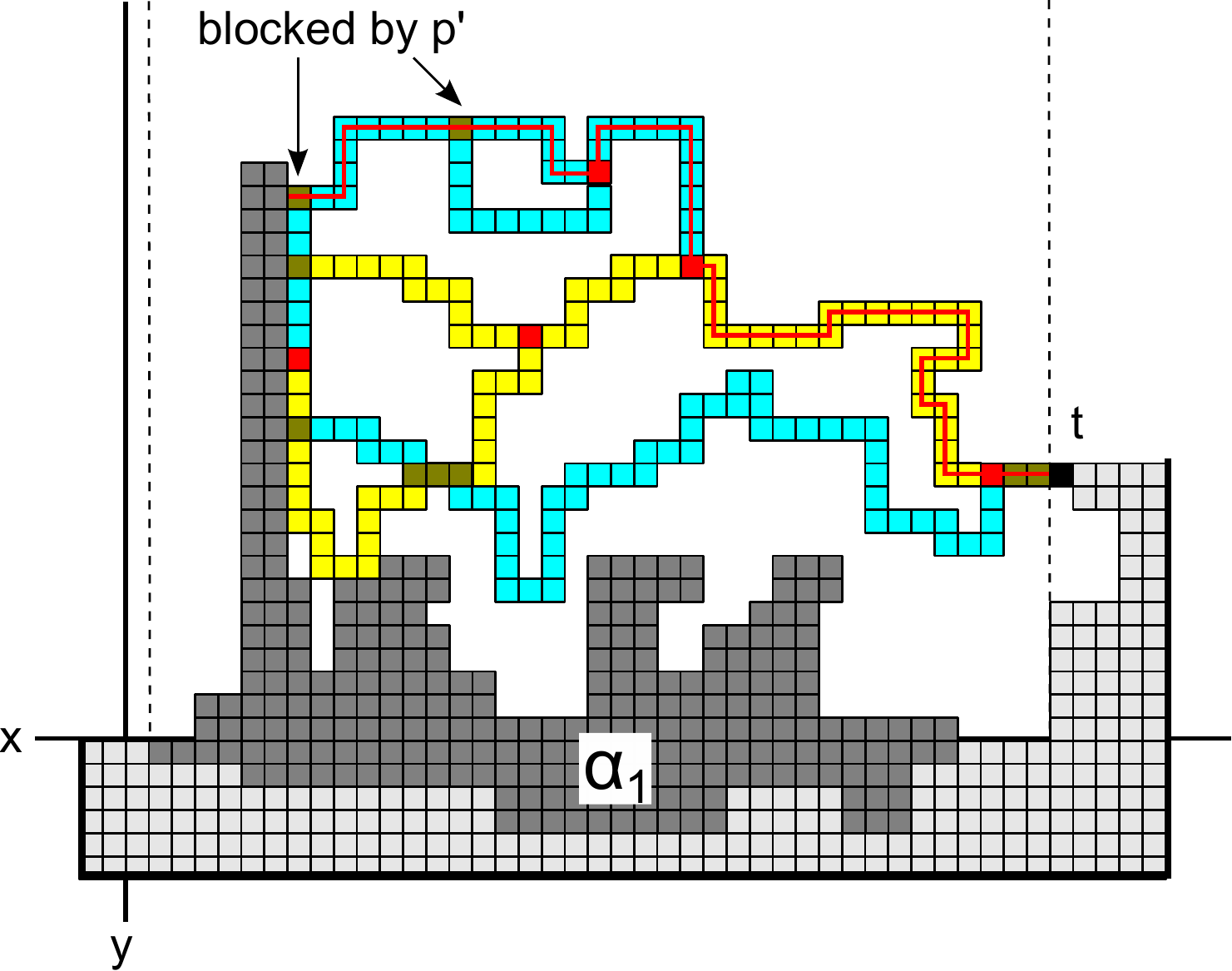}
    \caption{An example of the growth of $p'$ (traced with a red line) blocked by $\alpha_1$.  By first letting as much of $p'$ grow as possible, it is guaranteed that all other paths must be blocked from reaching $x=0$.}
	\label{fig:bit-reader-blocked}
	\end{minipage}
\end{figure}

Case (1)  Paths $p_0$ and $p_1$ occupy the exact same locations through all tile positions and their placement of their first tiles at $x=0$.  Also, there are no other paths which can grow from $t$, so, since by Definition~\ref{def:bit-reader} some path must be able to complete growth in the presence of $\alpha_0$, either must be able to.  Therefore, we place $\alpha_0$ appropriately and select an assembly sequence in which $p_1$ grows, placing a tile from $T_1$ as its first at $x=0$.  This is a contradiction, and thus Case (1) cannot be true.

Case (2)  We now consider the scenario where $\alpha_1$ has been placed as the bit-writer according to Definition~\ref{def:bit-reader}, and with $t$ at $(t_x,t_y)$.  Note that path $p'$ must now always, in any valid assembly sequence, be prevented from growing to $x=0$ since it places a tile from $T_0$ at $x=0$, while some path from $T_1$ must always succeed.  We use the geometry of the paths of $T_1$ and path $p'$ to analyze possible assembly sequences.

We create a (valid) assembly sequence which attempts to first grow only $p'$ from $t$ (i.e. it places no tiles from any other branch).  If $p'$ reaches $x=0$, then this is not a valid bit-reader and thus a contradiction.  Therefore, $p'$ must not be able to reach $x=0$, and since the only way to stop it is for some location along $p'$ to be already occupied by a tile, then some tile of $\alpha_1$ must occupy such a location.  This means that we can extend our assembly sequence to include the placement of every tile along $p'$ up to the first tile of $p'$ occupied by $\alpha_1$, and note that by the definition of a connected path of unit square tiles in the grid graph, that means that some tile of $p'$ has a side adjacent to some tile of $\alpha_1$.  At this point, we can allow any paths from $P_1$ to attempt to grow.  However, by our choice of $p'$, as the ``outermost'' path due to always taking the clockwise-most branches, any path in $P_1$ (and also any other path in $P_0$ for that matter) must be surrounded in the plane by $p'$, $\alpha_1$, and the lines $y=0$ and $x=t_x$ (which they are not allowed to grow beyond). (An example can be seen in Figure~\ref{fig:bit-reader-blocked}.)  Therefore, no path from $P_1$ can grow to a location where $x=0$ without colliding with a previously placed tile or violating the constraints of Definition~\ref{def:bit-reader}.  (This situation is analogous to a prematurely aborted computation which terminates in the middle of computational step.)  This is a contradiction that this is a bit-reader, and thus none must exist.
\end{proof}

\begin{theorem}\label{thm:no-duple-bit-readers}
There exists no single shape polyomino tile system $\Gamma = (T,\sigma,1)$ where all tiles of $T$ consist of either two unit squares arranged in a vertical bar, or of two unit squares arranged in a horizontal bar (i.e. vertical or horizontal duples), such that a bit-reading gadget exists for $\Gamma$.
\end{theorem}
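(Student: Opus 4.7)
The plan is to adapt the proof of Theorem~\ref{thm:no-atam-bit-readers} to the duple setting, following the same contradiction-by-topological-blocking strategy. Proceed by assuming a duple bit-reader exists, with all notation as in the monomino proof (the tile $t$ at $(t_x, t_y)$, and sets $P_0, P_1$ of paths from $t$ which, in the absence of any bit writer, could respectively place a $T_0$-tile or a $T_1$-tile as the first tile at $x = 0$). Note that since $\Gamma$ is a single-shape system, every tile is a translate of the same horizontal (or same vertical) $1 \times 2$ rectangle, so the set of legal tile positions is still a regular planar grid (just coarser in one dimension), and the adjacency graph of duple positions is planar.

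Next, imitate the ``clockwise-most branch'' construction of the monomino proof to extract a single outermost path $p'$ that reaches $x = 0$, and assume without loss of generality that the tile $p'$ places at $x = 0$ is of type in $T_0$. The key observation enabling this step in the duple case is that any assembly of same-shape duples, viewed as a union of unit cells in $\mathbb{Z}^2$, is a planar region, so the notion of ``clockwise-most continuation'' from one duple placement to the next is well-defined (we order the legal neighbors of the most recently placed duple by the angle of the displacement vector between tile centers). The path $p'$ we obtain is the one whose trace in the plane is extremal on the clockwise side.

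Now place $\alpha_1$ according to Definition~\ref{def:bit-reader}, instead of $\alpha_0$. By definition of a bit reader, no path in $P_0$ can successfully place a $T_0$-tile at $x = 0$, so in particular $p'$ must fail; consider the valid assembly sequence that greedily attaches the tiles of $p'$ one by one, stopping at the first tile of $p'$ whose duple footprint overlaps some tile of $\alpha_1$. Call the resulting partial assembly $\beta$. Because duples span two adjacent unit cells, ``overlap'' still simply means the duple cannot be placed due to a prior tile occupying at least one of its two cells, so the last successfully placed duple of $p'$ in $\beta$ is geometrically adjacent (shares a unit edge) with a duple of $\alpha_1$. Hence $\beta \cup \alpha_1$ contains a connected chain of duples joining the $x$-axis region of $\alpha_1$ all the way up and around to $t$.

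Finally, allow assembly to proceed arbitrarily from $\beta$ and argue, using planarity, that no path of $P_1$ can now reach $x = 0$. Because $p'$ was chosen as the clockwise-most path, any alternative path from $t$ that remains in the first quadrant, strictly above the $x$-axis and strictly left of $t$, must either coincide with $p'$ (in which case it is already blocked), or lie in the bounded planar region enclosed by $p'$, the segment of $\alpha_1$ used to block $p'$, the segment $y = 0$, and the segment $x = t_x$. To leave this region and reach $x = 0$ it would have to cross the barrier, which is impossible for a connected chain of duples placed without overlap. This contradicts the requirement that some $T_1$-path always succeeds in the presence of $\alpha_1$.

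The main obstacle is verifying the planar-blocking step for duples, where the extra elongation of a tile could a priori allow a path to ``squeeze'' past an outer chain in a way monominoes cannot. I would discharge this by observing that because all tiles share the same orientation and same two-cell footprint, the blocking chain $p' \cup (\text{contacting duples of }\alpha_1)$ already contains no gap wider than one unit cell in the direction perpendicular to the duple axis, and no duple can fit through such a gap. Once this local geometric fact is nailed down, the topological argument from the monomino proof transfers verbatim.
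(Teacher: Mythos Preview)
Your approach matches the paper's exactly: adapt the monomino argument and verify that the blocking step still works for duples. However, your middle paragraph contains an incorrect claim. You assert that ``the last successfully placed duple of $p'$ in $\beta$ is geometrically adjacent (shares a unit edge) with a duple of $\alpha_1$.'' This is false, and is precisely the one place where the duple proof genuinely differs from the monomino proof. When the next duple of $p'$ is blocked because one of its two cells is already occupied by $\alpha_1$, the last \emph{successfully placed} duple of $p'$ may be separated from the nearest $\alpha_1$ duple by a single empty unit cell, or may be only diagonally adjacent to it (e.g., for horizontal duples: last placed duple at $(a,b)$--$(a{+}1,b)$, blocked duple intended at $(a{-}2,b)$--$(a{-}1,b)$, and $\alpha_1$ occupying $(a{-}3,b)$--$(a{-}2,b)$ leaves the cell $(a{-}1,b)$ empty). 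The paper singles out exactly these two possibilities as the sole new cases to check.

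You do recover the right conclusion in your final paragraph --- the gap is at most one unit cell and no same-orientation duple can pass through it --- but this contradicts your earlier edge-adjacency claim, so the write-up is internally inconsistent. The fix is simple: drop the edge-adjacency assertion entirely and argue directly (as the paper does) that the blocking configuration leaves at most a single-cell or diagonal gap between $p'$ and $\alpha_1$, through which no duple of the fixed orientation can fit. With that correction, your proof is the paper's proof.
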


\ifabstract
The proof for Theorem~\ref{thm:no-duple-bit-readers} can be found in Section~\ref{sec:no-duple-bit-readers-proof}.
\later{
\subsection{Proof of Theorem~\ref{thm:no-duple-bit-readers}} \label{sec:no-duple-bit-readers-proof}
}
\fi

\begin{proof}
The proof of Theorem~\ref{thm:no-duple-bit-readers} is nearly identical to that of Theorem~\ref{thm:no-atam-bit-readers}.  Without loss of generality, we prove the impossibility of a bit-reader with horizontally oriented duples.  The only differing point to consider between the proof for squares vs. duples is in the analysis of Case (2) where the claim is made that if $p'$ is blocked by $\alpha_1$, some tile of $p'$ must have a side adjacent to some tile of $\alpha_1$.  In the case of duples, as can be seen in Figure~\ref{fig:blocked-duples}, there are also the possibilities that tiles of $p'$ and $\alpha_1$ are diagonally adjacent or separated by a gap of a single unit square.  However, neither of these possibilities can allow for duples of any path in $P_1$ to pass through, and they therefore remain blocked, thus again proving that no bit-reader must exist.

\begin{figure}[htp]
\begin{center}
\includegraphics[width=2.0in]{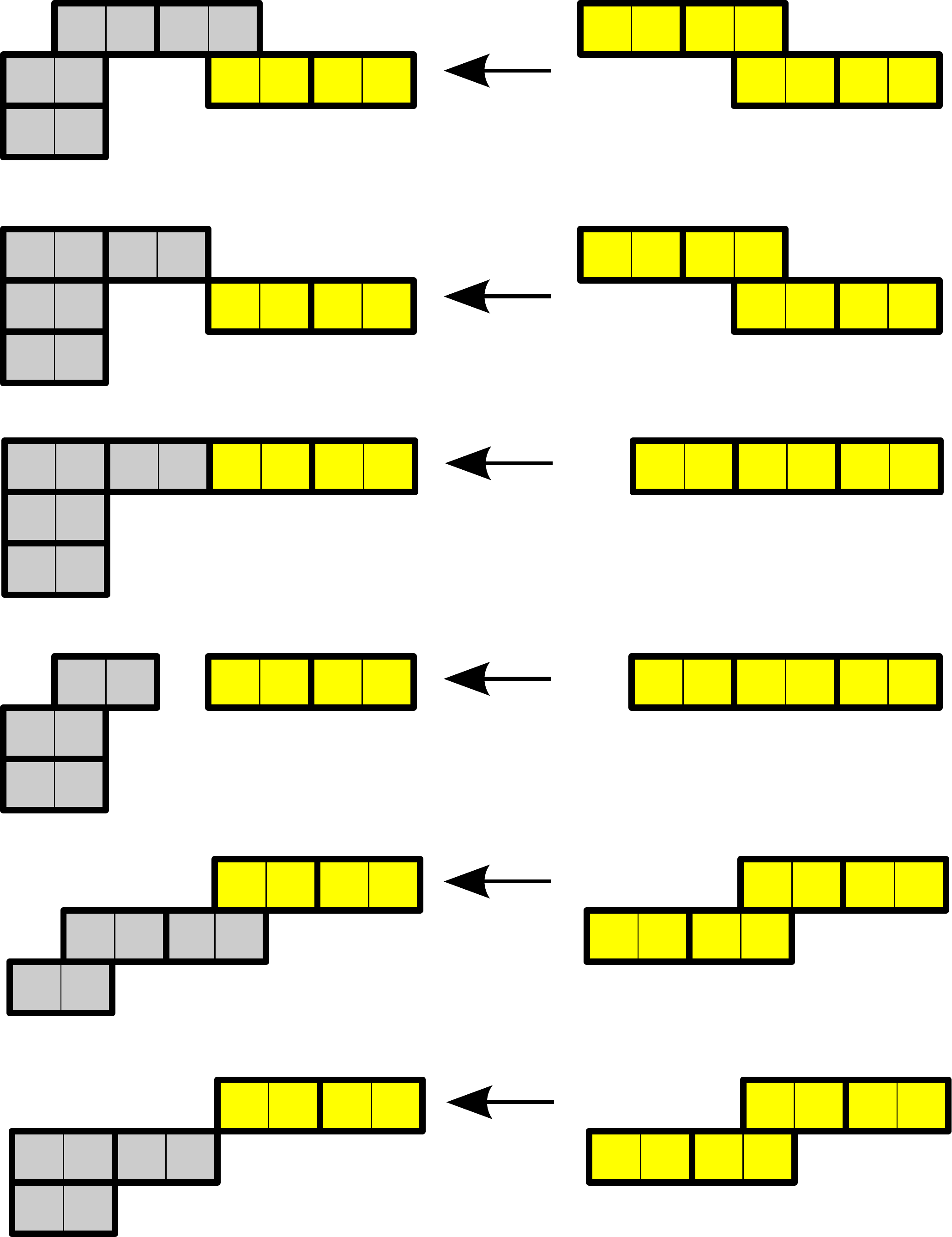}
\caption{(Right) Partial paths of duples growing from right to left, (Left) portions of those paths being blocked by grey tiles.  Any possible way of blocking a path of duples results in either (1) no gap between the path and the blocking assembly, or (2) a single unit square gap, which no path of only horizontally oriented duples can pass through.}
\label{fig:blocked-duples}
\end{center}
\end{figure}
\end{proof}

It is interesting to note that by the addition of a single extra square to a duple, creating a $3\times1$ polyomino, it is possible to create gaps between blocking assemblies and blocked paths which allow another path to pass through.  This is because the gap can be diagonally displaced from the last tile of the blocked path.  An example can be seen in Figure~\ref{fig:comp_1by_case}.

\subsection{Scale-Free Simulation}

We now provide a definition which captures what it means for one polyomino system to simulate another.  This definition is meant to capture a very simple notion of simulation in which the simulating system follows the assembly sequences of the simulated system via a simple mapping of tile types and with no scale factor (as opposed to more complex notions of simulation which allow for scaled simulations such as in \cite{IUSA,2HAMIU,Duples,IUNeedsCoop}, for instance).

\begin{definition}[Scale-free simulation] A tile system $\mathcal{T} = (T,\sigma,\tau)$ is said to \emph{scale-free simulate} a tile system $\mathcal{T}' = (T',\sigma',\tau')$ if there exists a surjective function $f:T \rightarrow T'$ and a bijective function $M: \prodasm{T} \rightarrow \prodasm{T'}$ such that the following properties hold.
\begin{enumerate}
\item For $A,A' \in \prodasm{T}$, $A \rightarrow^{\calT}_1 A'$ via the addition of tile $t \in T$ if and only if $M(A) \rightarrow^{\mathcal{T'}}_1 M(A')$ via the addition of a tile in the preimage $f^{-1}[t]$. 
\item Any sequence $(A_1=\sigma, \ldots, A_k)$ is an assembly sequence for $\mathcal{T}$ if and only if $(M(\sigma)=\sigma', \ldots, M(A_k))$ is an assembly sequence for $\mathcal{T'}$.
\end{enumerate}

\end{definition}

\subsection{Polyominoes with Limited Glue Positions}
\begin{figure}[htp]
\begin{center}
\vspace{-20pt}
\includegraphics[width=4in]{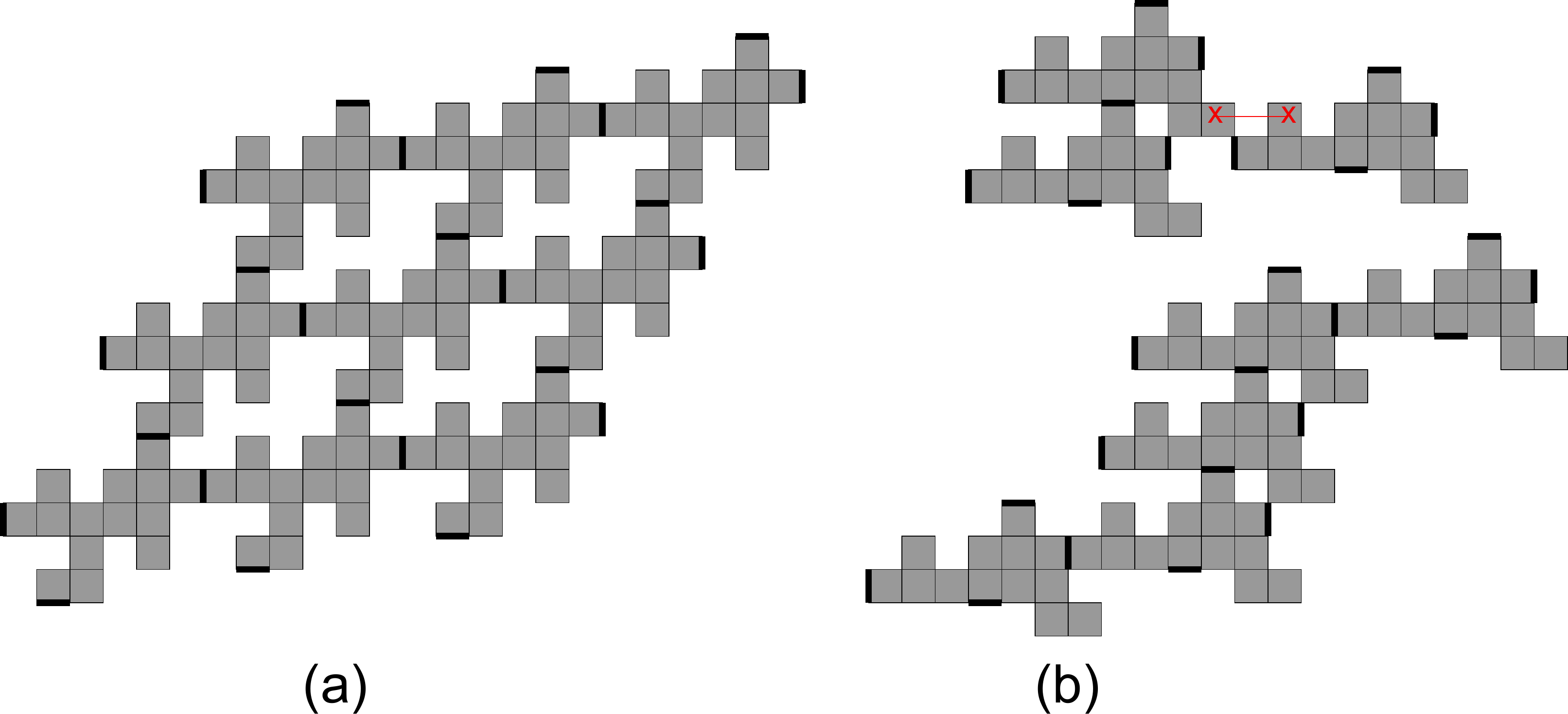}
\caption{(a) Case (1) considers the scenario in which a 4-position limited polyomino with uniquely paired glues does not not have overlapping neighboring positions, or an overlapping diagonal neighbor, and can therefore form a 2D lattice.  (b) In case (2), neighbor positions are mutually exclusive.  In this case the assembly is linear and is scale-free simulated by a linear monomino tile system with a quadratic increase in tile types to account for the non-determinism introduced by the choice of glue positions.}
\label{fig:4gluesPicture}
\vspace{-20pt}
\end{center}
\end{figure}

In this section we analyze the potential of polyomino systems to compute if the number of distinct positions on the polyominoes at which glues may be placed is bounded.  We show that any polyomino system which utilizes 3 or fewer distinct glue locations, or a system that uses 4 glue locations but adheres to a ``unique pairing" constraint, is scale-free simulated by a temperature 1 aTAM system (Theorems \ref{thm:3positions}, \ref{thm:4positionsUniquePair}), and is thus very likely to be incapable of universal computation.  On the other hand, we show that with only 4 glue positions and no unique pairing restriction, universal computation is possible (Theorem \ref{thm:4positionsUniversal}).

\begin{definition}[$c$-position limited] Consider a set $T$ of polyomino tiles all of some shape polyomino $P$.  Consider the subset $S$ of all edges of $P$ such that some $t \in T$ places a glue label on a side in $S$.  We say that $T$ has \emph{glue locations} $S$.  If $c \ge |S|$, we say that $T$ is \emph{$c$-position limited}.  Further, any single shape polyomino system $\calT =(T,\sigma,\tau)$ is said to be $c$-position limited if $T$ is $c$-position limited.
\end{definition}

\begin{definition}[uniquely paired]
A polyomino system with glue locations $S$ is said to be uniquely paired if for each $s \in S$, there is a unique $s' \in S$ such that glues in position $s$ can only bind with glues in position $s'$.
\end{definition}

Monomino systems, for example, are uniquely paired as the north face glue position only binds with the south face position, and the east position only binds with the west position.

\begin{theorem}\label{thm:3positions}
Any $3$-position limited polyomino system $\Gamma = (T,\sigma,1)$ is scale-free simulated by a monomino tile system (a.k.a. a temperature 1 aTAM system).
\end{theorem}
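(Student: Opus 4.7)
The plan is to exploit the observation that at temperature $1$ each non-seed polyomino in any producible assembly of $\Gamma$ binds via exactly one glue at exactly one of the at most three glue positions $s_1,s_2,s_3$ of $P$. Each non-seed polyomino therefore has a well-defined \emph{parent} and a well-defined \emph{incoming position}, so every producible assembly of $\Gamma$ carries a canonical rooted \emph{attachment tree} whose non-seed vertices have out-degree at most two, since at most two glue positions remain after the one consumed by the parent binding. A monomino at temperature $1$, having four faces, can accommodate one incoming face plus up to three outgoing faces, so in principle there is enough slack to encode every polyomino binding event onto a single monomino.

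First I would build the simulating monomino set $T'$ by introducing one type $[t,s_i,d]$ for each polyomino type $t\in T$, each glue position $s_i$ of $P$, and each possible direction $d\in\{N,S,E,W\}$ from which the parent lies relative to the monomino. The glue of $t$ at position $s_i$ is placed on face $d$ of $[t,s_i,d]$, and the two remaining glues of $t$ are placed on two fixed non-$d$ faces according to a convention $\phi$ depending only on $(s_i,d)$. The glue alphabet is enlarged so that a face-adjacency in $\Gamma'$ is admissible exactly when the corresponding pair of polyomino positions carries complementary glues in $\Gamma$. The surjection $f:T'\to T$ is the projection $[t,s_i,d]\mapsto t$, and the bijection $M:\prodasm{T}\to\prodasm{T'}$ is defined by induction on assembly sequences: $M(\sigma)$ is a designated seed monomino, and whenever $\Gamma$ extends $A$ by attaching $t$ via position $s_k$ to a parent polyomino $t'$ at position $s_l$, $M(A\cup\{t\})$ is obtained from $M(A)$ by attaching the appropriately labelled monomino on the face of the monomino for $t'$ encoding position $s_l$. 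Conditions (1) and (2) of the scale-free simulation then follow by a routine induction on assembly sequences, using the glue design and the fact that each polyomino binding event corresponds to a unique $(s_k,s_l,d)$ triple.

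The delicate step is reconciling \emph{geometric blocking}: in $\Gamma$ an otherwise-admissible attachment can be forbidden because the new polyomino would overlap a polyomino placed far from its prospective parent, whereas in $\Gamma'$ the analogous obstruction is a collision in the monomino lattice. These blockings are generated by a priori unrelated geometries (the arbitrary shape of $P$ versus unit squares), so the heart of the argument is showing that the canonical monomino embedding of the attachment tree captures exactly the same blocking constraints as the polyomino embedding. I expect to handle this by enumerating the at most $|S|^2\le 9$ ordered position-pairs $(s_k,s_l)$ that can ever yield a binding, observing that each such pair determines a fixed relative offset between child and parent in $\Gamma$, and choosing $\phi$ so that the monomino child is placed at the corresponding ``unit'' analog of this offset. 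The hypothesis $|S|\le 3$ is used essentially here: with only three positions the four faces of a monomino (which afford exactly two opposite-face axes) suffice to realize every adjacency consistently, whereas with four positions the analogous argument breaks down in general, which is precisely why the next theorem requires the extra hypothesis of uniquely paired glues.
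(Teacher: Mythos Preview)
Your approach diverges from the paper's and has a genuine gap at precisely the point you flag as ``delicate''.

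The paper does not attempt to embed the polyomino attachment tree into the monomino lattice. Instead, it reduces to the analysis carried out for the $4$-position uniquely-paired theorem that follows. That argument observes that the admissible binding pairs determine at most two offset vectors $\vec{u},\vec{v}$ and then splits on the geometry: either $P,P_{\vec{u}},P_{\vec{v}},P_{\vec{u}+\vec{v}}$ are pairwise non-overlapping, in which case $\Gamma$ grows on a genuine two-dimensional lattice and is simulated tile-for-tile by a monomino system with the obvious glue correspondence; or two of these translates overlap, in which case every producible assembly of $\Gamma$ is linear, and a one-dimensional east/west monomino system simulates it after a quadratic blow-up in which each monomino's output glue is the concatenation of both outgoing polyomino glues (so the nondeterministic choice of which outgoing position is actually used is deferred to the next attachment). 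For three positions the paper simply invokes the same case split and the same concatenated-glue trick.

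Your plan instead fixes a convention $\phi$ assigning each child to a face of its parent's monomino based only on the binding pair $(s_k,s_l)$. The gap is that you never establish that overlaps in $\Gamma$ correspond to collisions in $\Gamma'$. Concretely, suppose the two offsets $\vec{u},\vec{v}$ are such that $P_{\vec{u}}$ and $P_{\vec{v}}$ overlap. Then in $\Gamma$ a tile has at most one child; but under any injective $\phi$ the two prospective children occupy distinct faces of the parent monomino and hence distinct lattice cells, so both can attach. Your simulator would then produce assemblies with no preimage in $\Gamma$, and $M$ cannot be a bijection. The paper's lattice-versus-linear case split is exactly what handles this scenario; your closing assertion that ``the four faces of a monomino suffice to realize every adjacency consistently'' does not, and the example just given shows it fails without that additional case analysis.

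A secondary issue: you define $M$ by induction on assembly sequences, but scale-free simulation demands a bijection on producible \emph{assemblies}. If the same polyomino assembly arises via two sequences with different attachment trees (which happens whenever a newly placed tile abuts two already-placed tiles with matching glues), your construction may send it to two distinct monomino assemblies, so $M$ is not even well-defined as a function.
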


\begin{proof}
If the system is $2$-position limited, a monomino system that replaces each $t\in T$ with a linear east/west glue monomino tile (i.e. a tile which only has glues its west and east sides) will do the trick.  (Note that Lemma 3 of \cite{OneTile} implies that if the glue positions on a polyomino are sufficient to allow it to bind, without overlap, in some position to another copy of itself, then an infinite sequence of copies can bind in a line at the same relative positions to their neighbors.)  In the case of a $3$-position limited system, the construction described for $4$-position limited uniquely paired systems (see the proof of Theorem~\ref{thm:4positionsUniquePair}) works by the same technique, simply creating tile types in the simulating system whose input glues are concatenations of pairs of output glues which are found on the same tile type in the simulated system, and output glues are concatenations of both output glues.
\end{proof}

\begin{theorem}\label{thm:4positionsUniquePair}
Any $4$-position limited, uniquely paired polyomino system $\Gamma = (T,\sigma,1)$ is scale-free simulated by a monomino tile system at temperature 1 (a.k.a. a temperature 1 aTAM system).
\end{theorem}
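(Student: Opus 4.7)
The plan is to split into two geometric cases based on how the two pairs of glue positions interact on $P$, following the schematic of Figure~\ref{fig:4gluesPicture}. Since $\Gamma$ is 4-position limited and uniquely paired, the (at most four) glue positions partition into two binding pairs $\{s_1,s_2\}$ and $\{s_3,s_4\}$. The unique pairing, together with the fact that $\Gamma$ is single-shape, means that each pair $i$ determines a single relative offset vector $\vec{v}_i$ (up to sign) between any two copies of $P$ that bind via pair $i$; by Lemma~\ref{lem:chain}, repeated pair-$i$ binding at offsets $c\vec{v}_i$ for $c \in \mathbb{Z}$ is always pairwise non-overlapping.

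Case 1 (2D lattice): Suppose the translates $\{c_1\vec{v}_1 + c_2\vec{v}_2 : c_1,c_2 \in \mathbb{Z}\}$ of $P$ are pairwise non-overlapping, including at the diagonal positions. I would define a monomino system $\mathcal{T}=(T',\sigma',1)$ with a map $f : T \to T'$ sending each $t \in T$ to a unit square whose east/west glues encode $t$'s glues at $s_1,s_2$, and whose north/south glues encode $t$'s glues at $s_3,s_4$. The bijection $M : \prodasm{T\cup\{\ldots\}} \to \prodasm{T'}$ sends a $P$-copy occupying lattice site $(c_1,c_2)$ to the unit square at $(c_1,c_2)$. Because bindings in $\Gamma$ occur exactly between pair-adjacent lattice sites and the lattice is non-overlapping, the single-step relation $\rightarrow_1$ is preserved bidirectionally by $M$; the two conditions of scale-free simulation follow by a step-by-step check of the producible assemblies.

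Case 2 (linear assembly): Suppose instead that simultaneous pair-1 and pair-2 binding on the same copy of $P$ is infeasible (i.e., would force an overlap). I first argue that every producible assembly of $\Gamma$ must be a linear chain in the grid: any tile with bound neighbors via both pair-1 and pair-2 positions would, by hypothesis, produce overlapping $P$-copies, so each tile has bound neighbors along only a single pair direction and therefore at most two collinear neighbors. I then simulate the chain with a linear monomino system; to account for the nondeterminism of which pair is used at each attachment, each $t \in T$ is replaced by a bounded collection of monomino types, one per admissible choice of (incoming pair, outgoing pair), which accounts for the tile-count blowup indicated in Figure~\ref{fig:4gluesPicture}. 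The map $f$ collapses these copies back onto $t$, and $M$ flattens the polyomino path onto the monomino path; the two scale-free properties then reduce to matching assembly sequences one step at a time.

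The main obstacle will be the geometric argument in Case 2: establishing that under the overlap hypothesis every assembly is truly forced to be a linear path, not a tree or branching structure. This requires ruling out configurations where three or more bindings coexist around a single tile, using Lemma~\ref{lem:chain} together with the uniquely paired hypothesis to propagate a single pairwise overlap to the full local neighborhood. A secondary subtlety is the seed $\sigma$, which may already contain many bindings; here the same geometric argument applied to $\sigma$ shows its shape is already compatible with the chosen monomino structure, so $M$ extends cleanly to the full producible set and the simulation goes through.
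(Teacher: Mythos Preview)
Your two-case split along the lines of Figure~\ref{fig:4gluesPicture} is the same as the paper's, and your treatment of Case~1 (the 2D lattice case) is essentially identical to the paper's tile-for-tile replacement.

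The genuine gap is in your Case~2 construction. Indexing monomino copies of a polyomino tile $t$ by ``(incoming pair, outgoing pair)'' does not yield a bijection $M$ between producible assemblies. In $\Gamma$, when a tile $t$ is placed at the end of the current chain it simultaneously exposes \emph{both} output glues $g_{out1}(t)$ and $g_{out2}(t)$; the choice of which one is used is made only when the \emph{next} tile attaches. Your monomino $(t,\text{in-pair},\text{out-pair})$ commits to one output position at the moment $t$'s monomino is placed, so for every producible polyomino chain of length $k$ there are at least two monomino assemblies (differing in the out-pair of the last tile) that map to it, and $M$ cannot be bijective. This is also why your scheme only gives a constant-factor blowup, not the quadratic one the figure advertises. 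The paper's fix is to make each monomino's east glue the concatenation $g_{out1}(t)\cdot g_{out2}(t)$, so no premature choice is made; the price is that the west glue of the next monomino must match this concatenation, forcing one monomino copy of $t$ for each distinct output-glue pair appearing on a tile that could precede it---hence the $O(|T|)$ copies per tile and the quadratic total.

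A secondary issue is your Case~2 hypothesis and linearity argument. You assume that \emph{any} pair-1 neighbor and \emph{any} pair-2 neighbor of the same tile overlap, and conclude that each tile's neighbors lie along a single pair direction (hence collinear). The paper's Case~2 is weaker: it only assumes $P_{\vec v}$ overlaps $P_{\vec u}$ \emph{or} $P_{\vec v}$ overlaps $P_{-\vec u}$. Under this weaker hypothesis a tile can still have one neighbor via pair~1 on the ``output'' side and one via pair~2 on the ``input'' side; the assembly is a (possibly zig-zagging) path, not a straight line. Your stronger hypothesis and your Case~1 hypothesis are therefore not exhaustive, and your linearity argument should be reworked to show only that each tile has at most one forward and one backward neighbor (because the two forward options overlap, and symmetrically for backward), rather than that all neighbors use the same pair.
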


\ifabstract
\later{
\subsection{Proof of Theorem~\ref{thm:4positionsUniquePair}}
}
\fi

\begin{figure}[htp]
\begin{center}
\includegraphics[width=5.0in]{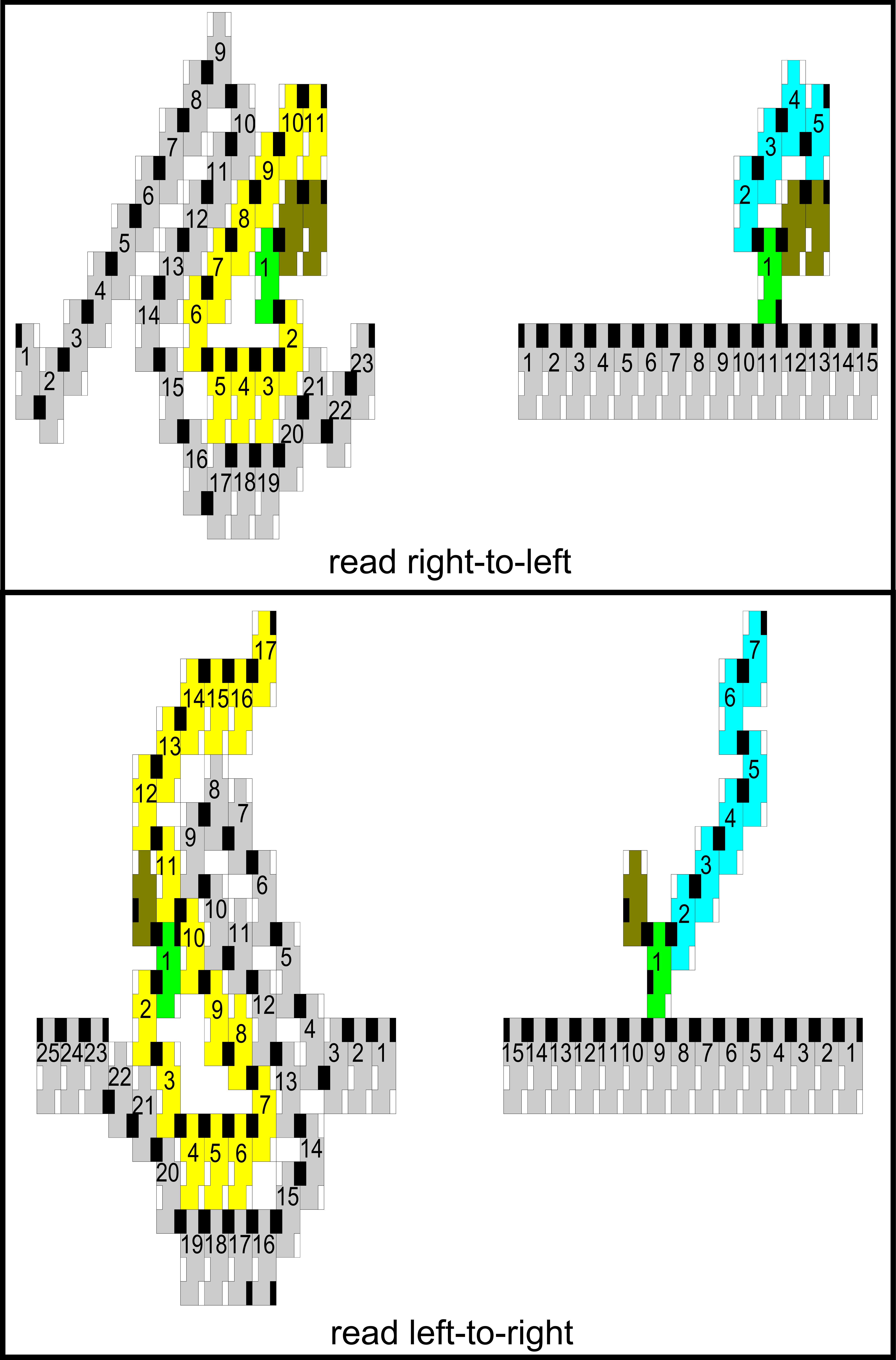}
\caption{The following bit reading gadget demonstrates how a $1\times 4$ polyomino that uses only 4 distinct glue positions can perform universal computation.  Due to the asymmetry of the glue locations on the east and west edges of the polyomino, the right-to-left and left-to-right bit reading gadgets are also asymmetric.  Geometric bit reading is not possible with fewer than 4 glue positions, and is not possible with only 4 positions and unique pairing of positions.}
\label{fig:cuddlerGadgets}
\end{center}
\end{figure}

\begin{proof}
Consider some $4$-position limited, uniquely paired system $\Gamma=(T,\sigma,1)$, and denote the shape of the tiles in $\Gamma$ as polyomino $P$.  Let $\vec{v}$ denote the translation difference between two bonded polyominoes from $T$ that are bonded with the first pair of glue positions, and let $\vec{u}$ denote the translation for the second pair of bonding positions.  As an example, consider the the polyomino of Figure~\ref{fig:4gluesPicture}(a).  The north-south glue positions are separated by vector $\vec{u}=(3,5)$, and the east-west glue positions are separated by vector $\vec{v}=(6,1)$. 
To show that $\Gamma$ is simulated by some monomino system, we consider two cases.  For polyomino $P$, let $P_{\vec{x}}$ denote the polyomino obtained by translating $P$ by some vector $\vec{x}$.
 For case 1, we assume $P$, $P_{\vec{v}}$, $P_{\vec{u}}$, and $P_{\vec{v}+\vec{u}}$ are mutually non-overlapping. For case 2, we assume that either $P_{\vec{v}}$ and $P_{\vec{u}}$ overlap, or that $P_{\vec{v}}$ and $P_{-\vec{u}}$ overlap (note that $P$ overlaps $P_{\vec{u}+\vec{v}}$ if and only if $P_{\vec{v}}$ overlaps $P_{-\vec{u}}$, and thus is covered by case 2.)  The two cases are depicted in Figure~\ref{fig:4gluesPicture}.

Case 1:  In this scenario, the tiles of $\Gamma$ grow in a 2D lattice with basis vectors $\vec{u}$ and $\vec{v}$ and can be simulated by a monomino system that simply creates a square monomino tile for each element of $t \in T$, placing the glue types of the first pair of uniquely paired glues of $t$ on the north and south edges of the representing monomino, and the other pair on the east and west edges.  The bijective mapping that satisfies the scale-free simulation requirement simply replaces each $t$ in a producible assembly of $\Gamma$ with the corresponding monomino for that $t$, thereby yielding an appropriate assembly over the unit square tiles.

Case 2:  Without loss of generality, assume it is the case that $P_{\vec{v}}$ and $P_{\vec{u}}$ overlap. 
We first observe that growth in this scenario is linear (see Figure~\ref{fig:4gluesPicture}).  We will simulate $\Gamma$ with a linear east/west monomino tile system (i.e. a system where tiles have glues only on their west and east sides).  However, unlike in case 1, a simple tile for tile replacement is not sufficient as this does not allow for the simulation of the potential non-deterministic placement of one of two mutually overlapping tiles.  To deal with this, we increase the tile set size by a quadratic factor.  Let $\vec{u}$ and $\vec{v}$ be defined as before for case 1.  As a first subcase, assume that the system we are interested in simulating is such that no polyomino is attachable to the seed at translation $\vec{-u}$ or $\vec{-v}$ from the seed, i.e., tiles only attach at positive linear combinations of $\vec{u}$ and $\vec{v}$.  With this restriction, the 4 glue positions of the tiles can be thought of as 2 \emph{input} positions and 2 \emph{output} positions, where a tile always attaches based on the binding of a glue on an input side.  Let $g_{in1}$ and $g_{in2}$ represent the two input positions, and $g_{out1}$ and $g_{out2}$ the two output positions, where $g_{in1}$ is uniquely paired with $g_{out1}$, and $g_{in2}$ with $g_{out2}$.  Then, for a tile $t \in T$ and $d \in \{in1,in2,out1,out2\}$, let $g_d(t)$ be the glue label at location $d$ on tile $t$.  To simulate $\Gamma$, we generate a set $T'$ of at most $2|G|$ east/west monomino tiles (where $G$ is the set of all glue labels of the polyomino tile set $T$), and we will specify the east glue position as $g_e$ and the west as $g_w$ for tiles in $T'$ (and we'll treat $g_w$ as the input and $g_e$ as the output sides). 
For $0 \leq i < |T|$, we generate a set $T'_i$ of tiles from tile $t_i \in T$ as follows.  Let $a = g_{in1}(t_i)$ and $b = g_{in2}(t_i)$.  For every tile $t' \in T$ such that $g_{out1}(t') = a$, we generate tile $t'' \in T'_i$ such that $g_w(t'') = a \cdot g_{out2}(t')$ (where $a \cdot g_{out2}(t')$ is just the concatenation of the labels of glues $a$ and $g_{out2}(t')$),  and $g_e(t'') = g_{out1}(t_i) \cdot g_{out2}(t_i)$.  Similarly, for every every tile $t' \in T$ such that $g_{out2}(t') = b$, we generate tile $t'' \in T'_i$ such that $g_w(t'') = g_{out1}(t') \cdot b$,  and $g_e(t'') = g_{out1}(t_i) \cdot g_{out2}(t_i)$.  Essentially, whenever a tile from $T'$ is placed, it presents on its output side both of the output glues of the tile from $T$ that it was designed to simulate.  (This is because in $\Gamma$, either of those output glues could be used as an input glue to bind the next tile, but only one of them.)  Therefore, any tile wishing to use one of those output glues as an input glue must now have a glue label which matches the concatenation of that glue and any other which may have been paired with it as an output.  In such a way, we generate each set $T'_i$ to simulate $t_i \in T$, and $T' = \bigcup_{i=0}^{<|T|} T'_i$.  Therefore, for our scale-free simulator $\Gamma' = (T', \sigma', 1)$ the function $f$ defined for the scale-free simulation simply maps each tile in $T'_i \subseteq T'$ to its corresponding $t_i \in T$ and $\sigma'$ is simply formed by corresponding tiles between $T$ and $T'$.

The bijection from producible assemblies $\mathcal{A}[\Gamma]$ and those in $\mathcal{A}[\Gamma']$ is defined as follows.  Consider some assembly $A \in \mathcal{A}[\Gamma]$.  Starting from the seed, for each tile attaching in sequence, attach a corresponding tile in $T'$ to the seed of $\Gamma'$.  In particular, if tile $t$ with input glues $a$ and $b$ attaches via glue $a$ in $\Gamma$, then attach a tile from $t' \in T'$ with glue label ``$a \cdot i$" where $i$ matches the unused output glue label of the tile that $t$ attached to, and the output glue of $t'$ is the concatenation of the output glues of $t$.  Similarly define the simulator tile attachment in the case that glue $b$ is the bonding glue.  By repeating this process, an element of $\mathcal{A}[\Gamma']$ is generated for any element of $\mathcal{A}[\Gamma]$, and this mapping is bijective and thus provides a scale-free simulation of the input system $\Gamma$ with at most quadratic increase in tile complexity.

Finally, it is easy to see that the restriction that the seed only grows in direction $\vec{u}$ or $\vec{v}$ (and not $\vec{-u}$ or $\vec{-v}$) can easily be removed by taking a more general system and first doubling it's tile set so that each tile type only every attaches in a $\vec{u}$/$\vec{v}$ or $\vec{-u}$/$\vec{-v}$ direction.  The simulation system is then constructed with two symmetric applications of the previous construction.
\end{proof}

\begin{theorem}\label{thm:4positionsUniversal}
There exist $4$-position limited polyTAM systems that are computationally universal at temperature 1.
\end{theorem}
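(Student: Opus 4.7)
The plan is to exhibit a single shape polyTAM system in which every tile is a $1\times 4$ bar with glues restricted to exactly four distinct perimeter positions, and to show that such a system admits a bit-reading gadget. By the framework of Section~\ref{sec:Turing-overview}, any single-shape polyomino system that supports a bit-reading gadget can simulate an arbitrary zig-zag Turing machine at $\tau=1$ (the additional paths needed to connect the bit-readers to bit-writers are guaranteed to be producible by the lemmas of Section~\ref{sec:tech-lemmas}, applied with shape $P$ equal to the $1\times 4$ bar). Thus it suffices to construct a bit-reading gadget.

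The first step is to specify the glue positions so that the uniquely paired hypothesis of Theorem~\ref{thm:4positionsUniquePair} fails. I would place two glue positions on the top long edge and two on the bottom long edge of the $1\times 4$ bar, offset horizontally from one another (so that the two top positions are not directly above the two bottom positions). This gives exactly four glue locations; crucially, a glue placed at a top position can pair with either bottom position on another tile (and vice versa), so the pairing is not unique. As depicted in Figure~\ref{fig:cuddlerGadgets}, this asymmetry means that when one bar stacks on top of another, there are two possible horizontal offsets between the two bars (corresponding to the two bottom-to-top matchings). It is precisely this two-offset freedom that the $4$-position uniquely paired systems of Theorem~\ref{thm:4positionsUniquePair} lack.

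The second step is to construct the bit-reader. I would reuse the strategy of Case (1) of Theorem~\ref{thm:comp-univ-poly}, which already handles $m\times 1$ polyominoes (with $m \geq 3$) by using the two-offset freedom of a line tile to place the yellow and aqua reader tiles at distinct vertical alignments and then geometrically block the wrong path. Since a $1\times 4$ bar is $m\times 1$ with $m=4\ge 3$, the schematic of Figure~\ref{fig:comp_1by_case} transfers directly, provided the four chosen glue positions are laid out so that each of the tiles in the gadget (the green starter, the yellow $0$-reader, the aqua $1$-reader, the grey blocker, and the fuchsia continuation tile) can attach using one of the two matched pairings. This is accomplished by assigning distinct glue \emph{labels} (there is no bound on the label alphabet, only on the number of positions) to each tile role so that each attachment uses one of the two offsets consistent with the geometry of Figure~\ref{fig:comp_1by_case}, and a right-to-left gadget is obtained by mirroring.

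The third step is verification: one checks that the constructed gadget satisfies Definition~\ref{def:bit-reader} (exactly one of the two reader paths can complete past a given bit writer, in every valid assembly sequence), and then invokes the simulation procedure of Section~\ref{sec:Turing-overview} to conclude universality. The main obstacle is to make sure that the four glue positions suffice simultaneously for the lead-in paths, the bit-writer subassemblies (built via Lemma~\ref{lem:r-shifts}), and the two asymmetric bit-reader branches; since each tile role needs only one or two active positions at a time and the label alphabet is unrestricted, the four positions can be reused across roles with different labels, and the geometric offsets required by Figure~\ref{fig:comp_1by_case} are already realizable by the asymmetric pairing. I expect that the only delicate bookkeeping is verifying the non-overlap of the blocker, the wrong-color reader tile, and the lead-in path at the critical junction, which is precisely what the two-offset freedom of the non-uniquely-paired $1\times 4$ bar provides.
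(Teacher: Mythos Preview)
Your high-level plan coincides with the paper's: use a $1\times 4$ bar with four glue positions on the long sides, chosen so that the pairing is not unique, and then exhibit a bit-reading gadget from which the zig-zag Turing-machine simulation follows. That much is exactly right.

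However, two of your concrete steps do not go through as written. First, your claim that ``a right-to-left gadget is obtained by mirroring'' fails precisely because of the asymmetric placement you chose: reflecting the left-to-right gadget across a vertical axis sends the four glue positions to a \emph{different} set of four edge positions, so the mirrored tiles are no longer tiles of your system. The paper makes this point explicitly (Figure~\ref{fig:cuddlerGadgets}): the left-to-right and right-to-left gadgets are genuinely different, and each must be designed from scratch with the fixed four positions in mind. Second, your reliance on Case~(1) (Figure~\ref{fig:comp_1by_case}) and on Lemma~\ref{lem:r-shifts} for the connecting paths and bit-writers is not justified under the $4$-position constraint. Those constructions were built for an $m\times 1$ bar with \emph{no} restriction on glue locations; in particular, the grid of Lemma~\ref{lem:grids} for a horizontal bar uses the short east/west edges for the $\vec{v}$-shifts, which are not among your four top/bottom positions. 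You assert that ``the four positions can be reused across roles with different labels,'' but the issue is not labels---it is that the geometric attachments needed in those proofs occur at edges you have excluded.

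The paper sidesteps both issues by giving an explicit, self-contained pair of gadgets (Figure~\ref{fig:cuddlerGadgets}) tailored to the fixed four positions, rather than attempting to specialize the general machinery of Section~\ref{sec:tech-lemmas}. So your outline is on the right track, but to make it a proof you would need to (i) drop the mirroring shortcut and actually build the second-direction gadget, and (ii) either verify that every attachment in the Case~(1) and Section~\ref{sec:tech-lemmas} constructions can be realized at one of your four positions, or else give a direct construction as the paper does.
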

\begin{proof}
We prove this by providing constructions for right-to-left and left-to-right bit reading gadgets for the $1\times 4$ polyomino that uses only 4 unique glues positions.  The details of the bit reading gadgets are presented in Figure~\ref{fig:cuddlerGadgets}.  It is straightforward to connect the bit reading gadgets in a fashion similar to previous results to construct a zig-zag Turing machine simulation.
\end{proof}

\section{Multiple Polyomino Systems}\label{sec:multi-poly}

\ifabstract
\later{
\section{Omitted Proofs from Section~\ref{sec:multi-poly}}
}
\fi

In previous sections we have focussed on the computational power of systems consisting of singly shaped polyominoes and showed that single polyomino systems are universal for polyominoes of size $\ge 3$, while monomino and domino systems are likely not capable of such computation.  We now show that any multiple shape polyomino system (i.e. one that utilizes at least 2 distinct polyomino shapes, regardless of their size) is capable of universal computation.

\begin{lemma}\label{lemma:doubleDomino}
For every standard Turing Machine $M$ and input $w$, there exists a TAS with $\tau=1$ consisting only of tiles shaped as dominoes, $2\times1$ and $1\times2$ polyominoes, that simulates $M$ on $w$.
\end{lemma}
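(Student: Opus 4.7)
The plan is to construct bit-reading gadgets using a combination of horizontal ($2\times 1$) and vertical ($1\times 2$) dominoes, and then feed these gadgets into the zig-zag Turing machine simulation framework of Section~\ref{sec:Turing-overview}. The starting observation is the obstruction underlying Theorem~\ref{thm:no-duple-bit-readers}: in a single-orientation duple system, the gap left by any blocking of a path is at most one unit wide, which no duple of the same orientation can traverse. The essential new resource in the two-shape system is that a vertical domino \emph{can} thread through a horizontal one-unit gap (with one pixel in the gap and one above or below), and a horizontal domino can symmetrically thread a vertical one-unit gap. My bit-reader design will exploit precisely this asymmetry.

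First, I would build the grid structure and the two bit-writer subassemblies $\alpha_0$ and $\alpha_1$. The lemmas of Section~\ref{sec:tech-lemmas} (Lemmas~\ref{lem:grids},~\ref{lem:genBitWrite}, and~\ref{lem:r-shifts}) were stated for single-shape systems, but their arguments go through unchanged, or in fact become easier, for two-shape systems, since having an extra shape available can only expand the set of reachable offsets. I would arrange the two bit-writers so that they place their respective blockers at two different positions relative to the bit-reader seed $t$, with $\alpha_0$ leaving a one-unit gap in one orientation and $\alpha_1$ leaving a one-unit gap in the other. Then I would design the bit-reader paths so that each originates from $t$ and branches nondeterministically into two candidate continuations: a $0$-reader whose first critical step uses one domino orientation and fits exactly through $\alpha_0$'s gap, and a $1$-reader whose first critical step uses the other orientation and fits exactly through $\alpha_1$'s gap. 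Surrounding geometry is chosen so that the wrong-bit path is always physically blocked, while the correct-bit path can always be completed in every valid assembly sequence, as demanded by Definition~\ref{def:bit-reader}.

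Once the bit-reading gadget is in hand, it plugs directly into the scheme of Figures~\ref{fig:bit-reading-writing-scheme1} and~\ref{fig:bit-reading-writing-scheme2}: bit-readers and bit-writers are joined by paths of dominoes (of either orientation) to simulate one row of a zig-zag Turing machine at a time, producing a full simulation of $M$ on $w$. The main obstacle I expect is the fine-grained geometric verification: with only two shapes and gap sizes of at most one unit, the construction has very little slack, so I will have to do a careful case analysis to rule out spurious attachments on the diagonals, to handle all nondeterministic assembly orderings (in particular the relative timing of $\alpha_i$'s blocker placement and the path growth from $t$), and to confirm that no alternative placement of a horizontal or vertical domino at an intermediate step allows the wrong bit to be read or causes both paths to complete simultaneously.
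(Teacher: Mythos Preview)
Your proposal is correct and follows essentially the same route as the paper: construct a bit-reading gadget using both domino orientations and then plug it into the zig-zag Turing machine framework of Section~\ref{sec:Turing-overview}. The paper's own proof is in fact even terser---it simply exhibits a concrete right-to-left bit-reading gadget in a figure and asserts that the left-to-right version is similar---whereas you articulate the underlying mechanism (a vertical domino threading a one-unit horizontal gap and vice versa, breaking the obstruction of Theorem~\ref{thm:no-duple-bit-readers}) and appeal to the grid and shifting lemmas of Section~\ref{sec:tech-lemmas} for the surrounding scaffolding; both lead to the same conclusion.
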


%\subsection{Proof of Lemma~\ref{lemma:doubleDomino}}

\begin{proof}
To see this we provide a sketch of a right-to-left bit reading gadget in Figure~\ref{fig:doubleDomino-gadgets}.  A left-to-right gadget can be derived similarly, and together these gadgets can be used to construct a zig-zag Turing machine simulation in the same manner as with previous constructions in this paper.
\begin{figure}[htp]
\begin{center}
\includegraphics[width=2.25in]{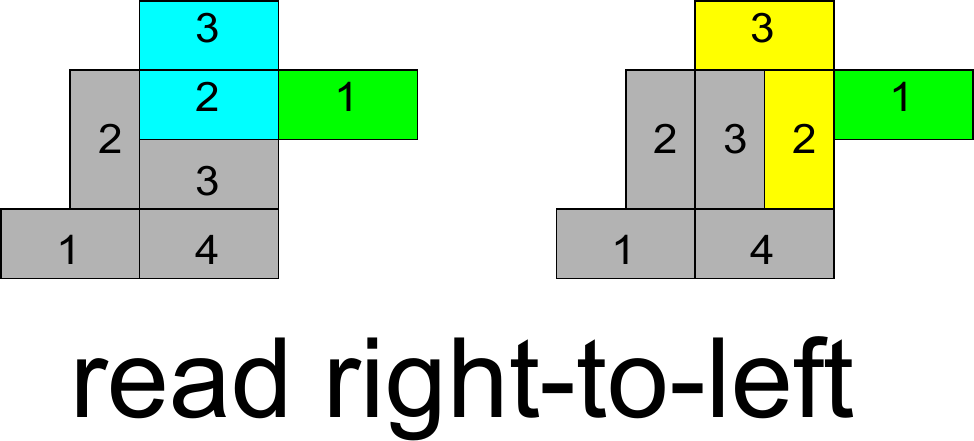}
\caption{A 2-shape system consisting of the two distinct domino polyominoes can be designed to read bits for the simulation of a zig-zag Turing machine.}
\label{fig:doubleDomino-gadgets}
\end{center}
\end{figure}
\end{proof}

\begin{theorem}
For every standard Turing Machine $M$ and input $w$, and any 2 distinct polyominoes $P$ and $Q$, there exists a TAS with $\tau=1$ consisting only of tiles shaped as $P$ or $Q$ that simulates $M$ on $w$.
\end{theorem}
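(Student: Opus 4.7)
The plan is to reduce to results already available in the paper by a case analysis on the sizes of the two shapes $P$ and $Q$. The natural split is between the case where at least one shape has size $\ge 3$ (where Theorem~\ref{thm:comp-univ-poly} handles everything immediately) and the case where both shapes have size $\le 2$ (where only a small finite list of pairs remains and each can be dealt with by invoking prior work or by an explicit small gadget).

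First I would dispose of the easy case: suppose without loss of generality that $|P| \ge 3$. Then Theorem~\ref{thm:comp-univ-poly} already provides a TAS with $\tau = 1$ consisting only of tiles of shape $P$ that simulates $M$ on $w$. Since a TAS whose tile set uses only shape $P$ is vacuously a system whose tiles are all of shape $P$ or $Q$, this finishes the argument in this case, and by symmetry the same works whenever $|Q| \ge 3$. So we may assume $|P| \le 2$ and $|Q| \le 2$.

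Under this restriction, the shapes available are only the monomino, the $2\times 1$ horizontal domino, and the $1\times 2$ vertical domino, so the requirement $P \neq Q$ leaves exactly three unordered pairs to treat. When $\{P,Q\}$ is the pair of horizontal and vertical dominoes, Lemma~\ref{lemma:doubleDomino} already gives the construction of a zig-zag Turing machine simulation, so that subcase is complete. The two remaining subcases consist of the monomino together with a domino (in either orientation). For these, I would argue that a bit-reading gadget can be built in the spirit of Section~\ref{sec:main-result}: a single monomino placed in the unit-square gap that a pure domino system is forced to leave (cf.\ Figure~\ref{fig:blocked-duples}) serves as the \emph{blocker} needed to distinguish a $0$-reader path from a $1$-reader path. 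Alternatively, and more briefly, one can simply cite the mixed monomino-domino constructions of \cite{Duples}, which already establish Turing universality of temperature-$1$ systems of this form; combining that work with the cases above completes the theorem.

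The main obstacle is exactly the mixed monomino-and-domino subcases, because Theorem~\ref{thm:no-atam-bit-readers} rules out bit-reading with monominoes alone and Theorem~\ref{thm:no-duple-bit-readers} rules it out with a single domino shape; neither of the two shapes involved can carry the bit-reading gadget on its own. The delicate point is to verify that the extra geometric flexibility provided by mixing the two shapes really does yield a gadget satisfying Definition~\ref{def:bit-reader} in \emph{every} valid assembly sequence and in both growth directions, so that the zig-zag simulation of Section~\ref{sec:Turing-overview} goes through. Once this is in hand, the standard assembly of bit-readers and bit-writers into a zig-zag Turing machine simulation gives the desired TAS simulating $M$ on $w$.
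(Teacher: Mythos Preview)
Your proposal is correct and follows essentially the same case analysis as the paper: size $\ge 3$ is handled by Theorem~\ref{thm:comp-univ-poly}, the two-domino case by Lemma~\ref{lemma:doubleDomino}, and the monomino-plus-domino case by citing \cite{Duples}. The paper simply invokes \cite{Duples} directly for that last subcase rather than sketching an explicit gadget, so your ``alternatively'' option is exactly what the paper does.
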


\begin{proof}
If either $P$ or $Q$ are size 3 or larger, we get the result from theorem~\ref{thm:comp-univ-poly}.  If one polyomino is a monomino and the other a domino, we get the result from the paper~\cite{Duples}.  Finally, if $P$ and $Q$ are the two distinct domino shapes, then we get the result by Lemma~\ref{lemma:doubleDomino}.
\end{proof}

\pagebreak

\section*{Acknowledgements}
This work was initiated at the 29th Bellairs Winter Workshop on Computational Geometry on March 21-28, 2014 in Holetown, Barbados.
We thank the other participants of that workshop for a fruitful and collaborative environment.
We especially thank Erik Demaine, Scott Summers, Damien Woods, Andrew Winslow, and Dave Doty for helpful discussions and many of the initial ideas.

\bibliographystyle{amsplain}
\bibliography{tam}

\ifabstract
\newpage
\appendix
\magicappendix
\fi

\end{document}